\documentclass[aps, pra, twocolumn,superscriptaddress, 10pt]{revtex4-1}
\usepackage[utf8]{inputenc}
\usepackage{graphicx}
\usepackage{dcolumn}
\usepackage{bm}
\usepackage{multirow}
\usepackage{amssymb}
\usepackage{amsfonts}
\usepackage{amsmath}
\usepackage{amsthm}
\usepackage{enumerate}
\usepackage{color}
\usepackage[dvipsnames]{xcolor}
\usepackage{marvosym}
\newcommand{\mathbbm}[1]{\text{\usefont{U}{bbm}{m}{n}#1}}
\hyphenation{ALPGEN}
\hyphenation{EVTGEN}
\hyphenation{PYTHIA}

\usepackage{qcircuit}

\usepackage{mathtools}

\usepackage{hyperref}

\usepackage{soul,xcolor}
\setstcolor{purple}

\usepackage{tikz}
\usetikzlibrary{matrix,positioning,decorations.pathreplacing}

\newtheorem{theorem}{Theorem}

\newtheorem{observation}[theorem]{Observation}
\newtheorem{corollary}[theorem]{Corollary}

\newcommand{\tr}{\text{tr}}

\newcommand{\rank}{\text{rank}}

\newcommand{\bra}[1]{\langle #1|}
\newcommand{\ket}[1]{|#1\rangle}
\newcommand{\braket}[2]{\langle #1|#2\rangle}

\newcommand{\C}{\ensuremath{\mathbbm C}}
\newcommand{\be}{\begin{equation}}
\newcommand{\ee}{\end{equation}}
\newcommand{\bea}{\begin{eqnarray}}
\newcommand{\eea}{\end{eqnarray}}

\newcommand{\kommentar}[1]{}

\newcommand{\identity}{\mathbbm{1}}

\newtheorem{lemma}[theorem]{Lemma}

\newcommand{\forget}[1]{}
\begin{document}
	
	\title{State transformations within entanglement classes containing permutation-symmetric states}

	\date{\today}
	
	\author{Martin Hebenstreit}
	\affiliation{Institute for Theoretical Physics, University of Innsbruck, Technikerstr. 21A, 6020 Innsbruck, Austria}
	\author{Cornelia Spee}
	\affiliation{Institute for Theoretical Physics, University of Innsbruck, Technikerstr. 21A, 6020 Innsbruck, Austria}
	\author{Nicky Kai Hong Li}
	\affiliation{Institute for Theoretical Physics, University of Innsbruck, Technikerstr. 21A, 6020 Innsbruck, Austria}
	\author{Barbara Kraus}
	\affiliation{Institute for Theoretical Physics, University of Innsbruck, Technikerstr. 21A,  6020 Innsbruck, Austria}
	\author{Julio I. de Vicente}
	\affiliation{Departamento de Matemáticas, Universidad Carlos III de Madrid, Avda. de la Universidad 30, E-28911, Leganés (Madrid), Spain}
	
	\begin{abstract}
		The study of state transformations under local operations and classical communication (LOCC) plays a crucial role in entanglement theory. While this has been long ago characterized for pure bipartite states, the situation is drastically different for systems of more parties: generic pure qudit states cannot be obtained from nor transformed to any state (i.e., they are isolated), which contains a different amount of entanglement. We consider here the question of LOCC convertibility for permutation-symmetric pure states of an arbitrary number of parties and local dimension, a class of clear interest both for physical and mathematical reasons and for which the aforementioned result does not apply given that it is a zero-measure subset in the state space. While it turns out that generic $n$-qubit symmetric states are also isolated, we consider particular families for which we can determine to be, on the contrary, endowed with a rich local stabilizer, a necessary requirement for LOCC convertibility to be possible. This allows us to identify classes in which LOCC transformations among permutation-symmetric states are possible. Notwithstanding, we provide several results that indicate severe obstructions to LOCC convertibility in general even within these highly symmetrical classes. In the course of the study of LOCC transformations, we also characterize the local symmetries of symmetric states.
	\end{abstract}

	\maketitle

	\section{Introduction}
	
	Entanglement is a purely quantum effect of multicomponent systems, which is behind quantum technologies outperforming their classical counterparts. Therefore, it has been a major subject of study in quantum information theory since its inception \cite{reviewe}. In addition to this, the tools developed in this field have been found to play a crucial role in the understanding of condensed matter physics \cite{reviewcm}. Entanglement theory aims at characterizing which states possess this property and the different forms in which it can manifest, providing protocols to manipulate it and quantifying to which extent it can be used. This theory is formulated within the framework of quantum resource theories \cite{reviewqrt} where local operations assisted by classical communication (LOCC) (see e.g.\ \cite{locc1,locc2,HeEn21}) are considered to be free operations, which assigns a pivotal role to this notion. First, entanglement can be understood as a resource that allows implementation of tasks which would otherwise be impossible for distant parties that are constrained to the use of LOCC. Second, LOCC transformations induce an operationally meaningful partial order in the set of entangled states under which any entanglement measure must be monotonic. If a state $\Psi$ can be deterministically transformed under this class of operations into $\Phi$, the former cannot have less entanglement than the latter as any protocol that can be implemented within this paradigm with $\Phi$ can also be achieved with $\Psi$ but not necessarily the other way around. Thus, characterizing LOCC state transformations is a crucial problem in this theory. It also provides basic primitives for quantum information protocols and, more importantly, identifies which states are potentially more useful in applications. In this regard, Nielsen's theorem \cite{nielsen} provides a milestone result as it characterizes LOCC convertibility among pure bipartite states in term of a simple majorization relation and shows that simple LOCC protocols with a single round of classical communication are sufficient \cite{LoPopescu}. Furthermore, from the theorem, it follows that there exists a unique (up to local unitary (LU) transformations) maximally entangled state $|\phi_+\rangle=\sum_{i=0}^{d-1}|ii\rangle/\sqrt{d}$ that can be transformed to any other for any fixed value of the local dimension $d$. This means that $|\phi_+\rangle$ has to be the most useful state for any task subject to parties implementing LOCC protocols regardless of its specific goal, be it teleportation, metrology or other tasks.
	
	Regrettably, this picture is much less satisfactory when one considers pure states shared by more than two parties. First, it has been long known that in this case there exist pairs of states which are not related by LOCC even if one lifts the premise of deterministic transformations and allows an arbitrary non-zero probability of success \cite{slocc}. States that can be interconverted in this latter paradigm are said to belong to the same stochastic LOCC (SLOCC) class. While there is a unique SLOCC class for pure entangled bipartite states, genuinely multipartite entangled states of three qubits give rise to two classes \cite{slocc}, identified respectively by the paradigmatic GHZ and W states
	\begin{align}
		|GHZ\rangle&=\frac{1}{\sqrt{2}}(|000\rangle+|111\rangle),\nonumber\\
		|W\rangle&=\frac{1}{\sqrt{3}}(|001\rangle+|010\rangle+|100\rangle).
	\end{align}
	The situation has been investigated for a larger number of parties and/or local dimension where it becomes much more complex since in these cases there can be infinitely many SLOCC classes \cite{VeDe02,BrLu04}. This establishes that contrary to the bipartite case, multipartite entanglement appears in inequivalent forms. Specifically, SLOCC classes pinpoint different equivalence classes but provide no sense of comparison of the relative usefulness of states. Thus, one could try to study the ordering induced by LOCC within each SLOCC class in order to identify the most entangled states for each family. Nevertheless, the analysis of LOCC convertibility in the multipartite regime appears to be a formidable mathematical problem: the simple structure of protocols that is sufficient in the bipartite case does not apply here and it is not even known whether it is necessary to consider LOCC transformations involving infinitely many rounds of classical communication \cite{SpdV17,dVSp17,HeEn21}. More importantly, multipartite entanglement displays an extreme behaviour in this respect that dooms the above program to failure. It turns out that almost all states are isolated under LOCC for most configurations of number of parties and local dimension \cite{GoKr17,SaWa18}. In more detail, these works establish that a generic pure state of homogeneous systems constituted of $n>3$ parties with local dimension $d>2$ (as well as systems for which $d=2$, $n>4$) cannot be transformed to nor obtained from a non-LU-equivalent state by LOCC. This implies that the counterpart of the maximally entangled bipartite state, the maximally entangled set \cite{dVSp13} is of full measure in the Hilbert space. Interestingly, the local stabilizer of each SLOCC class plays a crucial role in characterizing LOCC transformations \cite{GoWa11} and the aforementioned result follows by proving that for almost all states this stabilizer is trivial (i.e.,\ it only contains the identity). In sum, it turns out that the resource theory of entanglement as formulated by the paradigm of LOCC is generically trivial in the sense that almost every pair of states (even if restricted to the same SLOCC class) is incomparable. Obviously, this statement does not forbid the existence of particular zero-measure SLOCC classes that might be free of isolation and display a rich LOCC structure in which every state can be converted to a less entangled state. Notwithstanding, to the best of our knowledge only two such examples are known in the literature: the $n$-qubit W and GHZ classes \cite{turgutW,turgutGHZ}. On the contrary, all classes studied in 4-qubit states \cite{dVSp13,SpdV16} and 3-qutrit states \cite{HeSp16} have $LOCC_\mathbb{N}$ isolated states, i.e., states that can neither be reached nor converted via a non-trivial finite-round LOCC transformation.
	
	We believe that the above results call for a more specific analysis. On the one hand, this suggests that the study of multipartite LOCC convertibility should be restricted to non-generic subsets that are motivated by physical or informational interest. On the other hand, studying classes that have a relevant mathematical structure might make the state-transformation problem more amenable and lead to new families that do not display isolation, thus identifying potentially useful states for applications. Actually, these two alternatives can be brought together under the umbrella of symmetry. Symmetry principles are a cornerstone in the development of physical theories and quantum mechanics is no exception and, at the same time, they provide a mathematically pleasant structure. Whereas the symmetries of stabilizer states \cite{EnKr20} and of translational invariant matrix product states (of low bond dimension) \cite{SaMo19,He21} have been studied, we focus here on another very relevant class of multipartite states, the one corresponding to permutation symmetry, i.e.\ states in the symmetric subspace. These states not only describe systems of many bosons for indistinguishable particles but they are often encountered in quantum optics (such as in the study of superradiance \cite{Dicke}) and appear as the ground states of natural Hamiltonians (e.g.,\ in the Lipkin-Meshkov-Glick model \cite{Orus}). Moreover, permutation-symmetric (referred to in what follows just as symmetric) states such as the $|\phi_+\rangle$, W, GHZ and Dicke states correspond to canonical states in quantum information theory and its applications (see e.g.\ \cite{Aulbach} and references therein) and a huge experimental effort has been put forward to prepare these states in the lab both for qubits \cite{expqubits} and for higher local dimensions \cite{expqudits}. In fact, their symmetry structure has enabled an in-depth analysis of several of their properties regarding entanglement \cite{Aulbach,symentanglement}. Most noticeably in our context, SLOCC classes of symmetric states have been carefully studied and several remarkable mathematical properties have been identified from this point of view \cite{BaKr09,MaKr10,MiRo13}. In particular, some of these classes have been found to have elements in their local stabilizer, which suggest that the latter is highly non-trivial. We refer to these subclasses of symmetric states as exceptionally-symmetric (ES) classes (a proper definition is provided in the next section). Lastly, it should be pointed out that so far, all the known examples of SLOCC classes that do not display isolation happen to contain symmetric states (the W and GHZ qubit states).
	
	Thus, the main goal of this paper is to study state convertibility in general within SLOCC classes that contain symmetric states. Due to the technical difficulty of the problem at hand, in several instances we will restrict to LOCC protocols under the very mild and natural assumption that they consist of a finite (but otherwise arbitrary) number of rounds of classical communication, which we will denote by LOCC$_{\mathbb{N}}$. In passing, we will determine the properties and structure of the local stabilizer group of these families, which is interesting in its own right \cite{SaMo19,EnKr20} and could find applications beyond the study of state transformations. 
	
	The outline of the remainder of the article is the following.
	In Section \ref{sec:preliminaries}, we introduce notation and briefly review results on the symmetric subspace, SLOCC classes containing symmetric states, and the stabilizer of symmetric states. We first present the preliminaries as these allow us to then  provide the overview of our results in Section \ref{sec:results} in a comprehensible way.
	In Section \ref{sec:reachability}, we generalize results from \cite{SpdV17,dVSp17} concerning $LOCC_\mathbb{N}$-reachability as well as $LOCC_1$-convertibility (i.e.,\ one-round protocols) of states with a finite stabilizer to the case of an infinite stabilizer, which is indispensable for addressing symmetric classes.
	In Section \ref{sec:symmetric}, we consider $n$ qubits and we argue that, although SLOCC classes containing symmetric states form a zero-measure set in the Hilbert space, generically, they have trivial stabilizer for $n\geq 5$ and thus, can neither be reached nor converted via LOCC. This allows us, in turn, to completely characterize the entanglement content of those states via easily computable entanglement measures. In Section \ref{sec:non_ES}, we argue that non-ES classes do not seem promising with respect to a rich LOCC structure. However, a subclass of  ES classes, the non-derogatory ES classes (for the definition see Section \ref{sec:preliminaries}), do turn out to possess a promising stabilizer which we derive in Section \ref{sec:non_derog}. We further show in this section that indeed a rich LOCC structure is present. More precisely, we show that  LOCC transformations between two different symmetric states are possible (Theorem \ref{thm:symtosym}). However, we also show that there always exist states that are isolated (Theorem \ref{theo:sumsisolation}) in a sense that we will explain there, except for the outstanding GHZ- and W-class for $n$-qubit states, which are free from isolation for all $n$.
	Finally, we deal with derogatory ES classes (for the definition see Section \ref{sec:preliminaries}) in Section \ref{sec:derog}. We argue that fully characterizing the stabilizer of derogatory ES classes is beyond the scope of the present work. We show that some of the SLOCC classes somehow resemble multi-copy scenarios of the aforementioned non-derogatory ES classes, and therefore inherit a non-trivial LOCC structure. Moreover, we construct a 5-qutrit derogatory ES class within which all states are $LOCC_{\mathbb{N}}$-isolated, despite having a non-trivial stabilizer. Finally, as illustrative examples, we scrutinize the three- and four-qutrit cases. 
	
	\section{Preliminaries}
	\label{sec:preliminaries}
	
	We consider $n$-partite states with $n\geq 3$ in homogeneous systems described by the Hilbert space $\mathcal{H} = \mathcal{H}_1 \otimes \ldots \otimes   \mathcal{H}_n = \mathbb{C}^d \otimes \ldots \otimes \mathbb{C}^d$. We only consider fully entangled states, i.e., states for which all reduced density matrices have full rank. States $\ket{\Psi}$ and $\ket{\Phi}$ are SLOCC-equivalent iff there exist local invertible operators $A_i \in GL(d, \mathbb{C})$ such that $\ket{\Psi} = A_1 \otimes \ldots \otimes A_n \ket{\Phi}$. We will write states within the same SLOCC class as local invertible operators acting on a representative of the SLOCC class, which we will also call a seed state, $\ket{\Psi_s}$. When considering state transformations, we will usually denote the initial state as $\ket{\Psi} =g_1 \otimes \ldots \otimes g_n \ket{\Psi_s}$ and the final state as $\ket{\Phi} = h_1 \otimes \ldots \otimes h_n \ket{\Psi_s}$. Moreover, we use the notation $g = g_1 \otimes \ldots \otimes g_n$, $G =  G_1 \otimes \ldots \otimes G_n = g^\dagger g$ and analogue notation for $h_1 \otimes \ldots \otimes h_n$. For convenience, we work with unnormalized states. Furthermore, let us denote the stabilizer, or (local) symmetries, of a state $\ket{\Psi}$ by $S_{\Psi} = \{ S^{(1)} \otimes \ldots \otimes S^{(n)} \in GL(d, \mathbb{C})^{\times n} \ | \ S^{(1)} \otimes \ldots \otimes S^{(n)} \ket{\Psi} = \ket{\Psi}\}$. If not stated differently a superindex $(i)$ indicates as above that we refer to the part of the symmetry that is acting on party $i$. In order to denote that the matrix $B$ is acting on party $i$ we will use the notation $B_{(i)}$. Moreover, we will use a superindex $M^{[b]}$ to indicate the submatrix matrix $(M)_{bb}$ for the block $b$.

	\subsection{The symmetric subspace}
	Let us briefly recall some results about permutation-symmetric states, which we will simply call symmetric states in the following. We denote the symmetric subspace of $(\mathbb{C}^d)^{\otimes n}$ as $\operatorname{Sym}^n(\mathbb{C}^d)$, which is defined as
	\begin{align}
		\operatorname{Sym}^n(\mathbb{C}^d) = \left\{  \ket{\Psi} \in (\mathbb{C}^d)^{\otimes n} \ | \   P_{i,j} \ket{\Psi} = \ket{\Psi} \ \forall i,j     \right\},
	\end{align}
	where $P_{i,j}$ is a non-local operator permuting parties $i$ and $j$ \cite{Ha13,referenceBookWatrous}. Its complex dimension equals the binomial coefficient $\begin{pmatrix}n + d -1\\d-1 \end{pmatrix}$ \cite{Ha13,referenceBookWatrous}. As a simple example, the symmetric subspace for two qubits is spanned by $\{\ket{00},\ket{11},(\ket{01} + \ket{10})/\sqrt{2}\}$ and hence 3-dimensional. In general, a state in $\ket{\Psi} \in \operatorname{Sym}^n(\mathbb{C}^d)$ can be written as 
	\begin{align}
		\label{eq:symstate}
		\ket{\Psi} =  \sum_{\vec{n}} \sum_{\vec{i}:\#i = n_i} \alpha_{\vec{n}}\ \ket{i_1, \ldots, i_n},
	\end{align}
	where the first sum is over all $d$-dimensional vectors $\vec{n}$, whose entries sum to $n$, the second sum is over all $n$-dimensional vectors $\vec{i}$ in which the entry $i$ occurs exactly $n_i$ times, and $\alpha_{\vec{n}} \in \mathbb{C}$ \cite{Ha13}.
	
	Alternatively to Eq.\;(\ref{eq:symstate}), any $\ket{\psi} \in \operatorname{Sym}^n(\mathbb{C}^d)$ can also be written as \cite{comon}
	\begin{equation}\label{srank}
	|\psi\rangle=\sum_{i=1}^r|w_i\rangle^{\otimes n},
	\end{equation}
	where $|w_i\rangle\in\mathbb{C}^d$ $\forall i$. The smallest value of $r$ is called symmetric tensor rank, which we denote by $\rank_S(\psi)$. Since two symmetric states $|\psi\rangle$ and $|\phi\rangle$ are in the same SLOCC class iff there exists an invertible $A\in\mathbb{C}^{d\times d}$ such that $|\phi\rangle=A^{\otimes n}|\psi\rangle$ \cite{MiRo13,  MaKr10}, the symmetric tensor rank is the same for all symmetric states in the same SLOCC class.

	In the case of symmetric qubit states we will also use the Majorana representation. Ignoring normalization, every $|\psi\rangle\in$\;$\operatorname{Sym}^n(\mathbb{C}^2)$ can be written as \cite{comon}
	\begin{equation}\label{majorana}
	|\psi\rangle=\sum_\pi P_\pi(|\epsilon_1\cdots\epsilon_n\rangle),
	\end{equation}
	where each $|\epsilon_j\rangle$ is a single-qubit state, i.e., $|\epsilon_j\rangle=\cos\alpha_j|0\rangle+\exp(i\beta_j)\sin\alpha_j|1\rangle$ with $\alpha_j\in[0,\pi/2]$ and $\beta_j\in[0,\pi]$ $\forall j$, and the summation runs over all possible permutations. The Majorana representation is unique in the sense that the $2n$ real parameters $\{\alpha_j,\beta_j\}$ uniquely identify every state $|\psi\rangle\in$\;$\operatorname{Sym}^n(\mathbb{C}^2)$. Given a state $|\psi\rangle$, the diversity degree $m$ is the number of pairs $(\alpha_j,\beta_j)$ that take different values and the degeneracy configuration $\{k_1,\ldots,k_m\}$ labels how often a state, $\ket{\epsilon_j}$, (i.e., a pair $(\alpha_j,\beta_j)$) occurs in $|\epsilon_1\cdots\epsilon_n\rangle$ [see Eq. (\ref{majorana})]. Since the Majorana representation is unique and two symmetric states $|\psi\rangle$ and $|\phi\rangle$ are in the same SLOCC class iff there exists an invertible $A\in\mathbb{C}^{2\times2}$ such that $|\phi\rangle=A^{\otimes n}|\psi\rangle$, the diversity degree and the degeneracy configuration are SLOCC invariant \cite{BaKr09}.

	In the following, we will be mainly concerned with SLOCC classes that contain at least one symmetric state. Note that the union of those SLOCC classes is indeed a measure zero set, and thus, having a potentially rich LOCC structure there does not contradict with Refs.\;\cite{GoKr17,SaWa18}.
	This can be seen via a simple parameter counting argument as follows. As mentioned before, the dimension of the symmetric subspace $\operatorname{Sym}^n(\mathbb{C}^d)$ is $\begin{pmatrix}n + d -1\\d-1 \end{pmatrix}$. The maximal dimension of an SLOCC orbit is $n (d^2-1)$. Thus, the union of SLOCC classes containing at least one symmetric state has a dimension of no more than $n (d^2-1) + \begin{pmatrix}n + d -1\\d-1 \end{pmatrix}$ \cite{footnote1}, while the full Hilbert space has dimension $d^n$. Given a fixed $d$, the union of SLOCC classes containing symmetric states will thus be of measure zero for sufficiently large $n$.

	\subsection{Exceptionally symmetric and non-exceptionally symmetric states}\label{subsec:Prelim_ESandNES}
	In \cite{MiRo13,  MaKr10}, it has been shown that any two SLOCC-equivalent symmetric states $\ket{\Psi}$ and $\ket{\Phi}$, i.e., symmetric states for which there exists $A_1, \ldots, A_n$ such that $\ket{\Phi} =A_1 \otimes \ldots \otimes A_n  \ket{\Psi}$, are always related by a local operation of the form $A^{\otimes n}$, i.e., $\ket{\Phi} = A^{\otimes n} \ket{\Psi}$ for some $A$. Hence, given a symmetric seed state $\ket{\Psi_s}$, it suffices to consider operators of the form $A^{\otimes n}$ in order to obtain all symmetric states within the SLOCC class of $\ket{\Psi_s}$.
	
	Another result in \cite{MiRo13, MaKr10} addresses the stabilizer of symmetric states. Some symmetric states have a stabilizer of the form $B_{(i)} \otimes B^{-1}_{(j)}$ for a matrix $B  \not\propto \identity$, i.e., $B$ acting on some party $i$ and $B^{-1}$ acting on some other party $j$. Clearly, this is a property that is common to the whole SLOCC class. Note that the eigenvalues of $B$ do not play a role (only their degeneracies), as whenever $B \otimes B^{-1}$ is a symmetry of a state, then so is $f(B) \otimes f(B)^{-1}$ for any analytic function $f$ \cite{MiRo13}. This allows one to construct a new symmetry with $\tilde{B}$ which has the same Jordan structure but has arbitrarily changed eigenvalues (respecting the degeneracies) \cite{MiRo13}. One can then distinguish two different types of symmetric states (SLOCC classes containing a symmetric state), which we call  \emph{exceptionally symmetric} [ES] and   \emph{non-exceptionally symmetric} [non-ES] states (classes). Symmetric states that have non-trivial symmetries of the form $B \otimes B^{-1}$ belong to ES classes and states that do not have such symmetries belong to non-ES classes. In fact, the latter states solely have symmetries of the form $A^{\otimes n}$.
	
	Let us mention that to solve for the symmetries of a symmetric state $\ket{\psi}$, it is sometimes more convenient to solve the equation $A^{\otimes n}\ket{\psi}=\lambda\ket{\psi}$ for $A\in SL(d,\mathbb{C})$ (or other suitable choices of normalization, e.g., $[A]_{1,1}=1$). Note further that the equation $B\otimes B^{-1}\otimes\identity^{\otimes n-2}\ket{\psi}=\lambda'\ket{\psi}$ holds only if $\lambda'=1$ \cite{footnote2}. 
	In the following, we will therefore consider the symmetry defining equation with a proportionality factor whenever it eases the presentation.

	Clearly, considering ES classes, the Jordan normal form of $B$ is invariant under SLOCC \cite{MiRo13}. Hence,  we will, without loss of generality, consider in the remainder of this section and in Section \ref{subsec:Prelim_DandND}  $B$ in Jordan normal form. Here and in the following, $J_k$ denotes a $(k \times k)$ Jordan block with eigenvalue 1. If $B$ is a single Jordan block of size $(k+1) \times (k+1)$, then the state
	\begin{align}
		\label{eq:ek}
		\ket{E_k} = \sum_{i_1 + \ldots + i_n = k} \ket{i_1 i_2 \ldots i_n} \in {\mathbb{C}^{k+1}}^{\otimes n}
	\end{align}
	is the (up to SLOCC) unique state stabilized by $B\otimes B^{-1}$ \cite{MiRo13}. In the following, we call these states states with $k$ excitations. For $k=1$, they are identical to the Dicke states with one excitation. However, for $k>1$, these states differ from the Dicke states as not all terms in the superposition are the same up to permutation (e.g., $\ket{E_2}$ for $n=2$ is given by  $\ket{E_2}=\ket{11}+\ket{02}+\ket{20}$). If $B$ contains more than one Jordan block, but all eigenvalues have geometric multiplicity 1, i.e., eigenvalues corresponding to different Jordan blocks are different, then
	\begin{align}
		\label{eq:sumek}
		\bigoplus_{b=1}^K \ket{E_{k_b}}
	\end{align}
	is (up to SLOCC) the unique state stabilized  by $B\otimes B^{-1}$  \cite{MiRo13}, where $K$ is the total number of Jordan blocks in $B$ of respective sizes $k_b + 1$. Recall that only the block structure and the multiplicities but not the specific values of the eigenvalues are relevant here.
	
	\subsection{Derogatory and non-derogatory exceptionally symmetric states}\label{subsec:Prelim_DandND}
	
	The matter becomes more involved when matrices $B$ with eigenvalues whose geometric multiplicity is larger than 1, so-called \textit{derogatory} matrices, are considered. In this case, states stabilized by $B \otimes B^{-1}$  are not unique any more \cite{MiRo13}. To discuss these classes further, we use the same notation as in \cite{MiRo13} and consider first the case where all Jordan blocks in $B$ share the same eigenvalue. The state $\ket{i_j^{(b_j)}}$ denotes here and in the following the $(i_j+1)$-th standard basis vector in the Jordan block $b_j$. An excitation $l$ is assigned to the $(l+1)$-th standard basis vector in a Jordan block. With this notation one can write the states of fixed excitation number $k$, which are stabilized by a  derogatory $B \otimes B^{-1}$ as \cite{MiRo13}
	
	\begin{align}
		\label{eq:ekderog}
		\ket{E_k^{n_1, \ldots, n_K}} = \sum_{\vec{b}:\#b = n_b}  \sum_{i_1 + \ldots + i_n = k} \ket{i_1^{(b_1)} i_2^{(b_2)} \ldots i_n^{(b_n)}}, 
	\end{align}
	where $n_b \geq 0$ for all $b \in \{1, \ldots, K\}$, and $n_1 + \ldots + n_K = n$. The total number of excitations is denoted by $k$. The first sum runs over all $n$-dimensional vectors $\vec{b}$ comprised of integers between 1 and $K$ such that each integer $b\in \{1, \ldots, K\}$ appears exactly $n_b$ times within $\vec{b}$. In other words, $n_b$ denotes the number of parties for which $\ket{i_j^{(b_j)}}$ has to lie in the range of the $b$th Jordan block. 
	To illustrate Eq.\;(\ref{eq:ekderog}), let us consider the three-qutrit case with a Jordan block of size 1 and a Jordan block of size 2. We have that $\ket{E_0^{3,0}}=\ket{000}$ since there is no excitation and all local vectors lie in the range of the first Jordan block. As for the state $\ket{E_0^{2,1}}=\ket{W}$, there is still no excitation, but since the zero-excitation state for the first block is $\ket{0^{(1)}}=\ket{0}$ and the one for the second block is $\ket{0^{(2)}}=\ket{1}$, we have that within each term, the state $\ket{0}$ appears twice ($n_1=2$) and $\ket{1}$ appears once ($n_2=1$).
	
	A full classification of SLOCC classes for derogatory $B$ is unknown. In particular, the set $\{\ket{E_k^{n_1, \ldots, n_K}}\}_{k, (n_1 \ldots, n_K)}$ does not constitute a set of representatives. However, it has been shown that a representative of any SLOCC classes stabilized by some $B\otimes B^{-1}$, where $B$ has only a single eigenvalue $\lambda$, can be written as
	\begin{align}
		\label{eq:ekderogsuperpos}
		\ket{\Phi_{\boldsymbol{\alpha}}^\lambda}=\sum_{k=0}^{\max_b(k_b)} \sum_{n_1+ \ldots + n_K = n} \alpha_{k, n_1, \ldots, n_K} \ket{E_k^{n_1, \ldots, n_K}},
	\end{align}
	where $k_b$ denotes the individual block sizes, and in the sum over $n_b$, those $n_b$ for which $k_b$ is smaller than $k$ are zero, and $\alpha_{k, n_1, \ldots, n_K} \in \mathbb{C}$ \cite{footnote3}.

	If $B$ constitutes of $K-m$ Jordan blocks with different eigenvalues (to which we refer to via $b\in \{1, \ldots, K-m\}$) and $m$ Jordan blocks sharing $d$ degenerate eigenvalues $\{\lambda_1,\ldots,\lambda_d\}$, then the states that are stabilized by $B\otimes B^{-1}$ can be written as the direct sum
	\begin{align}
		\label{eq:sumek2}
		\bigoplus_{b=1}^{K-m} \ket{E_{k_b}}\bigoplus_{j=1}^d \ket{\Phi_{\boldsymbol{\alpha_j}}^{\lambda_j}},
	\end{align}
	where $\ket{\Phi_{\boldsymbol{\alpha_j}}^{\lambda_j}}$ involves only the Jordan blocks that share the eigenvalue $\lambda_j$ \cite{footnote4}.
	
	Going beyond the distinction between ES and non-ES states and classes, we will now categorize ES states and classes into two distinct families. We will call the ES classes stabilized by some non-derogatory $B\otimes B^{-1}$ \emph{non-derogatory ES} SLOCC classes and those that are stabilized solely by derogatory $B \otimes B^{-1}$ \emph{derogatory ES} SLOCC classes. Equivalently, the ES SLOCC classes represented by states in Eqs. (\ref{eq:ek}) and (\ref{eq:sumek}) are non-derogatory, while all remaining ES SLOCC classes are derogatory. Let us remark that non-derogatory ES classes may very well be stabilized by some derogatory $B\otimes B^{-1}$. However, derogatory ES classes may not be stabilized by any non-derogatory $B\otimes B^{-1}$.
	
	Let us consider here a few simple examples to illustrate this classification. In particular, we provide examples which show that superpositions as in Eq.\;(\ref{eq:ekderogsuperpos}) have to be taken into account as not all of these are within an SLOCC orbit of states of the form in Eq.\;(\ref{eq:ekderog}). However, it should be noted that not any state of the form in Eq.\;(\ref{eq:ekderogsuperpos}) results in a different SLOCC class. Moreover, there exist non-derogatory ES states that can be written in this form. A very prominent example of non-derogatory ES states would be the three-qubit W state, $\ket{W} = \ket{001} +\ket{010} +\ket{100}$, which is of the form given in Eq.\;(\ref{eq:ek}) for $k=1$. It can be easily verified that, $B\otimes B^{-1}$ with $B = \begin{pmatrix}
	x&1      \\
	0 &x
	\end{pmatrix}$ is a symmetry. Moreover, the 3-qubit GHZ state $\ket{GHZ} = \ket{000} +\ket{111}$ is of the form in Eq.\;(\ref{eq:sumek}) for $K=2$ and $k_1=k_2 = 1$, so it is also ES. Indeed, the corresponding symmetry is with $B = \begin{pmatrix}
	x&0      \\
	0 &y
	\end{pmatrix}$. Actually, these are the only two possible non-trivial Jordan normal forms of $(2 \times 2)$ matrices. Hence, the W and the GHZ class are the only two ES classes of 3- (or $n$-) qubit states, so there exists no derogatory classes in the $n$-qubit case. To further clarify the notation for the excited states in Eq.\;(\ref{eq:ekderog}), we consider a rather lengthy but illuminating example of three four-level systems in derogatory ES classes.
	Let us consider $B = J_2 \oplus J_2$, where as before, we use the notation $J_k$ for a $(k \times k)$-Jordan block with eigenvalue 1. Then, $ \ket{E_1^{3,0}}$, $\ket{E_1^{0,3}}$, $\ket{E_1^{2,1}}$, $\ket{E_1^{1,2}}$, $\ket{E_0^{3,0}}$, $\ket{E_0^{0,3}}$, $\ket{E_0^{2,1}}$, and $\ket{E_0^{1,2}}$ are the states given by  Eq.\;(\ref{eq:ekderog}) for all possible configurations $k,n_1,n_2$ for the given $B$ matrix. Using the mapping $|i^{(b)}\rangle\to|b-1\rangle\otimes|i\rangle$ these states can be rewritten as
	\begin{align}
		\ket{E_0^{3,0}} &= \ket{000}_{A_1B_1C_1} \otimes \ket{000}_{A_2B_2C_2} \nonumber\\
		\ket{E_0^{2,1}} &= \ket{W}_{A_1B_1C_1} \otimes \ket{000}_{A_2B_2C_2} \nonumber\\
		\ket{E_0^{1,2}} &= \ket{\bar{W}}_{A_1B_1C_1} \otimes \ket{000}_{A_2B_2C_2} \nonumber\\
		\ket{E_0^{0,3}} &= \ket{111}_{A_1B_1C_1} \otimes \ket{000}_{A_2B_2C_2}\nonumber \\
		\ket{E_1^{3,0}} &= \ket{000}_{A_1B_1C_1} \otimes \ket{W}_{A_2B_2C_2} \nonumber\\
		\ket{E_1^{2,1}} &= \ket{W}_{A_1B_1C_1} \otimes \ket{W}_{A_2B_2C_2}\nonumber \\
		\ket{E_1^{1,2}} &=\ket{\bar{W}}_{A_1B_1C_1} \otimes \ket{W}_{A_2B_2C_2} \nonumber\\
		\ket{E_1^{0,3}} &= \ket{111}_{A_1B_1C_1} \otimes \ket{W}_{A_2B_2C_2},\label{eq:derogEG1}
	\end{align}
	where $\ket{\bar{W}} = \sigma_x^{\otimes 3} \ket{W}$, $\sigma_x$ denotes the Pauli X matrix. Here, the first party consists of the subsystems $A_1$ and $A_2$, the second has $B_1$ and $B_2$ and the third party holds the systems $C_1$ and $C_2$.
	As mentioned before, in contrast to the non-derogatory case, it is not the case that the set $\{\ket{E_k^{n_1, \ldots, n_K}}\}_{k, (n_1 \ldots, n_K)}$ gives representatives for all SLOCC classes in the derogatory case \cite{footnote5}. Consider the state
	\begin{align}
		&\ket{E_1^{2,1}} + \ket{E_0^{3,0}} + \ket{E_0^{0,3}}  \nonumber \\
		& \qquad = \ket{GHZ} \otimes \ket{000} +  \ket{W} \otimes \ket{W},\label{eq:derogEG2}
	\end{align}
	which as well gives rise to a derogatory class, but is not SLOCC-equivalent to any of the $\ket{E_k^{n_1,n_2}}$ listed above. This illustrates that superpositions as in Eq.\;(\ref{eq:ekderogsuperpos}) indeed need to be considered. Moreover, one additional difficulty arises. There is no guarantee that the superpositions in Eq.\;(\ref{eq:ekderogsuperpos}) would give rise to derogatory classes. In contrast to the example above, they may happen to coincide with a non-derogatory class. Consider the example
	\begin{align}
		\ket{E_1^{3,0}} + \ket{E_1^{0,3}} &= \ket{GHZ} \otimes \ket{W},\label{eq:derogEG3}
	\end{align}
	or $\ket{E_1^{2,1}} + \ket{E_1^{1,2}}$, which are both equivalent to $\ket{W} \oplus \ket{W}$. Therefore, both states are in the same non-derogatory ES SLOCC class as $\ket{W} \oplus \ket{W}$ which is non-derogatory.

	\subsection{Entanglement and state transformations}
	
	As mentioned in the introduction, the study of LOCC transformation is central in entanglement theory, as LOCC cannot increase entanglement. That is, if one state $\ket{\Psi}$ can be transformed into another state $\ket{\Phi}$ via LOCC (deterministically), then the entanglement, measured by some entanglement measure, $E$, of $\ket{\Psi}$ is larger or equal to the entanglement contained in $\ket{\Phi}$, i.e. $E(\ket{\Psi})\geq E(\ket{\Phi})$. Importantly, this holds for any entanglement measure, $E$. Hence, the study of LOCC transformations allows the comparison of the entanglement content of states. The simplest example of such transformations are local unitaries. However, as unitary operators are invertible, they do not change entanglement and are hence irrelevant here. 
	
	The existence of a non-trivial transformation presupposes the existence of local symmetries of the states, which can be easily seen as follows. Let us first consider the case of finitely many rounds of LOCC, i.e., $LOCC_{\mathbb{N}}$. If $\ket{\Psi}$ can be transformed non-trivially into $\ket{\Phi}$ via $LOCC_{\mathbb{N}}$, then there have to exist at least two distinct local operators, $M^{(1)}_i\otimes M^{(2)}_i\otimes \ldots \otimes  M^{(n)}_i$, for $i=1,2$ such that $M^{(1)}_i\otimes M^{(2)}_i\otimes \ldots \otimes  M^{(n)}_i \ket{\Psi}=\alpha_i 
	\ket{\Phi}$, with $\alpha_i\neq 0$. As we consider here fully entangled states, the local operators have to be invertible. Hence, the equations above imply that $\ket{\Psi}$ needs to possess a local symmetry. 
	
	Furthermore, as the LOCC transformation has to be deterministic, the measurement operators need to satisfy the completeness relation. To complicate matters, the measurements need to be implementable via LOCC. That is, one party performs a measurement and communicates the outcome $i$ to the other parties who then perform a measurement depending on $i$. This process terminates once all possible states which are generated during the process coincide with the desired final state. The fact that all local measurements have to obey the completeness relation make the characterization of LOCC protocols so complicated. 
	
	As one can see from the explanation above, it is the local symmetries of the states which make a transformation possible or not. In case there exist only finitely many local symmetries, a simple characterization of all states which can be reached from any other state via $LOCC_{\mathbb{N}}$ have been presented in Refs.\;\cite{SpdV17,dVSp17}. Moreover, the necessary and sufficient conditions for the existence of a LOCC transformation from a given state to any other state have been derived there for a single round of LOCC, i.e., $LOCC_1$, where one party implements a non-trivial measurement and the other parties apply (depending on the measurement outcome) a LU. We will extend these results to infinite symmetry groups here. 
	
	It should also be noted here that in case infinitely many rounds of LOCC are considered, it is still unclear whether non-invertible local matrices need to be considered as well \cite{HeEn21}. However, even in that case, it has been shown that for homogeneous systems, almost no state can be transformed into any other state \cite{GoKr17,SaWa18}. That is, almost all states are isolated. The reason is that almost no state possesses a local symmetry (other than the identity operator). 
	
	Let us mention here that the reason for considering $LOCC_1$ transformations in Refs.\;\cite{SpdV17,dVSp17}  and also here is that, apart from one exception, all previously studied LOCC transformations are all-deterministic. That is, in each round of the LOCC protocol the state is transformed deterministically into another state. Despite the fact that we will also present here some examples of pairs of states which can not be transformed into each other with such a protocol, we will mainly focus on transformability via $LOCC_1$. Due to that, we introduce here a notion of weak isolation. A state is called \textit{weakly isolated} if it is neither $LOCC_{\mathbb{N}}$-reachabe nor  $LOCC_1$-convertible. 
	
	The fact that LOCC enables to sort the states according to the entanglement they contain motivates a further investigation of LOCC transformation among states which are physically relevant. Knowing which state is more entangled than another one will also ease the discovery of new (operational) entanglement measures. In this context, it is worth mentioning that the study of LOCC and their probabilistic counterpart, SLOCC transformations, led to a complete set of easily computable entanglement measures for almost all $n$ qudit states \cite{SaSc18}.
	
	In this work, we study the entanglement of symmetric states, more precisely, we analyze which state transformations are possible via LOCC within SLOCC classes that contain symmetric states. As mentioned before, the motivation for this is physical and manifold. On the one hand, symmetric states naturally appear in physically relevant systems. In certain contexts, they display interesting physical features such as e.g. superradiance \cite{Dicke,HeRi}. Furthermore, prominent symmetric states such as e.g. the GHZ-state, the W-state, or Dicke states have found applications in quantum information (see e.g. \cite{Ca02}). All this suggests that a thorough study of entanglement in symmetric states in general will prove to be very fruitful. 
	
	\section{Results}
	\label{sec:results}
	
	\begin{figure*}[htb]
		\includegraphics[width=1.00\linewidth]{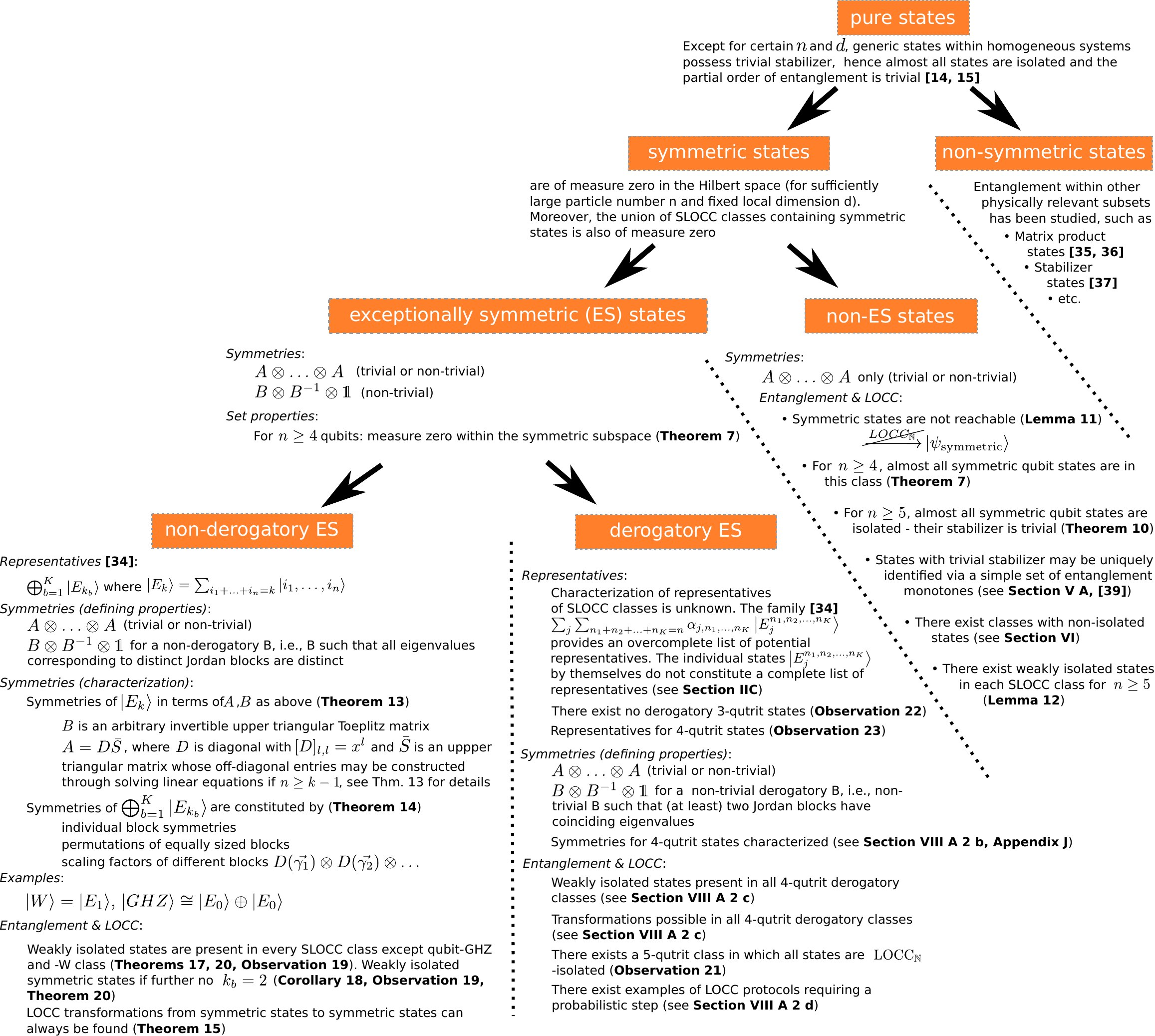}
		\caption{In this figure, we illustrate the classification of symmetric states and summarize the results. Where not stated otherwise, the statements hold for arbitrary local dimension $d$ and arbitrary particle numbers $n>3$.}
		\label{fig:overview}
	\end{figure*}
	
	In this section, we summarize the main results of this work [also see Fig.\;\ref{fig:overview} for a compact summary]. We focus on studying the possible LOCC transformations among and within SLOCC classes of pure symmetric states. We are particularly interested in finding SLOCC classes that exhibit a rich structure of possible LOCC transformations. As explained above, these transformations impose a partial order on the Hilbert space, which must be respected by any entanglement measure. In the course of this investigation, we characterize the local symmetries of those subsets of states, which are most promising regarding the set of possible LOCC transformations. 
	
	As mentioned in the introduction, it has been shown that in homogeneous systems constituted of $n>3$ parties with local dimension $d>2$ (as well as systems for which $d=2$, $n>4$), LOCC transformations among pure states are almost never possible \cite{GoKr17,SaWa18}. This is a consequence of the fact that generic states of such systems possess a trivial stabilizer. Note that this statement is not necessarily true for subsets of the Hilbert space of measure zero and, clearly, the set of symmetric states forms such a subspace of measure zero. However, we show that---considering qubits---the statement applies to the symmetric subspace considering $n\geq5$ particles (Theorem \ref{thm:genericsymmetricisolation}). To spell it out, state transformations involving symmetric qubit states are almost never possible as almost no such state possesses a non-trivial local symmetry. In this sense, the set of symmetric qubit states reflects the properties of the full Hilbert space. In turn, this result provides a simple and complete characterization of entanglement of generic symmetric qubit states.
	
	Compared to the generic case, the situation in the SLOCC classes corresponding to the $n$-qubit GHZ and W states is as different as it could possibly be. It is well-known that these two classes do not contain any isolated state, that is, a state that can neither be reached from any other state nor converted into any other state via LOCC (non-trivially). In other words, there, every state can take part in non-trivial state transformations (while, generically, no state can). This motivates a detailed investigation of zero measure subsets within the symmetric subspace. The classification of symmetric states into ES and non-ES states (see Section \ref{subsec:Prelim_ESandNES}) is well suited to identify the desired subset. The reason for that is the following. Note that in case the stabilizer is non-trivial, non-trivial LOCC transformations typically do exist. As explained before, whether a state can be transformed into another state via LOCC highly depends on the properties of the stabilizer \cite{GoWa11, dVSp13,HeEn21}. Hence, the classification into ES and non-ES states, which is purely based on properties of the stabilizer, is ideally suited here. For ES states, the symmetry group is non-unitary and infinite. In Section \ref{sec:reachability}, we hence generalize the tools for studying finite-round LOCC (LOCC$_\mathbb{N}$) transformations, which have been introduced for the case of a finite stabilizer \cite{SpdV17,dVSp17} \cite{footnote6}, to the case of an infinite (not necessarily unitary) stabilizer (Theorem \ref{thm:reachability} and Lemma \ref{lemma:convertibility}). We study the implications of these conditions  on the symmetries in Observations \ref{obs:quasicommutation2} and  \ref{obs:quasicommutation3}. Moreover, we give a simple necessary and sufficient condition for when a state is weakly isolated, i.e., isolated with respect to  LOCC$_\mathbb{N}$-reachability and LOCC$_1$-convertibility (Lemma \ref{lemma:isolation}).
	
	Using these results we then study the symmetries and possible LOCC transformations within the aforementioned classes of symmetric states. 
	
	{\it Non-ES states:} Recall that non-ES states are those states that have symmetries of the form $A^{\otimes n}$ only. Note that symmetric states with only the trivial stabilizer fall into this class. As mentioned before, for these states a complete, easily computable set of entanglement measures is known and the maximal success probability of transforming one state into the other can be easily computed \cite{SaSc18}.
	We show that non-ES states are never reachable via $LOCC_\mathbb{N}$ (Lemma \ref{lemmaSym}), even if the stabilizer is non-trivial, which implies that non-trivial finite round LOCC transformations among two non-ES states are never possible. Moreover, we show that every SLOCC class containing an non-ES state for $n\geq5$ contains weakly isolated states (Lemma \ref{lemma:non-es-isolation}). In the following we will thus focus on the most interesting case, the ES states.

	{\it ES states:} ES states have typically a much larger stabilizer compared to non-ES states, as the symmetry groups are generated by symmetries of the form $B \otimes B^{-1}$ in addition to $A^{\otimes n}$ (in the sense explained in the preliminaries). Recall that ES states are partitioned into the two sub-categories of non-derogatory ES and derogatory ES states, depending on the properties of $B$ (see Section \ref{sec:preliminaries}). We deal with these two types separately.
	
	{\it Non-derogatory ES states:} Regarding a broad variety of possible LOCC transformations, non-derogatory ES states appear particularly promising, as the W and the GHZ state, whose SLOCC classes do not contain any isolated states, are examples of non-derogatory ES states. Recall that all the SLOCC classes and corresponding representatives are known (see Section \ref{sec:preliminaries}). We characterize the stabilizer for all of them (Theorems \ref{lemma:eksyms} and \ref{lemma:eksumsyms}). As the W and GHZ states are two particular cases of non-derogatory ES states, the derived stabilizer serves as a generalization of the (particularly rich) symmetry groups of both the W and GHZ states.
	We show that within each of the SLOCC classes of non-derogatory ES states, LOCC transformations among two symmetric states are possible (Theorem \ref{thm:symtosym}). Nevertheless, we also show that in each of these SLOCC classes, except for the qubit-W and the qubit-GHZ classes, weakly isolated states are present (Theorems \ref{theo:isolation} and \ref{theo:sumsisolation}). This shows that despite the existence of the exceptionally large stabilizer within all the non-derogatory ES SLOCC classes, the qubit GHZ and qubit W classes are the only classes that are free of weak isolation.

	{\it Derogatory ES states:} Finally, we deal with derogatory ES states. Recall that the symmetries of the form $B \otimes B^{-1}$ are more restricted here than in the respective non-derogatory ES counterparts, as at least one eigenvalue needs to be degenerate. Several difficulties arise, as pointed out in the preliminaries. In particular, the characterization of derogatory ES SLOCC classes is very delicate. This is due to the facts that neither the representatives of derogatory ES SLOCC classes are known for arbitrary system sizes, nor is there a simple and sufficient condition for a state to fall into the derogatory ES class. Due to that, our main focus here will be to demonstrate the properties of those states with the help of examples. Among $n$-qubit and three-qutrit states (Observation\;\ref{Obs:3Qutrit}), there does not exist any derogatory ES state. The smallest system comprising derogatory ES states is a four-qutrit system, which we analyze in detail in Sec. \ref{subsec:4qutritDerog}. First, we characterize the symmetry groups of all derogatory ES four-qutrit states. Then, we show that in each of the classes, symmetric weakly isolated states are present. Moreover, we show that certain non-trivial LOCC transformations are possible (Observation\;\ref{Obs:4Qutrit}). Further, we identify a derogatory ES 5-qutrit state representing an SLOCC class in which all states are isolated with respect to LOCC$_\mathbb{N}$-reachability and -convertibility (Observation \ref{obs:derogatoryisolation}). This SLOCC class together with the SLOCC classes of the GHZ and W--state (qubits) therefore represent the two extreme cases of SLOCC classes. 
	
	We are presenting these results by first considering qubit states, for which we show that the generic results hold also true for symmetric states (Sec.\;\ref{sec:symmetric}). Then, we move on to non--ES states, where we show that the symmetries are still too restrictive to allow a rich LOCC structure to exist. Nevertheless, we also present a family of symmetric states which are indeed LOCC-convertible. States having the most symmetries and hence the most promising candidates for rich LOCC structures are the ES states and among them actually the non-derogatory ES states. We scrutinize this class of states and derive all local symmetries of them. Their symmetries are so rich that even LOCC transformations among symmetric states can be derived. However, also in these classes, with the exception of the GHZ and W classes, we show that weakly-isolated states always exist.

	As mentioned above, the symmetries of the form $B \otimes B^{-1}$ of derogatory ES states are always strict subsets of their non-derogatory ES counterparts. However, this is not necessarily true for the symmetries of the form $A^{\otimes n}$. Thus, the conclusion that derogatory ES classes exhibit a smaller stabilizer and hence  restricted possibilities of LOCC transformations compared to their non-derogatory counterparts cannot be  drawn. However, all the results presented here indicate that, typically, the structure of possible LOCC transformations is richer within non-derogatory ES classes compared to derogatory ES classes. To gain insight into the possible structures within derogatory ES classes, we analyze in detail small system sizes and show that, despite the fact that local symmetries exist, there exist SLOCC classes where all states are isolated.

	\section{$LOCC_{\mathbb{N}}$-reachability for non-unitary stabilizer}
	\label{sec:reachability}
	
	In this section, we generalize two results obtained by some of us in \cite{SpdV17,dVSp17} upfront. There, LOCC protocols involving a finite number of rounds of classical communication, denoted by $LOCC_\mathbb{N}$, are studied. Due to practical reasons, we consider here $LOCC_\mathbb{N}$ despite the fact that LOCC protocols may, in general, include an infinite number of such rounds. For more details on the difference between LOCC and $LOCC_\mathbb{N}$, see \cite{SpdV17,dVSp17,Ch11}.
	
	In  Theorem 1 in \cite{SpdV17,dVSp17} necessary and sufficient conditions for reachability under $LOCC_\mathbb{N}$ have been derived for the case of a unitary stabilizer. Here, we generalize the theorem to an arbitrary stabilizer.
	Moreover, we also generalize Lemma 3 therein, providing necessary and sufficient conditions for $LOCC_1$-convertibility, i.e., convertibility under a nontrivial operation performed by one party  \cite{footnoteLOCC1}, to the scenario of an infinite stabilizer. We heavily make use of these generalization derived here, which clearly have significance beyond the present work.
	
	\begin{theorem} 
		\label{thm:reachability}
		A state $\ket{\Phi}\propto h \ket{\Psi_s}$ is reachable via $LOCC_{\mathbb{N}}$, iff there exists $S \in S_{\Psi_s}$ such that the following conditions hold up to permutations of the particles:
		\begin{itemize}
			\item[(i)] For any $i\geq 2$, $ (S^{(i)})^\dagger H_i S^{(i)} \propto H_i$ and
			\item[(ii)] $  (S^{(1)})^\dagger H_1 S^{(1)} \not\propto H_1$.
		\end{itemize}
	\end{theorem}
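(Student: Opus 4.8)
The plan is to mimic the proof of Theorem 1 in \cite{SpdV17,dVSp17}, replacing the finite-stabilizer arguments by ones that survive when $S_{\Psi_s}$ is an arbitrary (possibly infinite, possibly non-unitary) group. Recall that $\ket{\Phi}\propto h\ket{\Psi_s}$ is reachable via $LOCC_{\mathbb{N}}$ iff there is a last measurement step: some party, say party $1$, performs a nontrivial POVM $\{M_k\}$ and all other parties respond with local unitaries, so that $M_k^{(1)}\otimes U_k^{(2)}\otimes\cdots\otimes U_k^{(n)}\ket{\Psi}=\sqrt{p_k}\,\ket{\Phi}$ for each outcome $k$, where $\ket{\Psi}$ is the state right before that step. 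Since we deal with fully entangled states all these operators are invertible, and comparing two outcomes $k\neq l$ produces a nontrivial element of the stabilizer of $\ket{\Psi}$. First I would recast this, as in \cite{SpdV17,dVSp17}, in terms of the seed state: writing $\ket{\Psi}=g'\ket{\Psi_s}$ and $\ket{\Phi}=h\ket{\Psi_s}$, the existence of the final step is equivalent to the existence of $g'$ and a nontrivial symmetry $S\in S_{\Psi_s}$ such that (a) $h (g')^{-1}$ acts nontrivially only on party $1$ (the other parties merely apply LUs), and (b) the POVM completeness relation $\sum_k M_k^\dagger M_k\propto\identity$ can be met. The key algebraic observation — unchanged from the finite case — is that, after absorbing the LUs, the operators realizing the two outcomes differ by conjugation with an element of $S_{\Psi_s}$, and completeness on party $1$ translates into the operator inequality/identity that $H_1=h_1^\dagger h_1$ and $(S^{(1)})^\dagger H_1 S^{(1)}$ must be simultaneously diagonalizable with a convex combination proportional to something implementable, which forces $(S^{(1)})^\dagger H_1 S^{(1)}\not\propto H_1$ (condition (ii)); if instead $(S^{(1)})^\dagger H_1 S^{(1)}\propto H_1$ for every available $S$, the measurement collapses to a trivial one and $\ket{\Phi}$ is not reachable.

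The second half is to pin down what happens on parties $i\geq2$. Since those parties only apply local unitaries in the last step, the operator $g'_i$ relating $\ket{\Psi}$ to $\ket{\Phi}$ on party $i$ differs from $h_i$ by a unitary; equivalently $G'_i=(g'_i)^\dagger g'_i$ and $H_i$ are related by $G'_i\propto W_i H_i W_i^\dagger$ with $W_i$ unitary. Feeding this back through the symmetry $S$ that connects the two measurement branches, one obtains precisely $(S^{(i)})^\dagger H_i S^{(i)}\propto H_i$ for all $i\geq2$ (condition (i)). Conversely, given $S\in S_{\Psi_s}$ satisfying (i) and (ii), I would explicitly construct a valid two-outcome $LOCC_1$ protocol: define $M_1\propto h_1$ and $M_2\propto h_1 S^{(1)}$ on party $1$ (rescaled so that $M_1^\dagger M_1+M_2^\dagger M_2\propto\identity$, which is possible exactly because the two positive operators $H_1$ and $(S^{(1)})^\dagger H_1 S^{(1)}$ are non-proportional — here one uses that any two non-proportional positive operators admit a positive combination proportional to the identity up to rescaling, the standard fact behind $LOCC_1$ feasibility), and let the other parties apply the unitaries extracted from the polar decompositions guaranteed by (i); both branches then yield $\ket{\Phi}$ up to normalization. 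This gives reachability, closing the equivalence. The "up to permutations of the particles" clause simply reflects that the distinguished party need not be labeled $1$.

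The main obstacle I anticipate is exactly the point where the finite-stabilizer hypothesis was used in \cite{SpdV17,dVSp17}: there, compactness/finiteness of $S_{\Psi_s}$ was invoked to average over the group (a twirl) and to guarantee that the infinitely many branches of a general $LOCC_{\mathbb{N}}$ protocol can be grouped into finitely many effective outcomes, reducing the last step to a manageable form. With an infinite, non-compact (because non-unitary) stabilizer this averaging is not available verbatim. The fix — which I expect to be the technical heart of Section \ref{sec:reachability} and to rely on the groundwork laid for Lemma \ref{lemma:convertibility} — is to argue that one may still assume the final measurement step has finitely many outcomes (indeed two suffice), because the set of post-measurement states equal to $\ket{\Phi}$ up to LU is cut out by the stabilizer action, and reachability only requires the \emph{existence} of one nontrivial symmetry with the stated conjugation properties rather than an average over all of them. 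Concretely, I would isolate the last nontrivial round, show the branch operators lie in a single coset $h\cdot(\text{LU on }i\geq2)\cdot(\text{stuff on }1)$, and then invoke a Carathéodory-type argument on party $1$ to reduce to two branches; the positivity/completeness bookkeeping on party $1$ is then finite-dimensional linear algebra independent of $|S_{\Psi_s}|$. I would double-check that nothing in the "only if" direction secretly needed unitarity of $S$ — the conjugations $(S^{(i)})^\dagger H_i S^{(i)}$ are written with $\dagger$ precisely to accommodate non-unitary $S$, and the congruence (not similarity) is what the completeness relation naturally produces — so the generalization should go through once the finite-branch reduction is secured.
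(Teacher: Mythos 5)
Your overall strategy coincides with the paper's: isolate the last non-trivial round, compare two measurement branches to extract a symmetry $S=S_1S_2^{-1}$ whose congruence action yields (i) on parties $i\geq 2$ and (ii) on party $1$, and for sufficiency exhibit an explicit two-outcome protocol. The obstacle you single out as the "technical heart" — twirling over the stabilizer and reducing infinitely many branches to finitely many — is not actually where any difficulty lies: since the last round is deterministic and non-trivial, two outcomes with $A_1^\dagger A_1\not\propto A_2^\dagger A_2$ already exist, and the argument is carried out branch-by-branch exactly as in the finite-stabilizer case, with commutation relations replaced throughout by the congruences $(S^{(i)})^\dagger H_i S^{(i)}\propto H_i$. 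No averaging or Carath\'eodory-type reduction is required.

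There is, however, a genuine error in your sufficiency construction. You set $M_1\propto h_1$ and $M_2\propto h_1S^{(1)}$ and claim completeness can be arranged because "any two non-proportional positive operators admit a positive combination proportional to the identity up to rescaling." That fact is false: for $H_1=\operatorname{diag}(1,2)$ and $(S^{(1)})^\dagger H_1S^{(1)}=\operatorname{diag}(1,3)$, no convex combination is proportional to $\identity$. The correct construction exploits the fact that in a reachability statement the \emph{initial} state is at your disposal: choose $g_1$ with $rG_1=pH_1+(1-p)(S^{(1)})^\dagger H_1S^{(1)}$ for some $0<p<1$, take $M_1=\sqrt{p/r}\,h_1g_1^{-1}$ and $M_2=\sqrt{(1-p)/r}\,h_1S^{(1)}g_1^{-1}$, so that $M_1^\dagger M_1+M_2^\dagger M_2=\identity$ holds by construction, and choose $g_i=V_ih_i\propto W_ih_iS^{(i)}$ for $i\geq 2$, the unitaries $V_i,W_i$ existing precisely because condition (i) forces $h_iS^{(i)}h_i^{-1}$ to be proportional to a unitary. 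Your version, which implicitly fixes $g_1=\identity$, produces a valid measurement only for very special $H_1$. The remainder of your outline — in particular the observation that only the congruence structure, and not unitarity of $S$, is used in the "only if" direction — is sound and matches the paper.
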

	
	The proof works very similarly to the case when the stabilizer is finite (and can therefore be chosen to be unitary) \cite{SpdV17} and is presented in Appendix \ref{app:proof_reach_conv}.

	\begin{lemma}
		\label{lemma:convertibility}
		A state $\ket{\Psi}\propto g \ket{\Psi_s}$ is convertible via $LOCC_{1}$ iff there exist $m$ symmetries $S_k \in S_{\Psi_s}$, with $m>1$ and $H_1>0$ and $p_k > 0$ with $\sum_{k=1}^m p_k = 1$, such that the following conditions hold up to permutations of the particles:
		\begin{itemize}
			\item[(i)] $(S_k^{(i)})^\dagger G_i S_k^{(i)} \propto G_i$ for any $i\geq 2$  and for all $k \in \{1, \ldots, m\}$ and
			\item[(ii)] $G_1 = \sum_{k=1}^m p_k \left( S_k^{(1)} \right)^\dagger H_1 S_k^{(1)}$ and $H_1 \not\propto \left(S^{(1)}\right)^\dagger G_1 S^{(1)}$ for any $S \in S_{\Psi_s}$ fulfilling $(S^{(i)})^\dagger G_i S^{(i)} \propto G_i$ for all $i\geq 2$.
		\end{itemize}
	\end{lemma}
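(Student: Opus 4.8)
The plan is to mirror the structure of the proof of Lemma 3 in \cite{SpdV17,dVSp17}, replacing everywhere the assumption that the stabilizer is finite and unitary by the machinery for general (infinite, non-unitary) stabilizers developed here, and relying on Theorem \ref{thm:reachability} as the reachability counterpart. Recall that an $LOCC_1$ protocol converting $\ket{\Psi}\propto g\ket{\Psi_s}$ consists of one party (say, party $1$) applying a nontrivial instrument $\{M_k\}_{k=1}^m$ followed by local unitaries on the other parties conditioned on the outcome, all outcomes leading to the same final state $\ket{\Phi}\propto h\ket{\Psi_s}$. First I would write out what this means at the level of operators: for each $k$, $M_k\otimes U_k^{(2)}\otimes\cdots\otimes U_k^{(n)}\,g\ket{\Psi_s}\propto h\ket{\Psi_s}$, which (since the state is fully entangled, so all local operators are invertible) forces $g^{-1}(M_k\otimes U_k^{(2)}\otimes\cdots)^{-1}h \in S_{\Psi_s}$, hence $h = (M_k\otimes U_k^{(2)}\otimes\cdots)\,g\,\tilde S_k^{-1}$ for some $\tilde S_k\in S_{\Psi_s}$; rewriting and collecting the unitary freedom gives symmetries $S_k\in S_{\Psi_s}$ relating the ``square'' operators $G_i=g_i^\dagger g_i$ and $H_i=h_i^\dagger h_i$. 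On parties $i\ge 2$ the local operation is unitary, so taking $(\cdot)^\dagger(\cdot)$ kills it and yields condition (i), $(S_k^{(i)})^\dagger G_i S_k^{(i)}\propto G_i$. On party $1$ the completeness relation $\sum_k M_k^\dagger M_k = \mathbbm{1}$ (after absorbing normalizations) translates, upon conjugating by $g_1$ and using $H_1 = h_1^\dagger h_1$, into the POVM-type identity $G_1 = \sum_k p_k (S_k^{(1)})^\dagger H_1 S_k^{(1)}$ of condition (ii). The nontriviality of the instrument together with Theorem \ref{thm:reachability} (applied to rule out that $\ket{\Psi}$ is merely reached-and-back, i.e.\ that the transformation is trivial/invertible) gives the second half of (ii), namely that $H_1$ cannot be proportional to $(S^{(1)})^\dagger G_1 S^{(1)}$ for any symmetry $S$ that already ``stabilizes'' the $G_i$, $i\ge2$ — otherwise the transformation would either be an LU or $\ket{\Phi}$ would be LU-equivalent to $\ket{\Psi}$.

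For the converse, I would start from symmetries $S_k$ and weights $p_k$ satisfying (i) and (ii) and explicitly construct the $LOCC_1$ protocol. Set $H_i := G_i$ for $i\ge 2$ and keep $H_1$ as given; define $\ket{\Phi}\propto h\ket{\Psi_s}$ with $h_i$ a square root of $H_i$. Condition (i) and the fact that $S_k\in S_{\Psi_s}$ let me absorb each $S_k^{(i)}$ ($i\ge2$) into a unitary $U_k^{(i)}$ (a ``square-root'' argument: if $S^\dagger G S\propto G$ then $G^{1/2}SG^{-1/2}$ is proportional to a unitary), so that the post-measurement state on parties $2,\dots,n$ is LU-equivalent to $\ket{\Phi}$'s reduced structure. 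On party $1$, condition (ii) expresses $G_1$ as a convex combination $\sum_k p_k (S_k^{(1)})^\dagger H_1 S_k^{(1)}$; defining $M_k \propto h_1 S_k^{(1)} g_1^{-1}$ (up to the right normalization with $\sqrt{p_k}$) I would verify $\sum_k M_k^\dagger M_k \propto \mathbbm{1}$, so the $\{M_k\}$ form a valid instrument, and that each branch maps $\ket{\Psi}$ to something LU-equivalent to $\ket{\Phi}$. The last clause of (ii) guarantees the instrument is nontrivial (at least two genuinely distinct outcomes producing a state not LU-equivalent to $\ket{\Psi}$), so the transformation is a bona fide $LOCC_1$ conversion and not a disguised LU.

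I expect the main obstacle to be the non-unitarity of the stabilizer, exactly the point where the original argument used compactness/finiteness. Concretely: in the finite case one may average over the (finite, hence compact unitary) stabilizer and use that the stabilizer elements are unitary to pass freely between $S$ and $S^\dagger$; here $S_k^{(i)}$ need not be unitary, so I must be careful that (a) the ``square-root'' manipulations $G^{1/2} S G^{-1/2}$ are well defined and actually produce unitaries precisely under the hypothesis $S^\dagger G S\propto G$ (with the proportionality constant forced to have modulus one by taking determinants or traces), and (b) the instrument operators $M_k$ I build are genuinely bounded and satisfy the completeness relation with the correct positive weights. A second delicate point is establishing the \emph{second} part of condition (ii) in the necessity direction — i.e.\ that nontriviality of the LOCC$_1$ protocol is equivalent to the stated non-proportionality condition ranging over \emph{all} $S\in S_{\Psi_s}$ compatible with the $i\ge2$ constraints, rather than just the finitely many appearing in the protocol; here I would invoke Theorem \ref{thm:reachability} together with the observation (used already in \cite{SpdV17,dVSp17}) that if such an $S$ existed then one could post-process the protocol by the corresponding LU to see that $\ket{\Phi}$ is LU-equivalent to $\ket{\Psi}$, contradicting nontriviality. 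Everything else is bookkeeping that runs parallel to \cite{SpdV17}, so I would present the shared parts briefly and concentrate the write-up on these two points.
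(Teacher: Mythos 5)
Your proposal follows essentially the same route as the paper's proof: the sufficiency direction constructs the instrument $M_k\propto\sqrt{p_k}\,h_1S_k^{(1)}g_1^{-1}$ with completeness guaranteed by the convex decomposition in (ii) and the unitaries on parties $i\ge 2$ obtained from the quasi-commutation relation (Observation \ref{obs:quasicommutation2}), while the necessity direction extracts symmetries from the branch equations and normalizes by $\tilde S_1$. The only cosmetic difference is that the second clause of (ii) is obtained in the paper directly from the characterization of LU-equivalence via the stabilizer (nontriviality means $H\not\propto S^\dagger G S$ for all $S\in S_{\Psi_s}$), without any appeal to Theorem \ref{thm:reachability}.
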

	The proof is analogous to the proof of Lemma 3 in \cite{dVSp17} and can be found in Appendix \ref{app:proof_reach_conv}.
	
	Let us consider a simple example to illustrate how these conditions allow one to decide reachability and convertibility. Assume that the symmetries of the state $\ket{\Psi}$ are $P_z^{\otimes n}$ where $P_z=\operatorname{diag}(z, 1/z)$ and $z\in \mathbb{C}$ and consider the state $\otimes_i h_i\ket{\Psi}$  with $h_i$ diagonal and  $h_i\neq \identity$  for all $i$. Then, using Theorem \ref{thm:reachability}, it is straightforward to show that these states cannot be reached with $LOCC_\mathbb{N}$ as all $H_i$  are diagonal and therefore either (for $|z|\neq 1$) $P_z^\dagger H_i P_z\not\propto H_i$ for all $i$  or  (for $|z|=1$) $P_z^\dagger H_i P_z\propto H_i$ for all $i$. However, this state is convertible within a single round to another pure state, which can be easily seen as follows. The condition (i) in Lemma \ref{lemma:convertibility} holds true if e.g. $S_0=\identity^{\otimes n}$ and $S_1=\sigma_z^{\otimes n}$. Then, by choosing $\tilde{H}_1=H_1+\alpha \sigma_x$ with $\alpha\neq 0$ (but small enough such that $\tilde{H}_1>0$) it holds that $H_1 = 1/2 \sum_{k=1}^m \left( S_k^{(1)} \right)^\dagger \tilde{H}_1 S_k^{(1)}$ and $\tilde{H}_1\not\propto   \left(S^{(1)}\right)^\dagger H_1 S^{(1)}$. Hence, also condition (ii) holds true.
	
	As seen from above, a crucial property for reachability (and deterministic convertibility within one round) is the existence of a symmetry $S=\bigotimes_i S^{(i)}$ such that 
	\bea \label{eq_quasi}(S^{(i)})^\dagger H_i S^{(i)} \propto H_i
	\eea(and analogously for $G_i$). We call this relation a quasi-commutation relation. It can only hold true if the operators $S^{(i)}$ and $H_i$ are related to each other in a very specific way as the following observations show.
	
	\begin{observation}
		\label{obs:quasicommutation2}
		Let $A$ be a positive $k \times k$ matrix and $B$ be an arbitrary $k \times k$ matrix. Then, $B^\dagger A B \propto A$ if and only if
		$B \propto a^{-1} U a$ for some unitary $U$ and some $a$ s.t. $a^\dagger a = A$.
	\end{observation}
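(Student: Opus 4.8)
The plan is to prove both directions by reducing the quasi-commutation relation $B^\dagger A B \propto A$ to an ordinary commutation (in fact, proportionality-to-unitary) statement via the positive square root of $A$. Since $A$ is positive (and hence invertible in the relevant setting), it has a unique positive square root, and more generally any $a$ with $a^\dagger a = A$ is of the form $a = V A^{1/2}$ for some unitary $V$; I will use $a = A^{1/2}$ for concreteness and recover the general statement at the end.

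\textbf{Sufficiency.} Suppose $B \propto a^{-1} U a$ with $a^\dagger a = A$ and $U$ unitary; say $B = \mu\, a^{-1} U a$ with $\mu \in \mathbb{C}$. Then a direct computation gives
\begin{equation}
B^\dagger A B = |\mu|^2\, a^\dagger U^\dagger (a^{-1})^\dagger (a^\dagger a) a^{-1} U a = |\mu|^2\, a^\dagger U^\dagger U a = |\mu|^2\, a^\dagger a = |\mu|^2 A,
\end{equation}
using $(a^{-1})^\dagger a^\dagger = (a a^{-1})^\dagger = \identity$ and $U^\dagger U = \identity$. Hence $B^\dagger A B \propto A$.

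\textbf{Necessity.} Suppose $B^\dagger A B = c A$ for some scalar $c$. First I would argue $c > 0$: taking determinants (or simply noting that $B^\dagger A B$ is positive semidefinite and, since $B$ must be invertible here as $A$ is, actually positive definite), $c$ must be a positive real. Write $A = a^\dagger a$ with $a$ invertible (e.g. $a = A^{1/2}$). Substituting and conjugating by $a^{-1}$ on the right and $(a^\dagger)^{-1}$ on the left yields
\begin{equation}
(a^\dagger)^{-1} B^\dagger a^\dagger a B a^{-1} = c\, (a^\dagger)^{-1} a^\dagger a a^{-1} = c\, \identity,
\end{equation}
i.e. $(a B a^{-1})^\dagger (a B a^{-1}) = c\, \identity$. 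Setting $W = c^{-1/2} a B a^{-1}$, this says $W^\dagger W = \identity$, so $W$ is unitary, and therefore $B = c^{1/2} a^{-1} W a \propto a^{-1} W a$, which is the claimed form with $U = W$. Finally, if $a'$ is any other matrix with $(a')^\dagger a' = A$, then $a' = V a$ for a unitary $V$, and $a^{-1} W a = (a')^{-1} (V W V^\dagger) a'$ with $V W V^\dagger$ unitary, so the characterization is independent of the choice of square root.

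I do not expect a serious obstacle here; the only point requiring a little care is the invertibility of $B$ and the positivity of the proportionality constant $c$, which is where the positivity (as opposed to mere positive-semidefiniteness) of $A$ is used — if $A$ were only positive semidefinite one would have to restrict attention to its support, but in the intended application $A = H_i > 0$, so this is not an issue.
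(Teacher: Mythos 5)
Your proof is correct and follows essentially the same route as the paper's: write $A=a^\dagger a$, conjugate the relation to obtain $(aBa^{-1})^\dagger(aBa^{-1})\propto\identity$, and conclude that $aBa^{-1}$ is proportional to a unitary. The extra care you take with the positivity of the proportionality constant and the independence from the choice of $a$ is fine but not a departure from the paper's argument.
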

	
	\begin{observation}
		\label{obs:quasicommutation3}
		Let $A$ be a positive $k \times k$ matrix and $B$ be an arbitrary $k \times k$ matrix. Then, $B^\dagger A B \propto A$ if and only if the following two conditions are met. 
		\begin{enumerate}[(i)]
			\item $B$ is (up to proportionality) similar to a unitary matrix, i.e., $B \propto R \operatorname{diag}(e^{i \phi_1}, \ldots, e^{i \phi_k}) R^{-1}$ for some $k \times k$ matrix $R$ and $\phi_1, \ldots, \phi_k \in [0,2\pi)$.
			\item $A \propto R^{-\dagger} X R^{-1}$, where $X$ is a direct sum of positive matrices acting on the degenerate subspaces of $\operatorname{diag}(e^{i \phi_1}, \ldots, e^{i \phi_k})$.
		\end{enumerate}
	\end{observation}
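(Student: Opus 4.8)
The plan is to obtain Observation~\ref{obs:quasicommutation3} as a direct corollary of Observation~\ref{obs:quasicommutation2}: the latter already states that $B^\dagger A B\propto A$ is equivalent to $B\propto a^{-1}Ua$ for some unitary $U$ and some $a$ with $a^\dagger a=A$, and it remains only to diagonalize $U$ and absorb the diagonalizing unitary into $a$.

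For the implication from $B^\dagger A B\propto A$ to conditions (i) and (ii), I would start from $B\propto a^{-1}Ua$ as furnished by Observation~\ref{obs:quasicommutation2}, write $U=V\operatorname{diag}(e^{i\phi_1},\ldots,e^{i\phi_k})V^\dagger$ with $V$ unitary, and set $R:=a^{-1}V$, which is invertible since $A>0$. Then $B\propto R\operatorname{diag}(e^{i\phi_1},\ldots,e^{i\phi_k})R^{-1}$, giving (i); and from $a=VR^{-1}$ one computes $A=a^\dagger a=R^{-\dagger}R^{-1}$, so (ii) holds for this same $R$ with the trivial choice $X=\identity$, which is a direct sum of positive matrices on the degenerate subspaces of $\operatorname{diag}(e^{i\phi_1},\ldots,e^{i\phi_k})$.

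For the converse, I would take $R$ as in (i), write $D:=\operatorname{diag}(e^{i\phi_1},\ldots,e^{i\phi_k})$ and $X=\bigoplus_j X_j$ as in (ii); each block $X_j$ is positive definite (forced by $A>0$), so $X$ admits a block-diagonal invertible square root $W=\bigoplus_j W_j$, and $W$ commutes with $D$ because it is block-diagonal with respect to the degenerate eigenspaces of $D$. Setting $a:=WR^{-1}$ yields $a^\dagger a=R^{-\dagger}W^\dagger WR^{-1}=R^{-\dagger}XR^{-1}\propto A$, hence $a^\dagger a=A$ after rescaling, and $a^{-1}Da=RW^{-1}DWR^{-1}=RDR^{-1}\propto B$ using $[W,D]=0$. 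Thus $B\propto a^{-1}Da$ with $D$ unitary and $a^\dagger a=A$, and Observation~\ref{obs:quasicommutation2} returns $B^\dagger A B\propto A$.

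The one point that I expect to need care, and that I regard as the main (if modest) obstacle, is that the matrix $R$ realizing (i) is highly non-unique, so one must check that the possibility of also meeting (ii) does not depend on the choice of $R$. If $RDR^{-1}=\tilde R D\tilde R^{-1}$, then $W:=R^{-1}\tilde R$ commutes with $D$ and is therefore block-diagonal with respect to its degenerate eigenspaces, so $\tilde R=RW$ and $\tilde R^{-\dagger}X\tilde R^{-1}=R^{-\dagger}(W^{-\dagger}XW^{-1})R^{-1}$ with $W^{-\dagger}XW^{-1}$ again block-diagonal and positive; hence the existence of an admissible pair $(R,X)$ is independent of which $R$ is used in (i) (and a reordering of the $\phi_j$, which the proportionality in (i) also permits, merely relabels the blocks). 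Beyond this bookkeeping and the routine existence of a block-diagonal square root, the argument is purely a restatement of Observation~\ref{obs:quasicommutation2}, so no deeper difficulty is anticipated.
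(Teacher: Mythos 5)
Your proposal is correct, and it organizes the argument differently from the paper. The paper proves sufficiency by a one-line computation (condition (ii) forces $X$ to commute with $\operatorname{diag}(e^{i\phi_1},\ldots,e^{i\phi_k})$, so $B^\dagger A B\propto R^{-\dagger}D^\dagger X D R^{-1}=A$), and proves necessity by taking the $R$ that diagonalizes $B$, setting $X:=R^\dagger A R$, and showing that $D^{-1}XD=\lambda X$ together with positivity of the diagonal entries forces $\lambda=1$ and $X_{ij}=0$ whenever $\phi_i\neq\phi_j$. You instead route both directions entirely through Observation~\ref{obs:quasicommutation2}, explicitly constructing the matrix $a$ with $a^\dagger a=A$: in the forward direction your choice $R=a^{-1}V$ yields the clean normal form $X=\identity$, and in the converse the block square root $W=X^{1/2}$ gives $a=WR^{-1}$. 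Both proofs are complete; what each buys is slightly different. Yours is more constructive and makes the existence statement transparent, but because it only exhibits \emph{one} admissible $R$, you need the final transfer argument ($\tilde R=RW$ with $[W,D]=0$) to recover what the paper gets for free, namely that $R^\dagger A R$ is block-diagonal for \emph{every} $R$ realizing (i). That stronger reading is the one the paper relies on afterwards (e.g., the remark that $X$ \emph{must} be diagonal when the $\phi_i$ are distinct, and the construction $X=R^{-\dagger}(\sigma_x\oplus 0)R^{-1}$ in the proof of Lemma~\ref{lemma:isolation}, which uses the specific $R$ diagonalizing the symmetry); your bookkeeping paragraph does recover it, so nothing is lost, but it is worth keeping that paragraph rather than treating it as optional.
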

	The proofs of these observations can be found in Appendix \ref{app:quasicommutation2}.
	Note that in the particular case in which the eigenvalues of $B$ are non-degenerate, i.e., $\phi_i$ are mutually different, $X$ must be a diagonal matrix with positive entries. Note further, that these observations directly imply the following corollary.
	
	\begin{corollary}
		\label{lemma:quasicommutation}
		For a full rank, non-diagonalizable $k \times k$ matrix $B$, there exists no positive definite $k \times k$ matrix $A$ such that the equation $B^\dagger A B \propto A$ is satisfied. 
	\end{corollary}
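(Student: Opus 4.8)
The plan is to derive the corollary directly from Observation \ref{obs:quasicommutation3} by contraposition. Suppose, for contradiction, that $B$ is a full-rank, non-diagonalizable $k\times k$ matrix and that there exists a positive definite $A$ with $B^\dagger A B \propto A$. By Observation \ref{obs:quasicommutation3}(i), this forces $B$ to be proportional to a matrix that is similar to a unitary, i.e. $B \propto R\,\operatorname{diag}(e^{i\phi_1},\ldots,e^{i\phi_k})\,R^{-1}$ for some invertible $R$. The key point is that a matrix of this form is diagonalizable (its Jordan form is the diagonal matrix $\operatorname{diag}(e^{i\phi_1},\ldots,e^{i\phi_k})$), and scaling by a nonzero constant does not change diagonalizability. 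This contradicts the assumption that $B$ is non-diagonalizable, and the corollary follows.

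First I would note that we may invoke Observation \ref{obs:quasicommutation3} since $A$ is positive by hypothesis — this is exactly the setting of that observation, with no additional work required. The only subtlety worth spelling out is the remark that "up to proportionality similar to a unitary" still implies diagonalizable: if $B = c\,R U R^{-1}$ with $U$ unitary and $c\neq 0$ (the case $c=0$ is excluded since $B$ has full rank), then $B = R(cU)R^{-1}$ exhibits $B$ as similar to the diagonalizable matrix $cU$, hence $B$ is itself diagonalizable. This contradicts non-diagonalizability of $B$, completing the argument.

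I do not expect any genuine obstacle here: the corollary is an essentially immediate consequence of Observation \ref{obs:quasicommutation3}, and the entire content is the observation that the necessary condition it provides — proportionality to something similar to a unitary — is incompatible with having a nontrivial Jordan block. The only thing to be careful about is to use the full-rank hypothesis to rule out the degenerate proportionality constant $c=0$, and to state clearly why the "up to proportionality" qualifier does not interfere with the diagonalizability conclusion.
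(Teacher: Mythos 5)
Your proposal is correct and follows exactly the route the paper intends: the paper states the corollary as a direct consequence of Observations \ref{obs:quasicommutation2} and \ref{obs:quasicommutation3}, and your argument simply spells out why condition (i) of Observation \ref{obs:quasicommutation3} (proportionality to a matrix similar to a unitary) is incompatible with non-diagonalizability. The extra care about the full-rank hypothesis excluding a zero proportionality constant is a reasonable, harmless addition.
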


	Hence, in order for Eq. (\ref{eq_quasi}) to hold true $S^{(i)}$ has to be diagonalizable. Note that the symmetries $B\otimes B^{-1}$ of the ES classes are in general not necessarily diagonalizable and therefore it is a priori not clear that non-trivial transformations are possible in ES classes. An exception here is e.g. the GHZ state (of arbitrary dimension), for which the symmetry $B\otimes B^{-1}$ is diagonal.
	
	The conditions in Theorem \ref{thm:reachability} and Lemma \ref{lemma:convertibility} can be employed to identify states that are neither reachable with a finite-round LOCC protocol nor deterministically transformable within one round to another pure state. We call such states in the following weakly isolated states. The following lemma provides a complete characterization of such states. 
	
	\begin{lemma} 
		\label{lemma:isolation}
		A state $\ket{\Phi}\propto g \ket{\Psi_s}$ is weakly isolated if and only if there exists no $S \in S_{\Psi_s} \setminus\{\identity\}$ such that $ (S^{(i)})^\dagger G_i S^{(i)} \propto G_i$ for at least $n-1$ sites $i$.
	\end{lemma}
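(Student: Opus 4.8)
The plan is to prove Lemma~\ref{lemma:isolation} by combining Theorem~\ref{thm:reachability} (reachability via $LOCC_{\mathbb{N}}$) with Lemma~\ref{lemma:convertibility} ($LOCC_1$-convertibility): a state is weakly isolated precisely when it is \emph{neither} reachable \emph{nor} $LOCC_1$-convertible, so I would show that the negation of the displayed condition is equivalent to the disjunction of the conditions appearing in those two results. Write $\ket{\Phi}\propto g\ket{\Psi_s}$ and set $G=g^\dagger g$. Call a symmetry $S\in S_{\Psi_s}$ \emph{$G$-admissible on $n-1$ sites} if there is a labelling of the parties such that $(S^{(i)})^\dagger G_i S^{(i)}\propto G_i$ for $i\geq 2$. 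The claim is then: $\ket{\Phi}$ is \emph{not} weakly isolated $\iff$ there exists $S\in S_{\Psi_s}\setminus\{\identity\}$ that is $G$-admissible on $n-1$ sites.

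First, the ``$\Leftarrow$'' direction. Suppose such an $S\neq\identity$ exists, with $(S^{(i)})^\dagger G_i S^{(i)}\propto G_i$ for $i\geq 2$ after relabelling. If in addition $(S^{(1)})^\dagger G_1 S^{(1)}\not\propto G_1$, then $\ket{\Phi}$ satisfies exactly conditions (i) and (ii) of Theorem~\ref{thm:reachability} (with $H_i=G_i$, $h=g$), hence is $LOCC_{\mathbb{N}}$-reachable and therefore not weakly isolated. If instead $(S^{(1)})^\dagger G_1 S^{(1)}\propto G_1$ for \emph{every} such admissible $S$, I need to produce an $LOCC_1$ conversion, i.e.\ verify the hypotheses of Lemma~\ref{lemma:convertibility}. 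The idea is the standard one sketched in the example after Lemma~\ref{lemma:convertibility}: take $S_1=\identity$, $S_2=S$ (and possibly further powers/products of $S$), which automatically satisfy condition (i) of Lemma~\ref{lemma:convertibility}; then choose $H_1=G_1+\epsilon\,\Delta$ for a suitable Hermitian perturbation $\Delta$ and small $\epsilon>0$ so that $H_1>0$, $H_1\not\propto (S'^{(1)})^\dagger G_1 S'^{(1)}$ for all admissible $S'$ (possible precisely because all these are $\propto G_1$, so a generic small perturbation breaks proportionality), and $G_1=\sum_k p_k (S_k^{(1)})^\dagger H_1 S_k^{(1)}$ for appropriate probabilities $p_k$ — here one symmetrises $\Delta$ over the finite (or suitably chosen) set of group elements $\{S_k^{(1)}\}$ so that the average of the perturbations cancels; since $S\neq\identity$ acts nontrivially somewhere and the admissibility is only on $n-1$ sites, one can arrange $\Delta$ to live on site $1$ in a way not invariant under $S^{(1)}$. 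This exhibits an $LOCC_1$ conversion, so again $\ket{\Phi}$ is not weakly isolated.

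Second, the ``$\Rightarrow$'' direction, by contraposition: assume $\ket{\Phi}$ is not weakly isolated. Then either it is $LOCC_{\mathbb{N}}$-reachable or it is $LOCC_1$-convertible. If it is reachable, Theorem~\ref{thm:reachability} directly gives an $S\in S_{\Psi_s}$ with $(S^{(i)})^\dagger G_i S^{(i)}\propto G_i$ for $i\geq 2$ and $(S^{(1)})^\dagger G_1 S^{(1)}\not\propto G_1$; in particular $S\neq\identity$ (the last condition fails for the identity) and $S$ is $G$-admissible on $n-1$ sites, as required. If it is $LOCC_1$-convertible, Lemma~\ref{lemma:convertibility} supplies symmetries $S_1,\ldots,S_m$ with $m>1$ obeying condition (i), i.e.\ $(S_k^{(i)})^\dagger G_i S_k^{(i)}\propto G_i$ for all $i\geq 2$ and all $k$; since $m>1$, at least one $S_k$ is not equal to a fixed chosen one, and more carefully the conversion being nontrivial forces not all $S_k$ to coincide, so at least one of the $S_k S_l^{-1}$ (which again lies in $S_{\Psi_s}$ and is $G$-admissible on $n-1$ sites) is $\neq\identity$ — this gives the desired symmetry.

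The main obstacle I anticipate is the ``$\Leftarrow$'' argument in the subcase where the only admissible symmetries fix $G_1$ up to proportionality, i.e.\ carefully constructing the perturbation $\Delta$ and probabilities $p_k$ so that all three requirements of Lemma~\ref{lemma:convertibility}(ii) hold simultaneously: one must ensure $H_1>0$ (an open condition, fine for small $\epsilon$), that $G_1$ is the $p_k$-average of the conjugated $H_1$'s (a linear condition, solvable by symmetrising $\Delta$ over the orbit — but one must check the orbit is finite or otherwise that the averaging makes sense, invoking that the relevant quasi-commuting symmetries can be taken, via Observations~\ref{obs:quasicommutation2}--\ref{obs:quasicommutation3}, to be similar to unitaries and hence that a compact averaging is available), and simultaneously that $H_1\not\propto (S'^{(1)})^\dagger G_1 S'^{(1)}$ for \emph{every} admissible $S'$ (a finite or closed set of proportionality constraints, generically violated). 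Balancing the cancellation condition against the non-proportionality condition is the delicate point; the escape is that $S\neq\identity$ means the orbit of the perturbation is nontrivial, leaving enough freedom to satisfy both. I would also take care to state everything ``up to permutations of the particles'' consistently, matching the conventions of Theorem~\ref{thm:reachability} and Lemma~\ref{lemma:convertibility}.
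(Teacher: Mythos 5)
Your overall architecture is the same as the paper's: reduce weak isolation to the conjunction of non-reachability (Theorem \ref{thm:reachability}) and non-convertibility (Lemma \ref{lemma:convertibility}), handle the "$\Rightarrow$" direction by reading off the symmetry from whichever of the two conditions holds, and in the "$\Leftarrow$" direction split on whether $(S^{(j)})^\dagger G_j S^{(j)}\propto G_j$ at the remaining site. The perturbation-plus-averaging construction you sketch for the convertibility subcase is also the one the paper uses, and the averaging worry you raise is resolved exactly as you suspect: by Observation \ref{obs:quasicommutation3} one writes $S^{(j)}\propto RDR^{-1}$ with $D=\operatorname{diag}(1,e^{i\alpha},\ldots)$, $\alpha\neq 0$, takes $X=R^{-\dagger}(\sigma_x\oplus 0)R^{-1}$ supported between two distinct eigenspaces of $D$, and chooses finitely many weights with $\sum_k p_k e^{ik\alpha}=0$; no compact averaging over the (possibly infinite cyclic) orbit is needed.

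The genuine gap is in how you close the convertibility subcase. You assert that a generic small perturbation $\Delta$ can be chosen so that, simultaneously, the averaging identity holds \emph{and} $H_1\not\propto (S'^{(1)})^\dagger G_1 S'^{(1)}$ for \emph{every} admissible $S'$. That last requirement is not a finite or obviously negligible set of constraints: for the states of interest the stabilizer is a continuous group, so the forbidden set $\{(S'^{(1)})^\dagger G_1 S'^{(1)}\}$ can be a positive-dimensional family, and the perturbations satisfying the (exact, linear) averaging identity form a proper subspace on which "generic" would have to be re-justified. The paper does not attempt this; instead it observes that the dichotomy is harmless: if the constructed $H_j$ \emph{does} happen to satisfy $H_j\propto(\tilde S^{(j)})^\dagger G_j\tilde S^{(j)}$ for some admissible $\tilde S$, then $\bar S=S\tilde S^{-1}$ is again a symmetry, quasi-commutes with $G_i$ on the $n-1$ sites, and satisfies $(\bar S^{(j)})^\dagger G_j\bar S^{(j)}\not\propto G_j$, so Theorem \ref{thm:reachability} yields reachability instead of convertibility. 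Either way the state is not weakly isolated. This case switch is the missing idea; without it your "the escape is that $S\neq\identity$ leaves enough freedom" remains an unproven claim on which the whole "$\Leftarrow$" direction rests.
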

	\begin{proof}
		We prove this by showing that a state is not weakly isolated iff there exists some $S \in S_{\Psi_s} \setminus\{\identity\}$ such that $ (S^{(i)})^\dagger G_i S^{(i)} \propto G_i$ for $n-1$ sites $i$. Obviously, if $\ket{\Phi}$ is non-isolated, then there must exist some $S \in S_{\Psi_s}$ such that $ (S^{(i)})^\dagger G_i S^{(i)} \propto G_i$ for $n-1$ sites $i$ due to Theorem \ref{thm:reachability} and Lemma \ref{lemma:convertibility}.
		Let us now prove the converse direction. To this end, we assume that there exists some $S \in S_{\Psi_s}$ such that $ (S^{(i)})^\dagger G_i S^{(i)} \propto G_i$ for all $i \in \{1, \ldots, n\} \setminus \{j\}$ and show that the state $\ket{\Phi}$ is non-isolated. Let us distinguish two cases, $ (S^{(j)})^\dagger G_j S^{(j)} \not\propto G_j$ and $ (S^{(j)})^\dagger G_j S^{(j)} \propto G_j$. In the first case, $ (S^{(j)})^\dagger G_j S^{(j)} \not\propto G_j$, it follows from Theorem \ref{thm:reachability} that this state is reachable. In the second case, $ (S^{(j)})^\dagger G_j S^{(j)} \propto G_j$, we can construct an operator $H_j$ such that $G_j = \sum_{k} p_k \left( S_k^{(j)} \right)^\dagger H_j S_k^{(j)}$ with $S_k^{(j)}= (S^{(j)})^k$ and $H_j \not\propto \left(S^{(j)}\right)^\dagger G_j S^{(j)}$.
		Then either condition (ii) in Lemma \ref{lemma:convertibility} is fulfilled and $\ket{\Phi}$  is convertible, or another symmetry $\bar{S}$ fulfilling the conditions in Theorem \ref{thm:reachability} can be constructed and $\ket{\Phi}$  is reachable.
		Note first that due to the fact that $S^{(j)}$ fulfills a quasi-commutation relation it has to be  of the form $S^{(j)}\propto RDR^{-1}$ with $D=\operatorname{diag}(1, e^{i \alpha},\ldots)$ with $\alpha \neq 0$ (up to reordering of the eigenvalues; see Observation \ref{obs:quasicommutation3}).  Then we choose the operator $H_j=G_j + \beta X$ with $X=R^{-\dagger} (\sigma_x\oplus 0) R^{-1}$, $\beta\neq 0$ such that $H_j>0$, and the probability distribution $p_k$ such that $\sum p_k e^{ik\alpha}=0$. Note that such a probability distribution always exists for any $\alpha \neq 0$. With this we have that $\sum p_k (S^{(j) \dagger})^k X(S^{(j)})^k=0$ and we obtain $G_j = \sum_{k} p_k \left( S_k^{(j)} \right)^\dagger H_j S_k^{(j)}$. Using that $ (S^{(j)})^\dagger G_j S^{(j)} \propto G_j$ it is  easy to see that for $\beta \neq 0$ it holds that $H_j \not\propto    \left(S^{(j)}\right)^\dagger G_j S^{(j)}$ If there does not exist any (other) symmetry $\tilde{S} \in S_{\Psi_s}$ such that $\left(\tilde{S}^{(j)}\right)^\dagger G_j \tilde{S}^{(j)} \propto H_j$ and $\left(\tilde{S}^{(i)}\right)^\dagger G_i \tilde{S}^{(i)} \propto G_i$ for all $i \in \{1, \ldots, n\} \setminus \{j\}$, condition (ii) in Lemma \ref{lemma:convertibility} is fulfilled and $\ket{\Phi}$ is hence convertible. Let us finally consider the case that such an $\tilde{S}$ exists. Then, $\bar{S} = S \tilde{S}^{-1}$ is also a symmetry of $\ket{\Psi_s}$ and we have that $\left(\bar{S}^{(i)}\right)^\dagger G_i \bar{S}^{(i)} \propto G_i$ for all $i \in \{1, \ldots, n\} \setminus \{j\}$. Moreover, we have that $\left(\bar{S}^{(j)}\right)^\dagger G_j \bar{S}^{(j)} = \left(\tilde{S}^{(j)}\right)^{-\dagger } \left(S^{(j)}\right)^\dagger G_j S^{(j)} \left(\tilde{S}^{(j)}\right)^{-1} \not\propto \left(\tilde{S}^{(j)}\right)^{-\dagger } H_j  \left(\tilde{S}^{(j)}\right)^{-1} \propto G_j$. Hence, the symmetry $\bar{S}$ fulfills the conditions in Theorem \ref{thm:reachability} and  $\ket{\Phi}$ is hence reachable.
		\end {proof}

		\section{Generic stabilizer of qubit SLOCC classes containing permutation symmetric states}
		\label{sec:symmetric}
		
		As the set of symmetric qubit states is of measure zero, it could well be that symmetric states do have non--trivial stabilizers generically. We will show now that this is in fact not the case and that the stabilizer for generic states in $\operatorname{Sym}^n(\mathbb{C}^2)$ is trivial when $n\geq5$.
		Hence, in the qubit case the statement that almost no state transformations are possible also applies to the union of SLOCC classes containing symmetric states. It should be noted here that the claim of \cite{Sh18} that almost all states in $\operatorname{Sym}^n(\mathbb{C}^2)$ for $n\geq5$ have a trivial stabilizer relies on extra assumptions. We will present here our own proof of this fact and additionally prove that generic states in $\operatorname{Sym}^4(\mathbb{C}^2)$ have a non-trivial stabilizer. To this end, we first show that non-ES states, i.e., states which only possess symmetries of the form $S^{\otimes n}$, are generic and then prove the statement that the stabilizer is generically trivial. Finally, we investigate the entanglement contained in symmetric qubit states.

		Let us start out by showing that ES qubit states are of measure zero.

		\begin{theorem}\label{esmeasurezero}
			The set of ES symmetric states has Lebesgue measure zero in $\operatorname{Sym}^n(\mathbb{C}^2)$ when $n\geq4$.
		\end{theorem}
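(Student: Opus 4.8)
The plan is to use the Majorana representation and the SLOCC invariance of the diversity degree and degeneracy configuration to argue that ES qubit states form a proper subvariety of $\mathrm{Sym}(\mathbb{C}^2)^{\otimes n}$. Recall from Section \ref{subsec:Prelim_ESandNES} that the only two nontrivial Jordan normal forms of a $(2\times 2)$ matrix $B$ are $J_2$ (a single Jordan block with eigenvalue $1$) and $\mathrm{diag}(x,y)$ with $x\neq y$. Hence every ES qubit SLOCC class must be the class of $\ket{E_1}$ (the $W$ class, stabilized by $B\otimes B^{-1}$ with $B=J_2$) or of $\ket{E_1}\oplus\ket{E_1}$-type states, i.e.\ the GHZ class (stabilized by $B\otimes B^{-1}$ with $B$ diagonal). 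In other words, \emph{there are only two ES SLOCC classes of $n$-qubit symmetric states}, namely the $n$-qubit $W$ class and the $n$-qubit GHZ class. So it suffices to show that the union of these two SLOCC orbits, intersected with the symmetric subspace, has Lebesgue measure zero in $\mathrm{Sym}(\mathbb{C}^2)^{\otimes n}$ when $n\geq 4$.

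First I would bound the dimension of each relevant set. Since two symmetric qubit states are SLOCC-equivalent iff related by $A^{\otimes n}$ with $A\in GL(2,\mathbb{C})$, the symmetric states in a fixed SLOCC class form the image of the map $A\mapsto A^{\otimes n}\ket{\Psi_s}$ from (an open subset of) $GL(2,\mathbb{C})$; projectively this is at most $3$-complex-dimensional. Accounting for the overall scalar, the set of symmetric states in a single SLOCC class has complex dimension at most $4$. Thus the symmetric states in the $W$ class together with those in the GHZ class form a set of complex dimension at most $4$, hence real dimension at most $8$. On the other hand, $\mathrm{Sym}(\mathbb{C}^2)^{\otimes n}$ has complex dimension $n+1$, i.e.\ real dimension $2(n+1)$. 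For $n\geq 4$ we have $2(n+1)\geq 10 > 8$, so the ES symmetric states lie in a subset of strictly smaller dimension and therefore have Lebesgue measure zero. (Equivalently, in terms of the Majorana picture: an ES state has diversity degree $m\leq 2$, while a generic state has $m=n$, and the locus $m\leq 2$ is cut out by nontrivial polynomial conditions — e.g.\ the vanishing of appropriate resultants/discriminant-type minors of the Majorana polynomial — and is thus a proper algebraic subvariety, hence measure zero, for $n\geq 3$; but the dimension count already suffices for $n\geq 4$.)

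The one point requiring a little care — and the main (minor) obstacle — is to justify rigorously that every ES $n$-qubit symmetric state really does lie in the $W$ or GHZ SLOCC orbit, i.e.\ that the classification via the Jordan form of $B$ is exhaustive and that the corresponding seed states are exactly $\ket{E_1}$ and $\ket{E_1}\oplus\ket{E_1}$-type (the GHZ state). This is where I invoke the results of \cite{MiRo13,MaKr10} recalled in Section \ref{subsec:Prelim_ESandNES}: a symmetric state is ES iff it has a symmetry $B_{(i)}\otimes B^{-1}_{(j)}$ with $B\not\propto\identity$, only the Jordan structure of $B$ matters, and for each admissible Jordan structure the stabilized symmetric state is unique up to SLOCC and given by Eq.\;(\ref{eq:ek}) or Eq.\;(\ref{eq:sumek}). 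For $d=2$ this forces $B$ to be (similar to) either $J_2$, giving the $W$ class, or $\mathrm{diag}(x,y)$ with $x\neq y$, giving the GHZ class; no derogatory case arises since a derogatory $(2\times 2)$ matrix is a multiple of the identity. Once this is in place, the dimension count above closes the argument, and in fact shows more: for $n\geq 4$ the ES symmetric states form a finite union of submanifolds of $\mathrm{Sym}(\mathbb{C}^2)^{\otimes n}$ of real codimension at least $2(n+1)-8 = 2(n-3)\geq 2$.
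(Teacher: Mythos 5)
Your proposal is correct, and its first step (every ES symmetric qubit state lies in the $W$ or GHZ SLOCC class, because $J_2$ and $\operatorname{diag}(x,y)$ with $x\neq y$ are the only nontrivial $2\times2$ Jordan forms) coincides with the paper's. Where you diverge is in how the two orbits are shown to be negligible: the paper treats them separately, invoking the Alexander--Hirschowitz theorem (via the lemma in Appendix~\ref{app:GHZ}, through the symmetric tensor rank) to dispose of the GHZ class for $n\geq4$, and the uniqueness of the Majorana representation (degeneracy configuration $\{1,n-1\}$ versus the generic $\{1,\ldots,1\}$) to dispose of the $W$ class; you instead use a single uniform dimension count, namely that the symmetric part of any fixed SLOCC class is the image of $A\mapsto A^{\otimes n}\ket{\Psi_s}$ on $GL(2,\mathbb{C})$ and hence has complex dimension at most $4 < n+1 = \dim_{\mathbb{C}}\operatorname{Sym}(\mathbb{C}^2)^{\otimes n}$ for $n\geq4$. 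Your route is more elementary (no algebraic-geometry input), treats both classes at once, yields the explicit codimension bound $2(n-3)$, and generalizes immediately to arbitrary $d$ and to any finite union of symmetric SLOCC classes; it is in fact the same parameter-counting idea the paper already uses in the preliminaries to show that the union of all symmetric SLOCC classes is measure zero in the full Hilbert space. What the paper's heavier machinery buys is the sharper, reusable statement of Appendix~\ref{app:GHZ} for all local dimensions $d$ via the symmetric tensor rank. One trivial slip: the GHZ seed state is $\ket{E_0}\oplus\ket{E_0}$ in the notation of Eq.~(\ref{eq:sumek}) (two $1\times1$ Jordan blocks with distinct eigenvalues), not ``$\ket{E_1}\oplus\ket{E_1}$''; this does not affect the argument.
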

		\begin{proof}
			ES symmetric qubit states belong either to the GHZ SLOCC class or to the W SLOCC class (as there exist only two different Jordan forms for $2\times 2$ matrices). The lemma in Appendix \ref{app:GHZ} shows that the first class is zero-measure when $n\geq 4$. From the Majorana representation, it follows that the states with degeneracy configuration different from $\{1,\ldots,1\}$ have measure zero, which implies that the set of states in the W SLOCC class, which all have degeneracy configuration $\{1,n-1\}$, is of zero measure.
		\end{proof}
		
		Due to Theorem \ref{esmeasurezero}, generic states in $\operatorname{Sym}^n(\mathbb{C}^2)$ only possess symmetries of the from $S^{\otimes n}$. However, we show now that the stabilizer is generically even trivial, i.e., there exists no symmetry other than the identity operator, when $n\geq5$. Before presenting a proof for this statement, we find it insightful to explain why this is not the case for $n=3$ and $n=4$.
		
		\begin{observation}\label{obsn=3}
			Generic states in $\operatorname{Sym}^3(\mathbb{C}^2)$ have a non-trivial stabilizer.
		\end{observation}
		\begin{proof}
			Let $|\psi\rangle\in\operatorname{Sym}^3(\mathbb{C}^2)$ have a Majorana representation as in Eq.\;(\ref{majorana}). The assumption of genericity implies that the degeneracy configuration is $\{1,1,1\}$. Thus, all $\{|\epsilon_j\rangle\}$ are different and every pair is a basis of $\mathbb{C}^2$. Hence, $|\epsilon_3\rangle=\alpha|\epsilon_1\rangle+\beta|\epsilon_2\rangle$ for some $\alpha,\beta\in\mathbb{C}$ such that $\alpha,\beta\neq0$. Let now $a_1=\beta/\alpha$ and $a_2=\alpha/\beta$ and define the (unique) matrix $A\in\mathbb{C}^{2\times2}$ such that
			\begin{equation}
			A|\epsilon_1\rangle=a_1|\epsilon_2\rangle,\quad A|\epsilon_2\rangle=a_2|\epsilon_1\rangle.
			\end{equation}
			The fact that $\{|\epsilon_1\rangle,|\epsilon_2\rangle\}$ are linearly independent implies that $A$ is invertible and different from the identity. Moreover, it follows that $A|\epsilon_3\rangle=|\epsilon_3\rangle$. Thus, $A^{\otimes 3}|\psi\rangle=a_1a_2|\psi\rangle=|\psi\rangle$, which proves the claim.
		\end{proof}
		Note that the observation above can also be seen by noting that any three-qubit state has a non-trivial stabilizer \cite{Verstraete2002}.
		\begin{observation}\label{obsn=4}
			Generic states in $\operatorname{Sym}^4(\mathbb{C}^2)$ have a non-trivial stabilizer.
		\end{observation}
		\begin{proof}
			Let $|\psi\rangle\in\operatorname{Sym}^4(\mathbb{C}^2)$ have a Majorana representation as in Eq.\;(\ref{majorana}). Again, the assumption of genericity implies that the degeneracy configuration is $\{1,1,1,1\}$, all $\{|\epsilon_j\rangle\}$ are different and every pair is a basis of $\mathbb{C}^2$. Let $|\epsilon_3\rangle=\alpha|\epsilon_1\rangle+\beta|\epsilon_2\rangle$ and $|\epsilon_4\rangle=\alpha'|\epsilon_1\rangle+\beta'|\epsilon_2\rangle$ with $\alpha,\beta,\alpha',\beta'\neq0$ and define now
			\begin{equation}
			a_1=\sqrt{\frac{\beta\beta'}{\alpha\alpha'}},\quad a_3=\sqrt{\frac{\alpha\beta}{\alpha'\beta'}},
			\end{equation}
			$a_2=1/a_1$ and $a_4=1/a_3$. We consider similarly as before the (unique) matrix $A\in\mathbb{C}^{2\times2}$ such that
			\begin{equation}
			A|\epsilon_1\rangle=a_1|\epsilon_2\rangle,\quad A|\epsilon_2\rangle=a_2|\epsilon_1\rangle,
			\end{equation}
			which must be again invertible and different from the identity. With the aforementioned conditions, it follows that $A|\epsilon_3\rangle=a_3|\epsilon_4\rangle$ and $A|\epsilon_4\rangle=a_4|\epsilon_3\rangle$. Thus, $A^{\otimes 4}|\psi\rangle=a_1a_2a_3a_4|\psi\rangle=|\psi\rangle$, which proves the claim.
		\end{proof}
		We will next address the question whether in the n-qubit case with $n\geq5$ generic symmetric states have a trivial stabilizer. This question was already addressed  in \cite{Sh18}. We nevertheless provide here our own proof which is not based on any assumption \cite{footnote7}.
		\begin{theorem}
			\label{thm:genericsymmetricisolation}
			Generic states in $\operatorname{Sym}^n(\mathbb{C}^2)$ with $n\geq5$ have a trivial stabilizer.
		\end{theorem}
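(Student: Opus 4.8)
The plan is to exploit the Majorana representation exactly as in Observations \ref{obsn=3} and \ref{obsn=4}, but now turn the argument around: for $n\geq 5$ a generic degeneracy configuration $\{1,\ldots,1\}$ is \emph{over-determined}, so no nontrivial symmetry can exist. Since the state is generic, by Theorem \ref{esmeasurezero} we only need to rule out symmetries of the form $A^{\otimes n}$ with $A\neq\identity$. Such an $A$ must permute the Majorana points $\{|\epsilon_1\rangle,\ldots,|\epsilon_n\rangle\}$ (viewed as points on the Bloch sphere / in $\mathbb{P}^1$), because $A^{\otimes n}|\psi\rangle\propto|\psi\rangle$ together with uniqueness of the Majorana decomposition forces $A$ to act as a permutation on the multiset of single-qubit states, and for a generic state all $n$ points are distinct, so $A$ induces a genuine permutation $\sigma\in S_n$ of the $|\epsilon_j\rangle$ (up to scalars). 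A Möbius transformation of $\mathbb{P}^1$ is determined by its action on three points; hence once $\sigma$ is fixed, $A$ (up to an overall scalar, which is then pinned down by requiring $A^{\otimes n}|\psi\rangle=|\psi\rangle$ rather than merely proportionality) is uniquely determined by where it sends $|\epsilon_1\rangle,|\epsilon_2\rangle,|\epsilon_3\rangle$. The remaining $n-3\geq 2$ constraints $A|\epsilon_j\rangle\propto|\epsilon_{\sigma(j)}\rangle$ for $j\geq 4$ are then nontrivial algebraic conditions on the configuration of points, and I claim a generic configuration satisfies none of them for any nontrivial $\sigma$.

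Concretely, I would parametrize as follows. Fix $\sigma\in S_n$, $\sigma\neq\mathrm{id}$. Write $|\epsilon_1\rangle,|\epsilon_2\rangle,|\epsilon_3\rangle$ as free points and $|\epsilon_j\rangle=\alpha_j|\epsilon_1\rangle+\beta_j|\epsilon_2\rangle$ for $j\geq 3$ (all $\alpha_j,\beta_j\neq 0$ generically). A nontrivial $A$ with $A|\epsilon_j\rangle\propto|\epsilon_{\sigma(j)}\rangle$ for all $j$: since $A$ is determined (up to scalar) by its action on any three of the distinct points, the condition that the fourth, fifth, \ldots, $n$th point is also correctly mapped becomes, after eliminating $A$, a system of polynomial equations in the $2n-$ish parameters $(\alpha_j,\beta_j)$ (and the three base points, which we can normalize to $0,1,\infty$ by an overall $GL_2$, absorbing three complex parameters). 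Counting: the space of symmetric $n$-qubit states modulo nothing has complex dimension $n$ in the Majorana picture ($n$ points on $\mathbb{P}^1$, i.e. $n$ parameters after the points are unordered — more precisely $\dim\mathrm{Sym}(\mathbb{C}^2)^{\otimes n}=n+1$ projectively $n$). The locus where a fixed nontrivial $\sigma$ is realized by some $A$ has dimension at most $3$ (choice of $A\in PGL_2$) plus the freedom in placing the points \emph{compatibly}, which for $\sigma\neq\mathrm{id}$ is strictly smaller than $n$ as soon as $n\geq 5$; so this "bad locus" is a proper subvariety. Taking the union over the finitely many $\sigma\in S_n\setminus\{\mathrm{id}\}$, and over the (zero-measure, by Theorem \ref{esmeasurezero}) ES locus, still gives a measure-zero set. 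Hence the generic state has trivial stabilizer.

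The main obstacle — and the place where the argument needs genuine care — is the dimension count showing that for each nontrivial $\sigma$ the configurations admitting a compatible $A$ form a \emph{proper} subvariety, i.e. that the $n-3$ extra point-matching conditions really do cut down the dimension rather than being automatically satisfied. The risk is a degenerate $\sigma$ (e.g. one that, combined with a special $A$ of finite order, matches points "for free"), so I would handle this by the explicit construction: pick one specific generic configuration (e.g. Majorana points at $n$ sufficiently generic values $z_1,\ldots,z_n$) and verify directly that no nontrivial Möbius transformation permutes them — equivalently, that the only Möbius map sending $\{z_1,\ldots,z_n\}$ to itself is the identity, which for $n\geq 3$ distinct generic points is immediate since a Möbius map fixing (setwise) $\geq 3$ generic points and permuting them nontrivially would have to coincide with a nontrivial permutation's interpolating map, contradicted generically by the $(n-3)$-rd point. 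Exhibiting one such configuration proves the bad locus is proper for that $\sigma$; running over all $\sigma$ finishes it. I would also need to double-check the bookkeeping for $A^{\otimes n}|\psi\rangle=|\psi\rangle$ versus $\propto$: the scalar is $\prod_j(\text{scalar of }A\text{ on }|\epsilon_j\rangle)$, and one must confirm it can always be normalized to $1$ by rescaling $A$, which is where the hypothesis that $A\in SL(2,\mathbb{C})$ (or $[A]_{1,1}=1$) from the preliminaries is used; this is routine but should be stated.
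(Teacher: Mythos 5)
Your overall strategy coincides with the paper's: reduce to symmetries of the form $A^{\otimes n}$ via Theorem \ref{esmeasurezero}, observe via uniqueness of the Majorana representation that $A$ must permute the $n$ Majorana points up to scalars, and argue that for $n\geq 5$ the resulting system is overdetermined. The gap is precisely in the step you flag as the ``main obstacle,'' and neither of your two proposed ways of closing it works as stated. First, the assertion that for $n\geq 3$ distinct generic points ``the only M\"obius map sending $\{z_1,\dots,z_n\}$ to itself is the identity'' is \emph{false} for $n=3$ and $n=4$: Observations \ref{obsn=3} and \ref{obsn=4} construct exactly such nontrivial maps for generic configurations (a transposition fixing the third point, and a product of two transpositions, respectively). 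Any correct argument must therefore make visible where $n\geq5$ enters; ``contradicted generically by the extra points'' does not, since for $n=4$ the single extra matching condition is automatically satisfied for the double transposition.

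Second, the dimension count ``at most $3$ parameters for $A\in PGL_2$ plus the freedom in placing the points compatibly, which is $<n$'' is not established and, computed naively, fails in the hardest case $n=5$: take $\sigma$ a product of two transpositions with one fixed point and $A$ an involution; one gets $3$ (for $A$) plus $2$ (one free point per $2$-cycle), i.e.\ $5=n$, so the bad locus is not visibly proper. To make the count work one needs the further observations that an $A$ permuting $\geq 3$ distinct points must have finite order in $PGL_2$ (some power fixes three points) and that nontrivial finite-order M\"obius maps form only a $2$-parameter family, which brings the bound down to $2+\lfloor n/2\rfloor<n$ for $n\geq5$. The paper instead performs a case analysis on the cycle type of the permutation: it determines $A$ explicitly from part of each cycle and verifies that the residual matching conditions are nontrivial polynomial constraints on the configuration (e.g.\ $2\alpha\beta'=\alpha'\beta$ for a $4$-cycle), which is where $n\geq5$ is actually used. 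Your fallback of exhibiting one explicit witness configuration per $\sigma$ is sound in principle, since the bad locus for fixed $\sigma$ is Zariski closed, but the witnesses and their verification are exactly the missing content; asserting the verification is ``immediate'' begs the question the theorem is meant to settle.
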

		\begin{proof}
			As mentioned before, for generic states it suffices to consider states with Majorana representation with degeneracy configuration $\{1,\ldots,1\}$, which have the property that any pair $\{|\epsilon_i\rangle,|\epsilon_j\rangle\}$ ($i\neq j$) is linearly independent and spans $\mathbb{C}^2$. Moreover, due to Theorem \ref{esmeasurezero}, to prove our claim it is enough to show that generic states do not have a non-trivial symmetry of the form $A^{\otimes n}|\psi\rangle=|\psi\rangle$ since we only need to consider non-ES states. Now, looking into the Majorana representation, the above equation translates to
			\begin{equation}
			\sum_\pi P_\pi(|\epsilon_1\cdots\epsilon_n\rangle)=\sum_\pi P_\pi(A|\epsilon_1\rangle\cdots A|\epsilon_n\rangle).
			\end{equation}
			As \cite{Sh18} observes the uniqueness of this representation imposes then that
			\begin{equation}\label{condepsilon}
			A|\epsilon_j\rangle=a_j|\epsilon_{\pi(j)}\rangle
			\end{equation}
			where $\pi$ is a permutation and $\{a_j\neq0\}$ can be freely chosen in $\mathbb{C}$ \cite{footnote8}. Notice that the permutation $\pi$ must decompose into $t$ $s_i$-cycles such that $\sum_{i=1}^ts_i=n$.
			
			If there are two 2-cycles (say without loss of generality for $j=1,2,3,4$), then arguing as in Observation \ref{obsn=4}, this fixes $A$ uniquely up to an irrelevant proportionality constant and, thus, Eq.\ (\ref{condepsilon}) can only hold for $j>4$ for a set of states of measure zero.
			
			If there is a 1-cycle and a 2-cycle (without loss of generality for $j=1,2,3$), the argument used in Observation \ref{obsn=3} fixes again $A$ uniquely up to a proportionality constant and again Eq.\ (\ref{condepsilon}) can only hold for $j>3$ for a set of states of measure zero.
			
			If there is a 3-cycle (without loss of generality for $(123)$ in cyclic notation), we take again $|\epsilon_3\rangle=\alpha|\epsilon_1\rangle+\beta|\epsilon_2\rangle$ for some $\alpha,\beta\in\mathbb{C}$ such that $\alpha,\beta\neq0$. Then, Eq.\ (\ref{condepsilon}) can only hold for $j=3$ if $a_1=-a_3\beta/\alpha^2$ and $a_2=a_3/(\alpha\beta)$. Thus, the cases $j=1,2$ of the aforementioned equation fix $A$ uniquely up to a proportionality constant and forbid the equation to hold generically when $j>3$.

			The proof is finished by noticing that if $\pi$ contains one $s$-cycle with $s\geq4$, then Eq.\ (\ref{condepsilon}) cannot hold generically for the indices involving this cycle. To see this, we take without loss of generality this cycle to be $(12\cdots s)$ and take $|\epsilon_3\rangle=\alpha|\epsilon_1\rangle+\beta|\epsilon_2\rangle$ and $|\epsilon_4\rangle=\alpha'|\epsilon_1\rangle+\beta'|\epsilon_2\rangle$ with $\alpha,\beta,\alpha',\beta'\neq0$. Equation (\ref{condepsilon}) applied to $j=3$ imposes that
			\begin{equation}
			a_2=\frac{\alpha'}{\alpha\beta}a_3,\quad a_1=\frac{\alpha\beta'-\alpha'\beta}{\alpha^2}a_3.
			\end{equation}
			Notice that $a_1\neq0$ due to the fact that $|\epsilon_3\rangle$ and $|\epsilon_4\rangle$ are linearly independent. The above expressions when plugged into Eq.\ (\ref{condepsilon}) for $j=1,2$ fix again $A$ uniquely up to a proportionality constant. Thus, Eq.\ (\ref{condepsilon}) cannot hold generically for $4\leq j\leq s$ (this is immediate for $s\geq5$, in the case $s=4$ some straightforward algebra shows that generically $A|\epsilon_4\rangle$ will not be proportional to $|\epsilon_1\rangle$ as this requires that $2\alpha\beta'=\alpha'\beta$).
		\end{proof}
		
		\subsection{Entanglement of SLOCC classes of symmetric qubit states}
		
		The results presented above imply that it is generically impossible to transform one symmetric qubit state into another via LOCC (even if infinitely many rounds of LOCC are allowed \cite{GoKr17}). Hence, the partial order of entanglement provided by the study of LOCC transformations is trivial in this case. However, for states with a trivial stabilizer, a complete set of entanglement measures is presented in \cite{SaSc18}. To recall this result here, we denote by $\ket{\Psi_s}$ a representative of an SLOCC class of states which only have trivial symmetries (note that also for higher dimensions, the union of such SLOCC classes is of full measure in the Hilbert space). For an arbitrary normalized state $\ket{\Psi} = g\ket{\Psi_s}$ in such an SLOCC class, a complete set of entanglement monotones is given by the set
		$\{E^{\Psi_s}_{\vec{x_i}}\}_{i\in I}$, with a finite index set $I$ and 
		\bea \label{eq:EnMons} E^{\Psi_s}_{\vec{x}}(\ket{\Psi})=\bra{\vec{x}}G\ket{\vec{x}}.\eea
		Here, $\ket{\vec{x}}$ denotes a product state, i.e., $\ket{\vec{x}}=\ket{x_1}\otimes
		\ldots \otimes \ket{x_n}$, with $\ket{x_i}\in \C^d$ and $G=g^\dagger g=\otimes_{i=1}^d G_i$.  
		As $G$ in Eq. (\ref{eq:EnMons}) is local and $\ket{\vec{x}}$ is a product state, the entanglement monotones can be very easily computed.
		
		Moreover, in the case of generic states, it holds that the maximal success probability of transforming $\ket{\Psi}$ into $\ket{\Phi}$ is given by 
		\bea
		P(\Psi,\Phi) = \min_{\vec{x}} \frac{E^{\Psi_s}_{\vec{x}}(\Psi)}{E^{\Psi_s}_{\vec{x}}(\Phi)}=\frac{\|\Phi\|^2}{\|\Psi\|^2}
		\frac{1}{\lambda_{max}(G^{-1} H)}, \label{eq:optprob}
		\eea
		where $\lambda_{max}(X)$ denotes the maximal eigenvalue of $X$. 
		That is, given the SLOCC class to which a state belongs to, the finite set of entanglement monotones given above completely characterizes the entanglement contained in the state. This is in complete analogy to the bipartite entanglement monotones \cite{Vidal99}. Moreover, in the multipartite case the minimization in Eq. (\ref{eq:optprob}) does not even need to be computed, and is simply given by the RHS of that equation.
		
		For states of the form $g^{\otimes n} \ket{\Psi}$, with $\ket{\Psi}$ to be an arbitrary but known state with trivial stabilizer we have that the quantities 
		\bea \bra{\vec{x}}G\ket{\vec{x}}, \mbox {for } x\in S,\eea 
		where $S$ denotes a tomographically complete set, e.g., the eigenbases of the Pauli operators, completely characterizes the entanglement of a symmetric qubit state.

		\section{Non-exceptionally symmetric classes\label{sec:non_ES}}
		
		Whereas no LOCC transformation is generically possible among qubit states, they might well exist among states in higher dimensional Hilbert spaces. Before we consider ES classes in Sec.\;\ref{sec:es}, let us show here that non-ES classes (with all symmetries of the form  $A^{\otimes n}$) do not allow for many LOCC transformations.  We first show that no symmetric state within those classes is reachable via $LOCC_\mathbb{N}$. Then, we prove that  for $n\geq 5$ in each such SLOCC class weakly isolated states exist. Finally, we show that the symmetries of non-ES states can nevertheless be rich enough to allow for non--trivial transformations. We do so by presenting a family of states within which all symmetric states are LOCC-convertible. 
		
		Due to the restricted symmetries of non--ES state, the following Lemma, which shows that no symmetric non--ES state is reachable via LOCC, follows straightforwardly from Theorem \ref{thm:reachability}.  
		
		\begin{lemma} \label{lemmaSym} Let $\ket{\Psi_s}$ be a non-ES $n$--qudit state ($n>2$). Then any symmetric state within the same SLOCC class, $\ket{\Psi}$, is not reachable via $LOCC_{\mathbb{N}}$.\end{lemma}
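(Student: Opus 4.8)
The plan is to invoke Theorem~\ref{thm:reachability} and to show that its necessary conditions for $LOCC_{\mathbb{N}}$-reachability can never be satisfied when the target is a symmetric state lying in a non-ES class. First I would record the two structural facts that make the conditions collapse. On the one hand, since $\ket{\Psi_s}$ may be taken to be a symmetric representative of the non-ES class and $\ket{\Psi}$ is a symmetric state in the same SLOCC class, the results of \cite{MiRo13,MaKr10} recalled in Section~\ref{subsec:Prelim_ESandNES} guarantee that $\ket{\Psi}=A^{\otimes n}\ket{\Psi_s}$ for some invertible $A$; hence one may choose $h=A^{\otimes n}$ and conclude that $H_i=H:=A^\dagger A$ is the \emph{same} positive operator on every site $i$. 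On the other hand, because $\ket{\Psi_s}$ is non-ES, every element of $S_{\Psi_s}$ is of the form $S_0^{\otimes n}$, so that for any symmetry $S$ one has $S^{(1)}=S^{(2)}=\cdots=S^{(n)}=S_0$.

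Next I would feed these two facts into Theorem~\ref{thm:reachability}. Reachability of $\ket{\Psi}\propto h\ket{\Psi_s}$ would require, up to a permutation of the parties, a symmetry $S\in S_{\Psi_s}$ with $(S^{(i)})^\dagger H_i S^{(i)}\propto H_i$ for all $i\geq 2$ and $(S^{(1)})^\dagger H_1 S^{(1)}\not\propto H_1$. Since both $H_i$ and $S^{(i)}$ are site-independent, the permutation is immaterial, and substituting $H_i=H$, $S^{(i)}=S_0$ turns condition (i) (which is non-vacuous because $n>2$, so the site $i=2$ exists) into the demand $S_0^\dagger H S_0\propto H$, while condition (ii) demands $S_0^\dagger H S_0\not\propto H$. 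These are mutually exclusive, so no admissible $S$ exists and $\ket{\Psi}$ is not $LOCC_{\mathbb{N}}$-reachable.

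There is essentially no technical obstacle here: the lemma follows directly once one combines the ``$A^{\otimes n}$-rigidity'' of SLOCC equivalences between symmetric states with the defining property of non-ES classes, both of which are already available in the preliminaries. The only point requiring a line of care is the remark that one is free to represent $\ket{\Psi}$ by $h=A^{\otimes n}$ rather than by some other decomposition $h=h_1\otimes\cdots\otimes h_n$; but any two such decompositions differ by an element of $S_{\Psi_s}$, which is of the form $S_0^{\otimes n}$, so $H_i=h_i^\dagger h_i$ is site-independent for every admissible choice, and the contradiction above persists. The immediate consequence worth stating explicitly is that a non-trivial finite-round LOCC transformation between two symmetric non-ES states in the same class is impossible, since the would-be final state is never reachable.
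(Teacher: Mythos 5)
Your proposal is correct and follows essentially the same route as the paper's proof: represent the symmetric target as $A^{\otimes n}\ket{\Psi_s}$ using the rigidity result of \cite{MaKr10,MiRo13}, use that all symmetries of a non-ES state are of the form $S_0^{\otimes n}$, and observe that conditions (i) and (ii) of Theorem~\ref{thm:reachability} then demand $S_0^\dagger H S_0\propto H$ and $S_0^\dagger H S_0\not\propto H$ simultaneously, which is impossible. Your extra remark that any admissible decomposition of $h$ differs by a stabilizer element of the form $S_0^{\otimes n}$ (so $H_i$ is site-independent regardless of the representation) is a sensible point of care that the paper leaves implicit.
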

		\begin{proof}
			As both, $\ket{\Psi_s}$ and $\ket{\Psi}$ are symmetric states and in the same SLOCC it is possible to find $g\in SL(d)$ such that $\ket{\Psi}=g^{\otimes n} \ket{\Psi_s}$ \cite{MaKr10, MiRo13}. Moreover, as they are non-ES, all the symmetries of $\ket{\Psi_s}$ are of the form $S^{\otimes n}$. Hence, Theorem \ref{thm:reachability} implies that $\ket{\Psi}$ is reachable via $LOCC_{\mathbb{N}}$ iff there exists a symmetry $S^{\otimes n}$ such that $S^\dagger G S \propto G$ and $S^\dagger G S \not\propto G$, which clearly cannot be fulfilled. Hence, none of these states is reachable.
		\end{proof}
		
		We show in Appendix \ref{App:proof_iso_NONES} that in each SLOCC class considered here, there exist weakly isolated states, as stated by the subsequent lemma. 
		
		\begin{lemma}
			\label{lemma:non-es-isolation}
			Let $\ket{\Psi_s}$ be a non-ES $n$--qudit state ($n\geq5$). Then, there exist weakly isolated states within the SLOCC class of $\ket{\Psi_s}$.
		\end{lemma}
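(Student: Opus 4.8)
The plan is to turn the statement into a dimension count using the characterization of weak isolation from Lemma \ref{lemma:isolation}. Up to local unitaries and normalisation—neither of which affects weak isolation—every state in the SLOCC class of $\ket{\Psi_s}$ can be written as $\ket{\Phi_{\vec G}}=\bigotimes_{i=1}^n G_i^{1/2}\ket{\Psi_s}$ with $G_1,\ldots,G_n$ positive definite $d\times d$ matrices, so that $G=\bigotimes_i G_i$ in the notation of Lemma \ref{lemma:isolation}; these states are all fully entangled because the $G_i^{1/2}$ are invertible. Since $\ket{\Psi_s}$ is non-ES, every element of $S_{\Psi_s}$ is of the form $A^{\otimes n}$, and it is nontrivial precisely when $A$ is not a scalar multiple of $\identity$. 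Hence, by Lemma \ref{lemma:isolation}, $\ket{\Phi_{\vec G}}$ is \emph{not} weakly isolated exactly when there is a non-scalar $A$ with $A^{\otimes n}\ket{\Psi_s}\propto\ket{\Psi_s}$ (rescaling $A$ to turn this into an equality changes nothing below) such that $A^\dagger G_i A\propto G_i$ holds for at least $n-1$ of the indices $i$. I would then argue that the set of such ``bad'' $\vec G$ is a semialgebraic subset of $\operatorname{Pos}(d)^{\times n}$—the $n$-fold product of the cone of positive definite matrices, of real dimension $nd^2$—of strictly smaller dimension, hence of measure zero; any $\vec G$ in the complement yields a weakly isolated state.

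The next step is to describe, for a fixed non-scalar $A$ with $A^{\otimes n}\ket{\Psi_s}\propto\ket{\Psi_s}$, the set $V_A=\{G>0:\ A^\dagger G A\propto G\}$. By Corollary \ref{lemma:quasicommutation} this is empty unless $A$ is diagonalisable, and by Observation \ref{obs:quasicommutation3} it is empty unless in addition all eigenvalue ratios of $A$ have modulus one, in which case $V_A=\{R^{-\dagger}XR^{-1}:\ X=\bigoplus_k X^{[k]},\ X^{[k]}>0\}$, where $m_1,\ldots,m_r$ (with $r\geq2$ since $A$ is non-scalar, $\sum_k m_k=d$) are the multiplicities of the distinct eigenvalues of $A$, the $X^{[k]}$ are $m_k\times m_k$ positive, and $R$ is a matrix of eigenvectors of $A$. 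Since $X\mapsto R^{-\dagger}XR^{-1}$ is injective and linear,
\begin{equation}
\dim_{\R}V_A=\sum_{k=1}^r m_k^2=:\delta(A),\qquad
d^2-\delta(A)=2\!\!\sum_{1\leq k<l\leq r}\!\! m_k m_l\ \geq\ 2(d-1),
\end{equation}
the lower bound being attained only for $r=2$ and $(m_1,m_2)=(d-1,1)$.

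The main obstacle is that $S_{\Psi_s}$ need not be finite (it may be positive-dimensional), so one cannot simply take a countable union of the measure-zero sets $V_A$. I would handle this with a stratified incidence–dimension estimate. Let $\mathcal{S}\subseteq SL(d,\C)$ be the algebraic group $\{A:\ A^{\otimes n}\ket{\Psi_s}\propto\ket{\Psi_s}\}$; only its elements that are diagonalisable with unimodular eigenvalue ratios are relevant, and these I stratify by their multiplicity type $\tau=(m_1,\ldots,m_r)$. Up to an irrelevant overall scalar, such $A$ are parametrised by $r-1$ unimodular eigenvalue ratios together with the eigenspace decomposition (a point of $GL(d,\C)/\prod_k GL(m_k,\C)$), so the corresponding locus $\mathcal{S}_\tau$ has real dimension at most $(r-1)+2(d^2-\delta_\tau)$. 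The bad set is the union, over $\tau$ and over the omitted index $j\in\{1,\ldots,n\}$, of the projections onto the $\vec G$-coordinates of the sets $\{(A,\vec G):\ A\in\mathcal{S}_\tau,\ G_i\in V_A\ \text{for all } i\neq j\}$; by the Tarski–Seidenberg theorem this is semialgebraic, of real dimension at most
\begin{equation}
\max_\tau\ \Big[(r-1)+2(d^2-\delta_\tau)+(n-1)\delta_\tau+d^2\Big].
\end{equation}
A short rearrangement shows this is $<nd^2$ if and only if $r-1<(n-3)(d^2-\delta_\tau)$ for every type $\tau$ with $r\geq2$, which holds under the hypothesis $n\geq5$: indeed $n-3\geq2$ and $d^2-\delta_\tau\geq2(d-1)$, whereas $r-1\leq d-1$, so $r-1\leq d-1<4(d-1)\leq(n-3)(d^2-\delta_\tau)$. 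This bounds the bad set by a measure-zero set, which proves that weakly isolated states exist in the SLOCC class of $\ket{\Psi_s}$—in fact that generic states of the class are weakly isolated.

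The step I expect to require the most care is making the stratification of $\mathcal{S}$ and the attendant dimension bookkeeping rigorous, precisely because $\mathcal{S}$ may be positive-dimensional; once that is in place, everything else reduces to the elementary inequalities above together with the results of Section \ref{sec:reachability}.
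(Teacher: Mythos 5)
Your proof is correct, but it takes a genuinely different route from the paper's. The paper argues constructively: it fixes $G_1,G_2,G_3$ to be diagonal (with pairwise distinct eigenvalues) in three mutually unbiased bases, sets $G_4=G_5=\identity$, and shows directly that any $A$ with $A^\dagger G_iA\propto G_i$ on four of these five sites must first be proportional to a unitary (from sites $4,5$) and then, commuting with two non-degenerate operators whose eigenbases pairwise overlap everywhere, must be scalar. Your argument instead establishes the stronger statement that \emph{generic} states in the class are weakly isolated, via a semialgebraic dimension count: the relevant reductions (weak isolation $\Leftrightarrow$ no non-scalar $A$ quasi-commuting on $n-1$ sites; $V_A=\emptyset$ unless $A$ is diagonalisable with unimodular eigenvalue ratios, in which case $\dim_{\R}V_A=\sum_k m_k^2$) are all correctly drawn from Lemma \ref{lemma:isolation} and Observations \ref{obs:quasicommutation2}--\ref{obs:quasicommutation3}, the stratification of $\mathcal{S}$ by spectral type is legitimate (each stratum is semialgebraic, bounded in dimension by the ambient set of matrices of that type in $PGL(d,\C)$), and the final inequality $r-1<(n-3)(d^2-\delta_\tau)$ checks out for $n\geq5$ using $d^2-\delta_\tau\geq2(d-1)\geq r-1+1$. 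The trade-off is clear: the paper's proof is elementary and exhibits explicit weakly isolated states at the cost of invoking the existence of three MUBs, while yours requires the fiber-dimension and projection theorems for semialgebraic sets (Tarski–Seidenberg) but yields genericity rather than mere existence and handles the possibly positive-dimensional stabilizer cleanly. The one step you should spell out when writing this up is the semialgebraicity of the incidence sets (the proportionality condition $A^\dagger G_iA\propto G_i$ should be phrased as a rank condition to avoid an extra quantifier) and the standard fact that $\dim X\leq\dim Y+\max_y\dim f^{-1}(y)$ for semialgebraic $f:X\to Y$; with those references in place the argument is complete.
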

		
		The subsequent example will illustrate the difference a single additional symmetry can make in the context of LOCC transformations. We will consider the SLOCC classes given by n-qubit Dicke states, $\ket{D_k^n}$, which are defined as the equal superposition of all computational basis states containing $k$ times $1$. Only in case $k=n/2$ an additional symmetry arises. In all other cases these states represent examples of symmetric qubit states which are isolated. In contrast to that, for $k=n/2$, all symmetric states are LOCC-convertible. 
		
		As shown in \cite{BaKr09}, Dicke states with different $k$ are in different SLOCC classes. Notice that here we consider $k\leq\lfloor n/2\rfloor$ as $\sigma_x^{\otimes n}|D_k^n\rangle=|D_{n-k}^n\rangle$. Moreover, we are going to impose that $k>1$ so that we do not consider ES SLOCC classes (i.e.,\ the W class). We are going to determine now the stabilizer of these classes. These Dicke states belong to non-ES SLOCC classes and any of their symmetry must be of the form $S^{\otimes n}$. Let $s_{ij}$ denote the matrix entries of $S$. Then,
		\begin{align}
			|D_k^n\rangle&=S^{\otimes n}|D_k^n\rangle\nonumber\\
			&=\sum_{j_1,\ldots,j_n}\left(\sum_{i_1+\cdots+i_n = k}\prod_{m=1}^ns_{j_mi_m}\right)|j_1\cdots j_n\rangle.
		\end{align}
		Since $|0\ldots 0\rangle$ is orthogonal to $|D_k^n\rangle$ (recall that $1<k\leq\lfloor n/2\rfloor$), we have that
		\begin{equation}
		0=\sum_{i_1+\cdots+i_n = k}\prod_{m=1}^ns_{0i_m}\propto s_{00}^{n-k}s_{01}^k.
		\end{equation}
		Thus, either $s_{00}=0$ or $s_{01}=0$. On the other hand, using analogously the orthogonality with $|1\ldots 1\rangle$, we obtain that either $s_{10}=0$ or $s_{11}=0$. Since, moreover, $S$ must be invertible, this leaves us with two options:
		$$S=\left(
		\begin{array}{cc}
		s_{00} & 0 \\
		0 & s_{11} \\
		\end{array}
		\right)\textrm{ or }S=\left(
		\begin{array}{cc}
		0 & s_{01} \\
		s_{10} & 0 \\
		\end{array}
		\right).$$
		
		In the case where $S$ is diagonal we see that
		\begin{equation}
		S^{\otimes n}|D_k^n\rangle=s_{00}^{n-k}s_{11}^k|D_k^n\rangle.
		\end{equation}
		Thus, all considered Dicke states $|D_k^n\rangle$ have the symmetry $S_\lambda^{\otimes n}$ with $S_\lambda=\operatorname{diag}(\lambda,\lambda^{(k-n)/k})$ for any $\lambda\in\mathbb{C}$.
		
		In the case where $S$ is not diagonal we have that
		\begin{equation}
		S^{\otimes n}|D_k^n\rangle=s_{10}^{n-k}s_{01}^k|D_{n-k}^n\rangle.
		\end{equation}
		Hence, this cannot be a symmetry unless $k=n/2$.
		
		Let us now show that in the SLOCC class of $|D_{n/2}^n\rangle$ (and, thus, $n$ is even) it turns out that every symmetric state is not isolated. The analysis above tells us that $P_z^{\otimes n}\sigma_x^{\otimes n}$ and $P_z^{\otimes n}$ with $P_z=\operatorname{diag}(z,z^{-1})$ ($z\in\mathbb{C}$) are symmetries. Using that we can always choose a $z$ such that $P_z^\dagger G P_z\in\operatorname{span}\{\identity,\sigma_x\}$, every symmetric state $g^{\otimes n}|D_{n/2}^n\rangle$ can be written as $g_x^{\otimes n}|D_{n/2}^n\rangle$ where $[G_x,\sigma_x]=0$. Lemma \ref{lemma:convertibility}  tells us then that all these states are LOCC-convertible. Let us see now that this is not the case for $k\neq n/2$. The key difference is that $\sigma_x^{\otimes n}$ is not in the stabilizer and all symmetries are of the form $P_z^{\otimes n}$ as established above. Hence, using Lemma \ref{lemma:isolation} it is easy to see that any symmetric state apart from $\ket{D_k^n}$ is weakly isolated. This follows on the one hand from Lemma \ref{lemmaSym}, which implies that none of these states is reachable and on the other hand from the fact that $G_x \neq \identity$ does not quasicommute with a diagonal matrix, which implies that these states are not convertible.

		\section{Non-derogatory exceptionally symmetric classes\label{sec:non_derog}}
		
		As seen from above, non-ES states do not allow many LOCC transformations, even if the symmetries are non-trivial. Let us hence consider the case of ES classes and study their symmetries and possible LOCC transformations. We focus here on the non-derogatory case and study the more involved derogatory case in Sec.\;\ref{sec:derog}. As mentioned in Sec.\;\ref{sec:results}, the classes considered here seem to possess the richest LOCC structure. 
		
		We characterize first the symmetries of non-derogatory ES states (Sec.\;\ref{sec:stabNDES}) and then show that in these classes LOCC transformations among symmetric states are indeed possible (Sec.\;\ref{sec:es}). Note that this enables us to compare the entanglement contained within these states. Moreover, we show also here that, weak isolation always exists. Indeed, symmetric states which are weakly isolated also exist. 
		
		\subsection{The stabilizer}
		\label{sec:stabNDES}
		
		Although ES symmetric states---by definition---have symmetries, those symmetries do not yet guarantee that state transformations are possible. Considering the $LOCC_\mathbb{N}$-reachability conditions from Theorem \ref{thm:reachability}, Corollary \ref{lemma:quasicommutation} shows that symmetries of the form $B \otimes B^{-1}$ (and products thereof) do not allow to satisfy condition (i) of Theorem \ref{thm:reachability}, unless $B$ is diagonalizable, as it is for instance in the case of the generalized GHZ classes. Hence, symmetries of the form $B \otimes B^{-1}$ do not necessarily guarantee $LOCC_\mathbb{N}$-reachability.
		
		However, non-derogatory ES classes do have additional symmetries. From the proof of Theorem 1 in \cite{MiRo13}, it becomes clear that symmetries of the form $A^{\otimes n}$ and $B \otimes B^{-1}$ generate the whole stabilizer (see also Appendix \ref{app:generators}). The following two theorems, which we prove in Appendix \ref{app:nonderogatorysyms}, characterize the symmetries within non-derogatory ES SLOCC classes.

		\begin{theorem}
			\label{lemma:eksyms}
			For $n\geq 3$ the stabilizer of the states $\ket{E_k}$ is generated by 
			operators of the form $B \otimes B^{-1}$ and $A^{\otimes n}$, where
			\begin{itemize}
				\item $B$ is an arbitrary invertible upper triangular Toeplitz matrix. 
				\item $A^{\otimes n}$ can be written up to a proportionality factor as $A=D\bar{S}$ where $D$ is a diagonal matrix with $[D]_{l,l} = x^l$ for some $x \in \mathbb{C}$ and $l \in \{0, 1, \ldots, k\}$ and $\bar{S}$ is upper triangular with $[\bar{S}]_{l,l}=1$ for all $l$. Furthermore, $\bar{S}$ is characterized by $[\bar{S}]_{i+1,l}=\sum_{j=1}^{l-i} [\bar{S}]_{i,l-j} y_j$ for some complex parameters $y_j$ and $\bra{E_l}\bar{S}^{\otimes n}\ket {E_k}=0$ for $0\leq l< k$. For $n\geq k-1$ the symmetries $\bar{S}^{\otimes n}$ can be determined by solving linear equations and (except for a measure zero subset) any choice of $y_j$ leads to a solution. Therefore, $\bar{S}$ is a $k$-parameter group (depending on $n$ and clearly on $k$).
			\end{itemize}
		\end{theorem}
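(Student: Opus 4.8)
The plan is to characterize all symmetries of $\ket{E_k}$ by first exploiting the structure established in the preliminaries, namely that the stabilizer is generated by symmetries of the form $B\otimes B^{-1}$ and of the form $A^{\otimes n}$ (this is asserted to follow from the proof of Theorem~1 in \cite{MiRo13}; see Appendix~\ref{app:generators}). Hence I would treat the two types separately. For the $B\otimes B^{-1}$ part: by the results recalled around Eq.~(\ref{eq:ek}), $\ket{E_k}$ is (up to SLOCC) the unique state stabilized by $B\otimes B^{-1}$ when $B$ is a single Jordan block $J_{k+1}$. Since only the Jordan structure matters (eigenvalues can be changed by any analytic function $f$, as recalled in Section~\ref{subsec:Prelim_ESandNES}), I would show that the set of $B$ with $B\otimes B^{-1}\in S_{E_k}$ is exactly the set of invertible upper-triangular Toeplitz matrices. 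One direction is a direct computation: an upper-triangular Toeplitz matrix is a polynomial in the nilpotent shift $N$, i.e. $B=f(N)$ with $f$ analytic and $f(0)\neq 0$, so $B=f(J_{k+1})$ is obtained from $J_{k+1}$ by an analytic function and therefore inherits the symmetry from the canonical generator. The converse — that these are the \emph{only} $B$ — follows because any such $B$ must commute with the nilpotent part up to similarity and, having a single eigenvalue with a one-dimensional eigenspace, is forced into Toeplitz form; I would spell this out by writing the symmetry equation $B\otimes B^{-1}\ket{E_k}=\ket{E_k}$ in the excitation basis and reading off the constraints coefficient by coefficient.

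For the $A^{\otimes n}$ part I would solve $A^{\otimes n}\ket{E_k}=\lambda\ket{E_k}$ with the normalization $[A]_{1,1}=1$, as suggested in the preliminaries. The key structural observation is that $\ket{E_k}$ is the degree-$k$ component (in total excitation number) of the generating identity $\sum_{i_1+\cdots+i_n=k}\ket{i_1\cdots i_n}$, so it is natural to grade $A$ by its action on excitation number. Writing $A=D\bar S$ with $D=\operatorname{diag}(1,x,\dots,x^k)$ and $\bar S$ upper-triangular unipotent is exactly such a grading-adapted decomposition: $D^{\otimes n}$ rescales $\ket{E_k}$ by $x^k$ (it acts on a fixed total-excitation sector), so it is automatically a symmetry, and the nontrivial content is carried by $\bar S$. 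I would then show $A^{\otimes n}\ket{E_k}\propto\ket{E_k}$ reduces to $\bar S^{\otimes n}\ket{E_k}=\ket{E_k}$, which in turn, by orthogonality of the lower total-excitation sectors, is equivalent to $\bra{E_l}\bar S^{\otimes n}\ket{E_k}=0$ for $0\le l<k$ (the $l=k$ component being automatically $1$ since $\bar S$ is unipotent). The recursive description $[\bar S]_{i+1,l}=\sum_{j=1}^{l-i}[\bar S]_{i,l-j}\,y_j$ is simply the statement that $\bar S$ is a function of the shift, $\bar S=\eins+\sum_j y_j N^j$; I would verify this is forced by requiring $A=D\bar S$ closed under the group law together with the constraint that it acts compatibly with the excitation grading, and derive the stated recursion.

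The remaining claim — that for $n\ge k-1$ the conditions $\bra{E_l}\bar S^{\otimes n}\ket{E_k}=0$ are linear in the parameters $y_j$ and generically solvable, giving a $k$-parameter group — is where the real work lies, and I expect it to be the main obstacle. Although $\bar S^{\otimes n}$ is highly nonlinear in the $y_j$ a priori, the point is that the constraint $\bra{E_l}\bar S^{\otimes n}\ket{E_k}=0$ for the \emph{lowest} relevant $l$ picks out only the lowest-order term in $y$, making that equation linear; solving it and substituting makes the next equation effectively linear, and so on, so that one can peel off the $y_j$ one at a time. The bound $n\ge k-1$ should enter because one needs enough tensor factors for the relevant multinomial coefficients appearing in $\bra{E_l}\bar S^{\otimes n}\ket{E_k}$ to be nonzero (so the linear system has full rank); I would make this precise by computing the leading coefficient of each constraint and checking it is a nonzero multinomial in $n$ for $n\ge k-1$, the measure-zero exceptional set being where lower-order cancellations degenerate the system. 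Finally, counting: there are $k$ free parameters $y_1,\dots,y_k$ and $k$ homogeneous-type constraints, but the triangular ("peel off one at a time") structure shows the solution set is parametrized by $k$ continuous parameters once $x$ is included — or rather, the group $\bar S$ alone is $k$-parameter — which I would state by exhibiting the explicit recursive solution and noting closure under multiplication (a product of such matrices is again a polynomial in $N$ of the same form, hence again satisfies the orthogonality constraints because they cut out a subgroup of the unipotent upper-triangular Toeplitz group).
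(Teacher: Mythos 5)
Your skeleton matches the paper's: split the stabilizer into $B\otimes B^{-1}$ and $A^{\otimes n}$ generators, identify the admissible $B$ as the invertible upper-triangular Toeplitz matrices, factor $A=D\bar S$, and reduce to the orthogonality conditions $\bra{E_l}\bar S^{\otimes n}\ket{E_k}=0$ solved iteratively. The $B$ part is essentially fine (your observation that upper-triangular Toeplitz matrices are exactly the analytic functions of $J_{k+1}$ gives sufficiency cleanly; necessity still needs the coefficient-by-coefficient projection argument you defer to, which is what the paper does). But there are two genuine gaps on the $A^{\otimes n}$ side. The first is that you assume, rather than prove, that $A$ is upper triangular with diagonal $(1,x,\dots,x^k)$. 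This is not a normalization: a priori a symmetry $A^{\otimes n}$ could be any invertible matrix, and the paper needs a full induction (projecting onto $\ket{m_1m_2}$ with $m_1+m_2>k$ and repeatedly invoking invertibility of $A$ to rule out below-diagonal entries) plus a separate projection argument for the diagonal before the decomposition $A=D\bar S$ is even available.

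The second gap is fatal as written: your identification of the recursion $[\bar S]_{i+1,l}=\sum_{j=1}^{l-i}[\bar S]_{i,l-j}y_j$ with ``$\bar S=\identity+\sum_j y_jN^j$'' is false. A matrix $\identity+\sum_jy_jN^j$ is upper-triangular \emph{Toeplitz}, and for a Toeplitz unipotent $\bar S$ the condition $\bra{E_{k-1}}\bar S^{\otimes n}\ket{E_k}=0$ evaluates to a nonzero multiple of $y_1$, forcing $y_1=0$, and then inductively all $y_j=0$; your ansatz therefore only ever yields $\bar S=\identity$ and misses the entire non-diagonal part of the stabilizer that the theorem asserts exists. (The paper's worked $k=2$ example has $[\bar S]_{1,2}=-(n-1)[\bar S]_{0,1}$, manifestly non-Toeplitz.) The recursion in fact encodes a normalizer condition: conjugating $J\otimes J^{-1}\otimes\identity^{\otimes(n-2)}$ by $\bar S^{\otimes n}$ must again be a $B\otimes B^{-1}$-type symmetry, giving the intertwining relation $J\bar S=\bar S B$ with $B$ Toeplitz and $y_j=[B]_{0,j}$. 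In that picture $\bar S$ carries $k$ additional unknowns (its first row) beyond the $y_j$; the $k$ orthogonality conditions determine the first row iteratively (each equation being linear in the single new entry $[\bar S]_{0,l}$ after substitution --- not linear in the $y_j$, cf.\ $[\bar S]_{0,2}\propto y_2^{\,2}$ in the paper's example), and the $y_j$ remain free. This is precisely what resolves the parameter-count tension you noticed ($k$ parameters versus $k$ constraints) but could not resolve inside your framework. Relatedly, reducing to the $k$ scalar conditions $\bra{E_l}\bar S^{\otimes n}\ket{E_k}=0$ is not automatic ``by orthogonality of sectors'': one must first show that within each excitation sector $\bar S^{\otimes n}\ket{E_k}$ is proportional to $\ket{E_l}$, which the paper establishes with a ladder-operator argument that again rests on the intertwining relation.
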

		This provides a complete (implicit) characterization of the symmetries of all non-derogatory ES classes. In Appendix \ref{app:nonderogatorysyms} we present a systematic construction on how the linear equations determining $\bar{S}^{\otimes n}$ can be obtained and solved and we present an example. Note that Theorem \ref{lemma:eksyms} implies that all symmetries $S^{(1)}\otimes S^{(2)} \otimes \ldots \otimes S^{(n)}$ are upper triangular and normalizing the matrices to $[S^{(i)}]_{0,0} = 1$, the diagonal entries of the symmetries are given by $[S^{(i)}]_{l,l} = x^l$ for $l \in \{0,1,...,k\}$, $i \in \{0,1,...,n\}$ and for some $x \in  \mathbb{C}$. Note that this normalization implies a proportionality factor $x^k$, i.e., $S^{(1)}\otimes S^{(2)} \otimes \ldots \otimes S^{(n)}\ket{E_k}= x^k \ket{E_k}$.
		
		We will next provide a characterization of the symmetries in the general non-derogatory case.
		As we will see, we will get additional symmetries through acting over different blocks. Consider for example the (generalized) GHZ state, which has an arbitrary permutation matrix acting on all parties as a symmetry. In the following, however, we show that not many new symmetries appear in the non-derogatory case. In fact, we show that the only new symmetries come from permuting blocks of equal size and a scaling factor of the individual blocks. In particular, if no blocks have equal size, e.g., $\ket{E_3} \oplus \ket{E_2}$, no non-block-diagonal symmetries appear at all.
		
		\begin{theorem}
			\label{lemma:eksumsyms}
			For $n \geq 3$ the symmetries of states $\bigoplus_{b=1}^K \ket{E_{k_b}}$ are generated by
			\begin{enumerate}
				\item $\bigoplus_{b=1}^K S_{k_b}$, with $S_{k_b}$ a symmetry of the individual $\ket{E_{k_b}}$, i.e. $S_{k_b}\ket{E_{k_b}}=\alpha \ket{E_{k_b}}$ for any $b\in \{1,\ldots K\}$  
				\item diagonal matrices $D=\otimes_i D(\vec{\gamma}_i)$ with $D(\vec{\gamma}_i)= \bigoplus_{b=1}^K (\vec{\gamma}_i)_b \identity_{k_b}$ for some $\vec{\gamma}_i \in \mathbb{C}^K$ such that $\prod_i (\vec{\gamma}_i)_b=\gamma$ for all $b$  \cite{footnote9}.
				\item simultaneous permutations of whole blocks that have equal size for all parties (i.e. permutations of the form $X_\sigma^{\otimes n}$).
			\end{enumerate}
		\end{theorem}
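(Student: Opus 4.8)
The plan is to combine the reduction recalled in Section \ref{sec:non_derog} --- that the stabilizer of a symmetric state is generated by its symmetries of the forms $A^{\otimes n}$ and $B\otimes B^{-1}\otimes\identity^{\otimes(n-2)}$ (up to a permutation of the parties), see \cite{MiRo13} and Appendix \ref{app:generators} --- with an analysis organized around the excitation grading. Write $\mathbb{C}^d=\bigoplus_{l\ge0}V_l$, where $V_l$ is spanned by the $(l+1)$-st basis vectors of all Jordan blocks of size at least $l+1$; then $\ket{\Psi_s}=\bigoplus_b\ket{E_{k_b}}$ is a sum of ``homogeneous'' pieces, its degree-$k$ piece being $\bigoplus_{b:k_b=k}\ket{E_k}$ supported on the equal-size blocks. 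The easy (``if'') direction is a direct computation: the block-diagonal generators of item 1 stabilize $\ket{\Psi_s}$ because the common proportionality factor $\alpha$ in the statement makes the block contributions consistent; item 2 because $\bigotimes_iD(\vec\gamma_i)$ multiplies $\ket{E_{k_b}}$ by $\prod_i(\vec\gamma_i)_b=\gamma$, the same for every $b$; and item 3 because a simultaneous permutation of equal-size blocks permutes the summands $\ket{E_{k_b}}$ among copies of one another.

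For the ``only if'' direction and the $A^{\otimes n}$ family, I would first show --- using the $\bra{E_l}A^{\otimes n}\ket{E_k}=0$ relations that already appear in the proof of Theorem \ref{lemma:eksyms}, now applied block by block --- that any $A^{\otimes n}$ in the stabilizer is block-upper-triangular with respect to the excitation grading, so that only its degree-preserving part $A^{(0)}$ (which may still mix different Jordan blocks at a fixed degree) can carry the genuinely new, ``between-block'' content. I would then look at the lowest occurring degree $k_{\min}=\min_bk_b$: on that level the symmetry equation forces the relevant part of $(A^{(0)})^{\otimes n}$ to stabilize $\bigoplus_{b:k_b=k_{\min}}\ket{E_{k_{\min}}}$, and after the relabelling $\ket{i^{(b)}}\to\ket{b-1}\otimes\ket{i}$ this sub-state is a generalized GHZ state on the ``which-block'' register tensored with $\ket{E_{k_{\min}}}$ on the ``excitation'' content. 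A short argument --- the symmetries of a generalized GHZ state of tensor-product form are permutation-times-diagonal, and the diagonal freedom, combined with the multiplicative constraints coming from all excitation tuples summing to $k_{\min}$, is exactly the $\operatorname{diag}(1,x,x^2,\dots)$ freedom of a within-block symmetry --- then pins down $A^{(0)}$ on these blocks up to item-1 and item-2 generators and a common block permutation $X_\sigma^{\otimes n}$. Invertibility of $A$ and an induction up the grading (at each higher level the already-fixed lower-degree data constrains the new pieces, and no mixing between blocks of different sizes survives) finish this case.

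For the $B\otimes B^{-1}$ family I would use the standard fact recalled in the preliminaries that $f(B)\otimes f(B)^{-1}$ is again a symmetry for analytic $f$, which lets me assume the eigenvalues of $B$ are pairwise distinct; $B$ is then non-derogatory, similar to $\bigoplus_aJ_{m_a}(\mu_a)$ with distinct $\mu_a$, so by the uniqueness statement around Eq.\;(\ref{eq:sumek}) the stabilized state is SLOCC-equivalent to $\bigoplus_a\ket{E_{m_a-1}}$. Since the Jordan normal form of $B$ is an SLOCC invariant of the class (Section \ref{sec:preliminaries}), the multiset $\{m_a-1\}$ equals $\{k_b\}$, and tracing the similarity back shows $B$ is block-diagonal with respect to the block decomposition of $\ket{\Psi_s}$ --- up to a permutation of equal-size blocks already furnished by item 3 --- with each diagonal block an invertible upper triangular Toeplitz matrix, i.e.\ a within-block symmetry of item 1 composed with the block scalings of item 2. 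Combining the two families yields the stated generating set.

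The step I expect to be the main obstacle is exactly the equal-size-block analysis in the $A^{\otimes n}$ case: one must show that $A^{(0)}$ cannot mix blocks of the same size in a genuinely ``entangled'' $GL$-way, i.e.\ that the only symmetries of $\ket{GHZ_n^{(r)}}\otimes\ket{E_k}$, viewed as a state on the $n$-fold power of register $\otimes$ content with the local operators free to mix the two factors, are products of a (permutation-and-scaling) symmetry of the GHZ part with a symmetry of $\ket{E_k}$. Making this rigorous --- and then propagating it up the excitation grading while keeping track of the degree-raising parts of $A$ and of possible cancellations between different blocks --- is where the real work lies; the excitation grading is the lever that separates the ``which-block'' and ``excitation'' data and makes the GHZ rigidity statement accessible.
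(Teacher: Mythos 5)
Your overall route differs from the paper's: the paper does not reduce to the $A^{\otimes n}$ and $B\otimes B^{-1}$ generators at all, but works directly with an arbitrary product symmetry $S^{(1)}\otimes\cdots\otimes S^{(n)}$, uses the two--party projections $\bra{j_1^{(b_1)}j_2^{(b_2)}}$ together with invertibility of each $S^{(i)}$ to force every $S^{(i)}$ into block--diagonal form up to a common permutation $X_\sigma^{\otimes n}$ of equal--size blocks, and then observes that the symmetry equation splits into one equation per block, so Theorem \ref{lemma:eksyms} applies blockwise and the residual freedom in the per--block proportionality constants is exactly item 2. Your plan is not unreasonable, but as written it has two genuine gaps. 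The first is the one you flag yourself: the ``GHZ rigidity'' statement for $\ket{GHZ}\otimes\ket{E_k}$ under local operators that mix the which--block register with the excitation register, plus the induction up the excitation grading, \emph{is} essentially the content of the theorem, and none of it is carried out. In particular, the claimed block--upper--triangularity of $A$ with respect to the grading is asserted by analogy with the single--block relations $\bra{E_l}A^{\otimes n}\ket{E_k}=0$, which were derived under the assumption that $A$ acts within one block and do not transfer once $A$ mixes blocks; and the lowest--degree analysis does not by itself exclude $A^{(0)}$ mapping a small block into an intermediate excitation level of a larger block, which is precisely the cross--block mixing that has to be ruled out.

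The second gap is in the $B\otimes B^{-1}$ family. The analytic--function trick does not let you ``assume the eigenvalues of $B$ are pairwise distinct'': as recalled in Section \ref{subsec:Prelim_ESandNES}, $f(B)$ can only change eigenvalues \emph{respecting the degeneracies}, so it cannot split an eigenvalue shared by two Jordan blocks. Derogatory $B$ do occur in the stabilizer of $\bigoplus_b\ket{E_{k_b}}$ (e.g.\ block--diagonal $B$ with two identical Jordan blocks on two equal--size blocks, or the repeated--entry diagonal symmetries when several $k_b$ coincide), and for those the uniqueness of the stabilized state --- the pivot of your SLOCC--invariance argument --- fails. Moreover, ``tracing the similarity back'' to conclude that $B$ is block--diagonal requires knowing that the conjugating $R$ satisfies $R^{\otimes n}\bigoplus_b\ket{E_{k_b}}\propto\bigoplus_b\ket{E_{k_b}}$ and hence already presupposes the characterization of the $A^{\otimes n}$ stabilizer, which makes the argument circular. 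The clean way to handle this family (and the one consistent with the paper's Theorem \ref{lemma:eksyms}) is the direct computation $B_{(1)}\ket{\Psi_s}=B_{(2)}\ket{\Psi_s}$ projected onto two--party basis states, which yields block--diagonal upper--triangular Toeplitz structure uniformly, with no case distinction on the Jordan type of $B$.
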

		
		Note that the symmetries in the theorem above are normalized such that $S\bigoplus_{b=1}^K \ket{E_{k_b}}=\alpha\gamma\bigoplus_{b=1}^K \ket{E_{k_b}}$.

		\subsection{$LOCC_{\mathbb{N}}$ convertibility}
		\label{sec:es}
		
		In this section, we study state transformations within non-derogatory ES classes and show that indeed a rich LOCC structure becomes apparent. 
		In particular, we will show that within each non-derogatory ES class, transformations from symmetric states to symmetric states are possible. Nevertheless, we also show that apart from qubit systems, non-derogatory ES classes always contain weakly isolated states. Moreover, in certain non-derogatory ES classes, symmetric states that are weakly isolated exist. It thus turns out that the qubit W and qubit GHZ classes are the only non-derogatory ES SLOCC classes, which are completely free of isolation.

		Having characterized the stabilizer of  non-derogatory ES classes in Theorem \ref{lemma:eksyms} and \ref{lemma:eksumsyms}, we are now in the position to show that within each non-derogatory ES class, transformations from symmetric states to symmetric states are possible. Let us start out with a simple example of such a transformation. 
		To this end, we consider the $W$ state, i.e., $\ket{W}=1/\sqrt{n}\ket{E_1}$. The stabilizer of this state is given by \cite{Verstraete2002}
		\begin{align}
			\frac{1}{x}\bigotimes_{i=1}^{n-1}\begin{pmatrix}
				1& y_i    \\
				0 &  x
			\end{pmatrix} \otimes \begin{pmatrix}
				1& -\sum_{i=1}^{n-1} y_i    \\
				0 &  x
			\end{pmatrix}
		\end{align}
		with $x, y_i \in \mathbb{C}$ and $x\neq 0$.

		Every symmetric state in this SLOCC class can be written up to LUs as $\sqrt{p} \ket{0, \ldots, 0} + \sqrt{1-p} \ket{W}$ \cite{turgutW}. A transformation from this state to $\sqrt{p'} \ket{0, \ldots, 0} + \sqrt{1-p'} \ket{W}$ is possible by $LOCC_\mathbb{N}$ iff $p' \geq p$ \cite{turgutW}. In the LOCC protocol achieving this transformation, all parties act once consecutively. For simplicity, we recall here the protocol, where the initial state is $\ket{W}$. First, note that the final state $\sqrt{p'} \ket{0, \ldots, 0} + \sqrt{1-p'} \ket{W}$ can be written as
		\begin{align}
			\begin{pmatrix}
				1&0      \\
				0 &   \sqrt{1-p'}
			\end{pmatrix} \otimes \begin{pmatrix}
				1& 0       \\
				0 &   \sqrt{1-p'}
			\end{pmatrix} \otimes \ldots \otimes \begin{pmatrix}
				1&  \sqrt{n p'}    \\
				0 &  \sqrt{1-p'}
			\end{pmatrix} \ket{W}.
		\end{align}

		We will now show that the transformation can be performed in the following $n$-round protocol 
		\begin{align}
			\label{eq:trafos}
			\identity \otimes \identity \otimes \ldots \otimes \identity \ket{W} \nonumber\\
			\longrightarrow \underbrace{\begin{pmatrix}
					1& x_1       \\
					0 &   b
				\end{pmatrix}\otimes \identity \otimes \identity \otimes \ldots \otimes \identity}_{h(1)}   \ket{W} \nonumber\\
			= \underbrace{ \begin{pmatrix}
					1& 0       \\
					0 &   b
				\end{pmatrix} \otimes \begin{pmatrix}
					1& x_1       \\
					0 &   1
				\end{pmatrix} \otimes \identity \otimes \identity \otimes \ldots \otimes \identity}_{g(1)} \ket{W} \nonumber\\
			\longrightarrow \underbrace{ \begin{pmatrix}
					1& 0       \\
					0 &   b
				\end{pmatrix}\otimes \begin{pmatrix}
					1& x_2       \\
					0 &   b
				\end{pmatrix} \otimes \identity \otimes \identity \otimes \ldots \otimes \identity}_{h(2)} \ket{W}\nonumber \\
			\vdots \quad \nonumber \\
			\longrightarrow \underbrace{\begin{pmatrix}
					1& 0       \\
					0 &   b
				\end{pmatrix} \otimes \ldots \otimes \begin{pmatrix}
					1& 0       \\
					0 &   b
				\end{pmatrix}  \otimes \begin{pmatrix}
					1& x_n     \\
					0 &   b
			\end{pmatrix}}_{h(n)} \ket{W},
		\end{align}
		where $b=  \sqrt{1-p'}$, and $x_k = \sqrt{k p'}$. That the $k$th step of the protocol is possible by a local measurement by party $k$ can be seen as follows (see also \cite{turgutW}). Let $h(k) = \bigotimes_i h_i(k)$ denote the operators that give the state after the $k$th round through $h_1(k) \otimes \ldots \otimes h_n(k) \ket{W}$ and the operators describing the state before the next round as $g(k) = \bigotimes_i g_i(k)$, as in Eq. (\ref{eq:trafos}), and by convention $g_i(0) = \identity$. Note that $g(k)$ and $h(k)$ are actually two ways of representing the same state, i.e. $g(k) \ket{W} = h(k) \ket{W}$ due to the symmetries of $\ket{W}$. Let us moreover use the convention $G_i(k) = g_i(k)^\dagger g_i(k)$ and $H_i(k) = h_i(k)^\dagger h_i(k)$.
		
		Let us now prove that the protocol sketched in Eq. (\ref{eq:trafos}) can indeed be implemented via LOCC. To this end, let us consider round $k$ of the protocol.
		It is clear that it is possible to perform a measurement at site $k$ (followed by local unitaries by the remaining parties) in order to obtain $h_1(k) \otimes \ldots \otimes h_n(k) \ket{W}$ from $g_1(k-1) \otimes \ldots \otimes g_n(k-1) \ket{W}$ if it is possible to find symmetries $S_m$ such that
		\begin{align}
			[S_m^{(i)}, G_i(k)] = 0 \ \forall i \in \{1, \ldots, n\} \setminus \{k\}, \text{ and} \\
			\sum_m q_m (S_m^{(k)})^\dagger H_k(k) S_m^{(k)} = G_k(k-1)
		\end{align}
		for $q_m \geq 0$, $\sum_m q_m =1$. Note that all diagonal unitary symmetries fulfill $[S_m^{(i)}, G_i(k)] = 0$ for all $i \neq k$. Moreover, it is possible to satisfy Eq. (\theequation) by choosing $q_1=q_2=1/2$ and $\{S_m\}_m = \{\operatorname{diag}(1, e^{-i \phi}), \operatorname{diag}(1, e^{i \phi})\}$ with $\phi =  \arccos{\frac{x_{k-1}}{x_k}} =  \arccos{\sqrt{\frac{k-1}{k}}}$. This proves that the  transformation from $\ket{W}$ to $\sqrt{p'} \ket{0, \ldots, 0} + \sqrt{1-p'} \ket{W}$ is indeed possible via LOCC.
		Note that the corresponding measurement operators (for round $k$), which are given by
		\begin{align}
			\label{eq:wmeasops}
			M_m = h_k(k) S_m^{(k)} g_k(k-1)^{-1},
		\end{align}
		are upper triangular. Note further that in each round the state is transformed deterministically into another state in the $W$--class, i.e. the protocol is all--deterministic. The conditions for the existence of such a transformation are known explicitly (see \cite{turgutW, dVSp13,SpdV17,dVSp17}). 
		
		We use this example now to prove that in each non-derogatory ES SLOCC classes there exist pairs of symmetric states which can be converted into each other, as stated in the following theorem. 
		
		\begin{theorem}
			\label{thm:symtosym}
			In all non-derogatory ES SLOCC classes, there exist symmetric states $\ket{\psi}$ and $\ket{\phi}$ such that $\ket{\psi}$ can be converted into $\ket{\phi}$ via LOCC.
		\end{theorem}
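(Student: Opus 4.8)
The plan is to generalize the explicit $n$-round protocol of Eq.~(\ref{eq:trafos}) for the $W$ class to an arbitrary non-derogatory ES class. I would always take $\ket\psi$ to be the seed state itself --- $\ket{E_k}$ in the single-block case, $\bigoplus_{b=1}^K\ket{E_{k_b}}$ in general --- and the target $\ket\phi$ to be $A^{\otimes n}\ket{\Psi_s}$ for a carefully chosen upper-triangular $A$ (acting inside a single block in the direct-sum case) with $[A]_{0,0}=1$ and with the moduli of the remaining diagonal entries strictly less than $1$. Such a $\ket\phi$ is automatically permutation-symmetric and, being a ``contraction'' of the excited components towards $\ket{0\ldots 0}$, has strictly less entanglement than $\ket\psi$ (so in particular $\ket\phi\not\sim_{\mathrm{LU}}\ket\psi$); thus the entire content of the proof is to exhibit the LOCC protocol in this direction. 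By Corollary~\ref{lemma:quasicommutation} the symmetries of the form $B\otimes B^{-1}$ cannot by themselves make the quasi-commutation conditions hold when $B$ is non-diagonalizable, so the protocol must exploit the two further ingredients supplied by Theorems~\ref{lemma:eksyms} and \ref{lemma:eksumsyms}: the \emph{transport symmetries} $B_{(i)}\otimes B^{-1}_{(j)}$ with $B$ an invertible upper-triangular Toeplitz matrix, and the \emph{diagonal symmetries} $D^{\otimes n}$ with $[D]_{l,l}=x^l$. Letting $x$ range over the $(k+1)$-th roots of unity, $\frac{1}{k+1}\sum_x (D^{(i)})^\dagger H_i D^{(i)}$ equals the diagonal part of $H_i$, which is exactly the mechanism behind the two-outcome step in Eq.~(\ref{eq:trafos}).

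Concretely, for $\ket{E_k}$ I would run $n$ rounds: in round $j$ party $j$ performs a measurement, and the intermediate states are written in two equivalent ways $h(j)\ket{E_k}=g(j)\ket{E_k}$, related by a transport symmetry $B_{(j)}\otimes B^{-1}_{(j+1)}$ that moves the strictly-upper-triangular content just generated on party $j$ onto party $j+1$ while the diagonal (contracting) factor accumulates on party $j$. All parties other than $j$ then carry \emph{diagonal} positive matrices $G_i$, which commute with the diagonal unitary symmetries, so condition~(i) of Lemma~\ref{lemma:convertibility} is automatic at every round. Condition~(ii) --- realizability of the single-party step $G_j(j-1)\mapsto H_j(j)$ --- holds because, writing $G_j(j-1)=\sum_m q_m\,(D_m^{(j)})^\dagger H_j(j) D_m^{(j)}$ with the root-of-unity diagonal symmetries and $q_m=1/(k+1)$, the operators $M_m=h_j(j)\,D_m^{(j)}\,g_j(j-1)^{-1}$ satisfy $\sum_m M_m^\dagger M_m=\identity$ identically, and $D_m^{\otimes n}$ being a symmetry forces all post-measurement branches to coincide up to local unitaries; hence each round is an all-deterministic LOCC step, just as in the $W$ case. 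After the $n$-th round the collective operator equals $A^{\otimes n}$ up to an element of $S_{\Psi_s}$, so the output is $A^{\otimes n}\ket{E_k}$, which is symmetric.

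For a general direct sum $\bigoplus_b\ket{E_{k_b}}$ (which includes the generalized GHZ case) I would carry out the single-block protocol above inside one chosen block $b_0$ while letting every measurement operator act as $\sqrt{q_m}\,\identity$ on the remaining blocks. The block-diagonal symmetries $\bigoplus_b S_{k_b}$ of Theorem~\ref{lemma:eksumsyms}(1), taken with $S_{k_b}=\identity$ for $b\neq b_0$, together with the block-scaling symmetries of part~(2), show that this extension is a legitimate LOCC step; the completeness relations factor over blocks and, since the single-block rounds have equiprobable outcomes, the transformation stays deterministic, with output $\big(A^{\otimes n}\ket{E_{k_{b_0}}}\big)\oplus\bigoplus_{b\neq b_0}\ket{E_{k_b}}$, again symmetric. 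The known rich stabilizers of the $n$-qubit $W$ and GHZ classes are recovered as the special cases $k=1$, $K=1$ and $k_1=k_2=1$, $K=2$, and serve as a consistency check.

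The step I expect to be the main obstacle is the bookkeeping that guarantees the protocol lands on a genuinely \emph{permutation-symmetric} state --- i.e.\ that after absorbing all the transport symmetries the accumulated local operations really amount to a global $A^{\otimes n}$ and not merely to some operator producing an arbitrary state of the SLOCC class. Making this rigorous requires tracking, round by round, which upper-triangular parameters have been pushed downstream, and checking that the $k$-parameter Toeplitz family of transport symmetries is exactly rich enough to re-package them into a single $A$. A secondary, more routine, point is verifying positivity of the intermediate $H_j(j)$ and choosing the finitely many protocol parameters (the analogues of $b$ and the $x_j$ in Eq.~(\ref{eq:trafos})) consistently.
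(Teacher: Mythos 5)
Your overall architecture is the paper's: run a $W$-style $n$-round protocol on the two-dimensional subspace $\operatorname{span}\{\ket{0},\ket{k}\}$ of a block, using the diagonal symmetries $\operatorname{diag}(1,x,\ldots,x^k)^{\otimes n}$ for the measurement and the upper-triangular Toeplitz transport symmetries to shuffle the off-diagonal content from party to party. However, the central computational claim of your round-$j$ step is false. You propose twirling with all $(k+1)$-th roots of unity and $q_m=1/(k+1)$, which maps $H_j(j)$ to its \emph{diagonal part}; but from round $2$ onward the operator $G_j(j-1)$ that must equal $\sum_m q_m (D_m^{(j)})^\dagger H_j(j) D_m^{(j)}$ is \emph{not} diagonal --- it carries the off-diagonal entry $x_{j-1}$ that your own transport step just deposited on party $j$ (explicitly, $G_j(j-1)=\identity_{k-1}\oplus\bigl(\begin{smallmatrix}1&x_{j-1}\\ x_{j-1}^*&1+|x_{j-1}|^2\end{smallmatrix}\bigr)$). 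Hence $\sum_m M_m^\dagger M_m=\identity$ does not hold ``identically'' as you assert, and condition (ii) of Lemma~\ref{lemma:convertibility} fails at every round after the first. You have also mischaracterized the two-outcome step of Eq.~(\ref{eq:trafos}): it does not project onto the diagonal part but \emph{damps} the corner entry by $\cos\phi$. The correct choice (and the paper's) is a two-outcome measurement with $D_\pm=\operatorname{diag}(1,e^{\pm i\phi/k},\ldots,e^{\pm i\phi})$, $q_\pm=1/2$ and $\phi=\arccos(x_{j-1}/x_j)$, tuned so the off-diagonal entry is reduced by exactly the ratio $x_{j-1}/x_j$ needed to match $G_j(j-1)$.

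The second gap is the generalized GHZ classes, i.e.\ the direct sums with $k_b=0$ for \emph{all} $b$. There every block is one-dimensional ($\ket{E_0}=\ket{0\cdots 0}$), so ``running the single-block protocol inside one chosen block'' is vacuous and your direct-sum reduction produces no transformation at all. (Your consistency check misidentifies the GHZ class as $K=2$, $k_1=k_2=1$; that is the class of $\ket{W}\oplus\ket{W}$.) These classes are non-derogatory ES and must be covered; the paper treats them with a separate protocol that exploits the block-permutation symmetry $\sigma_x^{\otimes n}$ (more generally $X_\sigma^{\otimes n}$ from Theorem~\ref{lemma:eksumsyms}) together with phase gates $Z(\pi/4)$, a mechanism entirely absent from your argument. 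Your final worry --- verifying that the output state is genuinely permutation-symmetric --- is legitimate but minor: the accumulated corner entry is an upper-triangular Toeplitz perturbation of the identity, so the transport symmetries let one redistribute it evenly over the $n$ parties into a single $A^{\otimes n}$.
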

		\begin{proof}
			
			Let us first prove the theorem for a single block, hence, for a $n$-partite representative $\ket{E_k}$. We have already presented an example for $k=1$. The proof for SLOCC classes represented by $\ket{E_k}$ for $k \geq 2$ is a simple generalization thereof, as seen as follows. 
			
			Consider the $2^n$-dimensional subspace spanned by computational basis states where each local state is either $\ket{0}$ or $\ket{k}$. Note that $\ket{E_k}$ projected onto that subspace gives a symmetric state which resembles the $\ket{W}$ state, however with $1$ replaced by $k$.
			
			Define the measurement operators $\{M_m\}_m$ as in  Eq. (\ref{eq:wmeasops}) where the operators  $h_k(k)$ and $g_k(k-1)$ act trivially on the $(k-1)$-dimensional subspace spanned by $\{\ket{1}, \ldots, \ket{k-1}\}$ and act as $h_k(k)$ and $g_k(k-1)$ respectively in the protocol explained above (see Eq. (\ref{eq:trafos})) on the subspace spanned by $\ket{0}$ and $\ket{k}$. The symmetries $S_m^{(k)}$ are diagonal with elements  $[S_m^{(k)}]_{jj}= \left(e^{i(-1)^m \phi/k}\right)^j$ where as before $\phi = \arccos{\frac{x_{k-1}}{x_k}} = \arccos{\sqrt{\frac{k-1}{k}}}$. Clearly, the constructed operators constitute a valid measurement. Considering the LOCC protocol described above with those measurement operators $\{M_m\}_m$ on the state $\ket{E_k}$ leads to states of the form
			\begin{align}\ldots \otimes \begin{pmatrix}
					1&  &  &       \\
					& \ddots &   &        \\
					&  & 1 &         \\
					&  &  & b        \\
				\end{pmatrix}  \otimes \begin{pmatrix}
					1&  &  &     x_n  \\
					& \ddots &   &        \\
					&  & 1 &         \\
					&  &  & b        \\
				\end{pmatrix} \ket{E_k}.
			\end{align}
			
			Hence, it is possible to deterministically transform the symmetric seed state $\ket{E_k}$ to those symmetric states. This proves the theorem for the case of SLOCC classes represented by single $\ket{E_k}$. The proof straight-forwardly generalizes to SLOCC classes represented by direct sums of $\ket{E_k^{(b)}}$ except for the case $k_b=0$ for all $b$. We will first present an example for $K=2$ and $k_b=0$, i.e., the qubit GHZ class. Our initial state of the transformation is the GHZ state, the final state is of the form  $g_x\otimes \ldots g_x\otimes [g_x Z(\pi/4)] \ket{GHZ}$, where $g_x \neq \identity$, $[g_x, \sigma_x]=0$ and $Z(\phi)=\operatorname{diag}(e^{i \phi}, e^{-i \phi})$. For convenience, we consider the normalization  $\tr(G_x)=1$. The following protocol (see also \cite{turgutGHZ}) allows to achieve the transformation. Each party measures once and the measurement is chosen to be the same for all parties. Its measurement operators are of the form 
			\bea \label{POVM2}
			M_1 =\frac{g_x Z(\pi/4)}{\sqrt{2}}, M_2 =\frac{g_x Z(\pi/4)\sigma_x}{\sqrt{2}}.
			\eea
			Wlog we assume that party $n$ is the last one to implement a non-trivial measurement. In the first $n-1$ rounds all parties except party $n$ (and the party implementing the measurement in this round) apply $\sigma_x$ if the second outcomes occurs. Party $n$ applies in this case $Z(-\pi/4)\sigma_x$ and in case of the first outcome $Z(-\pi/4)$. In the last round all parties apply $\sigma_x$ for the second outcome of the measurement of party $n$. Using that $[g_x, \sigma_x]=0$, that $\sigma_x^{\otimes n}$ is a symmetry and that further $g_x\otimes\ldots\otimes g_x\otimes g_xZ(\pi/4)\otimes \identity\ldots \identity\ket{GHZ}= g_x\otimes\ldots\otimes g_x\otimes g_x\otimes \identity\ldots \identity\otimes Z(\pi/4)\ket{GHZ}$ it is easy to see that this succeeds in implementing the transformation. This transformation can be straightforwardly generalized to higher-dimensional GHZ classes, as one can analogously to before embed it into a higher-dimensional space. 
		\end{proof}
		
		The results above show that e.g. $E(\ket{W})\geq E(\sqrt{p}\ket{0\ldots 0}+ \sqrt{1-p}\ket{W})$ for $p\geq 0$ for any entanglement measure $E$.

		Let us show now that also among non--derogatory ES classes weakly isolated states (even symmetric states) exist. Due to Lemma \ref{lemma:isolation} and the fact that we have determined important properties of all symmetries of non-derogatory ES classes, we are in the position to prove very general statements of the existence of weak isolation. However, due to the richness of the considered classes here and the fact that the statements hold for (almost all) system sizes, the proofs get rather technical. However, the main idea to prove the existence of weak isolation is as before: We identify explicitly operators $g_i$ such that the quasi--commutation relations presented in Lemma \ref{lemma:isolation} cannot hold for any symmetry of the state, which then implies that the corresponding state is weakly isolated. In this context, we will make use of the following lemma which can be straightforwardly proven (see Appendix \ref{app:commutationlemma}).
		
		\begin{lemma}
			\label{lemma:commutationlemma}
			Let $k \geq 1$. Given an $(k+1) \times (k+1)$ upper triangular matrix A for which $A_{l,l}$ = $x^l$ for some $x\in \mathbb{C}$, and  an $(k+1) \times (k+1)$ matrix
			\begin{align} \label{eq_G}
				g =\begin{pmatrix}
					1&   &   &     \\
					& \ddots &   &     \\
					&   &  1  & a   \\
					&   &   & 1   \\
				\end{pmatrix},
			\end{align}
			where $a \in \mathbb{C}$. Then it holds that $A^\dagger g^\dagger g A \propto g^\dagger g$ iff $|x|=1$ and A is of the form
			\begin{align}
				\label{eq:Aform}
				A \propto \begin{pmatrix}
					1&   &   & &     \\
					& x  &   &  &   \\
					&  & \ddots &   &     \\
					&  &   &  x^{k-1}  & a  \left( \frac{1}{{x^*}^{k-1}} - x^k  \right) \\
					&  &   &   & x^k   \\
				\end{pmatrix}.
			\end{align}
		\end{lemma}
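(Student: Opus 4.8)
The plan is to prove the lemma by direct computation, exploiting the very simple structure of $g^\dagger g$. First I would note that $g^\dagger g$ differs from the identity only in a $2\times 2$ block: explicitly, writing indices $0,\dots,k$, we have $g^\dagger g = \identity + a\, E_{k-1,k} + a^*\, E_{k,k-1} + |a|^2 E_{k,k}$, where $E_{i,j}$ are the matrix units. Thus the condition $A^\dagger (g^\dagger g) A \propto g^\dagger g$ is a statement about how $A$ interacts with this single off-diagonal perturbation. The forward implication (that $A$ of the form \eqref{eq:Aform} together with $|x|=1$ satisfies the relation) is a routine check: one just multiplies out $A^\dagger (g^\dagger g) A$ for the claimed $A$ and verifies it is proportional to $g^\dagger g$, with the proportionality constant being a phase; I would relegate this to a short verification and focus on the converse.

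For the converse, I would proceed row by row. Write $A^\dagger (g^\dagger g) A = A^\dagger A + a A^\dagger E_{k-1,k} A + a^* A^\dagger E_{k,k-1}A + |a|^2 A^\dagger E_{k,k} A$, and let $\mu$ be the (yet unknown) proportionality constant, so $A^\dagger(g^\dagger g)A = \mu(g^\dagger g)$. Using that $A$ is upper triangular with $A_{l,l}=x^l$, the diagonal entries of $A^\dagger A$ are $\sum_{j\le l}|A_{j,l}|^2$. Comparing the $(0,0)$ entry gives $|A_{0,0}|^2 = \mu$, i.e.\ $\mu = 1$ after the normalization $A_{0,0}=1$ that is implicit in the statement (or $\mu = |A_{0,0}|^2 > 0$ in general). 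Then I would compare the strictly-upper-triangular entries $(0,l)$ for $l \ge 1$ of $A^\dagger(g^\dagger g)A$ with those of $\mu(g^\dagger g)$, which vanish except for $(k-1,k)$. Since $A$ is upper triangular, $A^\dagger A$ and the perturbation terms have a controllable structure, and the vanishing of entry $(0,1)$ forces $x^* A_{0,1} = 0$ in the leading order... more carefully, one induction step at a time: assuming $A_{0,1}=\dots=A_{0,l-1}=0$, the $(0,l)$ entry of $A^\dagger A$ reduces to $(x^*)^l A_{0,l}$ (for $l<k$), forcing $A_{0,l}=0$; handling the perturbation terms requires checking they do not contribute to these entries until $l$ reaches $k-1,k$. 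Repeating this argument for each row $i$ (comparing entries $(i,l)$ with $l>i$) shows that $A$ must be diagonal except possibly in the $(k-1,k)$ entry, and the diagonal comparison of entry $(l,l)$ for $1\le l\le k-1$ then forces $|x|^{2l}=\mu=1$, hence $|x|=1$.

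The main obstacle will be organizing the induction cleanly so that the off-diagonal perturbation terms $a A^\dagger E_{k-1,k}A$ etc.\ are correctly accounted for at each stage — these terms only affect rows/columns $k-1$ and $k$, so for $i<k-1$ the argument is purely about $A^\dagger A$, but the final rows need care. Once $A$ is pinned down to be $\operatorname{diag}(1,x,\dots,x^{k-1},x^k)$ plus a single entry $A_{k-1,k}=:c$, I would substitute into $A^\dagger(g^\dagger g)A = g^\dagger g$ (with $\mu=1$, up to an overall phase which I would absorb by allowing the $\propto$) and read off the $(k-1,k)$ entry: this yields $(x^*)^{k-1} c + a\, x^k = a$ times the appropriate phase, i.e.\ $c = a\big((x^*)^{-(k-1)} - x^k\big)$ as claimed (one also uses $|x|=1$ to simplify $1/(x^*)^{k-1}$). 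I expect this last step to be a short linear computation once the structural reduction is in place. Throughout, the key simplifying facts are: $g^\dagger g$ is a rank-one perturbation of the identity supported on the last two coordinates, and $A$ upper triangular with prescribed geometric-progression diagonal drastically limits the freedom.
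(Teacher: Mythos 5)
Your proposal is correct and is essentially the paper's own argument: both rest on the observation that $g^\dagger g$ equals the identity outside the last $2\times 2$ block, force the upper-left part of $A$ to be diagonal (the paper via ``upper triangular and proportional to a unitary implies diagonal'' applied to the block $\identity_{k-1}\oplus\tilde G$ decomposition, you via an equivalent row-by-row induction on the entries of $A^\dagger A$), and then read off $|x|=1$ and the $(k-1,k)$ entry from the remaining $2\times2$ block. One cosmetic slip: for row $0$ the surviving term of $(A^\dagger A)_{0,l}$ is $\overline{A_{0,0}}\,A_{0,l}=A_{0,l}$ rather than $(x^*)^l A_{0,l}$ (the factor $(x^*)^i$ arises for row $i$), which is harmless since the coefficient is nonzero in either case.
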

		
		All matrix elements which are not displayed here are zeros. We use this lemma now to show that weak isolation exists in SLOCC classes represented by some $\ket{E_k}$ with $k \geq 2$. The proof of this statement implies then also that one can even find symmetric states which are weakly isolated in case $k \geq 3$. Considering then SLOCC classes which correspond to states of the form $\oplus_b \ket{E_{k_b}}$ for more than one $b$, gets more involved as additional symmetries exist in case some blocks have the same size (see Theorem {\ref{lemma:eksumsyms}). An example of such a class is the one corresponding to the generalized GHZ state, which might be expected to have no isolation due to the fact that there is no isolation for the qubit case. However, as we show below, this is not the case and in fact for all these classes, weak isolation exists. 
			
			Let us now start out by showing that weak isolation exists in SLOCC classes represented by some $\ket{E_k}$, 
			
			\begin{theorem}
				\label{theo:isolation}
				In any SLOCC classes represented by some $\ket{E_k}$ with $k \geq 2$, weakly isolated states exist. 
			\end{theorem}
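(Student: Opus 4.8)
The plan is to construct an explicit state in the SLOCC class of $\ket{E_k}$ that satisfies the weak-isolation criterion of Lemma \ref{lemma:isolation}, controlling the stabilizer via Theorem \ref{lemma:eksyms} and the quasi-commutation relations via Lemma \ref{lemma:commutationlemma}. Concretely, writing the computational basis of $\mathbb{C}^{k+1}$ as $\ket{0},\ldots,\ket{k}$, let $g_a=\identity+a\ket{k-1}\bra{k}$ be the matrix of Eq.\;(\ref{eq_G}) with parameter $a$, fix pairwise distinct and nonzero $a_1,\ldots,a_n\in\mathbb{C}$, and set $\ket{\Phi}=(g_{a_1}\otimes\cdots\otimes g_{a_n})\ket{E_k}$ with $G_i=g_{a_i}^\dagger g_{a_i}$. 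Since each $g_{a_i}$ is invertible, $\ket{\Phi}$ lies in the class of $\ket{E_k}$, and by Lemma \ref{lemma:isolation} it then suffices to rule out the existence of some $S\in S_{E_k}\setminus\{\identity\}$ with $(S^{(i)})^\dagger G_i S^{(i)}\propto G_i$ on $n-1$ of the $n$ sites.

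Assume such an $S$ existed. By Theorem \ref{lemma:eksyms} every local part of a symmetry of $\ket{E_k}$ is upper triangular, and after normalising $[S^{(i)}]_{0,0}=1$ one has $[S^{(i)}]_{l,l}=x^l$ with a single $x\in\mathbb{C}$ for all $i$ and $S\ket{E_k}=x^k\ket{E_k}$. Let $j$ be the site where quasi-commutation fails if there is one (there is at most one), and otherwise an arbitrary site, so that $(S^{(i)})^\dagger G_iS^{(i)}\propto G_i$ for all $i\neq j$. Invoking Lemma \ref{lemma:commutationlemma} with $A=S^{(i)}$ and the nonzero parameter $a_i$ forces, for each $i\neq j$, that $|x|=1$ and that $S^{(i)}$ is exactly of the form in Eq.\;(\ref{eq:Aform}): diagonal with entries $1,x,\ldots,x^k$ apart from the single entry $c_i:=a_i\big(1/{x^*}^{k-1}-x^k\big)$ in position $(k-1,k)$. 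Since $k\geq 2$, each such ``pinned'' $S^{(i)}$ has $0$-th row equal to $(1,0,\ldots,0)$; this is exactly where the hypothesis $k\geq2$ enters, and it is what breaks down for the W class $k=1$.

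It then remains to show that the rigidity of $S_{E_k}$ forces $S=\identity$, which I would obtain by comparing coefficients of a few computational basis vectors on the two sides of $S\ket{E_k}=x^k\ket{E_k}$. Taking any pinned site $j'\neq j$ and the basis vector carrying $k-1$ excitations on $j'$ and none elsewhere (total excitation $k-1<k$, hence coefficient $0$ on the right-hand side), the $\delta$-type $0$-th rows of the pinned factors reduce the left-hand coefficient to $x^{k-1}[S^{(j)}]_{0,1}+c_{j'}$, giving $[S^{(j)}]_{0,1}=-c_{j'}/x^{k-1}$. Requiring this for all $n-1\geq2$ pinned sites forces the $c_{j'}$ to coincide; since the $a_{j'}$ are distinct and nonzero, this forces $1/{x^*}^{k-1}=x^k$, hence (using $|x|=1$) $x=1$ and $c_{j'}=0$ for every $j'$, so $S^{(i)}=\identity$ for all $i\neq j$. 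A second, analogous comparison --- of the coefficient of the basis vector carrying $b<k$ excitations on a pinned site and $a$ excitations on $j$ with $a+b<k$ --- yields $[S^{(j)}]_{a,k-b}=0$ whenever $a<k-b$, i.e.\ every strictly upper-triangular entry of $S^{(j)}$ vanishes, so $S^{(j)}=\identity$ as well. Hence $S=\identity$, contradicting $S\neq\identity$, and $\ket{\Phi}$ is weakly isolated.

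I expect the main obstacle to be exactly this last step: converting ``pinned on $n-1$ sites'' into triviality of the whole symmetry. The pinned shape leaves essentially a single free parameter in the remaining block (its $(0,1)$-entry), and the symmetry equation ties that one parameter to all the $c_{j'}$ simultaneously; it is the distinctness of the $a_i$ that makes the resulting system contradictory. One also has to track carefully which site plays the role of the ``free'' site $j$, since the $n-1$ quasi-commuting sites may form an arbitrary $(n-1)$-subset --- this is why all $n$ parameters are chosen distinct. Finally, one can remark that with a more delicate single local operator $g$ (so that $\ket{\Phi}=g^{\otimes n}\ket{E_k}$ is symmetric) the same scheme yields symmetric weakly isolated states when $k\geq3$; the naive symmetric choice $g=g_a$ does not suffice, since it retains a stabilizer element supported on the $(0,1)$-entry of one block.
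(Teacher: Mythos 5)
Your construction and overall strategy coincide with the paper's: the same local operators $g_{a_i}=\identity+a_i\ket{k-1}\bra{k}$ with pairwise distinct nonzero $a_i$, reduction via Lemma \ref{lemma:isolation}, pinning of the quasi-commuting factors through Theorem \ref{lemma:eksyms} and Lemma \ref{lemma:commutationlemma}, and projections of the symmetry equation onto computational basis states. Your endgame is organized slightly differently — you first force $x=1$ by comparing the corner entries $c_{j'}=a_{j'}x^{k-1}(1-x)$ across two pinned sites (which is where the distinctness of the $a_i$ enters), and only then kill the strictly upper-triangular part of the free factor — whereas the paper first proves the free factor is diagonal (a step that needs $k\geq3$ and defers $k=2$ to ``additional care'') and extracts $x=1$ afterwards; your ordering handles $k=2$ and $k\geq3$ uniformly, which is a small but genuine simplification. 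One caveat: your closing remark that the symmetric choice $g_i=g_a$ for all $i$ fails because it ``retains a stabilizer element'' is incorrect — for $k\geq3$ the paper's diagonality argument goes through verbatim with identical $a_i$ (the projections with all $m_i\leq k-2$ on the pinned sites never see the corner entries), which then forces $c=0$ and $x=1$; this is exactly how the paper obtains its Corollary on symmetric weakly isolated states.
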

			\begin{proof}
				We will proof the theorem by constructing a family of states $g_1 \otimes \ldots \otimes g_n \ket{E_k}$ with the following property. For all symmetries  $S^{(1)} \otimes S^{(2)} \otimes \ldots \otimes S^{(n)}$ of the seed state $\ket{E_k}$, it holds that either $(S^{(i)})^\dagger g_i^\dagger g_i S^{(i)} \propto g_i^\dagger g_i$ is true for no more than $n-2$ parties, or $S\propto\identity^{\otimes n}$. Due to Lemma \ref{lemma:isolation}, states with this property are neither reachable via $LOCC_{\mathbb{N}}$, nor convertible via $LOCC_1$.
				
				Consider to this end $g_i$ as in Lemma \ref{lemma:commutationlemma} with $a$ being replaced by $a_i$ and pairwise different $a_i \neq 0$.  Assuming that $S_i^\dagger g_i^\dagger g_i S_i \propto g_i^\dagger g_i$ holds for more than $n-2$ parties (let us assume wlog that it holds for the first $n-1$ parties), let us now show that $S \propto \identity$. Recall first that it follows from Theorem \ref{lemma:eksyms} that $S^{(l)}$ for $l\in \{1,...,n\}$ is upper triangular and  its  diagonal  elements  are up to some proportionality factor powers of $x$. Lemma \ref{lemma:commutationlemma} further implies that 
				\begin{align}
					\label{eq:symform}
					S^{(i)} \propto \begin{pmatrix}
						1&   &   & &     \\
						& x  &   &  &   \\
						&  & \ddots &   &     \\
						&  &   &  x^{k-1}  & a_i  \left( \frac{1}{{x^*}^{k-1}} - x^k  \right) \\
						&  &   &   & x^k   \\
					\end{pmatrix}
				\end{align} for $i \in \{1, \ldots, n-1\}$ and, moreover, $|x|=1$. We will now continue to restrict the form of $S^{(i)}$ by making  use  of  properties  of symmetries of $\ket{E_k}$. Whereas Theorem \ref{lemma:eksyms} provides an implicit characterization of the symmetry group -more precisely, of its generators- we are interested here in arbitrary symmetries and instead of considering products of the generators, we will rather focus on specific properties of the symmetries.  
				
				Considering the defining equation for symmetries (we introduce here a proportionality factor to avoid the need to take any normalization condition into account) we have that,
				\begin{align}
					S^{(1)} \otimes S^{(2)} \otimes \ldots \otimes S^{(n)} \ket{E_k}= \alpha \ket{E_k}
				\end{align}
				which is equivalent to
				\begin{align}
					\label{eq:symconditions}
					\sum_{i_1 + \ldots + i_n = k}  [S^{(1)}]_{m_1,i_1} \ldots  [S^{(n)}]_{m_n,i_n} = \alpha \ \delta_{m_1 + \ldots + m_n, k},
				\end{align}
				for all $m_1, \ldots, m_n$ and for some $\alpha \in \mathbb{C}$.
				
				By choosing $m = (m_1, \ldots, m_n) = (0, \ldots, 0, k-1)$, only one term in the sum of Eq. (\ref{eq:symconditions}) survives, namely, $[S^{(n)}]_{k-1,k} = 0$, due to the form of $S^{(1)}, \ldots, S^{(n-1)}$.
				
				Although the theorem holds for $k \geq 2$, let us for now continue the proof for the case $k \geq 3$ and discuss the remaining case at the end. Above, we have shown that in row $k-1$ of the matrix $S^{(n)}$, only the diagonal entry is non-vanishing. In case $k \geq 3$, we can extend this result to all rows of $S^{(n)}$. In other words, for all rows $s \in \{0, \ldots, k-1\}$ it holds that $[S^{(n)}]_{s,s+1} = \ldots = [S^{(n)}]_{s,k} =  0$. Let us now show that the statement holds for row $s$. To this end, consider for all $t \in \{s+1, \ldots, k\}$ some instances of Eq. (\ref{eq:symconditions}), where $m =  (m_1, \ldots, m_{n-1}, s)$ for some choices of $m_1, \ldots, m_{n-1}$ such that $m_1 + \ldots + m_{n-1} = k -t$ and $m_i \leq k-2$ for all $i \in \{1, \ldots, n-1\}$. In other words, we want to consider Eq. (\ref{eq:symconditions}) for cases where the first $n-1$ of the $m_i$ sum up to specific values, $k-t$, but $S^{(1)}, \ldots, S^{(n-1)}$ are only accessed in rows in which all but the diagonal entries are vanishing. This choice of $m$ leads to the fact that in the sum in Eq. (\ref{eq:symconditions}), only terms where $i_1 = m_1, i_2 = m_2, \ldots, i_{n-1} = m_{n-1}$ contribute. Hence we obtain
				\begin{align}
					x^{k-t} [S^{(n)}]_{s,t} = \alpha \delta_{k-t+s,k} = 0.
				\end{align}
				for all $t \in \{s+1, \ldots, k\}$, which proves that $S^{(n)}$ is diagonal. It immediately follows that all $S^{(i)}$ are diagonal with the same technique. Recall that in order to prove these statements, we were using the assumption $k \geq 3$. This assumption was needed, as in this case the desired choices of $m_i$ are possible and hence the considered instances of Eq.\;(\ref{eq:symconditions}) exist, which is not the case if $k=2$. Note that although the case $k=2$ has to be treated with additional care,it is still possible to show that all $S^{(i)}$ must be diagonal using Eq.\;(\ref{eq:symconditions}) and the additional assumption that $a_i$ are pairwise different, which is not needed in case $k\geq 3$.
				
				In any case, as $S_i$ must have the form in Eq. (\ref{eq:symform}), but at the same time be diagonal, it follows that $x=1$ (as $a_i \neq 0$) and thus $S \propto \identity$, which completes the proof of the theorem.
			\end{proof}

			As already noted in the proof of Theorem \ref{theo:isolation}, choosing the $a_i$ pairwise different in the construction of isolated states in the SLOCC class of $\ket{E_k}$ is not necessary in case $k \geq 3$. Thus, the following corollary follows by choosing $g_i = g$ for all $i$.
			\begin{corollary}
				In SLOCC classes represented by $\ket{E_k}$, where $k \geq 3$, there exist symmetric weakly isolated states. 
			\end{corollary}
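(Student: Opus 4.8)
The plan is to re-run the construction in the proof of Theorem~\ref{theo:isolation} but with the \emph{same} local operator on every site, so that the resulting state is automatically permutation symmetric. Concretely, I would take $g$ exactly as in Lemma~\ref{lemma:commutationlemma}, i.e.\ the $(k+1)\times(k+1)$ matrix of the form \eqref{eq_G} with a single fixed parameter $a\neq 0$, and set $\ket{\Phi}=g^{\otimes n}\ket{E_k}$. Since $g^{\otimes n}$ acts identically on all parties and $\ket{E_k}$ is symmetric, $\ket{\Phi}$ is a symmetric state lying in the SLOCC class represented by $\ket{E_k}$; here $G_i=g^\dagger g$ for \emph{every} $i$.

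It then remains to show that $\ket{\Phi}$ is weakly isolated, which by Lemma~\ref{lemma:isolation} amounts to proving that there is no $S\in S_{E_k}\setminus\{\identity\}$ with $(S^{(i)})^\dagger G_i S^{(i)}\propto G_i$ for $n-1$ sites. I would follow the steps of the proof of Theorem~\ref{theo:isolation} essentially verbatim: assume such an $S$ exists and, up to relabeling, that the quasi-commutation relation holds on sites $1,\dots,n-1$. By Theorem~\ref{lemma:eksyms} each $S^{(i)}$ is upper triangular with diagonal $(1,x,\dots,x^k)$ up to a proportionality factor, and Lemma~\ref{lemma:commutationlemma} then forces $|x|=1$ and pins $S^{(1)},\dots,S^{(n-1)}$ down to the explicit form of Eq.~\eqref{eq:symform} (with $a_i$ replaced by the single $a$). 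Substituting this into the symmetry-defining equations Eq.~\eqref{eq:symconditions} for the multi-indices $m=(0,\dots,0,k-1)$ and then $m=(m_1,\dots,m_{n-1},s)$ with $m_1+\dots+m_{n-1}=k-t$ and $m_i\le k-2$, exactly as in that proof, yields (for $k\geq 3$) that $S^{(n)}$ is diagonal, hence all $S^{(i)}$ are diagonal; combined with Eq.~\eqref{eq:symform} this gives $x=1$ and therefore $S\propto\identity$. By Lemma~\ref{lemma:isolation}, $\ket{\Phi}$ is then weakly isolated, and since it is symmetric the corollary follows.

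The only point that genuinely needs to be checked — and it is the reason the statement is restricted to $k\geq 3$ — is that the chain of index choices used above never invokes the pairwise distinctness of the $a_i$. Inspecting the proof of Theorem~\ref{theo:isolation}, distinctness of the $a_i$ is used only in the $k=2$ fallback; the $k\geq 3$ argument relies solely on the existence of the required multi-indices $m$ (which needs $k\geq 3$) and goes through unchanged when all $a_i$ equal a common $a\neq 0$. So I expect the main (and essentially only) obstacle to be a careful bookkeeping confirmation that the $k\geq 3$ portion of the earlier proof is indeed insensitive to setting $g_i\equiv g$; once that is noted, the corollary is immediate.
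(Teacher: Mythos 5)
Your proposal is correct and coincides with the paper's own (very brief) argument: the paper likewise obtains the corollary by setting $g_i=g$ for all $i$ in the construction of Theorem~\ref{theo:isolation} and observing that the pairwise distinctness of the $a_i$ is only invoked in the $k=2$ case, so the $k\geq 3$ argument goes through with a common $a\neq 0$ and yields a symmetric weakly isolated state.
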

			
			This adds to the example of the many-qubit symmetric states (non-exceptionally symmetric states) that are isolated (in that case not only weakly isolated). 
			
			As a first instance of states of the form $\bigoplus_i \ket{E_{k_i}}$ we consider now the  higher-dimensional GHZ classes symmetric states corresponding to $k_i=0$ for all $i$. We show that also there, weakly isolated states can be found, which is in contrast to the qubit case, where no isolation exists. 
			
			\begin{observation}\label{obs_GHZd}
				There is weak isolation in higher-dimensional GHZ classes. Moreover, there exist symmetric weakly isolated states.
			\end{observation}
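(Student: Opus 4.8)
The plan is to combine the explicit stabilizer of the generalized GHZ state, obtained from Theorem \ref{lemma:eksumsyms}, with the isolation criterion of Lemma \ref{lemma:isolation}. Recall that the $d$-level GHZ state $\ket{GHZ_d}=\sum_{b=0}^{d-1}\ket{b\cdots b}$ is precisely the symmetric seed $\bigoplus_{b=1}^{d}\ket{E_0}$, i.e.\ $K=d$ Jordan blocks each of size one. By Theorem \ref{lemma:eksumsyms}, its stabilizer is generated by the diagonal symmetries $\bigotimes_i D_i$ (with $\prod_i(D_i)_{bb}$ independent of $b$) together with the global permutations $X_\sigma^{\otimes n}$, $\sigma\in S_d$; pushing all permutations to one side, a general element is $S=\bigotimes_{i=1}^n X_\sigma D_i$ with $D_i$ invertible diagonal and $\prod_i(D_i)_{bb}=1$ for all $b$. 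Since a symmetric state in this class is of the form $g^{\otimes n}\ket{GHZ_d}$, it suffices by Lemma \ref{lemma:isolation} to exhibit a positive definite $G=g^\dagger g$ such that no $S\in S_{GHZ_d}\setminus\{\identity\}$ satisfies $(S^{(i)})^\dagger G S^{(i)}\propto G$ on $n-1$ sites. I would claim that a generic such $G$ works whenever $d\geq3$, which proves the ``moreover'' part and hence also the first statement (a non-symmetric example follows likewise by choosing the $G_i$ independently generic).

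The first step is to dispose of symmetries with trivial permutation part. For $S^{(i)}=D_i$ diagonal, choosing $G$ with all entries nonzero one checks from $D_i^\dagger G D_i\propto G$ --- comparing first the diagonal and then the off-diagonal entries, or invoking Observation \ref{obs:quasicommutation3} for a dense $G$ --- that necessarily $D_i\propto\identity$. Hence if the quasi-commutation held on $n-1$ sites the constraint $\prod_i(D_i)_{bb}=1$ would force the remaining $D_i\propto\identity$ as well, so $S=\identity$; such $S$ therefore cannot obstruct isolation.

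The genuine difficulty, and the step I expect to be the main obstacle, is ruling out symmetries with a nontrivial permutation part $\sigma\neq\mathrm{id}$. Writing $S^{(i)}=X_\sigma D_i$, the quasi-commutation becomes $D_i^\dagger(X_\sigma^{-1}GX_\sigma)D_i\propto G$; comparing moduli of matrix entries, the diagonal equations fix $|(D_i)_{jj}|^2$ in terms of $G$ and, substituted into the off-diagonal equations, they yield the necessary condition $c_{\sigma(j)\sigma(k)}=c_{jk}$ for all $j\neq k$, where $c_{jk}:=|G_{jk}|^2/(G_{jj}G_{kk})$. For generic $G$ the $\binom{d}{2}$ quantities $c_{jk}$ ($j<k$) are pairwise distinct, while for $d\geq3$ every $\sigma\neq\mathrm{id}$ moves at least one two-element subset $\{j,k\}$; hence no such $S$ can meet the quasi-commutation on even a single site. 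This is exactly where the hypothesis $d\geq3$ enters: for $d=2$ there is a single pair $\{0,1\}$, which the transposition fixes, consistent with the qubit GHZ class having no isolated states. Combining the two cases, Lemma \ref{lemma:isolation} shows that $g^{\otimes n}\ket{GHZ_d}$ is weakly isolated for generic $g$. What remains to be carried out are the short entrywise manipulations in the two cases and the verification that a generic $g$ indeed yields a $G$ with pairwise distinct $c_{jk}$.
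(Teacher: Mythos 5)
Your proof is correct and follows essentially the same route as the paper's: both invoke the stabilizer characterization of Theorem \ref{lemma:eksumsyms} together with Lemma \ref{lemma:isolation}, rule out the permutation part of a putative symmetry by making the moduli of the off-diagonal entries of $G$ permutation-asymmetric, and then rule out the diagonal part using the nonvanishing off-diagonal entries. The only difference is cosmetic: the paper takes the explicit tridiagonal choice $G=\identity+\sum_i(c_i\ket{i-1}\bra{i}+h.c.)$ with pairwise distinct $|c_i|$, whereas you take a generic dense $G$ and use the permutation invariant $c_{jk}=|G_{jk}|^2/(G_{jj}G_{kk})$.
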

			\begin{proof}
				Let us consider an $n$-partite, $d$-dimensional GHZ state, $\ket{GHZ_n^d} = \sum_{i=0}^{d-1} \ket{\underbrace{i \ldots i}_{n}}$, and  $G =  \identity + \sum_{i=1}^{d-1} (c_{i} \ket{i-1}\bra{i}+ h.c.)$ with  $c_{i} \neq 0$ small enough such that $G>0$ and with pairwise different absolute values, i.e. $|c_{i}|\neq |c_{k}|$ for $i\neq k$. We will see that states $g^{\otimes n}\ket{GHZ_n^d}$ with $g=\sqrt{G}$ are weakly isolated. To this end, we will, as before, show that  $(S^{(i)})^\dagger G S^{(i)} \propto G$ for $n-1$ parties implies $S\propto \identity$. Using $S=  \left( D(\vec{\gamma}_1) \otimes \ldots \otimes D(\vec{\gamma}_n) \right) \left( X_\sigma^{\otimes n} \right)$ (see Theorem \ref{lemma:eksumsyms} and \cite{multicopyLOCC}), we have that
				\begin{align}
					D(\vec{\gamma})^\dagger G  D(\vec{\gamma}) \propto X_\sigma G X_\sigma^\dagger.
				\end{align}
				Considering the diagonal entries of this equation it is straightforward to see that up to a (irrelevant) global proportionality factor the entries in $\vec{\gamma}$ are solely phases. Hence, we either have that $c_{j} e^{i(\phi_{j-1}-\phi_j)}=0$ or $c_{j} e^{i(\phi_{j-1}-\phi_j)}=c_{k}$. Which of the two cases one observes and the relation among $j$ and $k$ is determined by the permutation $X_\sigma$. As the absolute values of $c_{j}$ and $c_{k}$ are non-zero and differ for $j\neq k$, we have that the permutation has to be trivial. From this and $c_{j} \neq 0$, we obtain that $e^{i(\phi_{j-1}-\phi_j)}=1$ and hence $S^{(i)}\propto \identity$ for all parties $i$, which proves that there is weak isolation in the GHZ classes beyond qubits. As the examples that we discussed here are symmetric this proves the observation.

			\end{proof}

			In contrast to the positive result on LOCC transformations from symmetric states to symmetric states, the isolation result in Theorem \ref{theo:isolation} does not immediately generalize to (non-derogatory) direct sums of $\ket{E_k}$. 
			However, based on the studies of the symmetries of $\bigoplus_i \ket{E_{k_i}}$ in Theorem \ref{lemma:eksumsyms}, we will see in the following theorem, that it is nevertheless possible to construct weakly isolated states.

			\begin{theorem}
				\label{theo:sumsisolation}
				In the SLOCC classes represented by $\bigoplus_{b=1}^K \ket{E_{k_b}}$ with $K\geq 2$ and $k_b\neq 0$ for at least one $b\in\{1,\ldots, K\}$, there are weakly isolated states present. If further $k_b\neq 2$ for all $b\in\{1,\ldots, K\}$, then there exist symmetric weakly isolated states.
			\end{theorem}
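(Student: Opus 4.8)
The plan is to adapt the strategy of Theorem \ref{theo:isolation} to the direct-sum setting by decorating each block $k_b$ with a local operator that couples the top two levels of that block, and simultaneously breaking the residual freedom coming from block permutations and the diagonal symmetries of Theorem \ref{lemma:eksumsyms}. Concretely, for each block $b$ let $g^{[b]}$ be the $(k_b+1)\times(k_b+1)$ matrix of the form in Eq.\;(\ref{eq_G}) with entry $a_b$ in position $(k_b-1,k_b)$, and consider $g_i = \bigoplus_b c_{i,b}\, g^{[b]}$ where the $c_{i,b}$ are scalars chosen so that the absolute values $\{|c_{i,b}|\}_b$ are pairwise distinct for each fixed $i$, and the $a_b$ are nonzero and (when needed) pairwise distinct. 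The state is $\ket{\Phi}=g_1\otimes\cdots\otimes g_n \bigoplus_b\ket{E_{k_b}}$, and by Lemma \ref{lemma:isolation} it suffices to show that any symmetry $S=S^{(1)}\otimes\cdots\otimes S^{(n)}$ of $\bigoplus_b\ket{E_{k_b}}$ satisfying $(S^{(i)})^\dagger G_i S^{(i)}\propto G_i$ for $n-1$ parties $i$ must be proportional to $\identity^{\otimes n}$.

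The argument then proceeds in three stages. First, I would use Theorem \ref{lemma:eksumsyms} to write a generic symmetry as a product of a block-diagonal symmetry $\bigoplus_b S_{k_b}$, a diagonal factor $D=\otimes_i D(\vec\gamma_i)$, and a simultaneous block permutation $X_\sigma^{\otimes n}$. Evaluating $(S^{(i)})^\dagger G_i S^{(i)}\propto G_i$ blockwise and comparing the coupling entries $(k_b-1,k_b)$ within each block: the permutation $X_\sigma$ tries to send block $b$ to some block $\sigma(b)$ of equal size, but the off-diagonal magnitudes of $G_i$ in block $b$ are governed by $|c_{i,b}|^2|a_b|$ while those in block $\sigma(b)$ are governed by $|c_{i,\sigma(b)}|^2|a_{\sigma(b)}|$; since the $|c_{i,b}|$ are pairwise distinct in $b$, the quasi-commutation relation on $n-1\geq 2$ parties forces $\sigma=\mathrm{id}$ (a single party's constraint could conceivably be satisfied by a compensating rescaling, which is exactly why we need the relation to hold on at least two parties, or alternatively we pin it down using the distinctness of the $a_b$). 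Second, with $X_\sigma$ trivial, $G_i$ is genuinely block-diagonal and the equation $(S^{(i)})^\dagger G_i S^{(i)}\propto G_i$ decouples into one quasi-commutation relation per block; within block $b$ we are exactly in the situation of Lemma \ref{lemma:commutationlemma}, so each $S^{(i)}|_{[b]}$ must be of the form \eqref{eq:Aform} with $|x_b|=1$, and in particular every $S^{(i)}$ is upper triangular with the prescribed power-of-$x_b$ diagonal on each block.

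Third, I would feed this restricted form back into the symmetry-defining equation $S^{(1)}\otimes\cdots\otimes S^{(n)}\bigoplus_b\ket{E_{k_b}}=\alpha\bigoplus_b\ket{E_{k_b}}$, which blockwise is just $S^{(1)}|_{[b]}\otimes\cdots\otimes S^{(n)}|_{[b]}\ket{E_{k_b}}=\alpha_b\ket{E_{k_b}}$ (the block-diagonal structure of the $S^{(i)}$ keeps the blocks from mixing in this equation), and run the coefficient-comparison argument from the proof of Theorem \ref{theo:isolation} — i.e.\ Eq.\;(\ref{eq:symconditions}) — block by block. For blocks with $k_b\geq 3$ this shows $S^{(i)}|_{[b]}$ is diagonal, then combined with \eqref{eq:Aform} and $a_b\neq0$ forces $x_b=1$; for blocks with $k_b=1$ the matrix is already of W-type $\begin{pmatrix}1&*\\0&x_b\end{pmatrix}$, and $(S^{(i)})^\dagger G_i S^{(i)}\propto G_i$ on that block (Lemma \ref{lemma:commutationlemma} with $k=1$, where the matrix is forced diagonal) again yields the off-diagonal entry zero and $x_b=1$; for blocks with $k_b=0$ the block is $1\times1$ and contributes only a scalar $x_b$, which the diagonal-factor/permutation analysis of Stage 1 already pins to a phase and the remaining analysis to $1$. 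The case $k_b=2$ is the exceptional one — exactly as in Theorem \ref{theo:isolation}, the instances of Eq.\;(\ref{eq:symconditions}) needed to kill all super-diagonal entries are unavailable — so for that we fall back on the pairwise-distinctness of the $a_i$ within a block (requiring us to also vary $a_i$ across parties within each $k_b=2$ block), which is why the second sentence of the theorem excludes $k_b=2$: without it one only gets weak isolation of non-symmetric states.

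The main obstacle I expect is Stage 1, the elimination of the block permutation: one must be careful that $X_\sigma$ can also be accompanied by a nontrivial rescaling (the $\vec\gamma_i$) and by the within-block freedom of $S_{k_b}$, so the matching of $G_i$-blocks is not literally "the $b$-block equals the $\sigma(b)$-block" but only equality up to the similarity and scaling induced by the full symmetry; disentangling which invariants of the block (its diagonal power-of-$x$ pattern, the phase $|x_b|=1$, and the magnitude of the $(k_b-1,k_b)$ entry relative to $|c_{i,b}|^2$) are genuinely permutation-invariant, and verifying that our choice of $\{c_{i,b}\}$ and $\{a_b\}$ breaks all of them simultaneously across $n-1$ parties, is the delicate bookkeeping that carries the proof. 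A secondary subtlety is confirming the constructed $\ket{\Phi}$ is itself symmetric when all $g_i$ are equal — permissible precisely because $g=\bigoplus_b c_b g^{[b]}$ acts identically on every party — which is what upgrades the statement to symmetric weakly isolated states in the $k_b\neq2$ case.
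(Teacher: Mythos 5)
Your construction has a fatal gap at its foundation: you take $g_i=\bigoplus_b c_{i,b}\,g^{[b]}$, so $G_i=\bigoplus_b |c_{i,b}|^2\,G^{[b]}$ is \emph{block-diagonal}. But then the GHZ-like diagonal symmetries of item 2 in Theorem \ref{lemma:eksumsyms} survive on every party: taking $D(\vec\gamma_i)=\bigoplus_b(\vec\gamma_i)_b\identity_{k_b}$ with unimodular, block-dependent phases (e.g.\ $(\vec\gamma_1)_1=e^{i\theta}$, $(\vec\gamma_2)_1=e^{-i\theta}$, all other entries $1$, which respects $\prod_i(\vec\gamma_i)_b=\gamma$ for all $b$ since $K\geq2$) gives a symmetry $S\not\propto\identity^{\otimes n}$ with $(S^{(i)})^\dagger G_i S^{(i)}=G_i$ exactly, for \emph{all} $n$ parties. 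By Lemma \ref{lemma:isolation} such a state is therefore not weakly isolated, so your family cannot witness the theorem. The scalars $c_{i,b}$ only constrain the permutation $X_\sigma$ and do nothing against these phase symmetries, and your Stage 1 claim that the "diagonal-factor analysis pins the phases to $1$" has no mechanism behind it. The paper's proof avoids this precisely by making $G_i$ \emph{not} block-diagonal: it adds inter-block couplings $\sum_{b\geq2}(c_b\ket{0^{(b-1)}}\bra{0^{(b)}}+h.c.)$ to $\bigoplus_b G_i^{[b]}$, and the quasi-commutation relation evaluated on these off-diagonal entries forces $(\vec\gamma_i)_{b-1}=(\vec\gamma_i)_b$ for all $b$, killing the phase freedom; the paper even remarks after the proof that populating elements outside the block-diagonal structure appears to be necessary for exactly this reason.

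Two secondary points. For $k_b=1$ blocks your appeal to Lemma \ref{lemma:commutationlemma} is wrong: for $k=1$ the lemma yields $A\propto\bigl(\begin{smallmatrix}1& a(1-x)\\0&x\end{smallmatrix}\bigr)$ with $|x|=1$, a one-parameter family rather than a diagonal matrix, so the W-type blocks are not pinned down by quasi-commutation alone (the paper instead uses the inter-block coupling to force these $S^{[b]}$ diagonal and then a $G_i^{[b]}$ with a nonzero off-diagonal entry $p$ and block-distinct $d_b$ to force $S^{[b]}\propto\identity$ and $\sigma(b)=b$). Similarly, for $k_b=0$ blocks nothing in a block-diagonal construction constrains the $1\times1$ scalars, whereas the paper recycles the argument of Observation \ref{obs_GHZd} via coupling coefficients $c_b$ of pairwise distinct modulus. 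Your treatment of the $k_b\geq3$ blocks and of the $k_b=2$ exception (distinct $a_{i,b}$ across parties, hence the loss of symmetry of the state) does match the paper's, but the overall argument does not go through without the inter-block off-diagonal terms.
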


			Whereas the idea of the proof is the same as in the proof of Theorem \ref{theo:isolation}, it is much more involved due to the additional symmetries occurring in this case. For that reason we postpone to proof to Appendix \ref{App:proof_iso_NDES}.

			\section{Derogatory exceptionally symmetric classes\label{sec:derog}}
			Most things said about non-derogatory ES SLOCC classes in the previous section are not true any more for derogatory ES classes. In this subsection, we will first argue that a full characterization of the symmetries of all derogatory ES classes seems infeasible. Then, we will show that some of the derogatory ES classes resemble the multi-copy scenario and inherit some of the properties of non-derogatory ES states. Despite the fact that those states necessarily have a non--trivial stabilizer, we show that there exist derogatory ES classes in which all states are isolated with respect to $LOCC_{\mathbb{N}}$. To illustrate the properties of derogatory ES classes we finally study the three- and four-qutrit cases in detail.

			As mentioned in Section \ref{sec:preliminaries}, the first difficulty that arises when studying derogatory ES classes, is the fact that representatives of the SLOCC classes are not available. As explained there, superpositions of the states $\{\ket{E_k^{n_1, \ldots, n_K}}\}_{k, (n_1 \ldots, n_K)}$ must be taken into account. Moreover, not all superpositions correspond to derogatory classes as some turn out to be equivalent to non-derogatory ES states, instead. For these reasons, a full characterization of the symmetries of derogatory ES classes is beyond the scope of the present work.
			
			Let us remark here, that some derogatory ES classes have representatives that decompose into two copies of (possibly different) entangled states in lower dimensions. Examples are the classes represented by $\{\ket{E_k^{n_1, n_2}}\}_{j, (n_1, n_2)}$ for $B = J_2 \oplus J_2$ presented in Section \ref{sec:preliminaries} (see Eqs.\;(\ref{eq:derogEG1})). This happens in general, when considering an individual $\ket{E_k^{n_1, \ldots, n_K}}$. 
			To see this, note that without loss of generality we can consider all blocks to have the same size ($k_b+1=k$ $\forall b$ \cite{footnote10}), i.e.\ $B=\oplus_{i=1}^K J_k$. This is because either the block is big enough to support the excitation $k-1$ or the seed state is not supported there. In the same way, if $k_b+1>k$ the seed state is not supported on the corresponding levels (thus, it suffices to consider $k_b+1=k$). The aforementioned property follows then as simple consequence of the isometry $\oplus_{i=1}^K\mathbb{C}^k\simeq\mathbb{C}^K\otimes\mathbb{C}^k$ with $|i^{(b)}\rangle\to|b-1\rangle\otimes|i\rangle$. Hence, a seed state can be written as $\ket{E_k^{\{n_b\}}} =|\Psi_{\{n_b\}}\rangle\otimes|E_k\rangle$, where $|\Psi_{\{n_b\}}\rangle=\sum_{\vec{b}:\#b = n_b}\ket{b_1-1\cdots b_n-1}\in\operatorname{Sym}^n(\mathbb{C}^K)$ and where we use the same notation as in Eq.(\ref{eq:ekderog}). Note that these derogatory classes hence inherit some of the properties of non-derogatory ES classes discussed earlier in this section.
			
			However, it seems that derogatory ES classes allow for fewer transformations compared to their non-derogatory counterparts. The reason for that is that the symmetries of the form $B\otimes B^{-1}$ of non-derogatory ES classes allow for more freedom, as their eigenvalues can be freely chosen. In fact, we identify here a 5-qutrit derogatory ES class in which all states are $LOCC_{\mathbb{N}}$-isolated, i.e., they can neither be reached nor converted to another pure state with a finite-round LOCC protocol. Note that this is the first example of such a $LOCC_{\mathbb{N}}$-isolated class with non-trivial stabilizer. 
			
			\begin{observation}
				\label{obs:derogatoryisolation}
				There exist derogatory classes in which all states are $LOCC_{\mathbb{N}}$-isolated.
			\end{observation}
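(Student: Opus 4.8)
The plan is to exhibit one explicit $5$-qutrit seed state $\ket{\Psi_s}$ that is derogatory ES and fully entangled, to compute its local stabilizer, and to show that this stabilizer is so rigid that the $LOCC_{\mathbb{N}}$-reachability criterion of Theorem~\ref{thm:reachability} can never be met anywhere in the SLOCC class --- which already forces the whole class to be $LOCC_{\mathbb{N}}$-isolated. The reduction to reachability goes as follows: every state of the SLOCC class is fully entangled, and so is every intermediate state of any deterministic finite-round protocol starting in the class (the rank of a local reduced state cannot decrease under a local operation and then be restored), so all measurement operators along all branches are invertible; hence, if some $\ket{\Psi}$ in the class were $LOCC_{\mathbb{N}}$-convertible into a non-LU-equivalent $\ket{\Phi}$, then $\ket{\Phi}$ would lie in the same SLOCC class and be $LOCC_{\mathbb{N}}$-reachable. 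Therefore it suffices to prove that no state of the class is $LOCC_{\mathbb{N}}$-reachable. By Theorem~\ref{thm:reachability} a state $\ket{\Phi}\propto h_1\otimes\cdots\otimes h_5\ket{\Psi_s}$ is reachable iff some $S\in S_{\Psi_s}$ satisfies, for a suitable labelling, $(S^{(i)})^\dagger H_i S^{(i)}\propto H_i$ on four sites and $(S^{(1)})^\dagger H_1 S^{(1)}\not\propto H_1$ on the fifth, with $H_i=h_i^\dagger h_i$. Since the $H_i$ range over all positive definite matrices, Observation~\ref{obs:quasicommutation2} shows that the relation on site $i$ admits a solution $H_i>0$ exactly when $S^{(i)}$ is proportional to a diagonalizable matrix whose eigenvalues all have the same modulus (call such an $S^{(i)}$ benign; by Corollary~\ref{lemma:quasicommutation} a non-diagonalizable $S^{(i)}$ is never benign), while the distinguished site only needs $S^{(1)}\not\propto\identity$. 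Thus some state of the class is reachable iff $S_{\Psi_s}$ contains a nontrivial element with at least $n-1=4$ benign local components, and it is enough to prove that every nontrivial $S\in S_{\Psi_s}$ has at least two non-benign local components.

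For $\ket{\Psi_s}$ the choice of Jordan type is essentially forced: on $\mathbb{C}^3$ the only derogatory $B$ for which every nontrivial $f(B)$ is non-diagonalizable is $B=J_2(\lambda)\oplus J_1(\lambda)$ (a single eigenvalue of geometric multiplicity two, not a single Jordan block); the other derogatory type, $\operatorname{diag}(\mu,\mu,\nu)$, admits diagonalizable --- even unitary --- nontrivial functions $f(B)$, so $B'\otimes B'^{-1}\otimes\identity^{\otimes 3}$ would then have four benign components and give a non-isolated class. I would take $\ket{\Psi_s}$ to be a sufficiently generic superposition, as in Eqs.~(\ref{eq:ekderog})--(\ref{eq:ekderogsuperpos}), of the states $\ket{E_0^{n_1,5-n_1}}$ and $\ket{E_1^{n_1,5-n_1}}$ compatible with this $B$, chosen so that all single-party reduced states have full rank. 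Using that the stabilizer of an ES class is generated by elements $A^{\otimes 5}$ and $B\otimes B^{-1}$ (Section~\ref{sec:non_derog}, \cite{MiRo13}), I would (i) solve $A^{\otimes5}\ket{\Psi_s}=\lambda\ket{\Psi_s}$ by comparing coefficients and verify that for the generic choice only $A\propto\identity$ occurs and that no $B'\otimes B'^{-1}$ symmetries arise beyond those obtained from $B$; and (ii) use that any admissible $B'$ equals $f(B)$ for an analytic $f$ (Section~\ref{sec:preliminaries}), so that a nontrivial $B'$ still carries a $2\times2$ Jordan block and hence $B'$ and $B'^{-1}$ are both non-diagonalizable. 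An arbitrary nontrivial $S\in S_{\Psi_s}$ is then a product of generators $B'\otimes B'^{-1}$ placed on pairs of sites; composing such factors on a shared party yields another function of $B$ (non-diagonalizable unless it collapses to a multiple of $\identity$), and the only cancellations that turn a site into $\identity$ still leave at least two sites carrying non-diagonalizable matrices, or reduce $S$ to the identity. Hence every nontrivial $S$ has at least two non-benign local components; by the previous paragraph no state of the class is $LOCC_{\mathbb{N}}$-reachable, hence none is $LOCC_{\mathbb{N}}$-convertible, and the class is $LOCC_{\mathbb{N}}$-isolated.

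The main obstacle I expect is step (i): derogatory ES classes are precisely those whose stabilizer is not classified in general, so for the concrete $\ket{\Psi_s}$ one really has to compute the group of $A^{\otimes5}$ symmetries and exclude every ``accidental'' extra symmetry --- in particular a nontrivial $A^{\otimes5}$ with $A$ benign, or an additional genuinely diagonalizable $B'\otimes B'^{-1}$ --- since a single benign symmetry with at most one non-benign local component would already produce reachable states and defeat the claim. A secondary, bookkeeping-heavy point is confirming that no product of the $B'\otimes B'^{-1}$ generators on overlapping pairs can push the number of non-diagonalizable local components below two without collapsing $S$ to the identity. It is also worth recording why the same recipe fails for $n\le 4$ --- there the compatible superposition cannot be made generic enough, and/or nontrivial transformations among the symmetric states persist --- consistently with Observations~\ref{Obs:3Qutrit} and~\ref{Obs:4Qutrit}.
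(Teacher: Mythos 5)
Your strategy is exactly the one the paper uses: pick a $5$-qutrit seed state whose $B\otimes B^{-1}$ symmetries come from a derogatory $B$ with a single eigenvalue split as $J_2\oplus J_1$ (so that every nontrivial analytic function of $B$ retains the nilpotent part and is non-diagonalizable), show the full stabilizer is generated by these, invoke Corollary~\ref{lemma:quasicommutation} to rule out condition (i) of Theorem~\ref{thm:reachability}, and conclude non-reachability and hence non-convertibility for the whole class. Your reduction from convertibility to reachability and your observation that the $\operatorname{diag}(\mu,\mu,\nu)$ derogatory type would not work are both correct and match the paper's reasoning.

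The gap is that the statement is an existence claim, and your proposal defers precisely the step that constitutes the proof: you never exhibit a concrete $\ket{\Psi_s}$ nor verify that its stabilizer contains no accidental symmetries. You correctly flag this as ``the main obstacle,'' but without it nothing is proved --- a single nontrivial $A^{\otimes 5}$ symmetry with diagonalizable $A$, or an extra diagonalizable $B'\otimes B'^{-1}$, would immediately produce reachable states. The paper closes this by taking the explicit state $\ket{\psi_{\text{derogatory}}}=\ket{E_0^{5,0}}+\ket{E_0^{3,2}}+\ket{E_0^{2,3}}+\ket{E_1^{0,5}}$ and computing in Appendix~\ref{sec:symmetries}, via projections onto $\ket{22}$ on pairs of sites, that every local symmetry component has the form $S^{(i)}=\identity+a_i\ket{1}\bra{2}$ with $\sum_i a_i=0$; the constraint $\sum_i a_i=0$ then does your ``bookkeeping'' automatically, since any nontrivial symmetry must have at least two nonzero $a_i$ and hence at least two non-diagonalizable components. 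So your argument is architecturally complete but would need that explicit construction and stabilizer computation to become a proof.
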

			\begin{proof}
				In order to prove the observation, we explicitly consider one derogatory SLOCC class and show that all states within the class are isolated. Let us consider the SLOCC class represented by the five-qutrit state
				\begin{align}
					\ket{\psi_{\text{derogatroy}}} = \ket{E_0^{5,0}} + \ket{E_0^{3,2}} + \ket{E_0^{2,3}}  + \ket{E_1^{0,5}}. 
				\end{align}
				Note that this state belongs to a derogatory class described by one block of size one and a second block of size two. It can be easily verified that indeed, $B_{(i)} \otimes B^{-1}_{(j)}$, are symmetries of $\ket{\psi_{\text{derogatroy}}}$, for any parties $i,j \in \{1, \ldots, 5\}$, where $B = \begin{pmatrix} 1 & 0 & 0\\ 0 & 1 & 1 \\ 0 & 0 & 1 \end{pmatrix}$.
				
				In Appendix \ref{sec:symmetries} we show that, in fact,  $B_{(i)} \otimes B^{-1}_{(j)}$ and analytic functions thereof constitute the full symmetry group of $\ket{\psi_{\text{derogatroy}}}$. Hence, all symmetries $S^{(1)} \otimes \ldots \otimes S^{(5)}$ differing from $\identity$ have the property that at least two of the $S^{(i)}$ are  not diagonalizable. Corollary \ref{lemma:quasicommutation} shows that with such symmetries it is not possible to satisfy condition (i) of Theorem \ref{thm:reachability}. Hence, no state in the considered SLOCC class is $LOCC_\mathbb{N}$-reachable and therefore, no state is convertible either. Thus, all states in this class are isolated (not only weakly isolated).
			\end{proof}
			
			\subsection{3 and 4 qutrits}
			In this subsection, we consider the smallest local dimension ($d=3$) in which non-trivial derogatory matrices emerge and we focus on 3- and 4-qutrit derogatory ES SLOCC classes. In 3 dimensions, the Jordan normal form of derogatory matrices can only take 2 distinct forms, namely
			\begin{equation}
			B_1 = \begin{pmatrix} x & 0 & 0\\ 0 & x & 0 \\ 0 & 0 & y \end{pmatrix} \text{ and } B_2 = \begin{pmatrix} x & 0 & 1\\ 0 & x & 0 \\ 0 & 0 & x \end{pmatrix} \label{eq:B_type12}
			\end{equation}
			with $x,y\in\mathbb{C}$, which we call ``type 1" and ``type 2". Notice that we swap the computational basis vectors $|1\rangle$ and $|2\rangle$ when writing $B_2$ as compared to the normal form $\begin{pmatrix} x & 1 \\ 0 & x \end{pmatrix}\oplus x$. This is to express the state $|\Psi_2\rangle$ in Eq.\;(\ref{eq:psi2derog}) as an $n$-qubit symmetric state in the subspace $\text{span}\{\ket{0},\ket{1}\}$ plus an orthogonal $n$-qutrit part. In general, the matrix $B$ of the $B\otimes B^{-1}\otimes\identity^{\otimes2}$ symmetry need not be in Jordan normal form. Recall from Sec.\;\ref{subsec:Prelim_ESandNES} that the Jordan normal form of B is SLOCC-invariant and one can always find a state within the same SLOCC class for which B is in Jordan normal form \cite{MiRo13}. Hence, in order to determine the representatives of derogatory ES classes, it suffices to consider the states stabilized by the symmetry with $B_1$ or $B_2$ and identify those that do not have additional non-derogatory symmetries.
			
			Any potential $n$-qutrit derogatory ES state $|\Psi_i\rangle$ of type $i=1,2$ satisfies $B_i \otimes {B_i}^{-1} \otimes \identity^{\otimes n-2}|\Psi_i\rangle=|\Psi_i\rangle$ where the use of an equality instead of a proportionality is explained in Sec.\;\ref{subsec:Prelim_ESandNES}. Since an ES state stabilized by $B\otimes B^{-1}$ is also stabilized by $f(B)\otimes f(B)^{-1}$ for any analytic function $f$ (see Sec.\;\ref{subsec:Prelim_ESandNES}), non-derogatory states can also be stabilized by the derogatory $B_i \otimes {B_i}^{-1}$ in Eq.\;(\ref{eq:B_type12}). Therefore, the potential derogatory ES states also include non-derogatory ES states.
			
			As mentioned in Sec.\;\ref{subsec:Prelim_DandND}, any derogatory ES state has to be of the form in Eq.\;(\ref{eq:sumek2}). For the ease of presentation in this section, we will use instead of $\ket{E_k^{n_1, \ldots, n_K}}$ in Eq.\;(\ref{eq:ekderog}) 
			\begin{flalign}
				\ket{S^n_k} &= \frac{1}{\sqrt{n!}}\sum_{\pi} P_{\pi}(\ket{0^{n-k}1^k}) = \frac{1}{\sqrt{C^n_k}}\ket{E^{n-k,k}_0}, \\
				\ket{F^n_k} &= \frac{1}{\sqrt{n!}}\sum_{\pi} P_{\pi}(\ket{0^{n-k}2^k}) = \frac{1}{\sqrt{C^n_k}} \ket{E^{n,0}_k}
			\end{flalign}

			as the orthonormal basis, where the summations are over all permutations in the symmetric group $\text{S}_n$ and $C^n_k=\frac{n!}{(n-k)!k!}$ is the binomial coefficient. The complete sets of potential type-1 and type-2 $n$-qutrit derogatory ES states can be obtained from Eqs.\;(\ref{eq:ek})--(\ref{eq:ekderogsuperpos}). Since the matrix $B_1$ has a degeneracy only in the first two 1-dimensional Jordan blocks, $\ket{\Psi_1}$ can be any superposition of states with 0 excitation distributed over the two blocks, $\ket{E^{n-k,k}_0}$ ($k=0,\ldots,n$), plus a zero-excitation state $\ket{E_0}$ in the third dimension. For $B_2$, the degeneracy is between the 2-dimensional Jordan block in the subspace $\text{span}\{\ket{0},\ket{2}\}$ and the middle 1-dimensional block. Hence, $\ket{\Psi_2}$ can be any superposition of states with 0 excitation distributed over the two blocks plus a 1-excitation state, $|E^{n,0}_1\rangle$, that has support only from the 2-dimensional block in $B_2$. In the new notation, these two sets of states are 
			\begin{flalign}
				|\Psi_1\rangle &= \sum_{k=0}^{n} \frac{a_k}{\sqrt{C^n_k}} |E^{n-k,k}_0\rangle \oplus a_{n+1}|E_0\rangle \nonumber\\
				&= a_0|0^n\rangle + \sum_{k=1}^{n-1} a_k|S^n_k\rangle + a_n|1^n\rangle + a_{n+1}|2^n\rangle, \label{eq:psi1derog}\\
				|\Psi_2\rangle &= \sum_{k=0}^{n} \frac{b_k}{\sqrt{C^n_k}}|E^{n-k,k}_0\rangle + \frac{b_{n+1}}{\sqrt{n}}|E^{n,0}_1\rangle \nonumber\\
				&= b_0|0^n\rangle + \sum_{k=1}^{n-1} b_k|S^n_k\rangle + b_n|1^n\rangle +b_{n+1}|F^n_1\rangle, \label{eq:psi2derog}
			\end{flalign}
			where $a_k,b_k\in\mathbb{C}$ satsify $\sum_{k=0}^{n+1}|a_k|^2=1=\sum_{k=0}^{n+1}|b_k|^2$. 
			
			\subsubsection{3 qutrits}
			We will show that there exist no derogatory ES states in 3 qutrits. First, we define an $n$-qubit symmetric state (embedded in the Hilbert space of $n$ qutrits) to be a state with local rank $\leq 2$. We then consider all potential 3-qutrit derogatory ES states and determine all the corresponding SLOCC classes. We show that every symmetric SLOCC representative stabilized by $B_i \otimes {B_i}^{-1} \otimes \identity$ ($i=1,2$) is either a 3-qubit state or coinciding with the 3-qutrit non-derogatory ES SLOCC representatives $\ket{E_0}\oplus\ket{E_0}\oplus\ket{E_0}=\ket{0^3}+\ket{1^3}+\ket{2^3}$, $\frac{1}{\sqrt{3}}\ket{E_1}\oplus\ket{E_0}=\ket{S^3_1}+\ket{2^3}$, and $\ket{E_2}\propto\ket{S_2^3}+\ket{F^3_1}$ (see Sec.\;\ref{subsec:Prelim_ESandNES}).
			
			\begin{observation}
				There are no derogatory ES states in 3 qutrits since any potential 3-qutrit derogatory ES state belongs to one of the 5 SLOCC classes represented by
				\begin{align*}
					&(a) \;\ket{0^3}+\ket{2^3}, (b)\; \ket{0^3}+\ket{1^3}+\ket{2^3}, (c)\; \ket{S^3_1}+\ket{2^3},\\ 
					&(d)\; \ket{F^3_1}, \;\text{and }(e)\; \ket{S_2^3}+\ket{F^3_1},
				\end{align*}
				which are all non-derogatory. \label{Obs:3Qutrit}
			\end{observation}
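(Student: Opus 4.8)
The plan is to run through all potential $3$-qutrit derogatory ES states and to show that each of them is SLOCC-equivalent to one of the five representatives (a)--(e). This suffices because, as recalled just before the statement, $\ket{0^3}+\ket{2^3}$ and $\ket{0^3}+\ket{1^3}+\ket{2^3}$ are of the form $\bigoplus_b\ket{E_0}$, $\ket{S^3_1}+\ket{2^3}=\frac{1}{\sqrt{3}}\ket{E_1}\oplus\ket{E_0}$, $\ket{F^3_1}$ is the W state (hence $\propto\ket{E_1}$), and $\ket{S^3_2}+\ket{F^3_1}\propto\ket{E_2}$, so all five are of the form of Eq.~(\ref{eq:ek}) or (\ref{eq:sumek}) and therefore non-derogatory ES (Sec.~\ref{subsec:Prelim_DandND}); consequently no potential derogatory ES $3$-qutrit state is genuinely derogatory. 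By Eqs.~(\ref{eq:psi1derog})--(\ref{eq:psi2derog}), and since $B_1$ and $B_2$ are the only derogatory Jordan forms in dimension $3$, the potential states are precisely the type-$1$ states $\ket{\Psi_1}$ and the type-$2$ states $\ket{\Psi_2}$. I will also use repeatedly that a symmetric state fails to be genuinely multipartite entangled only if it is fully product, so whenever a reduction below yields such a state it is simply discarded.

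First I would dispose of the type-$1$ family, which is the easy case because $B_1$ induces the direct-sum splitting $\mathbb{C}^3=\mathrm{span}\{\ket{0},\ket{1}\}\oplus\mathrm{span}\{\ket{2}\}$. Writing $\ket{\Psi_1}=\ket{\chi}\oplus a_4\ket{222}$ with $\ket{\chi}=a_0\ket{000}+a_1\ket{S^3_1}+a_2\ket{S^3_2}+a_3\ket{111}$ a symmetric three-qubit state on $\mathrm{span}\{\ket{0},\ket{1}\}$, I act with block-diagonal invertible operators $A=A_{01}\oplus[\lambda]$, with $A_{01}\in GL(2,\mathbb{C})$ and $\lambda\neq0$, which preserve this splitting. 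By the Majorana (diversity-degree) classification of symmetric three-qubit states, $\ket{\chi}$ is SLOCC-equivalent to the GHZ state, to the W state, or is fully product. Going through these options, together with the two cases $a_4=0$ and $a_4\neq0$, rescaling $\lambda$ to fix the remaining relative coefficient, and using the level swap $\ket{1}\leftrightarrow\ket{2}$ to move $\{0,1\}$-supported GHZ/W states onto (a)/(d), one finds that $\ket{\Psi_1}$ is always SLOCC-equivalent to (a), (b), (c), (d), or a fully product state.

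The type-$2$ family is where the real work, and the main obstacle, lies. The Jordan structure of $B_2$ only yields the decomposition $\ket{\Psi_2}=\ket{\chi'}+b_4\ket{F^3_1}$ with $\ket{\chi'}=b_0\ket{000}+b_1\ket{S^3_1}+b_2\ket{S^3_2}+b_3\ket{111}$ supported on $\mathrm{span}\{\ket{0},\ket{1}\}$ and $\ket{F^3_1}\propto\ket{002}+\ket{020}+\ket{200}$ supported on $\mathrm{span}\{\ket{0},\ket{2}\}$: the two summands overlap on the level $\ket{0}$, so there is no clean block structure to exploit and the argument of the previous paragraph does not transfer verbatim. The admissible SLOCC reductions are now those built from the subgroup of $GL(3,\mathbb{C})$ compatible with the $2+1$ Jordan decomposition of $B_2$ (block-upper-triangular in a basis adapted to the Jordan chain and the remaining eigenvector), and one again splits according to whether $\ket{\chi'}$ is GHZ-class, W-class, or product, but now one must keep careful track of how $\ket{\chi'}$ couples to $\ket{F^3_1}$ through the shared level $\ket{0}$. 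The expected outcome is that $\ket{\Psi_2}$ is always SLOCC-equivalent to (a), (c), (d), $\ket{S^3_2}+\ket{F^3_1}\propto\ket{E_2}$ (i.e.\ (e)), or a fully product state; the delicate point is to verify that the overlap on $\ket{0}$ never produces a genuinely new SLOCC class. Once this is settled, combining it with the type-$1$ analysis shows that every potential $3$-qutrit derogatory ES state lies in one of the non-derogatory classes (a)--(e), which establishes the observation.
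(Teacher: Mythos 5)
Your type-1 analysis matches the paper's: split off the $\ket{2}$ level using block-diagonal $A=A_{01}\oplus[\lambda]$, invoke the three-qubit symmetric SLOCC classification for the $\{\ket{0},\ket{1}\}$ part, and land on (a), (b), (c), (d) or a product state. That part is fine.

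The type-2 case, however, is a genuine gap, and you say so yourself: ``the expected outcome is\ldots'', ``the delicate point is to verify\ldots'', ``once this is settled\ldots''. You have correctly identified where the difficulty sits (the two summands share the level $\ket{0}$, so there is no direct-sum structure), but you have not actually carried out the classification, and this is precisely the content of the observation. The paper closes this step constructively: it stratifies the type-2 states $\ket{\Psi_2}=b_0\ket{0^3}+b_1\ket{S^3_1}+b_2\ket{S^3_2}+b_3\ket{1^3}+b_4\ket{F^3_1}$ ($b_4\neq0$) by the vanishing pattern of $b_2,b_3$ and exhibits explicit upper-triangular matrices $A$ such that $A^{\otimes3}$ maps one of three candidates --- $\ket{F^3_1}$ (reaching $b_2=b_3=0$), $\ket{S^3_2}+\ket{F^3_1}$ (reaching $b_2\neq0$, $b_3=0$) and $\ket{1^3}+\ket{F^3_1}$ (reaching $b_3\neq0$) --- onto an arbitrary $\ket{\Psi_2}$, and then observes that $\ket{1^3}+\ket{F^3_1}$ is itself SLOCC-equivalent to (c). Without some such explicit construction (or an equivalent invariant-based argument) the claim that no new class appears is unsupported.

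Two further cautions about the route you sketch for type 2. First, stratifying by the $GL(2,\mathbb{C})$ SLOCC class of the qubit part $\ket{\chi'}$ is not obviously the right invariant: the admissible operations that preserve the form of $\ket{\Psi_2}$ are only upper triangular on $\mathrm{span}\{\ket{0},\ket{1}\}$ (coupled to the $\ket{2}$ level), and orbits under this smaller group are finer than $GL(2)$ orbits, which is why the paper's case split is by coefficients rather than by the GHZ/W/product trichotomy. Second, your ``expected'' target list for type 2 includes class (a), which in fact does not occur among the $b_4\neq0$ type-2 states; the correct targets are (d), (e) and (c) (the latter via $\ket{1^3}+\ket{F^3_1}$).
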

			
			\begin{proof}
				The states corresponding to $a_4=0=b_4$ in Eqs.\;(\ref{eq:psi1derog}) and (\ref{eq:psi2derog}) are 3-qubit states which are non-derogatory \cite{footnotefullrank}. We will show that all 3-qutrit ES states in Eqs.\;(\ref{eq:psi1derog}) and (\ref{eq:psi2derog}) with $a_4,b_4\neq0$ can be reached from 3-qubit $\ket{\text{GHZ}_2}\propto\ket{0^3}+\ket{2^3}$, $\ket{\text{W}}=\ket{F^3_1}$, and the representatives of three 3-qutrit non-derogatory ES SLOCC classes using SLOCC transformation $A^{\otimes 3}$.
				
				As shown in Eqs.\;(\ref{eq:psi1derog}) and (\ref{eq:psi2derog}), potential type-1 and type-2 derogatory ES states are given by the superposition of any 3-qubit symmetric states and an orthogonal state $\ket{2^3}$ and $\ket{F^3_1}$, respectively. It was proven in Ref.\;\cite{slocc} that any 3-qubit symmetric states falls into one of the three SLOCC classes represented by the states $\ket{0^3}, \ket{0^3}+\ket{1^3}$, and $\ket{S^3_1}$. The representatives of all potential type-1 derogatory ES SLOCC classes are then simply superpositions of the three 3-qubit representatives and the state $\ket{2^3}$ as SLOCC transformations $A^{\otimes 3}$ with block-diagonal matrices $A=\begin{pmatrix} a & c\\ b & d \end{pmatrix}\oplus e \in GL(3,\mathbb{C})$ are sufficient to reach any state $\ket{\Psi_1}$ in Eq.\;(\ref{eq:psi1derog}) from
				\begin{align*}
					&\text{(a)} \;\ket{0^3}+\ket{2^3}, \;\text{(b)}\; \ket{0^3}+\ket{1^3}+\ket{2^3}, \;\text{or (c)}\; \ket{S^3_1}+\ket{2^3}.
				\end{align*}
				The state (a) is the 3-qubit GHZ state \cite{footnote11}, whereas states (b) and (c) are stabilized by $B\otimes B^{-1}\otimes\identity$ with non-derogatory matrices $B=x\oplus y\oplus z$ and $B=\begin{pmatrix} x & 0 & 1\\ 0 & y & 0 \\ 0 & 0 & x \end{pmatrix}$, respectively.
				
				We now show that any potential type-2 derogatory ES state belongs to one of the three following SLOCC classes represented by 
				\begin{align*}
					&\text{(d)} \;\ket{F^3_1}, \;\text{(e)}\; \ket{S_2^3}+\ket{F^3_1}, \;\text{and (f)}\; \ket{1^3}+\ket{F^3_1}
				\end{align*}
				where (d) is the 3-qubit W state, (e) is stabilized by $B\otimes B^{-1}\otimes\identity$ with non-derogatory matrix $B=\begin{pmatrix} x & 1 & 0\\ 0 & x & 1 \\ 0 & 0 & x \end{pmatrix}$, and (f) is SLOCC equivalent to the non-derogatory representative (c). It can be easily seen that these are the complete list of type-2 ES SLOCC representatives from explicit constructions of SLOCC transformations $A^{\otimes 3}$ with
				\begin{align*}
					&\text{(d)}\; A=\begin{pmatrix} 1 & 0 & \frac{b_0}{\sqrt{3}}\\ 0 & 1 & b_1 \\ 0 & 0 & b_4 \end{pmatrix}, \;\text{(e)}\; A=\begin{pmatrix} 1 & 0 & \frac{b_0}{\sqrt{3}}\\ 0 & \sqrt{b_2} & b_1 \\ 0 & 0 & b_4 \end{pmatrix},\\
					&\text{(f)}\; A=\begin{pmatrix} 1 & \frac{b_2}{\sqrt{3} b^{2/3}_3} & \frac{b_0}{\sqrt{3}}-\frac{b_2^3}{9 b^2_3}\\ 0 & b^{1/3}_3 & b_1-\frac{b_2^2}{\sqrt{3} b_3} \\ 0 & 0 & b_4 \end{pmatrix},
				\end{align*}
				which transform the state (d) to any 3-qutrit state $\ket{\Psi_2}$ in Eq.\;(\ref{eq:psi2derog}) with $b_2=b_3=0$, the state (e) to the ones with $b_2,b_4\neq0$ and $b_3=0$, and the state (f) to those with $b_3,b_4\neq0$. Recall that $b_4\neq0$ ensures the state $\ket{\Psi_2}$ to have local rank 3. This shows in particular that all 3-qutrit states are non-derogatory.
			\end{proof}

			We already know all the representatives of $n$-qutrit non-derogatory ES SLOCC classes from Sec.\;\ref{subsec:Prelim_ESandNES} to be
			\begin{align}
				&\ket{E_0}\oplus\ket{E_0}\oplus\ket{E_0}=\ket{0^n}+\ket{1^n}+\ket{2^n},\label{eq:nQutritNonderog1}\\
				&\ket{E_1}\oplus\ket{E_0}=\sqrt{n}\ket{S^n_1}+\ket{2^n}, \;\text{and}\label{eq:nQutritNonderog2}\\ &\ket{E_2}=\sqrt{C^n_2}\ket{S_2^n}+\sqrt{n}\ket{F^n_1}, \label{eq:nQutritNonderog3}
			\end{align}
			where the coefficients in Eqs.\;(\ref{eq:nQutritNonderog2}) and (\ref{eq:nQutritNonderog3}) can be set to 1 using SLOCC transformations. If we consider potential candidates for derogatory ES SLOCC representatives, they would be, for instance, those states of the form $\ket{S^n_{k>1}}+\ket{2^n}$ that are SLOCC inequivalent to $\ket{S^n_1}+\ket{2^n}$. However, for 3 qutrits, the state $\ket{S^3_2}+\ket{2^3}$ is SLOCC equivalent to $\ket{S^3_1}+\ket{2^3}$ and the state $\ket{1^3}+\ket{2^3}$ is just the 3-qubit GHZ state. Thus, these states are not derogatory. The situation changes in 4 qutrits because there exist states such as $\ket{S^4_2}+\ket{2^4}$ that are not SLOCC equivalent to any of the three non-derogatory representatives as we will see in the next subsection. This guarantees the existence of derogatory ES representatives for 4 qutrits.
			
			\subsubsection{4 qutrits\label{subsec:4qutritDerog}}
			We move on to study 4-qutrit ES states that are stabilized by $B_i \otimes {B_i}^{-1} \otimes \identity^{\otimes2}$ ($i=1,2$) since our goal is to investigate the properties of derogatory ES states which, as explained in the previous subsection, is guaranteed to exist in 4 qutrits. We first list the representatives for all 4-qutrit derogatory ES SLOCC classes. Then, we characterize all the local invertible symmetries for each of these representatives. Finally, we show that every 4-qutrit derogatory ES SLOCC class contains weakly isolated states, $\text{LOCC}_{\mathbb{N}}$-reachable states, and also $\text{LOCC}_1$-convertible states.
			\\
			
			\paragraph{\textbf{Representatives of derogatory ES SLOCC classes}}
			
			To obtain representatives of all 4-qutrit derogatory ES SLOCC classes, we start by   identifying the representatives of SLOCC classes with symmetric states stabilized by $B_i \otimes {B_i}^{-1} \otimes \identity^{\otimes2}$ ($i=1,2$) and then we eliminate the ones that are SLOCC equivalent to the three non-derogatory representatives in Eqs.\;(\ref{eq:nQutritNonderog1})--(\ref{eq:nQutritNonderog3}).
			
			Consider now the symmetric states in Eqs.\;(\ref{eq:psi1derog}) and (\ref{eq:psi2derog}) and set $n=4$. Note that we need not consider the states with $a_5=0=b_5$ as they are 4-qubit states and therefore must be non-derogatory \cite{footnotefullrank}. The first 4 terms of both states describe any 4-qubit symmetric state. We can use the result from Ref.\;\cite{BaKr09} that there are only 5 families of 4-qubit symmetric SLOCC classes represented by
			\begin{align}
				&\ket{1^4},\; \ket{S^4_1},\;\ket{S^4_2},\;\ket{0^4}+\ket{S^4_2}, \nonumber\\ &\{\ket{0^4}+\ket{1^4}+\mu\ket{S^4_2}:\mu\in\mathbb{C} \text{ and } \mu\neq\pm\sqrt{\frac{2}{3}}\}.
			\end{align}
			We define the set encompassing all these 4-qubit states to be $\mathcal{S}^4_{\text{qubit}}$. The last family contains infinitely many SLOCC classes for each class corresponding to a finite number of $\mu\in\mathbb{C}$. For $\ket{\Psi_1}$ in Eq.\;(\ref{eq:psi1derog}), superpositions of each 4-qubit representative and the orthogonal state $\ket{2^4}$ give an overcomplete list of SLOCC representatives $\{\ket{\psi}+\ket{2^4}:\ket{\psi}\in\mathcal{S}^4_{\text{qubit}}\}$. This list is overcomplete because
			\begin{align}
				\ket{\psi(\mu)}&\coloneqq \ket{0^4}+\mu\ket{S^4_2}+\ket{1^4}+\ket{2^4}\label{eq:PsiMu}\\
				&=(\ket{E^{4,0}_0}+\frac{\mu}{\sqrt{6}}\ket{E^{2,2}_0}+\ket{E^{0,4}_0})\oplus\ket{E_0} 
			\end{align}
			is SLOCC equivalent to, for instance, $\ket{\psi(-\mu)}$. Among the list, we recognize that $\ket{1^4}+\ket{2^4}$ is just a 4-qubit $\ket{\text{GHZ}_2}$, and the states $\ket{S^4_1}+\ket{2^4}$ and $\ket{\psi(\mu=0)}$ are non-derogatory (see Eqs.\;(\ref{eq:nQutritNonderog1}) and (\ref{eq:nQutritNonderog2})). We should also point out that (i) $\ket{\psi(\mu=\pm\sqrt{6})}$ is SLOCC equivalent to $\ket{\psi(\mu=0)}$ and (ii) $\ket{\psi(\mu=\pm\sqrt{\frac{2}{3}})}$ is SLOCC equivalent to $\ket{S^4_2}+\ket{2^4}$. The remaining representatives $\ket{S^4_2}+\ket{2^4}$, $\ket{0^4}+\ket{S^4_2}+\ket{2^4}$ and $\{\ket{\psi(\mu)}:\mu\in\mathbb{C} \text{ and } \mu\neq0, \pm\sqrt{\frac{2}{3}}, \pm\sqrt{6}\}$ are indeed derogatory which we will prove in Sec.\;\ref{par:4qutritSymm} when we characterize all local invertible symmetries of these representatives.

			For $\ket{\Psi_2}$ in Eq.\;(\ref{eq:psi2derog}), we have an overcomplete list of SLOCC representatives with $\{\ket{\psi}+\ket{F^4_1}:\ket{\psi}\in\mathcal{S}^4_{\text{qubit}}\}$. In contrast to the type-1 derogatory case, the number of potential type-2 derogatory ES SLOCC classes is finite and can be represented by 5 states
			\begin{flalign}
				&\ket{F^4_1},\;\ket{S^4_2}+\ket{F^4_1},\;\ket{1^4}+\ket{F^4_1},\nonumber\\
				&\ket{1^4}+\ket{S^4_2}+\ket{F^4_1}\;\text{and }\ket{S^4_3}+\ket{F^4_1}. \label{eq:4qutritType2Reps}
			\end{flalign}
			By explicit construction of SLOCC transformations $A^{\otimes4}$, we show that all states $\ket{\Psi_2}$ in Eq.\;(\ref{eq:psi2derog}) with $b_5\neq0$ can be reached from these 5 representatives. The constructions of $A^{\otimes4}$ are shown in Appendix \ref{subsec:4qutritsAx4}. Again, we identify the representatives that are non-derogatory: $\ket{S^4_2}+\ket{F^4_1}$ (SLOCC equivalent to $\sqrt{6}\ket{S^4_2}+2\ket{F^4_1}=\ket{E_2}$) and $\ket{1^4}+\ket{F^4_1}=\ket{E_0}\oplus\frac{1}{2}\ket{E_1}$ which are stabilized by $B\otimes B^{-1}\otimes\identity^{\otimes2}$ with non-derogatory $B=\begin{pmatrix} x & 1 & 0 \\ 0 & x & 1 \\ 0 & 0 & x \end{pmatrix}$ and $B=\begin{pmatrix} x & 0 & 1 \\ 0 & y & 0 \\ 0 & 0 & x \end{pmatrix}$, respectively. Restricting to 4-qutrit states with full local rank, the remaining two representatives $\ket{1^4}+\ket{S^4_2}+\ket{F^4_1}=\ket{E^{0,4}_0}+\frac{1}{\sqrt{6}}\ket{E^{2,2}_0}+\frac{1}{2}\ket{E^{4,0}_1}$ and $\ket{S^4_3}+\ket{F^4_1}=\frac{1}{2}(\ket{E^{1,3}_0}+\ket{E^{4,0}_1})$ are derogatory which will be verified in Sec.\;\ref{par:4qutritSymm}. Thus, we obtain the following representatives of all 4-qutrit derogatory ES SLOCC classes:
			\begin{align}
				&\ket{S^4_2}+\ket{2^4},\;\ket{0^4}+\ket{S^4_2}+\ket{2^4},\;\ket{1^4}+\ket{S^4_2}+\ket{F^4_1},\label{eq:4qutritDerogRep}\\
				&\ket{S^4_3}+\ket{F^4_1},\;\{\ket{\psi(\mu)}:\mu\in\mathbb{C} \text{ and } \mu\neq0, \pm\sqrt{\frac{2}{3}}, \pm\sqrt{6}\}.\nonumber
			\end{align}
			As already pointed out, different values of $\mu$ may lead to states that are SLOCC equivalent. However, all other states that are not of the form $\ket{\psi(\mu)}$ are not within the SLOCC class of another state in Eq.\;(\ref{eq:4qutritDerogRep}). In order to see that, recall that the Jordan normal form is SLOCC invariant. As we will show in the next section for all symmetries $B_{(i)}\otimes B^{-1}_{(j)}$, it holds that $B=B_1$ ($B$ generated by $B_2$) for type-1 (type-2) representatives respectively. Hence, type-1 and type-2 representatives cannot be within the same SLOCC class. 
			The cardinality of the group that consist of all $A$ for which $A^{\otimes4}\ket{\psi}=\ket{\psi}$  has to be the same for any symmetric state in the SLOCC class of a symmetric state $\ket{\psi}$. Using this, as well as the results on the symmetries $A^{\otimes4}$ given in the next subsection it is easy to see that the two type-2 representatives, as well as $\ket{S^4_2}+\ket{2^4}$ and $\ket{0^4}+\ket{S^4_2}+\ket{2^4}$ are SLOCC inequivalent to all other states in Eq.\;(\ref{eq:4qutritDerogRep}).
			\paragraph{\textbf{Characterization of all local symmetries}}\label{par:4qutritSymm}
			
			For each derogatory representative $\ket{\psi}$, we find all invertible symmetries $\bigotimes^4_{j=1} S^{(j)}$ (i.e., $\bigotimes^4_{j=1} S^{(j)}\ket{\psi}=\ket{\psi}$). We use the fact that the operators $S^{(j)}$ are invertible to obtain
			\begin{align}
				\prescript{}{i,j}{\bra{2,k}}S^{(i)}\otimes S^{(j)}\ket{\Psi_1}=\prescript{}{i,j}{\bra{2,k}}\Psi_1\rangle=0 , \\
				\prescript{}{i,j}{\bra{2,k'}}S^{(i)}\otimes S^{(j)}\ket{\Psi_2}=\prescript{}{i,j}{\bra{2,k'}}\Psi_2\rangle=0,
			\end{align}
			for $k=0,1$, $k'=1,2$ and for any states $\ket{\Psi_1}$ and $\ket{\Psi_2}$ in Eqs.\;(\ref{eq:psi1derog}) and (\ref{eq:psi2derog}). From these equations, we show that the local symmetries of the type-1 and type-2 derogatory representatives must be of the forms 
			\begin{align}
				S^{(j)}=\begin{pmatrix} a & d & 0 \\ b & e & 0 \\ 0 & 0 & p \end{pmatrix} \text{(for type 1)},\; S^{(j)}=\begin{pmatrix} a & d & g \\ 0 & e & h \\ 0 & 0 & p \end{pmatrix} \text{(for type 2)} \label{eq:symmPsi12}
			\end{align}
			for all $j=1,2,3,4$ (see Appendix \ref{subsec:structureOfS_j} for the proof). Note that the local symmetries of each derogatory representative $\ket{\psi}$ form a group $S^{(j)}_{\psi}$ for any site $j$. As mentioned in Sec.\;\ref{subsec:Prelim_ESandNES}, any symmetry $\bigotimes^4_{j=1} S^{(j)}$ can be generated by the symmetries of the two forms $A^{\otimes4}$ and $B_{(i)}\otimes B_{(j)}^{-1}$ for any ES state \cite{MiRo13}. Due to Eq.\;(\ref{eq:symmPsi12}), we can restrict the invertible matrices $A$ and $B$ to block-diagonal (upper triangular) form for type 1 (2) derogatory symmetries.

			As discussed in Sec.\;\ref{subsec:Prelim_ESandNES}, we can use equality instead of proportionality in solving for the symmetries. We identify all symmetries of the form $A^{\otimes4}$ of each derogatory representative $\ket{\psi}$ in Eq.\;(\ref{eq:4qutritDerogRep}) by solving the degree-2 polynomial equations
			\begin{align}
				(A^{\otimes2}\otimes\identity^{\otimes2}- \identity^{\otimes2}\otimes \widetilde{A}^{\otimes2})\ket{\psi}=0
			\end{align}
			and imposing that $\widetilde{A}=A^{-1}$ once we fix enough entries in matrices $A$ and $\widetilde{A}$.
			
			The full set of $B\otimes B^{-1}\otimes\identity^{\otimes2}$ symmetries of each $\ket{\psi}$ can be found easily by solving the equation
			\begin{align}
				(B\otimes\identity^{\otimes3}- \identity\otimes B\otimes\identity^{\otimes2})\ket{\psi}=0
			\end{align}
			for arbitrary invertible matrix $B$. We find that all these symmetries of type-1 and -2 derogatory representatives can be generated entirely by applying analytic functions $f$ on the matrices $B_1$ and $B_2$ in Eq.\;(\ref{eq:B_type12}), respectively. This is not necessarily true in general.

			As mentioned before, the general expression of the local symmetry $S^{(j)}$ for any site $j$ of each derogatory representative can be deduced from arbitrary products of matrices $A$ and $B$ since $A^{\otimes4}$ and $B_{(i)}\otimes B_{(j)}^{-1}$ on any sites $i,j$ generate the full local symmetry. This gives us the group of local invertible symmetries of each representative $\ket{\psi}$.
			
			\subparagraph{Type-1 symmetries}
			We now list all the local symmetries of type-1 derogatory representatives $\ket{S^4_2}+\ket{2^4},\;\ket{0^4}+\ket{S^4_2}+\ket{2^4}$, and $\{\ket{\psi(\mu)}:\mu\in\mathbb{C} \text{ and } \mu\neq0, \pm\sqrt{\frac{2}{3}}, \pm\sqrt{6}\}$. Any matrix $B$ that corresponds to the $B_{(i)}\otimes B_{(j)}^{-1}$ symmetries of these representatives must be $B=x\oplus x\oplus y=B_1$ in Eq.\;(\ref{eq:B_type12}) for non-zero $x,y\in\mathbb{C}$. Since $B$ does not include any non-derogatory matrices, these representatives are indeed derogatory. 
			
			As for the $A^{\otimes4}$ symmetries, there are only 3 possibilities for the matrix $A$ which are
			\begin{align}
				&A_1=\begin{pmatrix} a & 0 \\ 0 & \pm\frac{1}{a} \end{pmatrix}\oplus p,\;\;A_2=\begin{pmatrix} 0 & \pm\frac{1}{b} \\ b & 0 \end{pmatrix}\oplus p,\nonumber\\
				&A_3= c\begin{pmatrix} 1 & e^{i(\frac{\delta+\varphi}{2}+\beta)} \\ ie^{i(\frac{\delta+\varphi}{2}+\alpha)} & e^{i\delta} \end{pmatrix}\oplus p, \label{eq:A1A2A3}
			\end{align}
			where the allowed values for each parameter are different for each representative: 
			\begin{enumerate}[(i)]
				\item $\ket{S^4_2}+\ket{2^4}$ only has this symmetry with $A_1$ and $A_2$ for non-zero $a,b\in\mathbb{C}$ and $p=\pm1$ or $\pm i$. From arbitrary products of $A$ and $B$ of this representative, we deduce that its local symmetries are given by $S^{(j)}= \begin{pmatrix} x_j a & 0 \\ 0 & \pm\frac{x_j}{a} \end{pmatrix}\oplus y_j,\;S^{(j)}=\begin{pmatrix} 0 & \pm\frac{x_j}{b} \\ x_j b & 0 \end{pmatrix}\oplus y_j$, where $a,b$ must be the same for all sites (i.e., $j$-independent).
				\item $\ket{0^4}+\ket{S^4_2}+\ket{2^4}$ only has this symmetry with $A_1$ for $a,p=\pm1$ or $\pm i$. Similarly, its local symmetries can only take the form $S^{(j)}= x_j \oplus \pm x_j \oplus y_j$ (with $a$ absorbed in $x_j$).
				\item For any $\mu\neq0,\pm\sqrt{\frac{2}{3}},\pm\sqrt{6}$, $\ket{\psi(\mu)}\coloneqq \ket{0^4}+\ket{1^4}+\ket{2^4}+\mu\ket{S^4_2}$ has this symmetry with $A_1$ and $A_2$ for $a,b,p=\pm1$ or $\pm i$. Thus, its local symmetries can only take the two forms $S^{(j)}= x_j \oplus \pm x_j \oplus y_j$ and $S^{(j)}= \begin{pmatrix} 0 & \pm x_j \\ x_j & 0 \end{pmatrix}\oplus y_j$. Only for $\mu=\sqrt{2}i$ (and also for $\mu=-\sqrt{2}i$ which we do not consider separately here as it corresponds to the same SLOCC class), $\ket{\psi(\mu)}$ also has the $A^{\otimes4}$ symmetry with $A_3$ for $c=\begin{cases}\frac{1}{\sqrt{2}}e^{i\pi(\frac{m}{2}-\frac{1}{12})}, \text{\; if } e^{i(\delta+\varphi)}=1,\\
				\frac{1}{\sqrt{2}}e^{i\pi(\frac{m}{2}+\frac{1}{12})}, \text{\; if } e^{i(\delta+\varphi)}=-1,\end{cases}$ $(m=0,1,2,3)$, $\alpha,\beta\in\{0,\pi\}$, $\delta,\varphi\in\{\frac{\pi}{2},\frac{3\pi}{2}\}$ fulfilling $e^{i(\alpha+\beta+\varphi)}=i$ for $A_3$ being invertible, and $p=\pm1$ or $\pm i$. Hence, the local symmetry group of $\ket{\psi(\mu=\pm\sqrt{2}i)}$ composes of all the local symmetries for other $\mu$ and, additionally, the ones of $S^{(j)} = x'_j\begin{pmatrix} 1 & e^{i(\frac{\delta+\varphi}{2}+\beta)} \\ ie^{i(\frac{\delta+\varphi}{2}+\alpha)} & e^{i\delta} \end{pmatrix}\oplus y'_j$, where $\alpha,\beta,\delta,\varphi$ must be the same for all sites (i.e., $j$-independent).
			\end{enumerate}
			In addition, for $\bigotimes_{j=1}^4 S^{(j)}$ to stabilize the above representatives, all $x_j,y_j,x'_j,y'_j\in\mathbb{C}$ above must satisfy $\prod_{j=1}^4 x_j=1=\prod_{j=1}^4 y_j=\prod_{j=1}^4 y'_j$ and for $\mu=\sqrt{2}i$, $\prod_{j=1}^4 x'_j=\begin{cases} \frac{1}{4}e^{-i\frac{\pi}{3}}=\frac{1-\sqrt{3}i}{8}, \text{\;\; if } e^{i(\delta+\varphi)}=1,\\
			\frac{1}{4}e^{i\frac{\pi}{3}}=\frac{1+\sqrt{3}i}{8}, \text{\;\; if } e^{i(\delta+\varphi)}=-1.\end{cases}$
			
			\subparagraph{Type-2 symmetries}
			We give the full list of local symmetries of type-2 derogatory representatives $\ket{1^4}+\ket{S^4_2}+\ket{F^4_1}$ and $\ket{S^4_3}+\ket{F^4_1}$. These representatives can only have the $B_{(i)}\otimes B_{(j)}^{-1}$ symmetries with $B=\begin{pmatrix} x & 0 & y \\ 0 & x & 0 \\ 0 & 0 & x \end{pmatrix}$ for $x,y\in\mathbb{C}$ and $x\neq0$, which are generated by $B_2$ in Eq.\;(\ref{eq:B_type12}). Since $B$ does not include any non-derogatory matrices, these representatives must be derogatory. 
			
			The conditions for each representative to be stabilized by $A^{\otimes4}$ are as follows:
			\begin{enumerate}[(i)]
				\item $\ket{1^4}+\ket{S^4_2}+\ket{F^4_1}$ only has this symmetry with $A=a\oplus \pm a\oplus a$ where $a=\pm1$ or $\pm i$. 
				\item $\ket{S^4_3}+\ket{F^4_1}$ only has this symmetry with $A=a\oplus \frac{1}{a^{1/3}}e^{i\frac{2m\pi}{3}}\oplus \frac{1}{a^3}$ for non-zero $a\in\mathbb{C}$ and $m=0,1,2$. 
			\end{enumerate}
			Arbitrary products of the matrices $A$ and $B$ of each representative lead us to the general form of local symmetries
			\begin{align}
				\text{(i)}\;S^{(j)}=\begin{pmatrix} x_j & 0 & y_j \\ 0 & \pm x_j & 0 \\ 0 & 0 & x_j \end{pmatrix},  \text{(ii)}\;S^{(j)}=\begin{pmatrix} x_j a & 0 & y_j \\ 0 & \frac{x_j}{a^{1/3}}e^{i\frac{2m\pi}{3}} & 0 \\ 0 & 0 & \frac{x_j}{a^3} \end{pmatrix},\nonumber
			\end{align}
			where $x_j, y_j,a\in\mathbb{C}$, $x_j,a\neq0$ and $a$ is $j$-independent, for representatives (i) and (ii), respectively. In addition, for $\bigotimes_{j=1}^4 S^{(j)}$ to be a stabilizer of the above representatives, all $x_j,y_j\in\mathbb{C}$ must satisfy $\prod_{j=1}^4 x_j=1$ and $\sum_{j=1}^4 \frac{y_j}{x_j}=0$. We show more details on how we derive all the possible matrices $A$ in Appendix \ref{subsec:AllA4}.
			\\

			\paragraph{\textbf{Weak isolation, reachability via $\text{LOCC}_\mathbb{N}$ and convertibility via $\text{LOCC}_1$}}
			
			We will next prove that weakly isolated states, $\text{LOCC}_\mathbb{N}$-reachable states, and $\text{LOCC}_1$-convertible states exist in every derogatory ES SLOCC class. To show the existence of weakly isolated states (see Lemma \ref{lemma:isolation}), we construct a positive definite $3 \times 3$ complex matrix $G_j=g_j^\dagger g_j$ for any site $j$ such that $(S^{(j)})^\dagger G_j S^{(j)}\propto G_j$ cannot be satisfied for any non-trivial $S^{(j)}\in S^{(j)}_{\psi}$ (i.e., $S^{(j)}\not\propto\identity$). In fact, given that $G_j>0$,  $(S^{(j)})^\dagger G_j S^{(j)}\propto G_j$ holds only if $S^{(j)}$ is diagonalizable (see Corollary \ref{lemma:quasicommutation}). Hence, we only have to consider diagonalizable $S^{(j)}$ in the remaining section.
			
			We find that there always exist some matrices $G_j$ that violates  $(S^{(j)})^\dagger G_j S^{(j)}\propto G_j$ locally for every 4-qutrit derogatory ES SLOCC class. Two explicit examples are 
			\begin{align}
				G_j= \begin{pmatrix} \alpha & \beta^* & 0 \\ \beta & \delta & \varepsilon^* \\ 0 & \varepsilon & \nu \end{pmatrix},\; G_j= \begin{pmatrix} \alpha & \beta^* & \gamma^* \\ \beta & \delta & 0 \\ \gamma & 0 & \nu \end{pmatrix},\label{eq:GjViolatePropto}
			\end{align}
			with $\alpha,\delta,\nu>0$ and non-zero $\beta,\gamma,\varepsilon\in\mathbb{C}$ such that $G_j>0$. It is straightforward to check that $(S^{(j)})^\dagger G_j S^{(j)} - \lambda G_j=0$ cannot hold for any $\lambda$ and for any non-trivial local symmetry $S^{(j)}\in S^{(j)}_{\psi}$ of each representative $\ket{\psi}$ in Eq.\;\eqref{eq:4qutritDerogRep}. If we pick these matrices $G_j$ for more than two sites $j$ and any $3\times3$ positive definite matrices $G'_k=\begin{pmatrix} \alpha & \beta^* & \gamma^* \\ \beta & \delta & \epsilon^* \\ \gamma & \epsilon & \nu \end{pmatrix}$ with $\gamma\neq0$ or $\epsilon\neq0$ for the remaining sites $k$, then this will fulfill the weak isolation criterion in Lemma \ref{lemma:isolation}. Thus, we conclude that all 4-qutrit derogatory ES SLOCC classes contain weakly isolated states. 
			
			We also find that all 4-qutrit derogatory ES SLOCC classes contain states that are both reachable under $\text{LOCC}_\mathbb{N}$ and convertible under $\text{LOCC}_1$. We first construct such examples for all 4-qutrit derogatory ES SLOCC classes except for the class of the type-2 representative $\ket{S^4_3}+\ket{F^4_1}$, then we provide an example for the class represented by $\ket{S^4_3}+\ket{F^4_1}$. Let us consider  $\ket{\widetilde{\Psi}}\propto\bigotimes_{i=1}^4\widetilde{h}_i \ket{\psi}$ where $\ket{\psi}$ is of the form in Eq. \eqref{eq:4qutritDerogRep}, $\ket{\psi}\neq \ket{S^4_3}+\ket{F^4_1}$ and $\widetilde{h}_1^\dagger \widetilde{h}_1 =\widetilde{H}_1=\begin{pmatrix} \alpha & \beta^* & 0 \\ \beta & \delta & \varepsilon^* \\ 0 & \varepsilon & \nu \end{pmatrix}>0$ with $\beta,\varepsilon\neq0$. Note that $\widetilde{H}_1$ satisfies $S^{\dagger}\widetilde{H}_1 S\not\propto\widetilde{H}_1$ for $S=1\oplus-1\oplus1\in S^{(1)}_{\psi}$ \cite{footnote12} (i.e., satisfying condition (ii) in Theorem \ref{thm:reachability}), and $\widetilde{h}_j^\dagger \widetilde{h}_j =\widetilde{H}_j=\alpha_j \oplus\delta_j \oplus\nu_j>0$ commuting with $S$ for $j=2,3,4$ (i.e., fulfilling condition (i) in both Theorem \ref{thm:reachability} and Lemma \ref{lemma:convertibility}). A state $\bigotimes_{i=1}^4 g_i \ket{\psi}$ with $g_i=\widetilde{h}_i$ for $i=2,3,4$ and $g_1^\dagger g_1=G_1=\frac{1}{3}\widetilde{H}_1+\frac{2}{3}S^{\dagger}\widetilde{H}_1 S=\begin{pmatrix} \alpha & -\frac{\beta^*}{3} & 0 \\ -\frac{\beta}{3} & \delta & -\frac{\varepsilon^*}{3} \\ 0 & -\frac{\varepsilon}{3} & \nu \end{pmatrix}$ satisfying $\widetilde{H}_1\not\propto S'^{\dagger}G_1 S'$ for all $S'\in S^{(1)}_{\psi}$ can reach the state $\ket{\widetilde{\Psi}}$ via $\text{LOCC}_1$. The state $\ket{\widetilde{\Psi}}$ can also be converted to another state $\bigotimes_{i=1}^4 h_i \ket{\psi}$ with $h_1^\dagger h_1=H_1=\begin{pmatrix} \alpha & -2\beta^* & 0 \\ -2\beta & \delta & -2\varepsilon^* \\ 0 & -2\varepsilon & \nu \end{pmatrix}>0$ and $h_i=\widetilde{h}_i$ for $i=2,3,4$ via $\text{LOCC}_1$ because $\widetilde{H}_1=\frac{1}{4}H_1+\frac{3}{4}S^{\dagger}H_1 S$, $H_1\not\propto \widetilde{S}^{\dagger}\widetilde{H}_1 \widetilde{S}$ for all $\widetilde{S}\in S^{(1)}_{\psi}$, and $[S,H_{i\geq2}]=0$ together satisfy Lemma \ref{lemma:convertibility}. Therefore, $\ket{\widetilde{\Psi}}$ is both $\text{LOCC}_\mathbb{N}$-reachable and $\text{LOCC}_1$-convertible.
			
			We now construct an $\text{LOCC}_{\mathbb{N}}$-reachable and $\text{LOCC}_1$-convertible example for the SLOCC class of $\ket{S^4_3}+\ket{F^4_1}$. We consider the same state $\ket{\widetilde{\Psi}}$ as before (but now $\ket{\psi}=\ket{S^4_3}+\ket{F^4_1}$), which satisfies $S'^{\dagger}\widetilde{H}_1 S'\not\propto\widetilde{H}_1$ for $S'=1\oplus-1\oplus-1\in S^{(1)}_{\psi}$. The state $\bigotimes_{i=1}^4 g'_i \ket{\psi}$ with $g_1'^\dagger g'_1 =G'_1 =\frac{1}{3}\widetilde{H}_1+\frac{2}{3}S'^{\dagger}\widetilde{H}_1 S'=\begin{pmatrix} \alpha & -\frac{\beta^*}{3} & 0 \\ -\frac{\beta}{3} & \delta & \varepsilon^* \\ 0 & \varepsilon & \nu \end{pmatrix}$ satisfying $\widetilde{H}_1\not\propto S''^{\dagger}G'_1 S''$ for all $S''\in S^{(1)}_{\psi}$ and $g'_i=\widetilde{h}_i$ for $i=2,3,4$ can reach the state $\ket{\widetilde{\Psi}}$ via $\text{LOCC}_1$. Similarly, the state $\ket{\widetilde{\Psi}}$ can also be converted to a different state $\bigotimes_{i=1}^4 h'_i \ket{\psi}$ with $h_1'^\dagger h'_1=H'_1=\begin{pmatrix} \alpha & -2\beta^* & 0 \\ -2\beta & \delta & \varepsilon^* \\ 0 & \varepsilon & \nu \end{pmatrix}>0$ and $h'_i=\widetilde{h}_i$ for $i=2,3,4$ via $\text{LOCC}_1$ since $\widetilde{H}_1=\frac{1}{4}H'_1+\frac{3}{4}S'^{\dagger}H'_1 S'$, $H'_1\not\propto \widetilde{S}'^{\dagger}\widetilde{H}_1 \widetilde{S}'$ for all $\widetilde{S}'\in S^{(1)}_{\psi}$, and $[S',H'_{i\geq2}]=0$ satisfy Lemma \ref{lemma:convertibility}. Thus, $\ket{\widetilde{\Psi}}$ is again both $\text{LOCC}_\mathbb{N}$-reachable and $\text{LOCC}_1$-convertible.
			
			There are also states that are $\text{LOCC}_1$-convertible but not $\text{LOCC}_\mathbb{N}$-reachable in some 4-qutrit derogatory ES SLOCC classes. For example, in the SLOCC class of the type-1 representative $\ket{\psi}=\ket{0^4}+\ket{S^4_2}+\ket{2^4}$, a state $\ket{\Psi}\propto\bigotimes_{i=1}^4 g_i \ket{\psi}$ with $g_i^\dagger g_i=G_i=\alpha_i \oplus\delta_i \oplus\nu_i$ for $i=1,\ldots,4$ is not $\text{LOCC}_\mathbb{N}$-reachable because $[S,G_i]=0$ for all $S\in S^{(i)}_{\psi}$ which are all diagonal (see Sec.\;\ref{par:4qutritSymm}) and for all $i$, thereby violating condition (ii) in Theorem \ref{thm:reachability}. However, there exists a state $\ket{\Phi}\propto h_1\bigotimes_{i=2}^4 g_i \ket{\psi}$ with $h_1^\dagger h_1 =H_1=\begin{pmatrix} \alpha_1 & \beta^* & 0 \\ \beta & \delta_1 & \varepsilon^* \\ 0 & \varepsilon & \nu_1 \end{pmatrix}>0$ with $\beta,\varepsilon\neq0$ and an $S=1\oplus-1\oplus1\in S^{(1)}_{\psi}$ such that $G_1=\frac{1}{2}(H_1+S^\dagger H_1 S)$ and $H_1\not\propto \widetilde{S}^{\dagger}G_1 \widetilde{S}$ for all $\widetilde{S}\in S^{(1)}_{\psi}$. This satisfies both conditions in Lemma \ref{lemma:convertibility}, so the initial state $\ket{\Psi}$ can be converted into $\ket{\Phi}$ with $\text{LOCC}_1$.

			As a remark, there are more weakly isolated states, $\text{LOCC}_\mathbb{N}$-reachable states, and $\text{LOCC}_1$-convertible states than what we have presented here. We summarize our findings regarding 4-qutrit derogatory ES SLOCC classes in the following observation.
			
			\begin{observation}
				The representatives of all 4-qutrit derogatory ES SLOCC classes are
				\begin{align}
					&\ket{S^4_2}+\ket{2^4},\;\ket{0^4}+\ket{S^4_2}+\ket{2^4},\;\ket{1^4}+\ket{S^4_2}+\ket{F^4_1},\nonumber\\
					&\ket{S^4_3}+\ket{F^4_1},\;\{\ket{\psi(\mu)}:\mu\in\mathbb{C} \text{ and } \mu\neq0, \pm\sqrt{\frac{2}{3}}, \pm\sqrt{6}\}.\nonumber
				\end{align}
				All these classes contain both weakly isolated states, $\text{LOCC}_\mathbb{N}$-reachable states, and $\text{LOCC}_1$-convertible states. \label{Obs:4Qutrit}
			\end{observation}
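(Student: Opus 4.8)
The plan is to prove the two assertions — the classification of the SLOCC representatives, and the existence in each of these classes of weakly isolated, $\text{LOCC}_{\mathbb{N}}$-reachable and $\text{LOCC}_{1}$-convertible states — separately, reusing the machinery already assembled in Sec.~\ref{subsec:4qutritDerog}.

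For the classification I would start from the fact, recalled in Sec.~\ref{subsec:Prelim_DandND}, that every derogatory ES state is SLOCC equivalent to one stabilized by $B_i\otimes B_i^{-1}\otimes\identity^{\otimes2}$ with $B_i$ in Jordan form, i.e.\ $B_1$ or $B_2$ of Eq.~\eqref{eq:B_type12}. Hence it suffices to enumerate the candidate states $\ket{\Psi_1}$ of Eq.~\eqref{eq:psi1derog} and $\ket{\Psi_2}$ of Eq.~\eqref{eq:psi2derog} and prune away the non-derogatory ones. For type~1 the first four summands range over all $4$-qubit symmetric states, so I would invoke the five-family SLOCC classification of $4$-qubit symmetric states of Ref.~\cite{BaKr09}; using block-diagonal $A$ one checks that $\{\ket{\psi}+\ket{2^4}:\ket{\psi}\in\mathcal{S}^4_{\text{qubit}}\}$ is already an overcomplete list of representatives. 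Then I would (i) discard $\ket{1^4}+\ket{2^4}=\ket{\text{GHZ}_2}$, $\ket{S^4_1}+\ket{2^4}$ and $\ket{\psi(0)}$ as non-derogatory (cf.\ Eqs.~\eqref{eq:nQutritNonderog1}--\eqref{eq:nQutritNonderog2}), and (ii) record the SLOCC equivalences $\ket{\psi(\mu)}\sim\ket{\psi(-\mu)}$, $\ket{\psi(\pm\sqrt6)}\sim\ket{\psi(0)}$ and $\ket{\psi(\pm\sqrt{2/3})}\sim\ket{S^4_2}+\ket{2^4}$. For type~2 the analogous reduction — with the explicit $A^{\otimes4}$ collected in Appendix~\ref{subsec:4qutritsAx4} — collapses everything onto the five states of Eq.~\eqref{eq:4qutritType2Reps}, of which $\ket{S^4_2}+\ket{F^4_1}=\ket{E_2}$ and $\ket{1^4}+\ket{F^4_1}$ are removed as non-derogatory. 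Finally I would verify that the surviving states really are derogatory and pairwise SLOCC inequivalent by feeding them into the symmetry analysis of Sec.~\ref{par:4qutritSymm}: the $B$ matrix of any $B_{(i)}\otimes B^{-1}_{(j)}$ symmetry is forced to be $B_1$ (type~1) or a function of $B_2$ (type~2), hence always derogatory; the Jordan form of $B$ being SLOCC invariant separates type~1 from type~2, while the cardinality and structure of the $A^{\otimes4}$-stabilizer — likewise an SLOCC invariant for symmetric states — distinguishes $\ket{S^4_2}+\ket{2^4}$, $\ket{0^4}+\ket{S^4_2}+\ket{2^4}$, the generic $\ket{\psi(\mu)}$, $\ket{1^4}+\ket{S^4_2}+\ket{F^4_1}$ and $\ket{S^4_3}+\ket{F^4_1}$ from one another.

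For the second assertion I would rely on Lemma~\ref{lemma:isolation}, Theorem~\ref{thm:reachability} and Lemma~\ref{lemma:convertibility}, together with the explicit list of local symmetries $S^{(j)}\in S^{(j)}_{\psi}$ from Sec.~\ref{par:4qutritSymm}; by Corollary~\ref{lemma:quasicommutation} only diagonalizable $S^{(j)}$ can satisfy a quasi-commutation relation, so only these need be tested. For weak isolation I would pick, for at least two sites $j$, the positive matrix $G_j$ of Eq.~\eqref{eq:GjViolatePropto} with all displayed off-diagonal entries non-zero (and arbitrary $G'_k>0$ at the remaining sites), and check directly, running through the five symmetry groups, that $(S^{(j)})^\dagger G_j S^{(j)}=\lambda G_j$ admits no solution with $S^{(j)}\not\propto\identity$; Lemma~\ref{lemma:isolation} then yields a weakly isolated state in every class. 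For reachability and convertibility I would use, for every class other than $\ket{S^4_3}+\ket{F^4_1}$, the symmetry $S=1\oplus-1\oplus1\in S^{(1)}_{\psi}$ and the state $\ket{\widetilde{\Psi}}=\bigotimes_i\widetilde{h}_i\ket{\psi}$ with $\widetilde{H}_1$ the tridiagonal positive matrix of Eq.~\eqref{eq:GjViolatePropto} ($\beta,\varepsilon\neq0$) and $\widetilde{H}_{j\geq2}$ diagonal; then $S^\dagger\widetilde{H}_1 S\not\propto\widetilde{H}_1$, so setting $G_1=\tfrac13\widetilde{H}_1+\tfrac23 S^\dagger\widetilde{H}_1 S$ and verifying condition~(ii) of Lemma~\ref{lemma:convertibility} against the symmetry list shows $\ket{\widetilde{\Psi}}$ is reached from $\bigotimes_i g_i\ket{\psi}$ via $\text{LOCC}_1$ (hence is $\text{LOCC}_{\mathbb{N}}$-reachable), while taking $H_1$ with the off-diagonals rescaled by $-2$ gives $\widetilde{H}_1=\tfrac14 H_1+\tfrac34 S^\dagger H_1 S$ and $H_1\not\propto\widetilde{S}^\dagger\widetilde{H}_1\widetilde{S}$ for all $\widetilde{S}\in S^{(1)}_{\psi}$, so $\ket{\widetilde{\Psi}}$ is also $\text{LOCC}_1$-convertible. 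For $\ket{S^4_3}+\ket{F^4_1}$, where $1\oplus-1\oplus1$ is not a symmetry, I would run the same argument with $S'=1\oplus-1\oplus-1\in S^{(1)}_{\psi}$ and the matrices $G'_1,H'_1$ obtained by keeping the $(2,3)$/$(3,2)$ entries of $\widetilde{H}_1$ and rescaling only the $(1,2)$/$(2,1)$ ones.

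The main obstacle, I expect, is the classification: proving that the overcomplete lists are genuinely exhaustive, pruning the non-derogatory elements correctly, and — in particular — disentangling the $\mu$-dependent SLOCC equivalences among the $\ket{\psi(\mu)}$ and confirming, through the full symmetry analysis of Sec.~\ref{par:4qutritSymm}, that the surviving representatives are derogatory and pairwise SLOCC inequivalent. Once the symmetry groups are pinned down, the second assertion reduces to the explicit, essentially mechanical, verifications sketched above.
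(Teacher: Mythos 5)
Your proposal is correct and follows essentially the same route as the paper: the same reduction of type-1/type-2 candidates to overcomplete lists via the 4-qubit symmetric classification, the same pruning of non-derogatory representatives and $\mu$-equivalences, and the same explicit constructions (the $G_j$ of Eq.~\eqref{eq:GjViolatePropto} for weak isolation, and the symmetries $1\oplus-1\oplus1$ resp.\ $1\oplus-1\oplus-1$ with the $\tfrac13/\tfrac23$ and $\tfrac14/\tfrac34$ averaging for reachability and convertibility). No substantive differences from the paper's argument.
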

			
			\paragraph{\textbf{Convertible via LOCC with probabilistic steps}}\label{sec:probStepProtocol}
			We have already shown that all 4-qutrit derogatory ES SLOCC classes possess rich entanglement structures in the sense that each class contains examples that are either weakly isolated, $\text{LOCC}_\mathbb{N}$-reachable, or $\text{LOCC}_1$-convertible. In fact, some of these classes contain states that are convertible only with more intricate LOCC protocols. In the following, we will give an example of a state in the SLOCC class represented by the state $|\psi(\mu=\sqrt{2}i)\rangle$ in Eq.\;(\ref{eq:PsiMu}), which can only be converted to another LU-inequivalent state via an LOCC protocol with intermediate probabilistic steps \cite{SpdV17,dVSp17}. This kind of protocol converts the initial state into some LU-inequivalent states probabilistically via local measurements in the intermediate steps before reaching the same target state in the final step. In contrast to most previously studied LOCC transformations, there exist examples of this type of transformations (such as the one presented below) which are not achievable with an \textit{all-det-$\text{LOCC}_\mathbb{N}$} protocol which consists only of deterministic rounds.
			
			In this example, we consider the SLOCC class of $\ket{\Psi_s}\coloneqq|\psi(\mu=\sqrt{2}i)\rangle$ and the goal is to convert the initial state $\ket{\Psi}\propto g_1\otimes g_2\otimes\identity\otimes\identity\ket{\Psi_s}$ into the target state $\ket{\Phi}\propto h_1\otimes h_2\otimes\identity\otimes\identity\ket{\Psi_s}$ where $g_i=\sqrt{G_i}$ and $h_i=\sqrt{H_i}$ ($i=1,2$) are defined by the positive-definite matrices
			\begin{align}
				&G_1=\begin{pmatrix}
					2p+2 & 1-(2p-1)i & p(1-\sqrt{3}i)\\
					1+(2p-1)i & 4-2p & 0\\
					p(1+\sqrt{3}i) & 0 & 2
				\end{pmatrix},\label{eq:G1_def}\\
				&G_2=\begin{pmatrix}
					3 & 3-\frac{2}{1-q}-i & \frac{3+\sqrt{2}+i}{\sqrt{2}}q\\
					3-\frac{2}{1-q}+i & 5 & 0\\
					\frac{3+\sqrt{2}-i}{\sqrt{2}}q & 0 & 4
				\end{pmatrix},\label{eq:G2_def}\\
				&H_1=\begin{pmatrix}
					4 & 1-i & \frac{1+\sqrt{3}}{2}(1-i)\\
					1+i & 2 & 1\\
					\frac{1+\sqrt{3}}{2}(1+i) & 1 & 2
				\end{pmatrix},\\
				&H_2=\begin{pmatrix}
					5 & -i-\frac{1+q}{1-q} & \frac{1+i}{\sqrt{2}}\\
					i-\frac{1+q}{1-q} & 3 & 1\\
					\frac{1-i}{\sqrt{2}} & 1 & 4
				\end{pmatrix},\label{eq:H2_def}
			\end{align}
			and the parameters $p=\frac{1-\sqrt{2}+\sqrt{3}}{2}$ and $q=\frac{\sqrt{2+\sqrt{2}}-1}{1+\sqrt{2}}$. In Appendix \ref{app:NoAllDetLOCC}, we show that no all-det-$\text{LOCC}_\mathbb{N}$ protocol can convert $\ket{\Psi}$ to $\ket{\Phi}$.

			The LOCC protocol that transforms $\ket{\Psi}$ to $\ket{\Phi}$ with a probabilistic intermediate step consists of two steps. The first step involves party 1 measuring with $M_1=\sqrt{p}h_1g_1^{-1}$ and $M_2=\sqrt{1-p}h_1S^{(1)}g_1^{-1}$ where $p=\frac{1-\sqrt{2}+\sqrt{3}}{2}$ and $S^{(1)}=\begin{pmatrix}
			0 & e^{i\frac{3\pi}{4}}\\
			e^{i\frac{3\pi}{4}} & 0
			\end{pmatrix}\oplus1$, which fulfill $\sum_{i=1}^2 M_i^\dagger M_i=\identity$. This results in two (unnormalized) intermediate states $h_1\otimes g_2\otimes\identity\otimes\identity\ket{\Psi_s}$ and $h_1S^{(1)}\otimes g_2\otimes\identity\otimes\identity\ket{\Psi_s}$ which are LU-inequivalent due to $S^{(1)\dagger} H_1 S^{(1)}\otimes G_2\otimes\identity^{\otimes2}\not\propto \widehat{S}^\dagger (H_1\otimes G_2\otimes\identity^{\otimes2})\widehat{S}$ for all $\widehat{S}=\bigotimes_{j=1}^4 \widehat{S}^{(j)}\in S_{\Psi_s}$, and therefore, the protocol is not all-deterministic. As we will see later, there exist local unitary symmetries $S,S'\not\propto\identity$ satisfying $S^{\dagger} H_1 S\propto H_1$ and $S'^{\dagger}S^{(1)\dagger} H_1 S^{(1)}S'\propto H_1$ (with $S'=(S^{(1)})^{-1}S$), such that both states satisfy condition (i) in Lemma \ref{lemma:convertibility} for deterministic $\text{LOCC}_1$ conversions in the next step. 
			
			In the first branch, party 2 measures with $\overline{M}_1=\sqrt{q}h_2g_2^{-1}$ and $\overline{M}_2=\sqrt{1-q}h_2\overline{S}^{(2)}g_2^{-1}$ where $q=\frac{\sqrt{2+\sqrt{2}}-1}{1+\sqrt{2}}$, $\overline{S}^{(2)}=\overline{x}_2\begin{pmatrix}
			1 & -i\\
			1 & i
			\end{pmatrix}\oplus1$, and $\overline{x}_2=\frac{q(1+\sqrt{2}+i)}{2(1-q)}$, which satisfy $\sum_{i=1}^2 \overline{M}_i^\dagger \overline{M}_i=\identity$ and result in two (unnormalized) states $h_1\otimes h_2\otimes\identity\otimes\identity\ket{\Psi_s}$ and $h_1\otimes h_2\overline{S}^{(2)}\otimes\identity\otimes\identity\ket{\Psi_s}$. If party 2 obtains the second measurement outcome, party 1 will apply $\overline{U}=h_1 \overline{S}^{(1)} h_1^{-1}$ with $\overline{S}^{(1)}=\overline{x}_1\begin{pmatrix}
			1 & -i\\
			1 & i
			\end{pmatrix}\oplus\overline{y}_1$, $\overline{x}_1=\frac{1+\sqrt{3}i}{4\overline{x}_2}$, and $\overline{y}_1=\frac{1+i-\sqrt{6}e^{i\frac{3\pi}{4}}}{2}\overline{x}_1$ (where $\overline{U}$ is unitary due to $[H_1, \overline{S}^{(1)}]=0$ and $\overline{S}^{(1)}$ being unitary). Parties 3 and 4 will apply unitaries $\overline{S}^{(3)}=\frac{1}{\sqrt{2}}\begin{pmatrix}
			1 & -i\\
			1 & i
			\end{pmatrix}\oplus1$ and $\overline{S}^{(4)}=\frac{1}{\sqrt{2}}\begin{pmatrix}
			1 & -i\\
			1 & i
			\end{pmatrix}\oplus\frac{1}{\overline{y}_1}$, respectively. With $\bigotimes_{j=1}^4 \overline{S}^{(j)}\ket{\Psi_s}=\ket{\Psi_s}$, we obtain the target state $\ket{\Phi}$.

			In the second branch, party 2 measures with $\widetilde{M}_1=\sqrt{q}h_2S^{(2)}g_2^{-1}$ and $\widetilde{M}_2=\sqrt{1-q}h_2\widetilde{S}^{(2)}S^{(2)}g_2^{-1}$ where $q=\frac{\sqrt{2+\sqrt{2}}-1}{1+\sqrt{2}}$, $S^{(2)}=\begin{pmatrix}
			0 & 1\\
			1 & 0
			\end{pmatrix}\oplus\frac{1+i}{\sqrt{2}}$, $\widetilde{S}^{(2)}=\widetilde{x}_2\begin{pmatrix}
			1 & 1\\
			i & -i
			\end{pmatrix}\oplus1$, and $\widetilde{x}_2=i\frac{q(1+\sqrt{2}+i)}{2(1-q)}$, which fulfill $\sum_{i=1}^2 \widetilde{M}_i^\dagger \widetilde{M}_i=\identity$ and result in two (unnormalized) states $h_1S^{(1)}\otimes h_2S^{(2)}\otimes\identity\otimes\identity\ket{\Psi_s}$ and $h_1S^{(1)}\otimes h_2\widetilde{S}^{(2)}S^{(2)}\otimes\identity\otimes\identity\ket{\Psi_s}$. For the first measurement outcome, parties 3 and 4 will apply unitaries $S^{(3)}=\begin{pmatrix}
			0 & 1\\
			1 & 0
			\end{pmatrix}\oplus1$ and $S^{(4)}=\begin{pmatrix}
			0 & e^{-i\frac{3\pi}{4}}\\
			e^{-i\frac{3\pi}{4}} & 0
			\end{pmatrix}\oplus\frac{1-i}{\sqrt{2}}$, respectively. If the second measurement outcome occurs, party 1 applies $\widetilde{U}=h_1 \widetilde{S}^{(1)} h_1^{-1}$ with $\widetilde{S}^{(1)}=\widetilde{x}_1\begin{pmatrix}
			1 & 1\\
			i & -i
			\end{pmatrix}\oplus\widetilde{y}_1$, $\widetilde{x}_1=\frac{1-\sqrt{3}i}{4\widetilde{x}_2}$, and $\widetilde{y}_1=\frac{1-i+\sqrt{6}e^{i\frac{\pi}{4}}}{2}\widetilde{x}_1$ (where $\widetilde{U}$ is unitary since $[H_1, \widetilde{S}^{(1)}]=0$ and $\widetilde{S}^{(1)}$ is unitary), whereas parties 3 and 4 apply unitaries $\widetilde{S}^{(3)}S^{(3)}$ and $\widetilde{S}^{(4)}S^{(4)}$ where $\widetilde{S}^{(3)}=\frac{1}{\sqrt{2}}\begin{pmatrix}
			1 & 1\\
			i & -i
			\end{pmatrix}\oplus1$ and $\widetilde{S}^{(4)}=\frac{1}{\sqrt{2}}\begin{pmatrix}
			1 & 1\\
			i & -i
			\end{pmatrix}\oplus\frac{1}{\widetilde{y}_1}$. Given that $\bigotimes_{j=1}^4 S^{(j)}\ket{\Psi_s}=\ket{\Psi_s}$ and $\bigotimes_{j=1}^4 \widetilde{S}^{(j)}\ket{\Psi_s}=\ket{\Psi_s}$, the intermediate states in the second branch can also be converted into the target state $\ket{\Phi}$ using a single round of LOCC.

			To summarise the example, the first step is probabilistic, which gives rise to two LU-inequivalent branches, but the second step is deterministic in each branch.
			
			Note that one can construct another 4-qutrit example that uses an intermediate probabilistic step by simply embedding the $2\times2$ matrices from the 4-qubit example in Refs.\;\cite{SpdV17,dVSp17} in the subspace $\text{span}\{\ket{0},\ket{1}\}$ and adding an orthogonal component to the third dimension via a direct sum. However, the argument in Refs.\;\cite{SpdV17,dVSp17} for why no all-det-$\text{LOCC}_\mathbb{N}$ protocol can achieve such transformation cannot be applied here directly because there are more non-trivial symmetries in 4 qutrits than in 4 qubits. For instance, the trivial symmetry $S=\identity$ for qubits can become non-trivial after extending it to 3 dimensions (e.g., $S'=\identity\oplus e^{i\varphi}$ with $\varphi\neq0$).
			
			\section{Conclusion}
			
			In this work we have exhaustively analyzed deterministic LOCC transformations in SLOCC classes that contain a symmetric state. Since these conversions are almost never possible for generic states, only zero-measure subsets of states can display a rich LOCC structure. Symmetric states (together with their SLOCC classes) indeed form such a zero-measure subset which, moreover, has a clear physical and mathematical relevance. In fact, the only SLOCC classes known so far that are free of isolation correspond to the ones of the symmetric $n$-qubit W and GHZ states. Our results indicate that nevertheless symmetric SLOCC classes are in general equally limited for LOCC manipulation as the full state space. We have shown that, as in the general case, almost every symmetric $n$-qubit state with $n\geq 5$ has a trivial local stabilizer. Thus, generic states of this form are isolated. Turning to symmetric states of arbitrary local dimension, we have proven that transformations among non-ES states are never possible and that weak isolation exists in the corresponding classes. This motivates the study in greater detail of ES classes, which in particular have a richer stabilizer (and which we have moreover proven to be non-generic within symmetric classes at least for $n$-qubit states). In fact, we have characterized the local stabilizer for these classes in the non-derogatory case and we have found that LOCC transformations among symmetric states are in this case possible. However, we have provided a plethora of results that show that these families are still plagued with (weak) isolation. Nevertheless, the identified possible transformations in this case might turn out to be helpful in finding applications of entangled ES states. Finally, we have considered derogatory ES classes, where a full characterization of the SLOCC classes and their representatives remains open. Notwithstanding, we have observed that they admit classes in which all states are isolated under LOCC$_{\mathbb{N}}$. In addition to this, we have characterized the local stabilizer and study in detail the LOCC convertibility properties of derogatory ES classes for 4-qutrit states.
			
			On a technical level, we have obtained other results that might be of future use. First, in order to study LOCC$_{\mathbb{N}}$ transformability we have generalized the results of \cite{SpdV17,dVSp17} from SLOCC classes with a finite stabilizer to an arbitrary one, where the commutation conditions of the former case translate to quasi-commutation relations. We expect that these results will be helpful when studying LOCC transformations for other classes of states. Second, we have studied properties of the local stabilizer for several SLOCC families of symmetric states and, in particular, we have determined it for non--derogatory ES SLOCC classes, a result that can be of use for other applications of these states in quantum information theory and quantum many-body physics.
			
			A question that remains open is the existence of SLOCC families with no isolated states beyond the $n$-qubit W and GHZ classes. Further studies of the structure of SLOCC classes generated by states displaying different forms of symmetry than permutation symmetry and translational symmetry, as studied in \cite{SaMo19,He21} might prove fruitful in this direction. The results presented here can also be used to study transformations from pure states to ensembles of pure states. 
			
			\begin{acknowledgements}
				MH, CS, NKHL and BK acknowledge financial support from the Austrian Science Fund (FWF): W1259-N27 (DK-ALM), F 7107-N38 (SFB BeyondC) and P 32273-N27 (Stand-Alone Project).  JIdV acknowledges the following institutions for financial support: Spanish Ministerio de Ciencia e Innovaci\'{o}n (grant PID2020-113523GB-I00) and Comunidad de Madrid (grant QUITEMAD-CMS2018/TCS-4342 and the Multiannual Agreement with UC3M in the line of Excellence of University Professors EPUC3M23 in the context of the V PRICIT). 
				
				We thank J. Ignacio Cirac for helpful discussions regarding the quasi-commutation relation.
			\end{acknowledgements}

			\appendix
			\section{Proofs of Theorem \ref{thm:reachability} and Lemma \ref{lemma:convertibility}}
			\label{app:proof_reach_conv}
			In this Appendix we provide the proofs of the necessary and sufficient conditions for pure states to be reachable (convertible) via $LOCC_{\mathbb{N}}$ (by one round of an LOCC protocol) respectively. For better readability we repeat the corresponding Theorem (Lemma) respectively. The proofs are analogous to the ones for finite (unitary) symmetries presented in \cite{SpdV17,dVSp17}. However, for the sake of completeness we recall them here. \\
			
			\noindent{\bf Theorem \ref{thm:reachability}.}
			\textit{A state $\ket{\Phi}\propto h \ket{\Psi_s}$ is reachable via $LOCC_{\mathbb{N}}$, iff there exists $S \in S_{\Psi_s}$ such that the following conditions hold up to permutations of the particles:
				\begin{itemize}
					\item[(i)] For any $i\geq 2$ $ (S^{(i)})^\dagger H_i S^{(i)} \propto H_i$ and
					\item[(ii)] $  (S^{(1)})^\dagger H_1 S^{(1)} \not\propto H_1$.
			\end{itemize}}
			
			\begin{proof} The proof is analogous to the one for unitary symmetries in \cite{SpdV17}. 
				We will first show that the conditions in the theorem are necessary and then present an explicit example of a transformation that allows to reach $\ket{\Phi}$.
				
				The protocol involves only finitely many rounds and is non-trivial, hence, there exists a last non-trivial round of the protocol. In order for the transformation to be deterministic any state $\ket{\chi}$ which is obtained in one branch of the LOCC protocol in the second to last round has to be transformed in the last round deterministically to $\ket{\Phi}$. It is straightforward to see that therefore these states have to be  in the same SLOCC class, i.e., $\ket{\chi}\propto g \ket{\Psi_s}$ for some $g\in G$.  We assume then that w.l.o.g. party $1$ applies a non-trivial measurement in the last step, which is described by the operators $\{A_i\}$, and all the other parties apply (depending on the outcome) a LU. Note that at least two outcomes are not related to each other via a unitary, i.e., it has to hold that $A_2^\dagger A_2 \not\propto A_1^\dagger A_1$, as otherwise the last round would be trivial. Moreover, we have that $(A_1\otimes \identity) g\ket{\Psi_s}=r_1 \otimes_{i=2}^n U_i h \ket{\Psi_s}$ and $(A_2\otimes \identity)g\ket{\Psi_s}=r_2 \otimes_{i=2}^n V_i h \ket{\Psi_s}$
				for some local unitaries $U_i$ and $V_i$ and $r_1,r_2 > 0$. It is easy to see that the latter equations are equivalent to
				\begin{align}
					h^{-1} (\otimes_{i=2}^n U_i^\dagger ) (A_1\otimes \identity) g&=r_1 S_1 \\
					h^{-1} (\otimes_{i=2}^n V_i^\dagger ) (A_2\otimes \identity) g&=r_2 S_2,
				\end{align}
				where $S_1,S_2 \in S_{\Psi_s}$. This implies that 
				\begin{align}\label{eq:A}
					A_1&=r^{(1)}_1 h_1S_1^{(1)}g_1^{-1}, A_2=r^{(1)}_2 h_1S_2^{(1)}g_1^{-1}\\
					g_i&=r^{(i)}_1 U_i h_i S_1^{(i)}=r^{(i)}_2 V_i h_iS_2^{(i)}, \quad \forall i>1,
				\end{align}
				where $r_j=\prod_i r^{(i)}_j$, for $j=1,2$. Using the last equations for $g_i^\dagger g_i$ and  that $h_i$ and $S_2^{(i)}$ are invertible it follows that condition (i) in Theorem \ref{thm:reachability} has to hold for $S = S_1 S_2^{-1}$. Using further that $A_1^\dagger A_1 \not\propto A_2^\dagger A_2$ and Eq. (\ref{eq:A}) one obtains condition (ii) for $S = S_1 S_2^{-1}$.
				
				We will next show that $\ket{\Phi}$ is reachable via $LOCC_\mathbb{N}$ if conditions (i) and (ii) are fulfilled. In order to do so we construct a transformation and choose the initial state $\ket{\Psi}\propto g \ket{\Psi_s}$ of the transformation such that for
				$i>1$ $G_i = H_i \propto (S^{(i)})^\dagger H_i S^{(i)}$, i.e. we choose $g_i=V_i h_i \propto W_i h_i S^{(i)}$, for some unitaries $V_i,W_i$. Note that these have to exist as condition (i) implies that
				$h_i S^{(i)} (h_i)^{-1}$ is up to proportionality factor a
				unitary.  Further, we choose $g_1$ such that $r G_1=p  H_1 + (1-p) (S^{(1)})^\dagger H_1 S^{(1)}$, for some $0<p<1$ and $r=p+(1-p)\tr((S^{(1)})^\dagger H_1 S^{(1)})$ \cite{footnoteAppA}.  Then the following LOCC protocol allows to transform $g\ket{\Psi_s}$ to $h\ket{\Psi_s}$. Party $1$ performs a generalized measurement with measurement operators,
				$ \frac{\sqrt{p}}{\sqrt{r}}h_1 g_1^{-1},
				\frac{\sqrt{1-p}}{\sqrt{r}} h_1 S^{(1)} g_1^{-1}$. All the other parties, $i$, apply then depending on the measurement outcome either $V^\dagger_i$ or $W^\dagger_i$, respectively.
			\end{proof}

			\noindent {\bf Lemma \ref{lemma:convertibility}.}
			\textit{A state $\ket{\Psi}\propto g \ket{\Psi_s}$ is convertible via $LOCC_{1}$ iff there exist $m$ symmetries $S_k \in S_{\Psi_s}$, with $m>1$ and $H_1>0$ and $p_k > 0$ with $\sum_{k=1}^m p_k = 1$, such that the following conditions hold up to permutations of the particles:
				\begin{itemize}
					\item[(i)] $(S_k^{(i)})^\dagger G_i S_k^{(i)} \propto G_i$ for any $i\geq 2$  and for all $k \in \{1, \ldots, m\}$ and
					\item[(ii)] $G_1 = \sum_{k=1}^m p_k \left( S_k^{(1)} \right)^\dagger H_1 S_k^{(1)}$ and  $H_1 \not\propto \left(S^{(1)}\right)^\dagger G_1 S^{(1)}$ for any $S \in S_{\Psi_s}$ fulfilling $(S^{(i)})^\dagger G_i S^{(i)} \propto G_i$ for all $i\geq 2$.
			\end{itemize}}
			
			\begin{proof}
				The proof is a straightforward generalization of the proof of Lemma 3 in \cite{dVSp17}. 
				
				\textit{If:} If conditions (i) and (ii) are fulfilled, then the following $LOCC_1$-transformation allows to non-trivially convert the state. The first party performs a generalized measurement with measurement operators
				$A_k = \sqrt{p_k} h S_k g_1^{-1}$. Condition (ii) ensures that $\sum_k A_k^\dagger A_k = \identity$. For outcome $k$  one obtains the state $h_1 S_k^{(1)} \otimes g_2 \otimes \ldots \otimes g_n \ket{\Psi_s}$. Due to condition (i) there exist unitaries $U_k^{(i)}$ satisfying $g_i S_k^{(i)} \propto U_k^{(i)} g_i$. Depending on the measurement outcome $k$ parties $i\in\{2, \ldots, n\}$ apply now the unitaries $ U_k^{(i)}$ in order to obtain $h_1 \otimes g_2 \otimes \ldots \otimes g_n \ket{\Psi_s}$ for all measurement outcomes. Note that if the second part of condition (ii) is fulfilled this transformation is non-trivial.
				
				\textit{Only if:}
				Wlog we assume in the following that the non-trivial measurement is implemented by the first party.
				In order for a state to be convertible via $LOCC_1$ there has to exist a state $\tilde{h}_1 \otimes \ldots \otimes \tilde{h}_n \ket{\Psi_s}$, a measurement $\{A_k\}_{k=1}^{m}$, $\sum_{k=1}^m A_k^\dagger A_k = \identity$ and unitaries $U_k^{(i)}$ (for $k \in \{2, \ldots, m\}$) such that for all $k\in \{1, \ldots, m\}$,
				\begin{align} \label{eq:LOCC1}
					A_k g_1 \otimes U_k^{(2)} g_2 \otimes \ldots \otimes U_k^{(n)} g_n \ket{\Psi_s} \propto \tilde{h}_1 \otimes \ldots \otimes \tilde{h}_n \ket{\Psi_s}.
				\end{align}
				Note that as the transformation should be non-trivial it has to hold that $\tilde{H} \not\propto S^\dagger G S$ for all $S \in S_{\Psi_s}$. Eq. (\ref{eq:LOCC1}) can only hold true if there exist $\tilde{S}_k \in S_{\Psi_s}$ such that
				\begin{align}
					A_k &\propto \tilde{h}_1 \tilde{S}_k^{(1)} g_1^{-1}, \\
					U_k^{(i)} &\propto \tilde{h}_i \tilde{S}_k^{(i)} g_i^{-1} \ \forall i \geq 2.
				\end{align}
				Using then that $\sum_{k=1}^m A_k^\dagger A_k = \identity$ and that $U_k^{(i)}$ are unitaries we obtain that there have to exist $p_k > 0$ such that (for an appropriately chosen normalization of $\tilde{H}$) 
				\begin{align}
					\sum_k p_k \left(\tilde{S}_k^{(1)} \right)^\dagger \tilde{H}_1 \tilde{S}_k^{(1)} &= G_1 \\
					\left(\tilde{S}_k^{(i)} \right)^\dagger \tilde{H}_i \tilde{S}_k^{(i)} &\propto G_i \ \forall i\geq 2 \ \forall k .
				\end{align}
				With the definitions $S_k^{(i)} = \left(\tilde{S}_1^{(i)}\right)^{-1} \tilde{S}_k^{(i)}$ and $H_1 =  \left(\tilde{S}_1^{(1)} \right)^\dagger \tilde{H_1} \tilde{S}_1^{(1)}$ it therefore has to hold true that 
				$\sum_k p_k \left({S}_k^{(1)} \right)^\dagger {H}_1 {S}_k^{(1)} = G_1$, $\left( S_k^{(i)} \right)^\dagger G_i S_k^{(i)} \propto G_i$ for all $i \geq 2$, and $H_1 \not\propto \left(S^{(1)}\right)^\dagger G_1 S^{(1)}$ for any $S \in S_{\Psi_s}$ such that  $\left( S^{(i)} \right)^\dagger G_i S^{(i)} \propto G_i$ for all $i \geq 2$. This completes the proof.
			\end{proof}
			
			\section{On the quasi-commutation relation}
			\label{app:quasicommutation2}
			Here we provide the proofs of Observation \ref{obs:quasicommutation2} and Observation \ref{obs:quasicommutation3}. In order to improve readability we repeat the observations here.\\
			
			\noindent {\bf Observation \ref{obs:quasicommutation2}.} \textit{
				Let $A$ be a positive $k \times k$ matrix and $B$ be an arbitrary $k \times k$ matrix. Then, $B^\dagger A B \propto A$ if and only if
				$B \propto a^{-1} U a$ for some unitary $U$ and some $a$ s.t. $a^\dagger a = A$.}
			\begin{proof} 
				It can be easily verified that if $B$ and $A$ are as in the statement, then the relation $B^\dagger A B \propto A$ is indeed satisfied. Let us now show the converse direction. Assume $B^\dagger A B \propto A$. As $A$ is positive, there exists a matrix $a$ such that $A = a^\dagger a$. Thus, $B^\dagger a^\dagger a B \propto a^\dagger a$, or, equivalently, $\left(a B a^{-1}\right)^\dagger a B a^{-1}\propto \identity$. Hence, $a B a^{-1}\propto U$ for some unitary $U$ and the statement of the lemma follows.
			\end{proof}

			\noindent {\bf Observation \ref{obs:quasicommutation3}.} \textit{
				Let $A$ be a positive $k \times k$ matrix and $B$ be an arbitrary $k \times k$ matrix. Then, $B^\dagger A B \propto A$ if and only if the following two conditions are met. 
				\begin{enumerate}[(i)]
					\item $B$ is (up to proportionality) similar to a unitary matrix, i.e., $B \propto R \operatorname{diag}(e^{i \phi_1}, \ldots, e^{i \phi_k}) R^{-1}$ for some $k \times k$ matrix $R$ and $\phi_1, \ldots, \phi_k \in [0,2\pi)$.
					\item $A \propto R^{-\dagger} X R^{-1}$, where $X$ is a direct sum of positive matrices acting on the degenerate subspaces of $\operatorname{diag}(e^{i \phi_1}, \ldots, e^{i \phi_k})$.
				\end{enumerate}
			}
			Note that in the particular case in which the eigenvalues of $A$ are non-degenerate, i.e., $\phi_i$ are pairwise different, $X$ is a diagonal matrix with positive entries.
			\begin{proof} 
				We will first show the sufficient part of the observation and afterwards the necessary part.
				Assume that $A$ and $B$ fulfill the conditions (i) and (ii). We then have
				\begin{align}
					&B^\dagger A B \propto \nonumber \\
					&  \quad \propto R^{-\dagger} \operatorname{diag}(e^{-i \phi_1}, \ldots, e^{-i \phi_k})  X  \operatorname{diag}(e^{i \phi_1}, \ldots, e^{i \phi_k}) R^{-1} \nonumber \\
					&  \quad = R^{-\dagger}   X  R^{-1} = A,
				\end{align}
				where we have used that $ \operatorname{diag}(e^{i \phi_1}, \ldots, e^{i \phi_k}) $ and $X$ commute due to the condition on $X$. This shows the sufficient part of the observation.
				
				Let us now prove the necessary part. Assume $B^\dagger A B \propto A$. Statement (i) follows from Observation \ref{obs:quasicommutation2}. Using that $B$ is as in statement (i) we have that
				\begin{align}
					\operatorname{diag}(e^{-i \phi_1}, \ldots, e^{-i \phi_k})  R^\dagger A R  \operatorname{diag}(e^{i \phi_1}, \ldots, e^{i \phi_k}) \nonumber\\
					\propto R^\dagger A R.
				\end{align}
				Using the abbreviation $X=R^\dagger A R$ and denoting the proportionality factor by $\lambda$ we have
				\begin{align}
					\operatorname{diag}(e^{-i \phi_1}, \ldots, e^{-i \phi_k}) X  \operatorname{diag}(e^{i \phi_1}, \ldots, e^{i \phi_k}) = \lambda X.
				\end{align}
				As $X$ is positive, the diagonal entries of $X$ must be positive numbers and we have $\lambda = 1$. We hence obtain that entries $X_{i,j}$ must vanish whenever $\phi_i \neq \phi_j$. Thus, $X$ is a positive matrix that is composed of a direct sum of (positive) matrices acting on the degenerate subspace of $\operatorname{diag}(e^{i \phi_1}, \ldots, e^{i \phi_k})$. Recalling that $A = R^{-\dagger} X R^{-1}$, statement (ii) follows. This completes the proof of the lemma.
			\end{proof}
			
			\section{Hardly any symmetric state is in the GHZ class\label{app:GHZ} }
			In this appendix we prove the following Lemma.
			\begin{lemma}
				The set of symmetric states in the GHZ SLOCC class has Lebesgue measure zero in $\text{Sym}^n(\mathbb{C}^d)$ except when both $d=2$ and $n=3$.
			\end{lemma}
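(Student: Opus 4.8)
The plan is a pure dimension count: the symmetric states in the GHZ SLOCC class form (at most) a $d^2$-complex-dimensional subset of $\operatorname{Sym}(\mathbb{C}^d)^{\otimes n}$, whose complex dimension is $N:=\binom{n+d-1}{d-1}$, so they are Lebesgue-null as soon as $d^2<N$, and this fails only for $(d,n)=(2,3)$.

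First I would recall from \cite{MaKr10,MiRo13} that any symmetric state which is SLOCC-equivalent to the symmetric state $\ket{GHZ_n^d}=\sum_{i=0}^{d-1}\ket{i}^{\otimes n}$ must already be of the form $A^{\otimes n}\ket{GHZ_n^d}$ for some $A\in GL(d,\mathbb{C})$; conversely every such state is symmetric (since $A^{\otimes n}$ commutes with all $P_{i,j}$) and lies in the GHZ class. Hence the set in question is contained in the image of the polynomial map
\begin{equation}
\phi\colon\mathbb{C}^{d\times d}\longrightarrow\operatorname{Sym}(\mathbb{C}^d)^{\otimes n},\qquad \phi(A)=\sum_{i=0}^{d-1}(A\ket{i})^{\otimes n},
\end{equation}
whose domain has real dimension $2d^2$. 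In a fixed basis of the symmetric subspace each component of $\phi(A)$ is a homogeneous degree-$n$ polynomial in the entries of $A$, so $\phi$ is smooth. The standard fact that the image of a $C^1$ map from (an open subset of) $\mathbb{R}^{2d^2}$ into $\mathbb{R}^{2N}$ is Lebesgue-null whenever $2d^2<2N$ then reduces the claim to the inequality $d^2<\binom{n+d-1}{d-1}$. (Equivalently, one may note that the Zariski closure of $\operatorname{Im}\phi$ is an irreducible affine variety of complex dimension at most $d^2$, hence a proper, and therefore null, subvariety of $\mathbb{C}^N$ once $d^2<N$.)

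Finally I would verify the elementary inequality $\binom{n+d-1}{d-1}>d^2$ for all $n\ge3$, $d\ge2$ with $(d,n)\ne(2,3)$. For $d=2$ it reads $n+1>4$, i.e.\ $n\ge4$. For $n=3$ it reads $\tfrac{d(d+1)(d+2)}{6}>d^2$, i.e.\ $(d-1)(d-2)>0$, which holds for all $d\ge3$. Since $\binom{n+d-1}{d-1}$ is nondecreasing in $n$, these two cases cover all $(d,n)$ with $n\ge3$ and $d\ge2$ except exactly $(d,n)=(2,3)$, where $\binom{4}{1}=4=2^2$ and the count is (correctly) inconclusive. Because the argument is a dimension count there is no real obstacle; the only points needing care are checking that $(2,3)$ is the unique equality case of the binomial inequality, and observing that the polynomiality — hence smoothness — of $\phi$ is precisely what licenses the measure-zero conclusion.
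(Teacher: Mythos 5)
Your proof is correct, but it takes a genuinely different route from the paper's. The paper reduces the claim to symmetric tensor rank: every symmetric state in the GHZ class has $\rank_S\leq d$ (being of the form $\sum_i(A\ket{i})^{\otimes n}$), while by the Alexander--Hirschowitz theorem a generic state of $\operatorname{Sym}(\mathbb{C}^d)^{\otimes n}$ has symmetric rank at least $\lceil f_n(d)\rceil$ with $f_n(d)=\frac{1}{d}\binom{n+d-1}{n}$, and one checks $f_n(d)>d$ except for $f_3(2)=2$. You instead parametrize the set directly as the image of the polynomial map $\phi(A)=\sum_i(A\ket{i})^{\otimes n}$ from $\mathbb{C}^{d\times d}$ (using the same fact from \cite{MaKr10,MiRo13} that SLOCC-equivalent symmetric states are related by $A^{\otimes n}$) and invoke the standard fact that a $C^1$ image of a lower-dimensional space is Lebesgue-null, reducing everything to $d^2<\binom{n+d-1}{d-1}$. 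Your inequality analysis is right and isolates $(d,n)=(2,3)$ as the unique equality case, matching the paper's exceptional case $f_3(2)=d$. What each buys: your argument is more elementary and self-contained, needing only Sard-type reasoning rather than a deep algebraic-geometry theorem; the paper's argument is shorter given that symmetric tensor rank is already introduced in the preliminaries, and it yields the stronger qualitative statement that a generic symmetric state has symmetric rank strictly larger than $d$, not merely that it lies outside one particular orbit closure.
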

			\begin{proof}
				Obviously, $\rank_S(GHZ)\leq d$, so it suffices to prove that generically the symmetric tensor rank is larger than $d$. This follows from the Alexander-Hirschowitz theorem \cite{AlHi95}, which shows that for $n>2$ the set of states with symmetric tensor rank equal to $\lceil f_n(d)\rceil$ where
				\begin{equation}
				f_n(d)=\frac{1}{d}\left(
				\begin{array}{c}
				n+d-1 \\
				n \\
				\end{array}
				\right)=\frac{(d+n-1)\cdots(d+1)}{n!}
				\end{equation}
				has full Lebesgue measure in $\text{Sym}^n(\mathbb{C}^d)$ (with a few exceptional values of $(n,d)$ which require $\lceil f_n(d)\rceil+1$) \cite{comon}. Thus, to prove the claim it is enough to notice that $f_n(d)>d$ (except for $f_3(2)$), which can be easily verified by induction.
			\end{proof}
			
			\section{Weak isolation in non-ES classes \label{App:proof_iso_NONES}}
			
			We prove here Lemma \ref{lemma:non-es-isolation}, which we restate here in order to increase readability. \\

			\noindent {\bf Lemma \ref{lemma:non-es-isolation}.}
			\textit{Let $\ket{\Psi_s}$ be a non-ES $n$--qudit state ($n\geq5$). Then, there exist weakly isolated states within the SLOCC class of $\ket{\Psi_s}$.}

			\begin{proof}
				We will prove the lemma by constructing a state $g_1 \otimes \ldots \otimes g_n \ket{\Psi_s}$ and showing that it satisfies the condition given in Lemma \ref{lemma:isolation} (and is weakly isolated, thus). Let us denote the local dimension of $\ket{\Psi_s}$ by $d$.
				
				In order to construct the isolated state let us consider three orthonormal bases $\{\ket{v^{(1)}_j}\}_{j=0}^{d-1}$, $\{\ket{v^{(2)}_j}\}_{j=0}^{d-1}$, and $\{\ket{v^{(3)}_j}\}_{j=0}^{d-1}$, such that any two basis vectors belonging to two different bases have non-vanishing overlap, i.e., $\braket{v^{(i_1)}_{j_1}}{v^{(i_2)}_{j_2}} \neq 0$, for any $i_1,i_2 \in \{1,2,3\}$ s.t. $i_1 \neq i_2$ and for any $j_1, j_2 \in \{0,\ldots,d-1\}$. Such a construction is given e.g. by mutually unbiased basis, at least three of them exist for any $d$ \cite{Co06}. Let us moreover consider $d$ pairwise different positive reals $x_0, \ldots, x_{d-1}$. Then, an isolated state is given by $G_1= \sum_j x_j \ket{v^{(1)}_j} \bra{v^{(1)}_j}$ and $G_2= \sum_j x_j \ket{v^{(2)}_j} \bra{v^{(2)}_j}$, $G_3= \sum_j x_j \ket{v^{(3)}_j} \bra{v^{(3)}_j}$, $G_4=G_5 = \identity$, and $G_6, \ldots, G_n$ may be chosen arbitrarily.
				
				Recall that the symmetry group of $\ket{\Psi_s}$ consists of symmetries of the form $A^{\otimes n}$ only. Let us now prove that the state $g_1 \otimes \ldots \otimes g_n \ket{\Psi_s}$ is isolated by showing that there does not exist any non-trivial $A$ such that $A^\dagger G_i A \propto G_i$ for $n-1$ sites $i$. To this end, let us assume that $A^\dagger G_i A \propto G_i$ for at least four out of the five sites $\{1,2,3,4,5\}$ and show that it follows that $A \propto \identity$. As $A^\dagger G_i A \propto G_i$ for at least one $i \in \{4,5\}$, $A$ must be proportional to a unitary. Then, $A^\dagger G_i A \propto G_i$ if and only if $[A,G_i] = 0$. Note that $G_1$, $G_2$, and $G_3$ are constructed such that they possess non-degenerate eigenvalues. Commuting matrices, where (at least) one of the matrices has this property are simultaneously diagonalizable with a unique (up to irrelevant phases) eigenbasis.  We now have that $[A, G_i]=0$ for at least two out of the three sites $i\in\{1,2,3\}$. Let us assume wlog. that $[A, G_1]=0$ and $[A, G_2]=0$. Then, $A \ket{v^{(1)}_0} = \lambda \ket{v^{(1)}_0}$ for some $\lambda \in \mathbb{C}$. Moreover, we have that $A \ket{v^{(2)}_j} = \mu_j \ket{v^{(2)}_j}$ for some $\mu_j \in \mathbb{C}$ for all $j \in \{0,\ldots, d-1\}$. As $A$ is proportional to a unitary, all eigenvectors corresponding to distinct eigenvalues must be orthogonal. However, as $\braket{v^{(1)}_0}{v^{(2)}_j} \neq 0$ for all $j \in \{0,\ldots, d-1\}$ we have that $\mu_j = \lambda$ for all $j$. Hence, $A=\lambda \identity$. This completes the proof of the lemma.
			\end{proof}
			
			\section{Symmetries of non-derogatory ES states}
			\label{app:nonderogatorysyms}
			
			In this appendix, we characterize the symmetries of non-derogatory ES SLOCC classes, which are representated by states of the form given in Eq. (\ref{eq:sumek}). First, we analyse the symmetries of a single $\ket{E_k}$ (Theorem \ref{lemma:eksyms}). Then, we consider direct sums thereof (Theorem \ref{lemma:eksumsyms}). We restate Theorems \ref{lemma:eksyms} and \ref{lemma:eksumsyms} here for readability.\\

			\noindent {\bf Theorem \ref{lemma:eksyms}.}\textit{
				For $n\geq 3$ the stabilizer of the states $\ket{E_k}$ is generated by 
				operators of the form $B \otimes B^{-1}$ and $A^{\otimes n}$, where
				\begin{itemize}
					\item $B$ is an arbitrary invertible upper triangular Toeplitz matrix. 
					\item $A^{\otimes n}$ can be written up to a proportionality factor as $A=D\bar{S}$ where $D$ is a diagonal matrix with $[D]_{l,l} = x^l$ for some $x \in \mathbb{C}$ and $l \in \{0, 1, \ldots, k\}$ and $\bar{S}$ is upper triangular with $[\bar{S}]_{l,l}=1$ for all $l$. Furthermore, $\bar{S}$ is characterized by $[\bar{S}]_{i+1,l}=\sum_{j=1}^{l-i} [\bar{S}]_{i,l-j} y_j$ for some complex parameters $y_j$ and $\bra{E_l}\bar{S}^{\otimes n}\ket {E_k}=0$ for $0\leq l< k$. For $n\geq k-1$ the symmetries $\bar{S}^{\otimes n}$ can be determined by solving linear equations and (except for a measure zero subset) any choice of $y_j$ leads to a solution. Therefore, $\bar{S}$ is a $k$-parameter group (depending on $n$ and clearly on $k$).
			\end{itemize}}
			
			\begin{proof}
				
				Recall that, as already pointed out in the main text, any symmetry of a symmetric state can be written as a product of symmetries of the form $B_{(i)}\otimes B_{(j)}^{-1}\otimes \identity \ldots \identity$  and $A^{\otimes n}$. We prove the theorem by considering these two types of symmetries separately.

				We will first characterize all symmetries of the form $B\otimes B^{-1}\otimes \identity \ldots \identity$. For these symmetries it has to hold that 
				\bea\label{eq_A}
				B_{(1)} \ket{E_k}= B_{(2)}\ket{E_k}.
				\eea 
				Recall that any potential proportionality factor in Eq. (\theequation) must equal one, as discussed in Section \ref{subsec:Prelim_ESandNES}.
				Projecting Eq. (\ref{eq_A}) on some state with $k-m$ excitations ($m\leq k$) on all parties but party $1$ and $2$ we obtain that 
				\bea\label{eq_A2}
				B_{(1)} \ket{E_m}= B_{(2)}\ket{E_m}.
				\eea 
				Projecting Eq. (\ref{eq_A2}) on the the computational basis state $\ket{m+1}$ on party $2$ results in 
				\bea \label{eq_A4}
				\sum_{l=0}^m [B]_{m+1,l}\ket{m-l}=0.
				\eea
				This implies that $[B]_{m+1,l}=0$ for all $l\leq m$ and any $m < k$, i.e., $B$ is upper triangular.
				Next we project this equation on the computational basis state $\ket{m-l}$ with $l\leq m$ on party $2$. We obtain
				\bea\label{eq_A3}
				B\ket{l}= \sum_{x=0}^l [B]_{m-l,m-l+x}\ket{l-x}.
				\eea
				As this equation has to hold true for any $m\leq k$ (and  $l\leq m$) we have that $[B]_{m-l,m-l+x}=[B]_{0,x}$, i.e. $B$ is a Toeplitz matrix. 
				It can be easily seen for an upper triangular Toeplitz matrix Eq. (\ref{eq_A}) is fulfilled and therefore this condition is also sufficient.
				
				It remains to characterize the symmetries of the form $A^{\otimes n}$, i.e., to characterize $A$ such that
				\bea \label{eq_S}
				A^{\otimes n} \ket{E_k}= \lambda \ket{E_k}.
				\eea
				We will first show by induction, that for all $m,j \in \{0, 1, \ldots, k\}$, $[A]_{m, j} = 0$ whenever $m > j$, i.e., $A$ is upper triangular. 
				To this end, let us consider the projection on the computational basis state $\ket{m_1m_2}$ of the first two parties  with $m_1+m_2>k$ and use that $A$ is invertible to obtain
				\bea \label{eq_S2}
				\bra{m_1m_2}A^{\otimes 2}\ket{E_k}=0.
				\eea
				Projecting then the remaining parties on a state with $k-l$ excitations ($l\leq k$) results in
				\bea \label{eq_S3}
				\bra{m_1m_2}A^{\otimes 2}\ket{E_l}=\sum_{i_1 + i_2 = l}  [A]_{m_1, i_1} [A]_{m_2, i_2} =0.
				\eea
				
				We then consider this equation for $m_1 = m > \tilde{n}+1$ and $m_2 \in \{k-(\tilde{n}+1), \ldots, k\}$. Note that for these choices of $m_1$ and $m_2$ it indeed holds that $m_1+m_2>k$.
				
				Let us then assume that $[A]_{m, j} = 0$ whenever $m > j$ holds for columns $j$ up to $\tilde{n}$ and show that the statement then also holds for column $\tilde{n}+ 1$. We will show that the induction hypothesis together with assuming $[A]_{m, \tilde{n}+1} \neq 0$ for some $m>\tilde{n}+1$ leads to the fact that $A$ is not invertible, which is a contradiction, and thus shows that $[A]_{m, \tilde{n}+1} = 0$ for any $m>\tilde{n}+1$. 
				
				Considering Eq. (\ref{eq_S3}) with $l = \tilde{n}+1$ and using the induction hypothesis ($[A]_{m, j} = 0$ for $m>j$ and all $j\leq \tilde{n}$) we obtain the single equation $[A]_{m, \tilde{n}+1} [A]_{m_2, 0} = 0$ for all $m_2 \in \{k-(\tilde{n}+1), \ldots, k\}$, which implies $[A]_{m_2, 0} = 0$. Moreover, by considering Eq. (\ref{eq_S3}) for $l \in \{\tilde{n}+2, \ldots, k\}$ one obtains iteratively $[A]_{m_2, 0} = [A]_{m_2, 1} = \ldots = [A]_{m_2,k-(\tilde{n}+1)} =0$. In other words, in the $m_2$th row of $A$, only the last $\tilde{n}+1$ entries are non-vanishing. Note that this statement holds for $\tilde{n}+2$ rows. This is a contradiction, as $A$ has to be invertible, thus we have completed the induction step by showing  $[A]_{m, \tilde{n}+1} = 0$ for $m>\tilde{n}+1$. Note that the induction basis, $\tilde{n} = 0$, can be shown in a similar way.

				Hence, $A$ has to be upper triangular. Projecting then Eq. (\ref{eq_S}) on $\ket{m_1m_2\ldots m_n}$ with $\sum m_i=k$ we obtain that $\prod_i [A]_{m_i,m_i}=\lambda$. Considering also the projection on $\ket{(m_1+1) (m_2-1)\ldots m_n}$ and choosing $m_2=1$ we have that 
				\bea \label{eq_S5}
				\frac{[A]_{m_1+1,m_1+1}}{[A]_{m_1,m_1}}=\frac{[A]_{1,1}}{[A]_{0,0}}\equiv x
				\eea 
				for some $x\in \mathbb{C}$ and $m_1\in\{0,\ldots, k-1\}$. Normalizing $A$ such that $[A]_{0,0}=1$ this implies that $[A]_{m,m}=x^m$. Note that this implies that we can write $A=D\bar{S}$ with $D=\operatorname{diag}(1,x,x^2,\ldots,x^k)$ for some $x$ and $\bar{S}$ being upper triangular and $[\bar{S}]_{m,m}=1$ for all $m\leq k$. Note further that $D^{\otimes n}$ is a symmetry on its own. Therefore, $\bar{S}^{\otimes n}$ is a symmetry too and it hence only remains to characterize $\bar{S}$. 
				That is, we have that 
				\bea \label{eq_S6}
				\bar{S}^{\otimes n} \ket{E_k}= \ket{E_k},
				\eea
				where we further used that $[\bar{S}]_{m,m}=1$ and $\bar{S}$ is upper triangular in order to see that $\lambda=1$.
				We will next use that for any symmetry $\bar{S}^{\otimes n}$, we have that
				\bea\label{eq_SB}
				J\bar{S}=\bar{S}B
				\eea
				for some $B$ which is a symmetry of the form $B\otimes B^{-1}\otimes \identity^{\otimes n-2}$ and $J$ being here the ($k+1$)-dimensional Jordan block with eigenvalue $\lambda$. This is due to the fact that $(\bar{S}^{-1})^{\otimes n}(J\otimes J^{-1}\otimes \identity^{\otimes n-2})\bar{S}^{\otimes n}$ has to be a symmetry and that it is of the form $B\otimes B^{-1}\otimes \identity^{\otimes n-2}$. It can be straightforwardly seen that Eq. (\ref{eq_SB}) corresponds to
				\bea
				\lambda [\bar{S}]_{i,l}+[\bar{S}]_{i+1,l}=\sum_{j=0}^{l-i} [\bar{S}]_{i,l-j}[B]_{0,j}.
				\eea
				Note that $B$ has to have the same eigenvalue as $J_\lambda$, i.e., $[B]_{0,0}=\lambda$ and therefore
				\bea\label{sym_eq}
				[\bar{S}]_{i+1,l}=\sum_{j=1}^{l-i} [\bar{S}]_{i,l-j}[B]_{0,j}.
				\eea
				
				As we will show this constraint implies for an upper triangular $\bar{S}$ that 
				\bea \label{eq_S7}
				\bar{S}^{\otimes n} \ket{E_k}= \sum_{l\leq k} \alpha_l \ket{E_l},
				\eea 
				with $\alpha_l\in \mathbb{C}$. It then only remains to impose the constraints on $\bar{S}$ that lead to $\alpha_l=0$ for $l<k$. We will present a systematic way of doing so. In particular, we will show that for a generic choice of $[B]_{0,j}$ in Eq. (\ref{sym_eq}) there exists a valid symmetry.
				
				In order to first show  Eq. (\ref{eq_S7}) we will use that $J^{+}J^{-}=\identity - \ket{0}\bra{0}$ with $J^+=\sum_{j=0}^{k-1} \ket{j+1}\bra{j}$ and $J^-=\sum_{j=1}^{k} \ket{j-1}\bra{j}$ and that for $0<\sum_{j=1}^n m_j$ with $0\leq m_j\leq k$ there always exists at least one party $j$ for which $m_j\neq 0$ (which we label without loss of generality in the following by party $1$).
				With this and labeling for better readability here the parties on which the operators are acting on by subindices (as well as omitting all identity operators) we have that
				\begin{align}\nonumber
					&\bra{m_1m_2\ldots m_n} \bar{S}^{\otimes n}\ket{E_k}\\\nonumber
					&=\bra{m_1m_2\ldots m_n}(\identity_{(1)}- \ket{0}_1\bra{0})\bar{S}^{\otimes n}\ket{E_k}\\\nonumber
					&=\bra{m_1m_2\ldots m_n}J_{(1)}^+J_{(1)}^-\bar{S}^{\otimes n}\ket{E_k}\\\nonumber
					&=\bra{m_1m_2\ldots m_n}J_{(1)}^+\bar{S}^{\otimes n}B_{(1)}\ket{E_k}\\\nonumber
					&=\bra{m_1m_2\ldots m_n}J_{(1)}^+\bar{S}^{\otimes n}B_{(2)}\ket{E_k}\\\nonumber
					&=\bra{m_1m_2\ldots m_n}J_{(1)}^+\otimes J_{(2)}^-\bar{S}^{\otimes n}\ket{E_k}
					\\
					&=\bra{(m_1-1)(m_2+1)\ldots m_n}\bar{S}^{\otimes n}\ket{E_k},\label{eq_symm2}
				\end{align}
				where we used that $\bar{S}$ fulfills Eq. (\ref{sym_eq}) and therefore $J^-\bar{S}=\bar{S}B$ for $B$ being a (non-invertible) upper triangular Toeplitz matrix and $B_{(1)}\ket{E_k}=B_{(2)}\ket{E_k}$ by the argumentation above. Note that we chose here to increase the excitation number of party $2$ but we could have chosen here equivalently any other party. Note further that if $\sum_{j=1}^n m_j=0$ this implies that $m_j=0$ for all $j$ which corresponds to a single computational basis state. Hence, we have that any projection of Eq. (\ref{eq_S6}) solely depends on $\sum_{j=1}^n m_j$. We further have that due to the fact that $\bar{S}$ is upper triangular any projection on a computational basis state with $\sum_{j=1}^n m_j>k$ is equal to zero. This completes the proof that Eq. (\ref{eq_S7}) holds true.
				
				With this $\bar{S}$ (being an upper triangular matrix with only ones on the diagonal) is characterized by Eq. (\ref{sym_eq}) and $\bra{E_l}\bar{S}^{\otimes n}\ket{E_k}=0$ for $l<k$. We will next show that for $n\geq k-1$ and for nearly any choice of $[B]_{0,j}$ in Eq. (\ref{sym_eq}) there exists a solution for $\bar{S}^{\otimes n}$. In order to do so, we provide a construction that allows to successively solve this set of equations by solving only linear ones. Instead of projecting on $\bra{E_{k-l}}$, we consider the projection on a particular computational basis state, namely, for any fixed excitation number $\sum_{j=1}^n m_j=k-l$ with $k\geq l\geq 1$, the state for which $k-l$ parties hold the state $\ket{1}$ and the remaining $n-k+l$ parties have the state $\ket{0}$. Note that due to Eq. (\ref{eq_symm2}) these lead to the same equation as $\bra{E_{k-l}}\bar{S}^{\otimes n}\ket{E_k}=0$ if  Eq. (\ref{sym_eq}) is fulfilled.
				It can be easily seen that the resulting equation only depends on $[\bar{S}]_{1,m}$ and $[\bar{S}]_{0,m-1}$ with $m\in\{1,\ldots,l+1\}$ (and $[B]_{0,j}$). Using then Eq. (\ref{sym_eq}) one observes that this equation in fact only depends on $[\bar{S}]_{0,m-1}$ with $m\in\{1,\ldots,l+1\}$ (and $[B]_{0,j}$) and that it is linear in $[\bar{S}]_{0,l}$. Hence, solving the equations for $[\bar{S}]_{0,l}$ in increasing order of $l$ (starting with $l=1$) and inserting the result in the equation for $l+1$, one can determine $[\bar{S}]_{0,l}$ for $1\leq l\leq k$. With this all projections on computational basis states with $\sum_{j=1}^n m_j<k$ (and therefore also on $\ket{E_j}$ with $j<k$) are zero and $\bar{S}^{\otimes n}$ is indeed a symmetry.
				Note that in our construction we solved linear equations in $[\bar{S}]_{0,l}$ for $1\leq l\leq k$ (i.e., these equations have in general a solution except for very special values of $[B]_{0,j}$) and that as $\bar{S}$ is upper triangular and $[\bar{S}]_{m,m}=1$ for all $m$ it is always invertible. Hence, for any $[B]_{0,j}$ (up to a measure zero subset) there exists a solution and  $\bar{S}$ depends only on $n, k$ and $[B]_{0,j}$ for $j\in\{1, \ldots, k\}$. Note that if $\{[B]_{0,j}\}$ is such that in the $l$-th step of the protocol the equation is independent of the value of $[\bar{S}]_{0,l}$, then $[\bar{S}]_{0,l}$ can be chosen to be a free parameter and one can proceed to solve the equations iteratively as discussed above.
			\end{proof}
			
			As we have shown in the proof the following algorithm allows to determine the symmetries $\bar{S}^{\otimes n}$:
			\begin{enumerate}
				\item Start with $l=1$.
				\item Consider the projection of $\bar{S}^{\otimes n}\ket{E_k}= \ket{E_k}$ on the computational basis state for which $k-l$ parties hold the state $\ket{1}$ and the remaining $n-k-l$ parties have the state $\ket{0}$. 
				\item If $l\neq 1$ insert the expressions for $[\bar{S}]_{0,m}$, $m\in \{1,\ldots, l-1\}$ into this equation.
				\item Solve the resulting linear equation for $[\bar{S}]_{0,l}$. If the equation vanishes for any value of $[\bar{S}]_{0,l}$, then $[\bar{S}]_{0,l}$ is in the following a free parameter. If the equation does not vanish for any value of $[\bar{S}]_{0,l}$, then there does not exist a symmetry for this choice of $\{[B]_{0,j}\}$ and the algorithm terminates. Otherwise, replace $l\rightarrow l+1$ and repeat steps 2.-4. until $l+1>k$.
			\end{enumerate}

			In order to illustrate how one can easily determine the symmetry $\bar{S}^{\otimes n}$ with our algorithm, let us consider as an example the case $k=2$. From the projection of $\bar{S}^{\otimes n}\ket{E_k}=\ket{E_k}$ on $\ket{100\ldots 0}$ ($\sum_{j=1}^n m_j=k-1$) one obtains the equation 
			\begin{align}\nonumber
				&[\bar{S}]_{1,2}+(n-1)[\bar{S}]_{0,1}\\\label{eq_example}
				&=([B]_{0,2}+[B]_{0,1}[\bar{S}]_{0,1})+(n-1)[\bar{S}]_{0,1}=0,
			\end{align}
			where we used Eq. (\ref{sym_eq}).  Considering first $[B]_{0,1}\neq -(n-1)$ we have that $[\bar{S}]_{0,1}=- [B]_{0,2}/([B]_{0,1}+n-1)$.  We next consider the projection on $\ket{000\ldots 0}$ ($\sum_{j=1}^n m_j=k-2$) which results in the equation 
			\begin{align}\nonumber&n[\bar{S}]_{0,2}+\frac{n(n-1)}{2}([\bar{S}]_{0,1})^2\\
				&=n[\bar{S}]_{0,2}+\frac{n(n-1)([B]_{0,2})^2}{2([B]_{0,1}+n-1)^2}=0
			\end{align}
			and therefore for $[B]_{0,1}\neq -(n-1)$,
			\begin{align}\nonumber
				[\bar{S}]_{0,2}&=-\frac{(n-1)([B]_{0,2})^2}{2([B]_{0,1}+n-1)^2}
			\end{align}
			Note that if $[B]_{0,1}=-(n-1)$ then $\bar{S}$ can only be a symmetry if $[B]_{0,2}=0$ (see Eq. (\ref{eq_example})) and we further have that $[\bar{S}]_{0,1}$ is a free parameter.
			This illustrates how the procedure above allows to determine the symmetries.

			Note that in Theorem \ref{lemma:eksyms}, the symmetries are normalized in such a way that $A^{\otimes n} \ket{E_k} = x^{k} \ket{E_k}$, a fact which will be relevant when considering the symmetries of direct sums of $\ket{E_k}$ in the following Theorem. 
			We now characterize the symmetries of the general non--derogatory case. \\
			
			\noindent {{\bf Theorem \ref{lemma:eksumsyms}}{\bf.}}\textit{
				For $n\geq 3$ the symmetries of states $\bigoplus_{b=1}^K \ket{E_{k_b}}$ are generated by
				\begin{enumerate}
					\item $\bigoplus_{b=1}^K S_{k_b}$, with $S_{k_b}$ a symmetry of the individual $\ket{E_{k_b}}$, i.e. $S_{k_b}\ket{E_{k_b}}=\alpha \ket{E_{k_b}}$ for any $b\in \{1,\ldots K\}$ 
					\item diagonal matrices $D=\otimes_i D(\vec{\gamma}_i)$ with $D(\vec{\gamma}_i)= \bigoplus_{b=1}^K (\vec{\gamma}_i)_b \identity_{k_b}$ for some $\vec{\gamma}_i \in \mathbb{C}^K$ such that $\prod_i (\vec{\gamma}_i)_b=\gamma$ for all $b$.
					\item simultaneous permutations of whole blocks that have equal size for all parties (i.e. permutations of the form $X_\sigma^{\otimes n}$).
				\end{enumerate}
			}
			\begin{proof}
				Let us assume wlog that the contributions $\ket{E_{k_b}}$ are sorted in order of descending sizes, i.e., $k_1 \geq k_2 \geq \ldots \geq k_K$, where we denote the total number of blocks by $K$.
				Let us use the notation $\ket{i^{(b)}}$ for the single particle basis corresponding to block $b$. It can be easily verified that simultaneous permutations of blocks of equal size are indeed symmetries, that are matrices $X_\sigma^{\otimes n}$, where
				\begin{align}
					X_{\sigma} = \sum_{b \in \{1, \ldots, K\}} \sum_{i \in \{0, \ldots, k_{b}\}} \ket{i^{(\sigma(b))}} \bra{i^{(b)}},
				\end{align}
				for $\sigma \in \mathcal{S}_K$, such that $k_{\sigma(b)} = k_b$ for all $b \in \{1, \ldots, K\}$, where  $\mathcal{S}_K$ denotes the symmetric group. We will prove the Lemma by first showing that any symmetries of $\bigoplus_i \ket{E_{k_i}}$ can be brought into a block-diagonal through acting on them by some $X_\sigma^{\otimes n}$ from the left. Then we will show, that each block in the block-diagonal form must individually satisfy the constraints obtained in Theorem \ref{lemma:eksyms}.
				
				Let us first show, that the symmetries can be brought into a block-diagonal form with $K$ blocks of sizes $k_b+1$. Considering projections of  $\bigoplus_b \ket{E_{k_b}}$ onto $\ket{j_1^{(b_{1})}, j_2^{(b_2)}}$, we obtain, similarly as in the proof of Theorem \ref{lemma:eksyms}, the equations
				\begin{align}
					\label{eq:twoparticlegeneralized}
					\forall c \in &\{1, \ldots, K\}  \ \forall l \in \{0, \ldots, k_c\} \nonumber \\
					&\sum_{i_1 + i_2 = l}  [S_1]_{j^{(b_1)}_1, i^{(c)}_1} [S_2]_{j^{(b_2)}_2, i^{(c)}_2} = 0,
				\end{align}
				which symmetries $S_1 \otimes \ldots \otimes S_n$ necessarily have to satisfy for all $j_1\in \{0, \ldots, k_{b_1}\}$, $j_2\in \{0, \ldots, k_{b_2}\}$ in case $b_1 \neq b_2$, and for $j_1 + j_2 > k_b$ in case $b_1 = b_2 = b$. Note that in order to ease the notation, we denote in this proof the part of the symmetry that is acting on party $i$ via a subindex $i$, i.e., $S_i\equiv S^{(i)}$.
				
				Let us now show by induction, that $S_i$ can be brought into a block-diagonal form, as outlined above. Let us first show, that the statement holds for the first $k_1 +1$ columns of the matrices $S_i$, i.e., the columns that constitute the first block, which has size $k_1 + 1$.
				As $S_1$ has to be invertible, at least one element in the first column of $S_1$ must be non-vanishing. Assuming this element has a block-index of $b_1$, Eq. (\ref{eq:twoparticlegeneralized}) for $c=1$ can be used to show that for all other blocks $b \neq b_1$, for all $j_2 \in \{0, \ldots, k_{b}\}$, and for all $i_2 \in \{0, \ldots, k_1\}$ it must hold
				that  $[S_2]_{j_2^{(b)},i_2^{(1)}} = 0$. From this it follows that $S_2$ has $\sum_{b \in \{1, \ldots, K\} \setminus \{{b_1}\}} (k_b+1)$ rows where one can only find non-vanishing elements on the last $\sum_{b \in \{2, \ldots, K\} } (k_b+1)$ columns. Thus, unless $k_{b_1} = k_1$, $S_2$ cannot be invertible, which implies that non-vanishing elements in the first column of $S_1$ can only appear in rows which belong to blocks of the largest possible size, $k_1$. Moreover, in order for $S_2$ to be invertible, it must then also contain a non-vanishing element in the first column, in a row corresponding to the same block, $b_1$. Thus, the same argument can be used to show that in $S_1$, and actually in all $S_i$, the elements $[S_i]_{j_2^{(b)},i_2^{(1)}}$ vanish for all $b \neq b_1$, for all $j_2 \in \{0, \ldots, k_{b}\}$, and for all $i_2 \in \{0, \ldots, k_1\}$. In case the block $b_1$, that contains the non-vanishing entries, is not the first one, we can swap the positions of block $1$ and block $b_1$ by applying an appropriate $X_\sigma^{\otimes n}$ from the left. Thus, we have successfully brought $S_i$ into block-diagonal form for the first $k_1 +1$ columns.
				
				Let us now prove the induction step, i.e., assume that $S_i$ are in block diagonal form for the columns corresponding to the first $m$ blocks, and show that $S_i$ can be brought into block-diagonal form also on columns corresponding to the following block, block $m+1$. Again, some entry on column $1+\sum_{b \in \{1, \ldots, m\} } (k_b+1)$ of $S_1$ must be non-vanishing. Let us assume this element has a block-index of $b_{m+1}$. Again, considering Eq. (\ref{eq:twoparticlegeneralized}) (for $c=m+1$) leads to the fact that for all other blocks $b \neq b_{m+1}$, for all $j_2 \in \{1, \ldots, k_{b}\}$, and for all $i_2 \in \{0, \ldots, k_{m+1}\}$ it must hold
				that  $[S_2]_{j_2^{(b)},i_2^{(m+1)}} = 0$. Similarly as before, it can be shown that this must hold not only for $S_2$, but in fact for all $S_i$. Let us now argue that the block index of the non-vanishing element, $b_{m+1}$, must be one of the largest-sized blocks among the remaining blocks with indices $m+1, \ldots, K$. First, note that the block index $b_{m+1}$ cannot be one of the first blocks $1, \ldots m$, in other words, the non-vanishing element cannot be on the first $\sum_{b \in \{1, \ldots, m\} } (k_b+1)$ rows of $S_1$. The reason for that is that in this case there would be $2(k_{b_{m+1}}+1)$ columns of $S_i$ with non-vanishing elements only on the same $k_{b_{m+1}}+1$ rows, leading to a contradiction as $S_i$ must be invertible. Moreover, similarly as before, the block index $b_{m+1}$ cannot index a block that has a size $k_{b_{m+1}}$ that is smaller than $k_{m+1}$, as we find  $\sum_{b \in \{m+1, \ldots, K\} \setminus \{{b_{m+1}}\}} (k_b+1)$ rows of $S_i$ where one can only find non-vanishing elements on the last $\sum_{b \in \{m+2, \ldots, K\} } (k_b+1)$ columns, which implies that $S_i$ cannot be invertible when $k_{b_{m+1}} < k_{m+1}$. As the block sizes of block $m+1$ and $b_{m+1}$ coincide, we can find an appropriate block permutation operator $X_\sigma^{\otimes n}$ from the left in order to swap the two blocks $m+1$ and $b_{m+1}$. Thus, we have brought $S_i$ into a block-diagonal form for the columns corresponding to the first $m+1$ blocks.
				The following sketch illustrates the induction proof showing that $S_i$ can be brought into block-diagonal form. In the induction proof, we show that the ``gray-shaded areas" in Eq.\;(\ref{eq:prettypencil}) have to vanish. Moreover, it is sketched how a contradiction arises when the next non-vanishing element does not appear in a block that has largest size among the remaining blocks, as then the matrix $S_i$ cannot be invertible,
				\begin{widetext}
					\begin{equation}
					\label{eq:prettypencil}
					\begin{tikzpicture}[baseline=(current  bounding  box.center),
					style1/.style={
						matrix of math nodes,
						every node/.append style={text width=#1,align=center,minimum height=2ex},
						nodes in empty cells,
						left delimiter=(,
						right delimiter=),
					},
					]
					\matrix[style1=0.15cm] (1mat)
					{
						& & & & & & & & & & & & & & & & & & & & & & & & & & & & & & &  \\
						& & & & & & & & & & & & & & & & & & & & & & & & & & & & & & &  \\
						& & & & & & & & & & & & & & & & & & & & & & & & & & & & & & &  \\
						& & & & & & & & & & & & & & & & & & & & & & & & & & & & & & &  \\
						& & & & & & & & & & & & & & & & & & & & & & & & & & & & & & &  \\
						& & & & & & & & & & & & & & & & & & & & & & & & & & & & & & &  \\
						& & & & & & & & & & & & & & & & & & & & & & & & & & & & & & &  \\
						& & & & & & & & & & & & & & & & & & & & & & & & & & & & & & &  \\
						& & & & & & & & & & & & & & & & & & & & & & & & & & & & & & &  \\
						& & & & & & & & & & & & & & & & & & & & & & & & & & & & & & &  \\
						& & & & & & & & & & & & & & & & & & & & & & & & & & & & & & &  \\
						& & & & & & & & & & & & & & & & & & & & & & & & & & & & & & &  \\
						& & & & & & & & & & & & & & & & & & & & & & & & & & & & & & &  \\
						& & & & & & & & & & & & & & & & & & & & & & & & & & & & & & &  \\
						& & & & & & & & & & & & & & & & & & & & & & & & & & & & & & &  \\
						& & & & & & & & & & & & & & & & & & & & & & & & & & & & & & &  \\
						& & & & & & & & & & & & & & & & & & & & & & & & & & & & & & &  \\
						& & & & & & & & & & & & & & & & & & & & & & & & & & & & & & &  \\
						& & & & & & & & & & & & & & & & & & & & & & & & & & & & & & &  \\
						& & & & & & & & & & & & & & & & & & & & & & & & & & & & & & &  \\
						& & & & & & & & & & & & & & & & & & & & & & & & & & & & & & &  \\
						& & & & & & & & & & & & & & & & & & & & & & & & & & & & & & &  \\
						& & & & & & & & & & & & & & & & & & & & & & & & & & & & & & &  \\
						& & & & & & & & & & & & & & & & & & & & & & & & & & & & & & &  \\
						& & & & & & & & & & & & & & & & & & & & & & & & & & & & & & &  \\
					};
					
					\draw[solid]
					(1mat-1-1.north west) -- (1mat-1-4.north east);
					\draw[solid]
					(1mat-4-1.south west) -- (1mat-4-4.south east);
					\draw[solid]
					(1mat-1-1.north west) -- (1mat-4-1.south west);
					\draw[solid]
					(1mat-1-4.north east) -- (1mat-4-4.south east);

					\draw[solid]
					(1mat-5-5.north west) -- (1mat-5-8.north east);
					\draw[solid]
					(1mat-8-5.south west) -- (1mat-8-8.south east);
					\draw[solid]
					(1mat-5-5.north west) -- (1mat-8-5.south west);
					\draw[solid]
					(1mat-5-8.north east) -- (1mat-8-8.south east);

					\draw[solid]
					(1mat-13-9.north west) -- (1mat-13-11.north east);
					\draw[solid]
					(1mat-15-9.south west) -- (1mat-15-11.south east);
					\draw[solid]
					(1mat-13-9.north west) -- (1mat-15-9.south west);
					\draw[solid]
					(1mat-13-11.north east) -- (1mat-15-11.south east);

					\draw[dashed]
					(1mat-1-4.north east) -- (1mat-25-4.south east);
					\draw[dashed]
					(1mat-1-8.north east) -- (1mat-25-8.south east);
					\draw[dashed]
					(1mat-1-12.north east) -- (1mat-25-12.south east);
					\draw[dashed]
					(1mat-1-15.north east) -- (1mat-25-15.south east);

					\draw[dashed]
					(1mat-4-1.south west) -- (1mat-4-32.south east);
					\draw[dashed]
					(1mat-8-1.south west) -- (1mat-8-32.south east);
					\draw[dashed]
					(1mat-12-1.south west) -- (1mat-12-32.south east);
					\draw[dashed]
					(1mat-15-1.south west) -- (1mat-15-32.south east);

					\fill [black ,opacity=0.1] (1mat-5-1.north west) rectangle (1mat-25-4.south east);
					\fill [black ,opacity=0.1] (1mat-1-5.north west) rectangle (1mat-4-8.south east);
					\fill [black ,opacity=0.1] (1mat-9-5.north west) rectangle (1mat-25-8.south east);
					\fill [black ,opacity=0.15] (1mat-1-9.north west) rectangle (1mat-12-12.south east);
					\fill [black ,opacity=0.15] (1mat-16-9.north west) rectangle (1mat-25-12.south east);

					\node[font=\large]
					at ([xshift=0pt,yshift=-4pt]1mat-2-2.east) {block 1};
					\node[font=\large]
					at ([xshift=0pt,yshift=-4pt]1mat-6-6.east) {block 2};

					\draw[stealth-]
					(1mat-14-10.south east) -- (1mat-20-20.south east);
					\node[font=\small]
					at ([xshift=4pt,yshift=-4pt]1mat-21-23.east) {(incorrectly placed) block 3};

					\node[font=\huge]
					at (1mat-17-18) {$\ddots$};

					\draw[decoration={brace,mirror,raise=5pt},decorate]
					(1mat-25-1.south west) --
					node[below=7pt] {$k_{1} + 1$}
					(1mat-25-4.south east);
					\draw[decoration={brace,mirror,raise=5pt},decorate]
					(1mat-25-5.south west) --
					node[below=7pt] {$k_{2} + 1$}
					(1mat-25-8.south east);
					\draw[decoration={brace,mirror,raise=5pt},decorate]
					(1mat-25-9.south west) --
					node[below=7pt] {$k_{3} + 1$}
					(1mat-25-12.south east);
					\draw[decoration={brace,mirror,raise=5pt},decorate]
					(1mat-25-13.south west) --
					node[below=7pt] {$k_{4} + 1$}
					(1mat-25-15.south east);
					
					\node at ([xshift=-30pt,yshift=-1.2pt]1mat.west) {$S_i =$};
					\end{tikzpicture}.
					\end{equation}
				\end{widetext}
				This completes the proof that the symmetries can be brought into a block-diagonal form with $K$ blocks of sizes $k_1+1, \ldots, k_K+1$.
				
				With this the equation  $S_1 \otimes  \ldots \otimes S_n (\bigoplus_{b=1}^K \ket{E_{k_b}}) \propto \bigoplus_{b=1}^K \ket{E_{k_b}}$ now separates into $K$ individual equations $S_1^{[b]} \otimes  \ldots \otimes S_n^{[b]}  \ket{E_{k_b}} = \alpha  \ket{E_{k_b}}$ for some $\alpha \in \mathbb{C}$. Thus, it is clear that the individual blocks have to satisfy the constraints derived in Theorem \ref{lemma:eksyms}. In particular, the diagonal entries of $S_i^{[b]}$ must be proportional to powers of some complex number $x_b$, i.e.,  $[S_i^{[b]}]_{l,l} = \lambda_{i,b} x_b^l$ for all $l \in \{0, \ldots, k_b\}$ for all parties $i$. In order to get the same $\alpha$ for all $b$, and keeping the normalization of $S^{[b]}$ assumed in Theorem \ref{lemma:eksyms}, we need to introduce a factor $\alpha /(x_b^{k_b} \prod_i \lambda_{i,b})$ for block $b$.
				
				Moreover, note, that besides the ``block-global'' scaling factor discussed above, scaling factors of the individual $S_i^{[b]}$ did not play a role in Theorem \ref{lemma:eksyms}, as in the case of a single block these scaling factors are only a global factor, which does not play a role. Here, however, each block $b$ at party $i$ can have a different scaling factor $(\vec{\gamma}_i)_b$, as long as taking the product of this factor over all parties for a fixed $b$ yields the same value for all $b$. Thus additional symmetries somehow resembling the diagonal symmetries of GHZ states arise, which are of the form
				\begin{align}
					D(\vec{\gamma}_1) \otimes \ldots \otimes D(\vec{\gamma}_n),
				\end{align}
				where
				\begin{align}
					D(\vec{\gamma}_i) = \bigoplus_{b=1}^K (\vec{\gamma}_i)_b \identity_{k_b},
				\end{align}
				for some $\vec{\gamma}_i \in \mathbb{C}^K$ for $i \in \{1, \ldots, n-1\}$, and $(\vec{\gamma}_n)_b = \frac{\gamma}{(\vec{\gamma}_1)_b \ldots (\vec{\gamma}_{n-1})_b }$ for all $b \in \{1, \ldots, K\}$ and some $\gamma \in \mathbb{C}$.
				This completes the proof of the lemma.
			\end{proof}

			\section{Proof of Lemma \ref{lemma:commutationlemma}}
			\label{app:commutationlemma}
			
			In this appendix we prove Lemma  \ref{lemma:commutationlemma} from the main text, which we restate here for readability.
			
			\noindent {{\bf Lemma \ref{lemma:commutationlemma}}{\bf.}}\textit{
				Let $k \geq 1$. Given an $(k+1) \times (k+1)$ upper triangular matrix A for which $A_{l,l}$ = $x^l$ for some $x\in \mathbb{C}$, and  an $(k+1) \times (k+1)$ matrix
				\begin{align}
					g =\begin{pmatrix}
						1&   &   &     \\
						& \ddots &   &     \\
						&   &  1  & a   \\
						&   &   & 1   \\
					\end{pmatrix},
				\end{align}
				where $a \in \mathbb{C}$. Then it holds that $A^\dagger g^\dagger g A \propto g^\dagger g$ iff $|x|=1$ and A is of the form
				\begin{align}
					\label{eq:Aformapp}
					A \propto \begin{pmatrix}
						1&   &   & &     \\
						& x  &   &  &   \\
						&  & \ddots &   &     \\
						&  &   &  x^{k-1}  & a  \left( \frac{1}{{x^*}^{k-1}} - x^k  \right) \\
						&  &   &   & x^k   \\
					\end{pmatrix}.
				\end{align}
			}
			
			\begin{proof}
				First, note that $G = g^\dagger g = \identity_{k-1} \oplus \tilde{G}$ with $\tilde{G} =\begin{pmatrix}
				1&  a   \\
				a^* & 1 + |a|^2 \end{pmatrix}$.
				Let us now rewrite  the equation $A^\dagger g^\dagger g A \propto g^\dagger g$ taking the block structure into account, i.e.,
				\begin{widetext}
					\begin{equation}
					\begin{tikzpicture}[baseline=(current  bounding  box.center),
					style1/.style={
						matrix of math nodes,
						every node/.append style={text width=#1,align=center,minimum height=3.0ex},
						nodes in empty cells,
						left delimiter=(,
						right delimiter=),
					},
					]
					\matrix[style1=0.25cm] (1mat)
					{
						& & & &\\
						& & & &\\
						& & & &\\
						& & & &\\
						& & & &\\
					};
					
					\draw[solid]
					(1mat-1-3.north east) -- (1mat-5-3.south east);
					\draw[solid]
					(1mat-3-1.south west) -- (1mat-3-5.south east);
					
					\node[font=\large]
					at (1mat-2-2) {$\tilde{A}_{0,0}^\dagger$};
					\node[font=\huge]
					at (1mat-2-5) {$0$};
					\node[font=\large]
					at (1mat-5-5) {$\tilde{A}_{1,1}^\dagger$};
					\node[font=\large]
					at (1mat-5-2) {$\tilde{A}_{0,1}^\dagger$};
					\end{tikzpicture}
					\begin{tikzpicture}[baseline=(current  bounding  box.center),
					style1/.style={
						matrix of math nodes,
						every node/.append style={text width=#1,align=center,minimum height=3.0ex},
						nodes in empty cells,
						left delimiter=(,
						right delimiter=),
					},
					]
					\matrix[style1=0.25cm] (1mat)
					{
						& & & &\\
						& & & &\\
						& & & &\\
						& & & &\\
						& & & &\\
					};
					\draw[solid]
					(1mat-1-3.north east) -- (1mat-5-3.south east);
					\draw[solid]
					(1mat-3-1.south west) -- (1mat-3-5.south east);
					
					\node[font=\huge]
					at (1mat-2-2) {$\identity$};
					\node[font=\huge]
					at (1mat-2-5) {$0$};
					\node[font=\large]
					at (1mat-5-5) {$\tilde{G}$};
					\node[font=\huge]
					at (1mat-5-2) {$0$};
					\end{tikzpicture}
					\begin{tikzpicture}[baseline=(current  bounding  box.center),
					style1/.style={
						matrix of math nodes,
						every node/.append style={text width=#1,align=center,minimum height=3.0ex},
						nodes in empty cells,
						left delimiter=(,
						right delimiter=),
					},
					]
					\matrix[style1=0.25cm] (1mat)
					{
						& & & &\\
						& & & &\\
						& & & &\\
						& & & &\\
						& & & &\\
					};
					
					\draw[solid]
					(1mat-1-3.north east) -- (1mat-5-3.south east);
					\draw[solid]
					(1mat-3-1.south west) -- (1mat-3-5.south east);
					
					\node[font=\large]
					at (1mat-2-2) {$\tilde{A}_{0,0}$};
					\node[font=\large]
					at (1mat-2-5) {$\tilde{A}_{0,1}$};
					\node[font=\large]
					at (1mat-5-5) {$\tilde{A}_{1,1}$};
					\node[font=\huge]
					at (1mat-5-2) {$0$};
					\end{tikzpicture}
					=
					\begin{tikzpicture}[baseline=(current  bounding  box.center),
					style1/.style={
						matrix of math nodes,
						every node/.append style={text width=#1,align=center,minimum height=3.0ex},
						nodes in empty cells,
						left delimiter=(,
						right delimiter=),
					},
					]
					\matrix[style1=0.25cm] (1mat)
					{
						& & & &&&&\\
						& & & &&&&\\
						& & & &&&&\\
						& & & &&&&\\
						& & & &&&&\\
					};
					
					\draw[solid]
					(1mat-1-4.north east) -- (1mat-5-4.south east);
					\draw[solid]
					(1mat-3-1.south west) -- (1mat-3-8.south east);
					
					\node[font=\large]
					at (1mat-2-2) {$\tilde{A}_{0,0}^\dagger \tilde{A}_{0,0}$};
					\node[font=\large]
					at (1mat-2-7) {$\tilde{A}_{0,0}^\dagger\tilde{A}_{0,1}$};
					\node[font=\large]
					at (1mat-5-7) {$\tilde{A}_{1,1}^\dagger \tilde{G} \tilde{A}_{1,1}$};
					\node[font=\large]
					at (1mat-5-2) {$\tilde{A}_{0,1}^\dagger\tilde{A}_{0,0}$};
					\end{tikzpicture},
					\end{equation} as
					\begin{equation}
					\begin{tikzpicture}[baseline=(current  bounding  box.center),
					style1/.style={
						matrix of math nodes,
						every node/.append style={text width=#1,align=center,minimum height=3.0ex},
						nodes in empty cells,
						left delimiter=(,
						right delimiter=),
					},
					]
					\matrix[style1=0.25cm] (1mat)
					{
						& & & &&&&\\
						& & & &&&&\\
						& & & &&&&\\
						& & & &&&&\\
						& & & &&&&\\
					};
					
					\draw[solid]
					(1mat-1-4.north east) -- (1mat-5-4.south east);
					\draw[solid]
					(1mat-3-1.south west) -- (1mat-3-8.south east);
					
					\node[font=\large]
					at (1mat-2-2) {$\tilde{A}_{0,0}^\dagger \tilde{A}_{0,0}$};
					\node[font=\large]
					at (1mat-2-7) {$\tilde{A}_{0,0}^\dagger\tilde{A}_{0,1}$};
					\node[font=\large]
					at (1mat-5-7) {$\tilde{A}_{1,1}^\dagger \tilde{G} \tilde{A}_{1,1}$};
					\node[font=\large]
					at (1mat-5-2) {$\tilde{A}_{0,1}^\dagger\tilde{A}_{0,0}$};
					\end{tikzpicture}
					\propto
					\begin{tikzpicture}[baseline=(current  bounding  box.center),
					style1/.style={
						matrix of math nodes,
						every node/.append style={text width=#1,align=center,minimum height=3.0ex},
						nodes in empty cells,
						left delimiter=(,
						right delimiter=),
					},
					]
					\matrix[style1=0.25cm] (1mat)
					{
						& & & &\\
						& & & &\\
						& & & &\\
						& & & &\\
						& & & &\\
					};
					\draw[solid]
					(1mat-1-3.north east) -- (1mat-5-3.south east);
					\draw[solid]
					(1mat-3-1.south west) -- (1mat-3-5.south east);
					
					\node[font=\huge]
					at (1mat-2-2) {$\identity$};
					\node[font=\huge]
					at (1mat-2-5) {$0$};
					\node[font=\large]
					at (1mat-5-5) {$\tilde{G}$};
					\node[font=\huge]
					at (1mat-5-2) {$0$};
					\end{tikzpicture} \text{, where } A =
					\begin{tikzpicture}[baseline=(current  bounding  box.center),
					style1/.style={
						matrix of math nodes,
						every node/.append style={text width=#1,align=center,minimum height=3.0ex},
						nodes in empty cells,
						left delimiter=(,
						right delimiter=),
					},
					]
					\matrix[style1=0.25cm] (1mat)
					{
						& & & &\\
						& & & &\\
						& & & &\\
						& & & &\\
						& & & &\\
					};
					
					\draw[solid]
					(1mat-1-3.north east) -- (1mat-5-3.south east);
					\draw[solid]
					(1mat-3-1.south west) -- (1mat-3-5.south east);
					
					\node[font=\large]
					at (1mat-2-2) {$\tilde{A}_{0,0}$};
					\node[font=\large]
					at (1mat-2-5) {$\tilde{A}_{0,1}$};
					\node[font=\large]
					at (1mat-5-5) {$\tilde{A}_{1,1}$};
					\node[font=\huge]
					at (1mat-5-2) {$0$};
					\end{tikzpicture}.
					\end{equation}
				\end{widetext}
				From the upper left part of Eq.\;(\theequation) we get that $\tilde{A}_{00}$, which is an upper triangular matrix, must be proportional to a unitary matrix. This implies that $\tilde{A}_{00}$ must be diagonal. From the lower right part we obtain
				$\tilde{A}_{1,1} \propto \begin{pmatrix}
				x^{k-1}&   a  \left( \frac{1}{{x^*}^{k-1}} - x^k  \right)   \\
				0& x^k
				\end{pmatrix}$ and $|x|=1$.
				Moreover, as $\tilde{A}_{0,0}$ is invertible, the upper right part of Eq. (\theequation) implies $\tilde{A}_{0,1} = 0$. This proves the ``only if'' part of the Lemma. However, all steps in the proof are in fact equivalences, which proves the lemma.
			\end{proof}

			\section{Weak isolation in non--derogatory ES classes} 
			
			\label{App:proof_iso_NDES}
			
			We prove here Theorem \ref{theo:sumsisolation}, which we restate here to increase readability. \\
			
			\noindent \textit{{\bf Theorem \ref{theo:sumsisolation}.}
				In the SLOCC classes represented by $\bigoplus_{b=1}^K \ket{E_{k_b}}$ with $K\geq 2$ and $k_b\neq 0$ for at least one $b\in\{1,\ldots, K\}$, there are weakly isolated states present. If further $k_b\neq 2$ for all $b\in\{1,\ldots, K\}$, then there exist symmetric weakly isolated states.}

			\begin{proof}
				Similarly as in the proof of  Theorem \ref{theo:isolation}, we will prove the theorem by constructing a family of states $g_1 \otimes \ldots \otimes g_n \left(\bigoplus_{b=1}^K \ket{E_{k_b}}\right)$ with the property that for all symmetries  $S_1 \otimes S_2 \otimes \ldots \otimes S_n$ of the state $\bigoplus_{b=1}^K \ket{E_{k_b}}$, it holds that either $S_i^\dagger g_i^\dagger g_i S_i \propto g_i^\dagger g_i$ is true for no more than $n-2$ parties, or $S\propto\identity^{\otimes n}$. This implies that the states $g_1 \otimes \ldots \otimes g_n \left(\bigoplus_{b=1}^K \ket{E_{k_b}}\right)$  are neither reachable via $LOCC_{\mathbb{N}}$, nor convertible via $LOCC_1$.
				Note that in order to ease the notation, we denote in this proof the part of the symmetry that is acting on party $i$ via a subindex $i$, i.e., $S_i\equiv S^{(i)}$.
				
				States of the form $\bigoplus_{b=1}^K \ket{E_{k_b}}$ for which the block sizes and their number of occurrences coincides and which differ therefore only by the order of the blocks are SLOCC equivalent and we choose as representative in the following the state in which the blocks are ordered with increasing size, i.e., $k_b\geq k_{b'}$ for $b> b'$.
				We will construct $G_i$ based on direct sums of $G_i^{[b]}$ from the proof of Theorem  \ref{theo:isolation} (for $k_b\geq 2$) and Observation \ref{obs_GHZd} (for more than two blocks with $k_b=0$) respectively, but with some additional non-vanishing elements, that appear at the junction of two blocks.  As will be clear later on, these off-diagonal elements ensure that $G_i^{[b]}$ cannot quasi-commute with the symmetries in the case of qubit-GHZ and W blocks and will impose additional constraints on the diagonal gate in Theorem \ref{lemma:eksumsyms}. More precisely, let us construct
				\begin{align}
					\label{eq:gconstruction}
					G_i = \bigoplus_{b=1}^K G_i^{[b]} + \sum_{b=2}^{K} (c_b \ket{0^{(b-1)}}\bra{0^{(b)}}+ h.c.),
				\end{align}
				where for $b \in \{2, \ldots, K\}$, $c_b \in \mathbb{C}$ and $c_b \neq 0$ and small enough such that $G_i>0$. Moreover, for any $b \in \{1, \ldots, K\}$ for which $k_b\geq 2$ we choose  $g_i^{[b]}=\sqrt{G_i^{[b]}}$ to be of the form given in Eq. (\ref{eq_G}) with $a$ being replaced by a free parameter $a_i$, which is different for different blocks, we will thus denote it by $a_{i,b}$. Let us further choose $a_{i,b}\neq 0$ with pairwise different absolute value for all $b$ for which $k_b\geq 2$ and in case $k_b = 2$, moreover $a_{i,b}$ should be pairwise different for all $i \in \{1, \ldots, n\}$. For any $b \in \{1, \ldots, K\}$ for which $k_b=1$ we choose 
				\begin{align}\label{eq_GW}
					G_i^{[b]} = \begin{pmatrix}
						1& p \\
						p & d_b     \\
					\end{pmatrix},
				\end{align}
				with $G_i^{[b]}>0$ and $d_b\neq d_{b'}$ for $b\neq b'$. For any $b \in \{1, \ldots, K\}$ for which $k_b=0$ we choose $G_i^{[b]}=1$. Moreover, for blocks $b$ with $k_b=0$ we choose the corresponding coefficients  $c_b$  such that all absolute values are pairwise different (as in Observation \ref{obs_GHZd}), i.e. $|c_b|\neq |c_{b'}|$ for $b, b'$ such that $k_b=k_{b'}=0$.
				
				Let us now, based on study of symmetries of $\bigoplus_b \ket{E_{k_b}}$ in Theorem \ref{lemma:eksumsyms}, show that the condition $S_i^\dagger g_i^\dagger g_i S_i \propto g_i^\dagger g_i$ for $i \in \{1, \ldots, n-1\}$ implies that $S \propto \identity$. Recall that the symmetries are generated by the symmetries of single blocks, certain scalings of the individual blocks and permutations of blocks of equal size. More precisely, any symmetry can be written as
				\begin{align}
					\label{eq:symdecomp}
					S= \left( \bigoplus_{b=1}^K S^{[b]} \right)   \left( D(\vec{\gamma}_1) \otimes \ldots \otimes D(\vec{\gamma}_n) \right) \left( X_\sigma^{\otimes n} \right)
				\end{align}
				for some $\sigma$ permuting blocks of equal size, $\vec{\gamma}_1, \ldots, \vec{\gamma}_{n-1} \in \mathbb{C}^K$ \cite{footnoteAppD}, and symmetries  $S^{[b]}$ of $\ket{E_{k_b}}$.
				
				In the following we will consider the diagonal and off-diagonal blocks of the quasi-commutation relation, i.e. we partition our matrices into blocks,  $M=(M_{b_1b_2})_{b_1b_2}$, where the block structure is determined by the block structure of $\bigoplus_{b=1}^K \ket{E_{k_b}}$. With this we obtain that the quasi-commutation relation is fulfilled only if 
				\begin{align}\label{eq_blockdiag}
					\left| \left(\vec{\gamma}_i\right)_b  \right|^2  \left(S_i^{[b]}\right)^\dagger G_i^{[b]}  S_i^{[b]}   \propto  G_i^{[\sigma(b)]}
				\end{align}
				and
				\begin{align}\label{eq_blockoffdiag}
					&c_b \left(S_i^{[b-1]}\right)^\dagger\ket{0^{(b-1)}}\bra{0^{(b)}}S_i^{[b]}\left(\vec{\gamma}_i\right)_{b-1}^\ast\left(\vec{\gamma}_i\right)_b\propto \nonumber\\ & \qquad c_{\sigma(b)}\ket{0^{(b-1)}}\bra{0^{(b)}}
				\end{align}
				and the permutation $X_\sigma$ is such that it maps off-diagonal blocks $(G_i)_{b_1b_2}$ that only contain zeros solely to blocks $(G_i)_{b_1'b_2'}$ that are the zero matrix (of same size as $(G_i)_{b_1b_2}$).

				Let us first note that if a block $b$ with $k_b=1$  occurs, we obtain from Eq. (\ref{eq_blockoffdiag}) that $S_i^{[b]}$, which in general can be an arbitrary upper triangular matrix, has to be diagonal in order for the quasi-commutation relation to possibly hold true. Using then that $G_i^{[b]}$ is of the form in  Eq. (\ref{eq_GW}) it is straightforward to see that from Eq. (\ref{eq_blockdiag}) it follows that $S_i^{[b]}$ has to be proportional to the identity and that there cannot be any non-trivial permutation among blocks with $k_b=1$, i.e. $\sigma(b)=b$ for blocks with $k_b=1$.

				Considering Eq. (\ref{eq_blockdiag}) for blocks with $k_b\geq 2$ we observe that these equations are of similar form in Lemma \ref{lemma:commutationlemma}. Here, in contrast to there, the operator $\tilde{G}_i^{[b]}$ on the left-hand side may differ from the one on the right-hand side of Eq. (\ref{eq_blockdiag}).
				Nevertheless, similar techniques as in Lemma \ref{lemma:commutationlemma} can be used to see that $S_i^{[b]}$ must have the form given in Eq. (\ref{eq:Aform}) with $|x_b|=1$, with the only difference that the $(k_b-1,k_b)$ matrix element of $S_i^{[b]}$ now reads $[S_i^{[b])}]_{k_b-1,k_b} =\frac{a_{i,\sigma(b)}}{(x_b^*)^{k-1}} - a_{i,b} (x_b)^k$. The same techniques as in the proof of Theorem \ref{theo:isolation} can be used to prove that the $S_i^{[b]}$ must be diagonal. Whereas there, this condition implied, and in fact was satisfied by, the choice $x=1$, here we obtain the conditions
				\begin{align}
					a_{i,\sigma(b)} - a_{i,b} {x_b} = 0
				\end{align}
				for all $i \in \{1, \ldots, n\}$ and $b \in \{1, \ldots, K\}$. As for all $b$, $a_{i,b}$ have pairwise different absolute value, Eq. (\theequation) cannot be satisfied unless $\sigma(b) = b$ for all $b$, i.e., the symmetry $S$ in Eq. (\ref{eq:symdecomp}) can only contain trivial block permutations for blocks with $k_b\geq 2$. Moreover, it follows that $x_b=1$ for all $b$, which directly implies $S^{[b]} \propto \identity$ in Eq. (\ref{eq:symdecomp}) for $k_b\geq 2$. 
				
				If there are more than two blocks with $k_b=0$, one obtains analogously to the argumentation in Observation \ref{obs_GHZd} that there can be no permutations of blocks with $k_b=0$ (and that for b with $k_b=0$, $(\vec{\gamma}_i)_b$ is independent of b).
				
				Let us next consider the case that there are exactly two blocks with $k_b=0$ and at least one further block with $k_b>0$, which we label without loss of generality in the following by $b=1, 2$ (for $k_b=0$) and $b=3$ (for a block with $k_b>0$). 
				As we will show there cannot be any non-trivial permutation of the two blocks with $k_b=0$. Note first that as we have already shown for our choice of $G_i$ no non-trivial permutations among blocks with $k_b>0$ are possible (even if we are in the case that there are several of the same size). Hence, it is only possible to permute the two blocks with $k_b=0$. However, this would map $c_{3} \ket{0^{(2)}}\bra{0^{(3)}}$ to $c_{3} \ket{0^{(1)}}\bra{0^{(3)}}$. As $(G_i)_{13}$ is the zero matrix and $c_3\neq 0$ no such permutation is possible. 
				
				Hence, for any block structure we have reduced the form of $S$ to $S = D(\vec{\gamma}_1) \otimes \ldots \otimes D(\vec{\gamma}_n)$. It remains to be shown that
				\begin{align}
					D(\vec{\gamma}_i)^\dagger  G_i D(\vec{\gamma}_i) \propto  G_i
				\end{align}
				for $i \in \{1, \ldots, n-1\}$ implies that $S \propto \identity$. From the diagonal of Eq. (\theequation) we still get some restriction, namely $|(\vec{\gamma}_i)_b|$ is independent of b for all $i \in \{1, \ldots, n-1\}$. Considering now the matrix elements  $\ket{0^{(b-1)}}\bra{0^{(b)}}$ of Eq. (\theequation) for all $b \in \{2, \ldots, K\}$, we finally get $(\vec{\gamma}_i)_{b-1} = (\vec{\gamma}_i)_b$ for all $b \in \{2, \ldots, K\}$ and for all $i \in \{1, \ldots, n-1\}$, which implies $S \propto \identity$. Similarly as in Theorem \ref{theo:isolation} and Observation \ref{obs_GHZd}, the states considered here constitute examples of symmetric states that are weakly isolated, as long as there is no $\ket{E_2}$ contribution. This proves the theorem.
			\end{proof}
			
			Let us remark, that in the construction of isolated states in Theorem \ref{theo:sumsisolation}, it seems that populating some elements in $g_i$ that lie outside the block-diagonal structure is necessary, as otherwise unitary $D(\vec{\gamma}_i)$-symmetries remain and presumably, transformations become possible.

			\section{Symmetries of $\ket{\psi_{\text{derogatory}}}$}
			\label{sec:symmetries}
			
			In this appendix, we characterize the symmetries of the state 
			\begin{align}
				\ket{\psi_{\text{derogatroy}}} = \ket{E_0^{5,0}} + \ket{E_0^{3,2}} + \ket{E_0^{2,3}}  + \ket{E_1^{0,5}}, \nonumber
			\end{align}
			introduced in Section \ref{sec:derog} within the main text.
			
			Let us first recall the individual contributions,
			\begin{align}
				\ket{E_0^{5,0}} &= \ket{0 0 0 0 0} \\
				\ket{E_0^{3,2}} &= \ket{0 0 0 1 1} + \ket{0 0 1 0 1} + \ket{0 1 0 0 1} + \ket{1 0 0 0 1} \nonumber \\
				& \quad + \ket{0 0 1 1 0} + \ket{0 1 0 1 0} + \ket{1 0 0 1 0} + \ket{0 1 1 0 0} \nonumber \\
				& \quad  + \ket{1 0 1 0 0} + \ket{1 1 0 0 0} \\
				\ket{E_0^{2,3}} &= \ket{0 0 1 1 1} + \ket{0 1 0 1 1} + \ket{0 1 1 0 1} + \ket{0 1 1 1 0} \nonumber \\
				& \quad + \ket{1 0 0 1 1} + \ket{1 0 1 0 1} + \ket{1 0 1 1 0} + \ket{1 1 0 0 1} \nonumber \\
				& \quad  + \ket{1 1 0 1 0} + \ket{1 1 1 0 0} \\
				\ket{E_1^{0,5}} &= \ket{1 1 1 1 2} + \ket{1 1 1 2 1} + \ket{1 1 2 1 1} + \ket{1 2 1 1 1} \nonumber \\
				& \quad + \ket{2 1 1 1 1}.
			\end{align}
			
			Let us now characterize the symmetries of $\ket{\psi_{\text{derogatory}}}$, i.e., the invertible matrices $S^{(i)}$ satisfying 
			\begin{align}
				\label{eq:sym}
				\bigotimes_{i=1}^5 S^{(i)} \ket{\psi_{\text{derogatory}}} = \ket{\psi_{\text{derogatory}}}.
			\end{align}
			
			Let us first derive some necessary conditions for $S^{(i)}$. To this end, note that $_{i,j}\braket{22}{\psi_{\text{derogatory}}} = 0$ for any $i, j$. Hence, also $_{i,j}\bra{22}S^{(i)} \otimes S^{(j)} \ket{\psi_{\text{derogatory}}} = 0$. From this, one obtains the equations
			\begin{align}
				[S^{(i)}]_{2,1} [S^{(j)}]_{2,1} &= 0, \\
				[S^{(i)}]_{2,0} [S^{(j)}]_{2,0}  + [S^{(i)}]_{2,1} [S^{(j)}]_{2,1} &= 0,  \\
				[S^{(i)}]_{2,0} [S^{(j)}]_{2,0}  + [S^{(i)}]_{2,2} [S^{(j)}]_{2,1}  + [S^{(i)}]_{2,1} [S^{(j)}]_{2,2} &= 0 
			\end{align}
			for all $i, j$. Using that all $S^{(i)}$ are invertible one obtains $[S^{(i)}]_{2,1} = 0$ for all $i \in \{1, \ldots, 5\}$ and wlog $[S^{(i)}]_{2,0} = 0$ for all $i \in \{1, \ldots, 4\}$. Let us moreover wlog normalize $S^{(1)} \ldots, S^{(4)}$ such that $[S^{(i)}]_{2,2} = 1$ for all $i \in \{1, \ldots, 4\}$.
			Reinserting these $S^{(i)}$ into Eq. (\ref{eq:sym}), it is now straightforward to obtain
			$S^{(i)} = \begin{pmatrix} 1 & 0 & 0\\ 0 & 1 & a_i \\ 0 & 0 & 1 \end{pmatrix}$, where $a_1, \ldots, a_4 \in \mathbb{C}$ are arbitrary and $a_5 = -(a_1 + a_2 + a_3 + a_4)$. Thus, all symmetries are generated by  $B_{(i)} \otimes B^{-1}_{(j)}$, where   $B = \begin{pmatrix} 1 & 0 & 0\\ 0 & 1 & 1 \\ 0 & 0 & 1 \end{pmatrix}$ (and analytic functions thereof).

			\section{Symmetries of symmetric states \label{app:generators}}
			For completeness we show here that any symmetry of a symmetric state is generated by $A^{\otimes n}$ and $B \otimes B^{-1}$-type symmetries. This result straightforwardly follows from  \cite{MiRo13}.
			\begin{lemma}
				The symmetry group of a symmetric state $\ket{\psi}$ is generated by symmetries of the form $A^{\otimes n}$ and $B_{(i)} \otimes B_{(j)}^{-1} \otimes \identity^{\otimes (n-2)}_{\{1,\ldots, n\} \setminus \{i,j\}}$.
			\end{lemma}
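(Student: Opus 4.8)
The goal is to show that every symmetry $S=S^{(1)}\otimes\cdots\otimes S^{(n)}$ of a symmetric (fully entangled) state $\ket{\psi}$ lies in the subgroup generated by symmetries of the forms $A^{\otimes n}$ and $B_{(i)}\otimes B_{(j)}^{-1}\otimes\identity^{\otimes(n-2)}$. The plan is to first extract enough $B\otimes B^{-1}$-type symmetries directly from $S$, and then to multiply $S$ by such symmetries until all of its local factors coincide; the remaining symmetry is then of the form $A^{\otimes n}$, and being a product of symmetries it is itself a symmetry, which proves the claim.

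The elementary observation that drives everything is that permutation invariance turns $S$ into a source of $B\otimes B^{-1}$-type symmetries. Since $P_\pi\ket{\psi}=\ket{\psi}$ for every permutation $\pi$, the operator $P_\pi S P_\pi^{-1}$ is again a local symmetry, with factor $S^{(\pi^{-1}(k))}$ on site $k$. Applying this to a transposition $(i,j)$, the product $S\,(P_{(i,j)}S P_{(i,j)})^{-1}$ is a symmetry of exactly the form $B_{(i)}\otimes B_{(j)}^{-1}\otimes\identity^{\otimes(n-2)}$ with $B=S^{(i)}(S^{(j)})^{-1}$. Combining this with the fact recalled in Section \ref{subsec:Prelim_ESandNES} (from \cite{MiRo13}) that $f(B)_{(i)}\otimes f(B)_{(j)}^{-1}$ is a symmetry whenever $B_{(i)}\otimes B_{(j)}^{-1}$ is, for any $f$ analytic on a neighbourhood of the (finite, nonzero) spectrum of $B$, one may take $f$ a suitable branch of the square root; then $R:=B^{1/2}_{(i)}\otimes B^{-1/2}_{(j)}\otimes\identity^{\otimes(n-2)}$ is a symmetry, and $R^{-1}S$ is a symmetry whose factors on sites $i$ and $j$ have both become $(S^{(i)}(S^{(j)})^{-1})^{-1/2}S^{(i)}$, all other factors being unchanged. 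Thus a single $B\otimes B^{-1}$-move equalises two chosen factors while leaving the rest intact.

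What remains is to iterate this averaging so that \emph{all} local factors become equal. This is the only delicate step, because each move acts on two sites at once and naive sequential pairwise averaging need not terminate in finitely many steps. Here one invokes the explicit description of the stabilizer of symmetric states in \cite{MiRo13} (its Theorems~4, 5 and 7, together with Theorem~1): bringing the Jordan form of the relevant $B$ into play and writing $\ket{\psi}$ in the associated block decomposition, one reads off that, modulo the normal subgroup generated by the $B\otimes B^{-1}$-type symmetries, every symmetry acts uniformly, i.e.\ as a single $A^{\otimes n}$. (Equivalently, the last step can be phrased through the result of \cite{MaKr10,MiRo13} that SLOCC-equivalent symmetric states are related by an operation $C^{\otimes n}$, applied to $\ket{\psi}$ and $(A^{-1})^{\otimes n}\ket{\psi}$ with $A=S^{(1)}$, which yields an honest $A^{\otimes n}$-type symmetry.) The main obstacle is therefore not the appearance of the $B\otimes B^{-1}$ generators — immediate from permutation invariance — but confirming that after factoring them out the residual symmetry is genuinely of the uniform form $A^{\otimes n}$; this is exactly where the structural analysis of \cite{MiRo13} enters, and along the way one must keep track of the full-entanglement hypothesis (which excludes nontrivial one-site symmetries, as $C_{(1)}\otimes\identity^{\otimes(n-1)}\ket{\psi}=\ket{\psi}$ forces $C=\identity$) and of the choice of branch in the square root.
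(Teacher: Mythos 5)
Your first step is sound and matches the paper's: conjugating $S$ by the transposition $P_{i,j}$ and multiplying shows that $B_{(i)}\otimes B_{(j)}^{-1}$ with $B=S^{(i)}(S^{(j)})^{-1}$ (equivalently, with $B=(S^{(1)})^{-1}S^{(i)}$) is always a symmetry, so the $B\otimes B^{-1}$ generators come essentially for free from permutation invariance. The gap is in the second half. As you yourself note, the pairwise square-root averaging does not terminate: equalising sites $i$ and $j$ changes both factors, so after treating the pair $(1,2)$ and then $(2,3)$ the first two factors disagree again unless the relevant operators commute; the procedure converges at best in a limit and produces no finite word in the generators. Your fallback --- that one ``reads off'' from Theorems 1, 4, 5 and 7 of \cite{MiRo13} that modulo the $B\otimes B^{-1}$ subgroup every symmetry acts uniformly --- is precisely the statement to be proven, so as written the argument is circular; and the parenthetical variant via the $C^{\otimes n}$ SLOCC-equivalence result only tells you that $\ket{\psi}$ and $(A^{-1})^{\otimes n}\ket{\psi}$ are related by some uniform operation (trivially true with $C=A^{-1}$) and does not by itself yield the required factorisation of $S$.

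The missing idea is an $n$-th root applied once, rather than square roots applied repeatedly. Writing $S=\tilde A^{\otimes n}(\identity\otimes\tilde S^{(2)}\otimes\cdots\otimes\tilde S^{(n)})$ with $\tilde A=S^{(1)}$ and $\tilde S^{(i)}=(S^{(1)})^{-1}S^{(i)}$, one first uses the symmetries $\tilde S^{(i)}_{(i)}\otimes(\tilde S^{(i)})^{-1}_{(n)}$ to sweep all the non-uniform factors onto the last site, reducing $S$ modulo the $B\otimes B^{-1}$ subgroup to $\tilde A^{\otimes n}(\identity^{\otimes(n-1)}\otimes Q)$ with $Q=\tilde S^{(n)}\tilde S^{(n-1)}\cdots\tilde S^{(2)}$. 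The key input (Theorem~2 of \cite{MiRo13}) is that $Q$ admits an $n$-th root $P$ for which $\identity^{\otimes(n-1)}\otimes P\ket{\psi}$ is again symmetric, whence $P_{(i)}\otimes P^{-1}_{(j)}$ are symmetries; multiplying by $P^{\otimes(n-1)}\otimes P^{-(n-1)}$, itself a product of such generators, leaves the uniform symmetry $(\tilde A P)^{\otimes n}$. This single root extraction on the accumulated product is what replaces your non-terminating pairwise averaging and is the step your proposal lacks.
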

			\begin{proof}
				The proof of this lemma follows very much the techniques introduced in \cite{MiRo13}, in particular on the construction in Section IIA therein. Let us assume $S= S^{(1)}\otimes S^{(2)}\otimes \ldots \otimes S^{(n)}$ is a symmetry of $\ket{\psi}$. We will prove the lemma by showing that $S$ can be written as a product of the constituents listed in the lemma. It holds that $S^{(1)} \otimes S^{(2)} \otimes \ldots \otimes S^{(n)}= \tilde{A}^{\otimes n} (\identity \otimes \tilde{S}^{(2)} \otimes \ldots \otimes \tilde{S}^{(n)})$
				where $\tilde{A}= S^{(1)}$ and $\tilde{S}^{(i)} = (S^{(1)})^{-1}S^{(i)}$ for $i \in \{2, \ldots, n\}$. As $\tilde{A}^{\otimes n}$ is symmetric, we have $\identity \otimes \tilde{S}^{(2)} \otimes \tilde{S}^{(3)} \otimes \ldots \otimes \tilde{S}^{(n)} \ket{\psi} = \tilde{S}^{(2)}  \otimes \identity \otimes \tilde{S}^{(3)} \otimes \ldots \otimes \tilde{S}^{(n)} \ket{\psi}$, from which it follows that $\tilde{S}^{(2)} \otimes \identity^{\otimes (n-1)} \ket{\psi}$ is a symmetric state and, moreover, $\tilde{S}^{(i)} \otimes \identity^{\otimes (n-1)} \ket{\psi}$ is symmetric for all $i \in \{2, \ldots, n\}$. Using Lemma 1 in  \cite{MiRo13}, we thus have that  $\tilde{S}^{(i)} \otimes (\tilde{S}^{(i)})^{-1}$ is a symmetry of $\ket{\psi}$. Let us now reduce $S$ by acting on it from the right with $\identity \otimes (\tilde{S}^{(2)})^{-1} \otimes \ldots \otimes  (\tilde{S}^{(n-1)})^{-1} \otimes \tilde{S}^{(n-1)} \ldots \tilde{S}^{(2)}$ to the form $\tilde{A}^{\otimes n} (\identity^{\otimes (n-1)} \otimes \tilde{S}^{(n)} \tilde{S}^{(n-1)} \ldots \tilde{S}^{(2)})$. Now we utilize Theorem 2 introduced in \cite{MiRo13}, by which $\tilde{S}^{(n)} \tilde{S}^{(n-1)} \ldots \tilde{S}^{(2)}$ can be rewritten as $P^n$, where $P$ is a {$n$th} root of $\tilde{S}^{(n)} \tilde{S}^{(n-1)} \ldots \tilde{S}^{(2)}$, such that $\identity^{\otimes (n-1)} \otimes P \ket{\psi}$ is symmetric. It follows from Lemma 1 in \cite{MiRo13} that therefore $P_{(i)}\otimes P_{(j)}^{-1}$ are symmetries of $\ket{\psi}$. We can thus further reduce $S$ by acting on it with $P^{\otimes (n-1)} \otimes P^{-n+1}$ from the right to the form $(\tilde{A} P)^{\otimes n}$. We have thus reduced an arbitrary symmetry $S^{(1)} \otimes S^{(2)} \otimes \ldots \otimes S^{(n)}$ by acting on it with symmetries of the form $B_{(i)} \otimes B_{(j)}^{-1}$ from the right to the form  $A^{\otimes n}$, which proves the lemma.
			\end{proof}
			
			\section{4-qutrit derogatory ES SLOCC classes}
			This appendix is meant to provide technical details that are omitted in Sec. \ref{subsec:4qutritDerog}. We first show explicit constructions of $A^{\otimes4}$ that allow the 5 SLOCC representatives in Eq.\;(\ref{eq:4qutritType2Reps}) to reach all states $\ket{\Psi_2}$ in Eq.\;(\ref{eq:psi2derog}) with $b_5\neq0$. We need not consider the states with $b_5=0$ as they are 4-qubit, and therefore must be non-derogatory \cite{footnotefullrank}. We then fill in the details for characterizing all the local symmetries of each representative.
			
			\subsection{All 4-qutrit states $\ket{\Psi_2}$ can be reached via SLOCC}\label{subsec:4qutritsAx4}
			
			For the representatives $\ket{\psi}\in\{\ket{F^4_1},\;\ket{S^4_2}+\ket{F^4_1},$ $\ket{1^4}+\ket{F^4_1}, \;\ket{1^4}+\ket{S^4_2}+\ket{F^4_1},\;\ket{S^4_3}+\ket{F^4_1}\}$, we construct invertible matrices $A$ such that
			\begin{align}
				A^{\otimes4}\ket{\psi} = b_0|0^4\rangle + \sum_{k=1}^{3} b_k|S^4_k\rangle + b_4|1^4\rangle + b_{5}|F^4_1\rangle \nonumber
			\end{align}
			for any $b_i\in\mathbb{C}$ for $i=0,...,5$ and $b_5\neq0$. We define
			\begin{align}
				A_1=\begin{pmatrix} 1 & 0 & \frac{b_0}{2}\\ 0 & x & b_1 \\ 0 & 0 & b_5 \end{pmatrix} \text{ and } A_2=\begin{pmatrix} a & d & g\\ 0 & e & h \\ 0 & 0 & \frac{b_5}{a^3} \end{pmatrix}.
			\end{align}
			\begin{enumerate}[(a)]
				\item With $\ket{\psi}=\ket{F^4_1}$ and $A_1$ where $x=1$, $A_1^{\otimes4}\ket{\psi}$ reaches all states with $b_2=b_3=b_4=0$ and $b_5\neq0$.
				
				\item With $\ket{\psi}=\ket{S^4_2}+\ket{F^4_1}$ and $A_1$ where $x=\sqrt{b_2}$ and $\det A_1=\sqrt{b_2}b_5\neq0$, $A_1^{\otimes4}\ket{\psi}$ reaches all states with $b_2,b_5\neq0$ and $b_3=b_4=0$.
				
				\item With $\ket{\psi}=\ket{1^4}+\ket{F^4_1}$ and $A_2$ where $a=1$, $e=b_4^{1/4}$, $d=\frac{b_3}{2e^3}$, $b_2=\sqrt{6}d^2e^2$, $g=\frac{b_0-d^4}{2}$, $h=b_1-2d^3e$ and $\det A_2=b_4^{1/4}b_5\neq0$, $A_2^{\otimes4}\ket{\psi}$ reaches all states with $b_2=\frac{\sqrt{6}b_3^2}{4b_4}$ and $b_4,b_5\neq0$.
				
				\item With $\ket{\psi}=\ket{1^4}+\ket{S^4_2}+\ket{F^4_1}$ and $A_2$ where $e=b_4^{1/4}$, $d=\frac{b_3}{2e^3}$, $a=\pm\sqrt{\frac{b_2}{e^2}-\sqrt{6}d^2}$, $g=\frac{b_0-d^4-\sqrt{6}a^2d^2}{2a^3}$, $h=\frac{b_1-2d^3e-\sqrt{6}a^2de}{a^3}$ and $\det A_2=\frac{b_5 e}{a^2}=\frac{4b_4^{7/4}b_5}{4b_2b_4-\sqrt{6}b_3^2}\neq0$, $A_2^{\otimes4}\ket{\psi}$ reaches all states with $b_2\neq\frac{\sqrt{6}b_3^2}{4b_4}$ and $b_4,b_5\neq0$.
				
				\item With $\ket{\psi}=\ket{S^4_3}+\ket{F^4_1}$ and $A_2$ where $a=1$, $e=b_3^{1/3}$, $d=\frac{b_2}{\sqrt{6}e^2}$, $g=\frac{b_0}{2}-d^3$, $h=b_1-3d^2e$ and $\det A_2=b_3^{1/3}b_5\neq0$, $A_2^{\otimes4}\ket{\psi}$ reaches all states with $b_3,b_5\neq0$ and $b_4=0$.
			\end{enumerate}
			Therefore, all 4-qutrit states $\ket{\Psi_2}$ with $b_5\neq0$ can be reached via SLOCC from one of the 5 SLOCC representatives in Eq.\;(\ref{eq:4qutritType2Reps}).
			
			\subsection{Structure of local symmetries}\label{subsec:structureOfS_j}
			We now prove that any local invertible symmetries $S^{(j)}$ of type-1 (type-2) derogatory ES SLOCC representatives must be block-diagonal (upper triangular), i.e., they are of the form given in Eq.\;(\ref{eq:symmPsi12}). 
			
			For all type-1 representatives $\ket{\psi_1}\in\{\ket{S^4_2}+\ket{2^4},$ $\ket{0^4}+\ket{S^4_2}+\ket{2^4}\}\cup\{\ket{0^4}+\ket{1^4}+\ket{2^4}+\mu\ket{S^4_2}:\mu\in\mathbb{C} \text{ and } \mu\neq0, \pm\sqrt{\frac{2}{3}}, \pm\sqrt{6}\}$, it holds for $k=0,1$ that
			\begin{align*}
				\prescript{}{i,j}{\bra{2,k}}S^{(i)}\otimes S^{(j)}\ket{\psi_1}=\prescript{}{i,j}{\bra{2,k}}\psi_1\rangle=0,
			\end{align*}
			which imposes constraints on the matrix elements of $S^{(j)}$:
			\begin{align}
				[S^{(i)}]_{2,m}[S^{(j)}]_{0,m}=0,\;[S^{(i)}]_{2,m}[S^{(j)}]_{1,m}=0,\label{eq:type1const1}\\
				[S^{(i)}]_{2,0}[S^{(j)}]_{0,1}+[S^{(i)}]_{2,1}[S^{(j)}]_{0,0}=0,\\
				[S^{(i)}]_{2,0}[S^{(j)}]_{1,1}+[S^{(i)}]_{2,1}[S^{(j)}]_{1,0}=0,\label{eq:type1const2}
			\end{align}
			for $m=0,1,2$ and $i\neq j$. If $[S^{(i)}]_{2,0}\neq0$ or $[S^{(i)}]_{2,1}\neq0$ for any index $i$, then it follows from the above equations that $[S^{(j)}]_{0,0}=[S^{(j)}]_{1,0}=[S^{(j)}]_{0,1}=[S^{(j)}]_{1,1}=0$ which implies that $S^{(j)}$ is not invertible. To avoid that, we must have $[S^{(i)}]_{2,0}=[S^{(i)}]_{2,1}=0$ for all $i$. Invertibility of $S^{(i)}$  requires then that $[S^{(i)}]_{2,2}\neq0$, thereby forcing $[S^{(i)}]_{0,2}=[S^{(i)}]_{1,2}=0$ for all $i$ due to Eqs.\;(\ref{eq:type1const1}). Therefore, it must be that $S^{(i)}=\begin{pmatrix} a & d & 0 \\ b & e & 0 \\ 0 & 0 & p \end{pmatrix}$ where $ae-bd\neq0$ and $p\neq0$ for all $i=1,2,3,4$. Since the constraints in Eqs.\;(\ref{eq:type1const1})--(\ref{eq:type1const2}) hold for all type-1 representatives, the block-diagonal restriction for $S^{(i)}$ applies to all of them.
			
			For all type-2 representatives $\ket{\psi_2}\in\{\ket{1^4}+\ket{S^4_2}+\ket{F^4_1},$ $\ket{S^4_3}+\ket{F^4_1}\}$, it holds for $k'=1,2$ that
			\begin{align*}
				\prescript{}{i,j}{\bra{2,k'}}S^{(i)}\otimes S^{(j)}\ket{\psi_2}=\prescript{}{i,j}{\bra{2,k'}\psi_2\rangle}=0,
			\end{align*}
			which imposes constraints on the matrix elements of $S^{(j)}$:
			\begin{align}
				[S^{(i)}]_{2,m'}[S^{(j)}]_{2,m'}=0,\;[S^{(i)}]_{2,m'}[S^{(j)}]_{1,m'}=0,\label{eq:type2const1}\\
				[S^{(i)}]_{2,0}[S^{(j)}]_{2,m'+1}+[S^{(i)}]_{2,m'+1}[S^{(j)}]_{2,0}=0,\label{eq:type2const2}\\
				[S^{(i)}]_{2,0}[S^{(j)}]_{1,m'+1}+[S^{(i)}]_{2,m'+1}[S^{(j)}]_{1,0}=0,\label{eq:type2const3}
			\end{align}
			for $m'=0,1$ and $i\neq j$. The first constraint implies that $[S^{(i)}]_{2,0}=0$ and $[S^{(j)}]_{2,1}=0$ for at least 3 indices $i,j$. If $[S^{(i)}]_{2,0}\neq0$ for one index $i$, then $[S^{(j)}]_{2,0}=0$ for all $j\neq i$ and $[S^{(i)}]_{2,0}[S^{(j)}]_{2,1}=0=[S^{(i)}]_{2,0}[S^{(j)}]_{2,2}$ from constraints (\ref{eq:type2const2}). This leads to $[S^{(j)}]_{2,1}=[S^{(j)}]_{2,2}=0$ and $\det S^{(j)}=0$ (non-invertible) for all $j\neq i$, so $[S^{(i)}]_{2,0}=0$ must hold for all $i$. For the 3 or 4 indices $i$ where $[S^{(i)}]_{2,1}=0$, $[S^{(i)}]_{2,2}$ has to be non-zero for $\det S^{(i)}\neq0$, so $[S^{(j)}]_{1,0}=0$ must hold for all $j$ to satisfy Eq.\;(\ref{eq:type2const3}). If $[S^{(i)}]_{2,1}\neq0$ for one index $i$, then from Eq.\;(\ref{eq:type2const1}), $[S^{(j)}]_{1,1}=0$ and (recall $[S^{(j)}]_{1,0}=[S^{(j)}]_{2,0}=[S^{(j)}]_{2,1}=0$) $\det S^{(j)}=0$ for all $j\neq i$. Therefore, we have that $[S^{(i)}]_{2,1}=0$ and $S^{(i)}=\begin{pmatrix} a & d & g \\ 0 & e & h \\ 0 & 0 & p \end{pmatrix}$ where $aep\neq0$ for all $i=1,2,3,4$. Since the constraints in Eqs.\;(\ref{eq:type2const1})--(\ref{eq:type2const3}) hold for both type-2 representatives, the upper triangular restriction for $S^{(i)}$ applies to all of them. This concludes our proof.
			
			\subsection{Symmetries of the form $A^{\otimes4}$}\label{subsec:AllA4}
			
			We sketch the proof for the complete characterization of $A^{\otimes4}$ symmetries that are listed in Sec.\;\ref{par:4qutritSymm} for each 4-qutrit derogatory representative $\ket{\psi}$. The proof involves solving the equation
			\begin{align}
				(A^{\otimes2}\otimes\identity^{\otimes2}- \identity^{\otimes2}\otimes \widetilde{A}^{\otimes2})\ket{\psi}=0\label{eq:A2Atilde2}
			\end{align}
			by restricting both matrices $A$ and $\widetilde{A}$ to be invertible and block-diagonal or upper triangular depending on whether the representative is type-1 or -2 and, subsequently, imposing that $\widetilde{A}=A^{-1}$. 
			
			For all type-1 representatives $\ket{\psi_1}$, it is clear that $A^{\otimes4}\ket{\psi_1}=\ket{\psi_1}$ requires the matrix entry $A_{2,2}=\pm1$ or $\pm i$ since $A$'s block-diagonal structure ensures that the state $\ket{2}$ cannot mix with $\ket{0}$ and $\ket{1}$. Hence, we focus only on the qubit subspace of $A$ and $\widetilde{A}$, which resembles the problem studied in \cite{SpdV16}, and define
			\begin{align}
				A=\begin{pmatrix} a & c \\ b & d \end{pmatrix}\oplus A_{2,2},\;\widetilde{A}=\begin{pmatrix} \tilde{a} & \tilde{c} \\ \tilde{b} & \tilde{d} \end{pmatrix}\oplus \widetilde{A}_{2,2}\;.
			\end{align}
			\begin{enumerate}[(a)]
				\item For $\ket{\psi_1}=\ket{S^4_2}+\ket{2^4}$, Eq.\;(\ref{eq:A2Atilde2}) allows only:
				\begin{enumerate}[(i)]
					\item $b=\tilde{b}=c=\tilde{c}=0$, $a=\pm\tilde{d}$, $d=\pm\tilde{a}$ and $a d=\tilde{a}\tilde{d}$. Set $\widetilde{A}=A^{-1}\Rightarrow d=\pm\frac{1}{a}$\;.
					\item $a=\tilde{a}=d=\tilde{d}=0$, $b=\pm\tilde{b}$, $c=\pm\tilde{c}$ and $b c=\tilde{b}\tilde{c}$. Set $\widetilde{A}=A^{-1}\Rightarrow c=\pm\frac{1}{b}$\;.
				\end{enumerate}
				These correspond to $A_1$ and $A_2$ in Eq.\;(\ref{eq:A1A2A3}).
				
				\item For $\ket{\psi_1}=\ket{0^4}+\ket{S^4_2}+\ket{2^4}$, Eq.\;(\ref{eq:A2Atilde2}) allows only the condition in (a)(i) above. Imposing $\widetilde{A}=A^{-1}$ in Eq. (\ref{eq:A2Atilde2}) gives $a^2=1/a^2$, which results in $A_1$ in Eq.\;(\ref{eq:A1A2A3}) with $a=\pm1$ or $\pm i$.
				
				\item  For $\ket{\psi_1}=\ket{0^4}+\ket{1^4}+\ket{2^4}+\mu\ket{S^4_2}$ with $\mu\neq0, \pm\sqrt{\frac{2}{3}}, \pm\sqrt{6}, \pm\sqrt{2}i$, Eq.\;(\ref{eq:A2Atilde2}) together with $\tilde{A}=A^{-1}$ allows only the two conditions in (a) above (corresponding to $ad-bc=\pm1$) and requires $a^2=\pm1$ and $b^2=\pm1$, which result in $A_1$ and $A_2$ in Eq.\;(\ref{eq:A1A2A3}) with $a,b=\pm1$ or $\pm i$.
				
				\item For $\ket{\psi_1}=\ket{0^4}+\ket{1^4}+\ket{2^4}+\sqrt{2}i\ket{S^4_2}$, Eq.\;(\ref{eq:A2Atilde2}) allows only the conditions in (c) and an additional symmetry (corresponding to $a d-b c=e^{i(\theta\pm\frac{\pi}{3})}$ for $\theta=0,\pi$), which demands $d=ae^{i\delta}$ and $a^2=-e^{-i(\delta+\varphi)}b^2=e^{-i(\delta+\varphi)}c^2$ for $\delta,\varphi\in\{\frac{\pi}{2},\frac{3\pi}{2}\}$. Introducing $\alpha,\beta\in\{0,\pi\}$, we have $b=aie^{i(\frac{\delta+\varphi}{2}+\alpha)}$ and $c=ae^{i(\frac{\delta+\varphi}{2}+\beta)}$. For $A$ to be invertible, $ad-bc=a^2e^{i\delta}(1-ie^{i(\alpha+\beta+\varphi)})\neq0\Rightarrow e^{i(\alpha+\beta+\varphi)}=i$. Finally, imposing $(A^{\otimes4}- \identity^{\otimes4})\ket{\psi_1}=0$ gives 
				\begin{align*}
					a^4=\begin{cases}\frac{1}{4}e^{- i\frac{\pi}{3}}, \text{\; if } e^{i(\delta+\varphi)}=1\\
						\frac{1}{4}e^{i\frac{\pi}{3}}, \text{\; if } e^{i(\delta+\varphi)}=-1,\end{cases}
				\end{align*}
				\begin{align*}
					\Rightarrow a=\begin{cases}\frac{1}{\sqrt{2}}e^{i\pi(\frac{m}{2}-\frac{1}{12})}, \text{\; if } e^{i(\delta+\varphi)}=1,\\
						\frac{1}{\sqrt{2}}e^{i\pi(\frac{m}{2}+\frac{1}{12})}, \text{\; if } e^{i(\delta+\varphi)}=-1,\end{cases}
				\end{align*}
				for $m=0,1,2,3$, thereby obtaining $A_3$ in Eq.\;(\ref{eq:A1A2A3}).
			\end{enumerate}
			
			For all type-2 representatives $\ket{\psi_2}$, we define
			\begin{align}
				A=\begin{pmatrix} a & d & g \\ 0 & e & h \\ 0 & 0 & p \end{pmatrix} \text{ and } \widetilde{A}=\begin{pmatrix} \tilde{a} & \tilde{d} & \tilde{g} \\ 0 & \tilde{e} & \tilde{h} \\ 0 & 0 & \tilde{p} \end{pmatrix}.
			\end{align}
			\begin{enumerate}[(a)]
				\item For $\ket{\psi_2}=\ket{1^4}+\ket{S^4_2}+\ket{F^4_1}$, Eq.\;(\ref{eq:A2Atilde2}) and the invertibility of $A$ and $\widetilde{A}$ allow only $d=\tilde{d}=h=\tilde{h}=0$, $a=\pm\tilde{e}$, $e=\pm\tilde{a}$, $ag=\tilde{a}\tilde{g}$, $a^2=\tilde{a}\tilde{p}$ and $\tilde{a}^2=ap$. Set $\widetilde{A}=A^{-1}\Rightarrow e=\pm\frac{1}{a}$, $g=0$, and $p=\frac{1}{a^3}$. Finally, $A^{\otimes4}\ket{\psi_2}=\ket{\psi_2}$ requires $a^4=1$. Thus, the only option for $A$ is $A=a\oplus\pm a\oplus a$ where $a=\pm1$ or $\pm i$. 
				
				\item For $\ket{\psi_2}=\ket{S^4_3}+\ket{F^4_1}$, Eq.\;(\ref{eq:A2Atilde2}) and the invertibility of $A$ and $\widetilde{A}$ allow only $d=\tilde{d}=h=\tilde{h}=0$, $e^2=\tilde{a}\tilde{e}$, $\tilde{e}^2=ae$, $ag=\tilde{a}\tilde{g}$, $a^2=\tilde{a}\tilde{p}$ and $\tilde{a}^2=ap$. Set $\widetilde{A}=A^{-1}\Rightarrow g=0$, $p=\frac{1}{a^3}$ and $e=\frac{1}{a^{1/3}}e^{i\frac{2m\pi}{3}}$ for $m=0,1,2$. Thus, $A$ can only be $A=a\oplus\frac{1}{a^{1/3}}e^{i\frac{2m\pi}{3}}\oplus\frac{1}{a^3}$ where $a\in\mathbb{C}$. 
			\end{enumerate}
			This concludes the characterization of all $A^{\otimes4}$ symmetries of every 4-qutrit derogatory representative.
			
			\section{Conversion not achievable by all-deterministic-$\text{LOCC}_\mathbb{N}$}\label{app:NoAllDetLOCC}
			
			In this appendix, we explain why the initial state $\ket{\Psi}\propto \sqrt{G_1}\otimes \sqrt{G_2}\otimes\identity\otimes\identity\ket{\Psi_s}$ in Sec.\;\ref{sec:probStepProtocol} cannot be converted to $\ket{\Phi}\propto \sqrt{H_1}\otimes \sqrt{H_2}\otimes\identity\otimes\identity\ket{\Psi_s}$ via any all-det-$\text{LOCC}_\mathbb{N}$ protocol. In Sec.\;\ref{sec:probStepProtocol}, we defined $\ket{\Psi_s}\coloneqq|\psi(\mu=\sqrt{2}i)\rangle$ and the positive matrices $G_i$ and $H_i$ for $i=1,2$ in Eqs.\;(\ref{eq:G1_def})--(\ref{eq:H2_def}). It is clear that $\ket{\Psi}$ is not LU equivalent to $\ket{\Phi}$ since ${S^{(i)}}^\dagger G_i S^{(i)}\not\propto H_i$ for $i=1,2$ and for any symmetry $\bigotimes_{j=1}^4 S^{(j)}\in S_{\Psi_s}$.
			
			We first recall that all these protocols constitute of a sequence of $\text{LOCC}_1$ transformations \cite{footnoteLOCC1}. Then, we observe that all non-trivial symmetries $S\not\propto\identity^{\otimes4}$ of $|\psi(\mu=\sqrt{2}i)\rangle$ [see item (iii) under ``type-1 symmetries" in Sec.\;\ref{par:4qutritSymm})] that quasi-commute with at least 3 sites of $G_1\otimes G_2\otimes\identity^{\otimes2}$ must take the form, $\bigotimes_{i=1}^4 D^{(i)}$, where $D^{(1)}$ or $D^{(2)}=\identity$ due to the specific forms of $G_1$ and $G_2$ [see Eqs.\;(\ref{eq:G1_def}) and (\ref{eq:G2_def})] and $D^{(j)}=\text{diag}(1,1,e^{i\theta^{(j)}})$ with $\theta^{(j)}\in[0,2\pi)$ for other sites $j$ such that $\prod_{j=1}^4 e^{i\theta^{(j)}}=1$. According to the proof of Lemma \ref{lemma:convertibility}, an $\text{LOCC}_1$ conversion uses only symmetries $\bigotimes_{j=1}^4 S^{(j)}$ that quasi-commute with $G_i$ on 3 sites where the corresponding parties apply only LU, whereas for the site that a non-trivial measurement occurs, the symmetries may or may not quasi-commute with $G_i$. 
			
			In the first round, party $i$ performs a non-trivial measurement, which results in a state with $G_i$ replaced by $G'_i$ which fulfills the equation
			\begin{equation}
			\sum_k p_k (D^{(i)}_k)^\dagger G'_i D^{(i)}_k=G_i
			\label{eq:depolarizeG}
			\end{equation} 
			for some choices of probabilities $\{p_k\}$ and diagonal unitaries $D^{(j)}_k=\text{diag}(1,1,e^{i\theta^{(j)}_k})$ [see Lemma \ref{lemma:convertibility}]. If $i=1$ or 2, then the entries of $G'_i$ must satisfy: $(G_i')_{jk}=(G_i)_{jk}\;\forall\;j,k\in\{1,2,3\}$ except for $(j,k)\in\{(1,3),(3,1)\}$ where $(G_i')_{13}=f(G_i)_{13}$ and $(G_i')_{31}=f^*(G_i)_{31}$ with $f=(\sum_k p_k e^{i\theta^{(j)}_k})^{-1}$, so $|f|>1$. If $i=3$ or 4, then $G_i'=\begin{pmatrix}
			1 & 0 & \gamma^*\\
			0 & 1 & \epsilon^*\\
			\gamma & \epsilon & 1
			\end{pmatrix}$ with $\gamma\neq0$ or $\epsilon\neq0$, which satisfies Eq.\;(\ref{eq:depolarizeG}) for $\sum_k p_k e^{i\theta^{(j)}_k}=0$. One can easily verify that ${S^{(i)}}^\dagger G'_i S^{(i)}\not\propto H_i$ for any $\bigotimes_{j=1}^4 S^{(j)}\in S_{\Psi_s}$, so the resulting state is LU inequivalent to the target state.
			
			In any subsequent rounds of $\text{LOCC}_1$ transformations, the only symmetries in $S_{\Psi_s}$ that can quasi-commute with 3 or more sites are again in the form $\bigotimes_{i=1}^4 D^{(i)}$ where $D^{(j)}=\text{diag}(1,1,e^{i\theta^{(j)}})$ with $\theta^{(j)}\in[0,2\pi)$. Hence, party $j$'s measurement will update $G_j$ or $G'_j$ of the state from the previous round to $G''_j$ of the current state in the same way as $G_i$ is updated to $G'_i$ in the first round. It is easy to check that ${S^{(j)}}^\dagger G''_j S^{(j)}\not\propto H_j$ for any $\bigotimes_{k=1}^4 S^{(k)}\in S_{\Psi_s}$. Therefore, regardless of the ordering and which parties perform the measurements, the resulting state} is not LU equivalent to the target state $\ket{\Phi}\propto\sqrt{H_1}\otimes\sqrt{H_2}\otimes\identity^{\otimes2}\ket{\Psi_s}$. We can then conclude that $\ket{\Psi}$ cannot be transformed into $\ket{\Phi}$ with any all-det-$\text{LOCC}_\mathbb{N}$ protocol.

	\end{document}